\documentclass[11pt,letterpaper]{article}
\usepackage{authblk}

\usepackage[margin=0.95in]{geometry}
\usepackage{amsmath,amssymb,amsthm,mathtools}
\usepackage{qcircuit}
\usepackage{xcolor}
\usepackage{graphicx}
\usepackage{placeins}
\usepackage{array,booktabs,longtable,enumitem}
\usepackage{microtype}

\makeatletter
\@ifpackagelater{microtype}{2025/02/12}{}{%
  \IfFormatAtLeastTF{2025-06-01}{%
    \microtypesetup{expansion=false}%
    \setlength{\emergencystretch}{1em}%
  }{}%
}
\makeatother
\usepackage{tikz}
\usepackage[colorlinks=true,linkcolor=blue!55!black,
  citecolor=red,urlcolor=blue!55!black]{hyperref}

\newtheorem{theorem}{Theorem}[section]
\newtheorem{lemma}[theorem]{Lemma}
\newtheorem{proposition}[theorem]{Proposition}
\newtheorem{corollary}[theorem]{Corollary}

\newtheorem{definition}[theorem]{Definition}

\providecommand{\F}{\mathbb F}

\providecommand{\one}{\mathbf 1}
\providecommand{\Tr}{\operatorname{Tr}}

\providecommand{\wt}{\operatorname{wt}}

\providecommand{\ket}[1]{\left|#1\right\rangle}

\usetikzlibrary{arrows.meta,positioning,calc}
\title{From Simple Sources to Quantum Advantage: Homomorphic Polynomial Transduction via Relative Decoding}
\author[1]{Zhong-Xia Shang%
  \thanks{Email: \texttt{zhongxia.shang@math.ku.dk}}}

\author[1]{Daniel Stilck Fran\c{c}a%
  \thanks{Email: \texttt{dsfranca@math.ku.dk}}}

\affil[1]{Department of Mathematical Sciences, University of Copenhagen, Denmark}

\date{}
\begin{document}
\maketitle
\begin{abstract}
Decoded quantum interferometry (DQI) and its Hamiltonian extension
(HDQI) prepare states whose amplitudes are low-degree polynomials of an
objective, using Fourier transforms and coherent decoding. We recast this approach as quantum state transduction through algebra
homomorphisms. We transfer an efficiently preparable operator state
of a degree-$D$ polynomial in a source Hamiltonian $H_A$ to the
corresponding polynomial state of a target Hamiltonian $H_B=\pi(H_A)$,
where $\pi$ is a unital $*$-homomorphism between the finite-dimensional
source and target $C^*$-algebras. The transduction uses operator Fourier transforms and relative decoding
to recover source operators rather than individual term-selection labels. The relative distance $d_{\mathrm{rel}}$ is the first degree at which
source and target traces disagree. We prove that $2D<d_{\mathrm{rel}}$
is equivalent to preserving all inner products between degree-$D$
polynomials. Moreover, $d_{\mathrm{rel}}\ge d_{\mathrm{ord}}$, where
$d_{\mathrm{ord}}$ is the ordinary distance used in (H)DQI, and we give
families with $d_{\mathrm{ord}}=O(1)$ but
$d_{\mathrm{rel}}=\Theta(n)$ and efficient decoders. Our framework replaces the complicated pilot state preparation by the more modular task
of preparing a source polynomial state. DQI and HDQI arise as relation-free special cases. The framework
accommodates nonuniform coefficients and noncommuting interactions
and extends to fermionic, qudit, and bosonic systems, with applications
to approximate optimization and Gibbs sampling. As evidence of advantages, for a nonlinear pairwise variant of optimal polynomial intersection
where constant $d_{\mathrm{ord}}$ limits DQI-style preparations,
relative decoding yields an ideal quantum score of 0.643 versus
0.606 for the best tested classical heuristic, a gap exceeding
3 percentage points.

\end{abstract}
\clearpage
\tableofcontents
\clearpage

\section{Introduction}
\label{sec:introduction}

Decoded quantum interferometry (DQI)~\cite{JordanEtAlDQI} introduces a route related to Regev's reduction~\cite{Regev2009} to quantum
optimization in which classical decoding is used to construct a useful
quantum state for combinatorial optimization. By combining a suitably prepared
superposition with Fourier transformation and coherent decoding, DQI
produces states whose amplitudes are low-degree polynomials of an
objective function, biasing measurement toward high-scoring solutions.
Hamiltonian decoded quantum interferometry (HDQI)~\cite{HamiltonianDQI} extends this mechanism
to polynomial filters of quantum Hamiltonians, connecting decoding with
both optimization and thermal-state
preparation. These ideas have motivated improved
circuits, extensions to weighted objectives and more general algebraic
settings, and investigations of the scope and limitations of the
resulting advantages~\cite{opi-optimized,WeightedDQI2026,RankDQI2026,
GeneralHDQI2026,MarwahaDQIComplexity,AnschuetzDQIStructure,ParekhMaxCut}.

In this work, we unify and generalize these constructions by taking an
\emph{operator-algebraic point of view}. The central idea is to prepare a
polynomial in an accessible source system and coherently transport the
same polynomial to a target system, which we call the \emph{polynomial transduction}. Our framework has three components:
a surjective \emph{unital $*$-homomorphism} between represented
finite-dimensional $C^*$-algebras~\cite[Chs.~2--3]{Murphy1990},
operator--Fourier transforms that expose operator coefficients as quantum amplitudes, and coherent
\emph{relative decoding} that uncomputes the source information.

The homomorphism
$\pi:\mathcal A\twoheadrightarrow\mathcal B$ maps marked source
generators $a_i$ to target generators $b_i$, so that
$\pi(q(a))=q(b)$ for every noncommutative $*$-polynomial $q$.
The corresponding quantum primitive, which we call \emph{polynomial
transduction}, is the state conversion
\begin{equation}
 \frac{|q(a)\rangle\!\rangle_A}{\|q(a)\|_{2,A}}
 \;\longmapsto\;
 \frac{|q(b)\rangle\!\rangle_B}{\|q(b)\|_{2,B}}.
 \label{eq:intro-transduction}
\end{equation}
Here $|X\rangle\!\rangle_L=(X\otimes I)|\Phi_L\rangle$ is the vectorized operator state of $X$.

Quotient codes and relative decoding arise naturally in this
framework, paralleling degenerate decoding
in quantum error correction, where errors differing by a stabilizer
need not be distinguished~\cite{Poulin2005,IyerPoulin2015}. Here, this principle
is used for coherent polynomial-state transduction rather than
recovery from physical noise. Algebraically, relative decoding corresponds to inverting the
homomorphism $\pi$ on the low-degree source subspace: given
$Y=\pi(X)$, it recovers the unique source operator $X$ in that
subspace. Since the homomorphism preserves every source relation,
the source relation code $K_A$ is contained in the target relation code
$K_B$. The quotient
$K_B/K_A$ therefore captures the additional target relations
that must be handled by decoding. The resulting relative distance
$d_{\mathrm{rel}}$ sets the polynomial-degree bound for coherent
transduction. We prove the sharp
criterion that $2D<d_{\mathrm{rel}}$ is equivalent to preserving
\emph{all} operator inner products on the degree-$D$ source subspace.
With compatible operator bases and efficient source preparation, operator--Fourier transforms, label computation, and relative decoding,
this condition gives deterministic transduction of
every nonzero degree-$D$ source polynomial state.

This relative decoding
task is different from the ordinary decoding used in DQI and HDQI construction: ordinary
decoding recovers the term-selection label, whereas relative decoding
recovers only the source operator represented by that label. Therefore, inherited source relations do not consume the relative-decoding budget. Consequently, $d_{\mathrm{rel}}\ge d_{\mathrm{ord}}$ for the corresponding
ordinary marking. When the inequality is strict, relative decoding
relaxes the degree bound from $2D<d_{\mathrm{ord}}$ to
$2D<d_{\mathrm{rel}}$, potentially allowing polynomial filters beyond
the ordinary decoding limit. The improvement is particularly
significant when $d_{\mathrm{ord}}$ remains constant while
$d_{\mathrm{rel}}$ grows linearly with system size.

The operator-algebraic viewpoint turns the complicated,
case-dependent task of pilot-state preparation into the more transparent
task of source polynomial-state preparation. To get a target
Hamiltonian polynomial state $p(H_B)$ with $b_i$ marked as the terms in $H_B$, one prepares the source Hamiltonian polynomial state of $p(H_A)$ with $a_i$ marked as the terms in $H_A$,
where $\pi(H_A)=H_B$. Its operator--Fourier image is the required pilot state. Rather than deriving pilot amplitudes through
explicit enumeration of term selections, multiplicities, and commutation
phases~\cite{HamiltonianDQI,Wocjan2026Pilot}, one can therefore apply existing state-preparation methods
whenever they prepare the required source polynomial state~\cite{SchonEtAl2005,Schollwock2011}. Nonuniform
Hamiltonians fit naturally into this picture: for
$H_B=\sum_i\lambda_i b_i$, one simply uses the corresponding nonuniform
source $H_A=\sum_i\lambda_i a_i$, without developing a separate
coefficient-dependent pilot construction.

In our framework, DQI and HDQI are special cases of Hamiltonian
polynomial transduction from a relation-free Pauli source.
The source relation code is $K_A=\{0\}$, so $K_B/K_A=K_B$ and
relative decoding reduces to ordinary term-selection decoding.
Because the construction depends on algebraic relations, it applies
beyond qubits: we treat Majorana operator algebras for fermions and
finite-dimensional Weyl operator algebras for qudits and truncated
bosonic systems. Across these settings, the $*$-homomorphic structure governs
polynomial transfer, while source preparation, operator--Fourier
transforms, and relative decoding are adapted to the corresponding
operator algebra.

We generalize the applications introduced in DQI and HDQI: approximate optimization and
thermofield-double (TFD) Gibbs preparation. For optimization, we give a formula
for the best average objective value achievable with polynomial filters
of a given degree. We also identify a limitation shared by decoded
algorithms, including DQI and HDQI: low-degree quantities such as the average target energy can already be evaluated
using the source state, without preparing the target. The potential
advantage lies instead in generating target samples, with
polynomial filters favoring high-scoring solutions or desired energies.
For Gibbs-state preparation, we transfer a source TFD into an
approximate target TFD. The source TFD can be
prepared using compatible purified Gibbs-state algorithms, including the adiabatic preparation from recent Lindbladian-based methods~\cite{ChenKastoryanoBrandaoGilyen2025ThermalSimulation,RouzeFrancaAlhambra2026}. We relate the
attainable temperature and accuracy to the available polynomial degree.
For Hamiltonians with extensive spectral width,
a linear relative distance suffices for constant inverse temperature.

Our numerical example imposes nonlinear tests on neighbouring evaluations
of a low-degree polynomial. Dropping the polynomial constraint gives an
accessible source with independent residuals; relative decoding restores
the constraint after applying a score filter. At $p=2053$, with $205$
polynomial coefficients and $2051$ tests, a degree-$50$ guide has exact
mean satisfaction $0.64313$ (rounded), versus a median best classical
score of $0.60666$ across ten independently sampled residual-coefficient
vectors.
The tested classical methods include interpolation, local search,
annealing, tempering, and inference followed by projection to the code,
with approximately $5\times10^6$ candidates per deep run. For the same
Fourier-term marking, short source relations limit the ordinary-distance
criterion to degree at most one, while the relative decoder supports
this higher-degree guide. Section~\ref{sec:applications} gives the
construction and Appendix~\ref{app:chain-benchmark} specifies the
comparison protocol.

Section~\ref{sec:main-framework} develops the framework and its
spin-system realization; Section~\ref{sec:performance} treats
optimization and Gibbs-state preparation;
Section~\ref{sec:generalizations} extends the construction beyond qubit
spins; and Section~\ref{sec:applications} develops the nonlinear-chain
optimization example and its numerical comparison.

\clearpage
\section{Main framework}
\label{sec:main-framework}
Let $H_A=\sum_{i=1}^s \lambda_i a_i$ and $H_B=\sum_{i=1}^s \lambda_i b_i$,
where $a_i$ and $b_i$ are the matched marked interaction terms of the
source and target Hamiltonians, respectively. Our goal is to prepare the normalized operator state of a target
polynomial $q(b_1,\ldots,b_s)$ from an efficiently preparable state of
the same polynomial in the source operators $a_1,\ldots,a_s$.  For
Hamiltonian filters, this specializes to preparing the state of
$p(H_B)$ from that of $p(H_A)$, without applying the polynomial
directly to the target.  The polynomial $p$ is typically chosen or
optimized for a target objective, for example to bias the resulting
state toward low-energy, high-scoring spectral sectors, and to approximate the exponential function for Gibbs sampling.

A unital $*$-homomorphism transports the polynomial algebraically,
while the intrinsic relative distance certifies preservation of the
normalized operator-state geometry within a precise degree window.
Two operator-basis compatibility conditions then provide the
coordinates needed for the coherent circuit implementation.

We introduce these conditions first and give four explicit steps:
source operator--Fourier transform, computation of the target label and
phase, uncomputation by relative decoding, and inverse target
operator--Fourier transform. We then apply the framework for the spin systems and explain how DQI and HDQI serve as special relation-free examples of our transduction framework.

\subsection{Polynomial states, the $*$-homomorphism, and the relative decoding}
\label{subsec:mf-goal-algebra}
We use basic properties of finite-dimensional $C^*$-algebras and unital
$*$-homomorphisms between them~\cite{Murphy1990}, reviewed in
Appendix~\ref{app:cstar-background}.
Let $\mathcal A=C^*(I_A,a_1,\ldots,a_s)\subseteq\mathcal B(\mathcal H_A)$
and $\mathcal B=C^*(I_B,b_1,\ldots,b_s)\subseteq\mathcal B(\mathcal H_B)$
be finite-dimensional represented unital $C^*$-algebras. Given a
noncommutative $*$-polynomial $q$, abbreviate its evaluations by
$X_A=q(a)=q(a_1,\ldots,a_s)$ and
$X_B=q(b)=q(b_1,\ldots,b_s)$.
Write $N_L=\dim\mathcal H_L$ for $L=A,B$. The physical representation
fixes the normalized trace and operator inner product:
\begin{equation}
 \tau_L(X)=\frac{\Tr_{\mathcal H_L}X}{N_L},\qquad
 \langle X,Y\rangle_L=\tau_L(X^*Y),\qquad
 \|X\|_{2,L}=\sqrt{\tau_L(X^*X)}.
 \label{eq:mf-traces}
\end{equation}
In fixed computational bases, we define the vectorized operators~\cite{Watrous2018}
\begin{equation}
 |\Phi_L\rangle=N_L^{-1/2}\sum_j|j,j\rangle,
 \qquad |X\rangle\!\rangle_L=(X\otimes I)|\Phi_L\rangle.
 \label{eq:mf-vectorization}
\end{equation}
Thus $\langle\!\langle X|Y\rangle\!\rangle_L=\tau_L(X^*Y)$.

The input to transduction is the efficiently preparable normalized
operator state
\begin{equation}
 |\psi_A^q\rangle=\frac{|q(a)\rangle\!\rangle_A}{\|q(a)\|_{2,A}},
 \label{eq:mf-guide}
\end{equation}
which we call the \emph{source guiding state}, or \emph{guide}. The
desired output is
\begin{equation}
 |\psi_B^q\rangle=\frac{|q(b)\rangle\!\rangle_B}{\|q(b)\|_{2,B}},
 \label{eq:mf-output}
\end{equation}
whose reduced density matrix is
\begin{equation}
 \rho_B^q=\frac{X_BX_B^*}{\Tr(X_BX_B^*)};
 \qquad \rho_B^p=\frac{p(H_B)^2}{\Tr p(H_B)^2}
 \quad\text{for real }p\text{ and }H_B=H_B^*.
 \label{eq:mf-marginal}
\end{equation}
We discuss how to choose the polynomials when optimizing low-degree expectation values in
Section~\ref{sec:performance}.

\paragraph{The algebraic map and the marked terms.}
The transduction requires a unital $*$-homomorphism
$\pi:\mathcal A\twoheadrightarrow\mathcal B$ satisfying
$\pi(a_i)=b_i$. For Hamiltonian applications, a fixed term
decomposition $H_B=h_0I_B+\sum_i h_i b_i$ provides the marked target
operators. We keep the identity shift $h_0$ and coefficients $h_i$
separate from the generators and use the same coefficients in the source,
$H_A=h_0I_A+\sum_i h_i a_i$.
For non-Hermitian terms the decomposition includes their adjoints with
conjugate coefficients. In every case $\pi(H_A)=H_B$.

A \emph{marked word} of length $\ell$ is an ordered product of $\ell$
marked generators or their adjoints. The empty word is the identity
and has length zero; each generator or adjoint counts as one letter.
Define the degree-at-most-$t$ spaces as
\begin{equation}
 \mathcal F_t^A=\operatorname{span}\{I_A,
 a_{i_1}^{\epsilon_1}\cdots a_{i_\ell}^{\epsilon_\ell}:
 \ell\le t,\ \epsilon_j\in\{1,*\}\},\qquad
 \mathcal F_t^B=\pi(\mathcal F_t^A).
 \label{eq:mf-filtration}
\end{equation}
It follows from the $*$-homomorphism property that:
\begin{equation}
 X_A=q(a_1,\ldots,a_s)\in\mathcal F_D^A,
 \qquad X_B=q(b_1,\ldots,b_s)=\pi(X_A).
 \label{eq:mf-polynomial-identity}
\end{equation}
In particular, a degree-$D$ polynomial $p(H_A)$ has marked degree at most
$D$ for the Hamiltonian-term marking, and $\pi(p(H_A))=p(H_B)$.

The homomorphism identity alone, however, does not
preserve the physical operator inner products. Relative distance
specifies the degree through which it does.

\begin{definition}[Intrinsic relative distance]
\label{def:mf-intrinsic-distance}
The relative distance of the marked, represented homomorphism is
\begin{equation}
 d_{\mathrm{rel}}=
 \min\{t\in\mathbb N_0:\exists Y\in\mathcal F_t^A,
          \ \tau_B(\pi(Y))\ne\tau_A(Y)\},
 \label{eq:mf-abstract-relative-distance}
\end{equation}
with the empty minimum equal to $+\infty$.
\end{definition}
\noindent Equivalently, for every integer $t\ge0$,
\begin{equation}
 \tau_B\circ\pi=\tau_A\text{ on }\mathcal F_t^A
 \quad\Longleftrightarrow\quad t<d_{\mathrm{rel}}.
 \label{eq:mf-trace-distance-characterization}
\end{equation}

\begin{proposition}[Sharp degree--distance criterion]
\label{thm:mf-degree-distance}
For an integer $D\ge0$, the condition
\begin{equation}
 2D<d_{\mathrm{rel}}
 \label{eq:mf-half-distance}
\end{equation}
is equivalent to
\begin{equation}
 \tau_B(\pi(X)^*\pi(Y))=\tau_A(X^*Y)
 \qquad(X,Y\in\mathcal F_D^A).
 \label{eq:mf-degree-balance}
\end{equation}
In particular $\pi$ is an isometry from $\mathcal F_D^A$ onto
$\mathcal F_D^B$ in their physical operator inner products.
\end{proposition}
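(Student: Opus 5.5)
The plan is to reduce both directions to the trace characterization \eqref{eq:mf-trace-distance-characterization}, applied with $t=2D$. The key structural fact is that products of two elements of $\mathcal F_D^A$, one of them adjointed, span exactly $\mathcal F_{2D}^A$. On one side, if $X,Y\in\mathcal F_D^A$ then $X^*Y\in\mathcal F_{2D}^A$. This holds because the adjoint of a marked word of length $\ell$ is again a marked word of length $\ell$: reverse the order and swap $\epsilon_j\in\{1,*\}$. Concatenating two words of length at most $D$ then gives a word of length at most $2D$. On the other side, every marked word $w$ of length $\ell\le 2D$ factors as $w=u^*v$ with $u,v$ marked words of length at most $D$. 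To see this, split $w=w_1w_2$ with $|w_1|,|w_2|\le D$ and set $u=w_1^*$, $v=w_2$; words of length $0$ are the identity. Hence $\mathcal F_{2D}^A=\operatorname{span}\{X^*Y: X,Y\in\mathcal F_D^A\}$.

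For ($\Rightarrow$), assume $2D<d_{\mathrm{rel}}$. By \eqref{eq:mf-trace-distance-characterization} we have $\tau_B\circ\pi=\tau_A$ on $\mathcal F_{2D}^A$. Since $\pi$ is a $*$-homomorphism, $\pi(X)^*\pi(Y)=\pi(X^*Y)$, and $X^*Y\in\mathcal F_{2D}^A$. Therefore $\tau_B(\pi(X)^*\pi(Y))=\tau_B(\pi(X^*Y))=\tau_A(X^*Y)$.

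For ($\Leftarrow$), assume \eqref{eq:mf-degree-balance}. Take any $Z\in\mathcal F_{2D}^A$ and write it as $Z=\sum_k c_k X_k^*Y_k$ with $X_k,Y_k\in\mathcal F_D^A$, using the spanning fact above. Linearity of $\pi$ and of the traces then gives $\tau_B(\pi(Z))=\sum_k c_k\tau_B(\pi(X_k)^*\pi(Y_k))=\sum_k c_k\tau_A(X_k^*Y_k)=\tau_A(Z)$. So $\tau_B\circ\pi=\tau_A$ on $\mathcal F_{2D}^A$, and \eqref{eq:mf-trace-distance-characterization} yields $2D<d_{\mathrm{rel}}$. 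For $D=0$ everything reduces to the unital condition $\tau_B(I_B)=1=\tau_A(I_A)$, which is consistent with $d_{\mathrm{rel}}\ge1$.

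For the final claim, \eqref{eq:mf-degree-balance} says exactly that $\langle \pi X,\pi Y\rangle_B=\langle X,Y\rangle_A$ on $\mathcal F_D^A$. Taking $X=Y$ gives $\|\pi(X)\|_{2,B}=\|X\|_{2,A}$. Because $\|\cdot\|_{2,A}$ is a genuine norm (the trace is faithful on $\mathcal B(\mathcal H_A)$), $\pi$ is injective on $\mathcal F_D^A$. Surjectivity onto $\mathcal F_D^B=\pi(\mathcal F_D^A)$ holds by definition, so $\pi$ is a unitary isometry between the two spaces. The only step needing real care is the ($\Leftarrow$) spanning factorization, in particular the bookkeeping that adjoints preserve word length. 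Everything else is direct.
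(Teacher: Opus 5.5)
Your proposal is correct and follows the paper's proof essentially line for line. The forward direction uses $X^*Y\in\mathcal F_{2D}^A$, multiplicativity of $\pi$, and the trace characterization at $t=2D$; the converse uses the spanning identity $\mathcal F_{2D}^A=\operatorname{span}\{X^*Y:X,Y\in\mathcal F_D^A\}$, obtained by splitting each word and absorbing the adjoint into the first factor, and your explicit injectivity argument via faithfulness of the normalized trace correctly fills in the isometry claim that the paper leaves implicit.
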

\begin{proof}
For the forward direction, $X^*Y\in\mathcal F_{2D}^A$, and
$\pi(X)^*\pi(Y)=\pi(X^*Y)$. Apply Eq.~\eqref{eq:mf-trace-distance-characterization}.
Conversely,
$\mathcal F_{2D}^A=\operatorname{span}\{X^*Y:X,Y\in\mathcal F_D^A\}$:
split each word of length at most $2D$ into two subwords, and absorb
adjoints in the first. Eq.~\eqref{eq:mf-degree-balance} therefore
implies trace matching on $\mathcal F_{2D}^A$, which is equivalent to
Eq.~\eqref{eq:mf-half-distance}.
\end{proof}

In particular, $\pi$ is injective on $\mathcal F_D^A$, which allows
the source label to be uncomputed in
Section~\ref{subsec:mf-four-step-circuit}.

\paragraph{Ordinary decoding vs relative decoding.}
Ordinary decoding in DQI and HDQI requires uniqueness with respect to
all low-weight target relations. Relative decoding instead absorbs
the relations already present in the source, so that uniqueness
depends on the first additional target relation. When the shortest
target relations are inherited, the larger relative distance can
enlarge the transduction window $2D<d_{\mathrm{rel}}$.

To define the ordinary distance, we specify the source before the
additional relations used by relative decoding are imposed. Let
$\mathcal A_0=C^*(I_0,r_1,\ldots,r_s)$ be a faithful, finite-dimensional
realization of a fixed generator presentation, with physical normalized
trace $\tau_0$ and marked word spaces $\mathcal F_t^0$.
Here \emph{relation free} means that $\mathcal A_0$ satisfies no
relations beyond those of the chosen presentation. In algebraic terms,
if $\mathcal P_s$ is the free complex $*$-polynomial algebra and
$\mathcal J_0$ is the ideal of stipulated defining relations (such as commutation relations), then
$\mathcal A_0\cong\mathcal P_s/\mathcal J_0$: evaluating the formal
polynomials in the $r_i$ has kernel exactly $\mathcal J_0$.
This retains generator orders and prescribed commutation relations.

The chosen source and target are successive quotients,
\begin{equation}
 \vartheta_A:\mathcal A_0\twoheadrightarrow\mathcal A,
 \qquad \pi:\mathcal A\twoheadrightarrow\mathcal B,
 \qquad \vartheta_B:=\pi\circ\vartheta_A,
 \qquad \vartheta_A(r_i)=a_i,\quad\vartheta_B(r_i)=b_i.
 \label{eq:mf-common-reference}
\end{equation}
The ideals $\ker\vartheta_A\subseteq\ker\vartheta_B$ record the
additional polynomial relations. Ordinary transduction uses
$\mathcal A_0\to\mathcal B$ directly; relative transduction uses
$\mathcal A\to\mathcal B$ after the source relations have been absorbed.
Accordingly, define the ordinary distance
\begin{equation}
 \begin{aligned}
 d_{\mathrm{ord}}(B\mid\mathcal A_0)
 &:=d_{\mathrm{rel}}(\mathcal A_0\xrightarrow{\vartheta_B}\mathcal B)\\
 &=\min\{t\in\mathbb N_0:\exists Z\in\mathcal F_t^0,
      \tau_B(\vartheta_B(Z))\ne\tau_0(Z)\}.
 \end{aligned}
 \label{eq:mf-ordinary-reference-distance}
\end{equation}
For every marked word $W_0$ in the $r_i$ and their adjoints,
write $W_A=\vartheta_A(W_0)$ and $W_B=\vartheta_B(W_0)$, and assume
\begin{equation}
 W_L\notin\mathbb C I_L
 \quad\Longrightarrow\quad
 \tau_L(W_L)=0,
 \qquad L=0,A,B,
 \label{eq:mf-scalar-or-traceless-words}
\end{equation}
which is true for all the cases considered in this work.

For the following statement, a \emph{target scalar-word relation
relative to $\mathcal A_0$} means an identity
$W_B=\lambda I_B$ with $\lambda\ne0$ that is not already satisfied
in the $\mathcal A_0$, namely $W_0\ne\lambda I_0$.
Such a relation is \emph{inherited from $\mathcal A$} if
$W_A=\lambda I_A$.

\begin{theorem}[Ordinary source versus a quotient source]
\label{the:mf-general-distance-comparison}
Under Eq.~\eqref{eq:mf-scalar-or-traceless-words},
\begin{equation}
 d_{\mathrm{rel}}(A\to B)
 \ge d_{\mathrm{ord}}(B\mid\mathcal A_0).
\end{equation}
If $d_{\mathrm{ord}}<\infty$, the inequality is strict if every minimum-length
target scalar-word relation relative to $\mathcal A_0$ is inherited
from $A$. When $\vartheta_A=\mathrm{id}$,
the ordinary and relative distances coincide.
\end{theorem}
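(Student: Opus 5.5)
The plan is to reduce everything to traces of single marked words, using Eq.~\eqref{eq:mf-scalar-or-traceless-words}. Since $\mathcal F_t^A=\vartheta_A(\mathcal F_t^0)$ is spanned by the images $W_A$ of marked words $W_0$ of length at most $t$, and all maps involved are linear, trace agreement on $\mathcal F_t^A$ holds if and only if $\tau_B(W_B)=\tau_A(W_A)$ for every such word; the same reduction applies to $\mathcal F_t^0$ with $\tau_0$. By Eq.~\eqref{eq:mf-scalar-or-traceless-words}, each trace $\tau_L(W_L)$ equals $\lambda$ if $W_L=\lambda I_L$ and $0$ otherwise. So the distances are the minimal lengths of words whose ``scalar status'' (whether $W_L$ is scalar, and which scalar) differs between the two levels.

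For the inequality, I would take $t<d_{\mathrm{ord}}$ and any word $W_0$ of length at most $t$, so that $\tau_B(W_B)=\tau_0(W_0)$. The goal is $\tau_A(W_A)=\tau_0(W_0)$ as well. If $W_0=\lambda I_0$, then $W_A=\lambda I_A$ and $W_B=\lambda I_B$ by unitality, so the traces agree. Otherwise $\tau_0(W_0)=0=\tau_B(W_B)$. If $W_A=\mu I_A$ with $\mu\neq 0$, then $W_B=\mu I_B$, which gives $\tau_B(W_B)=\mu\ne0$, a contradiction. If $\mu=0$, then the word $W_A$ vanishes, which cannot happen for a product of generators when the marked terms are invertible. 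I expect this to be the one delicate spot: I will either use invertibility of the marked generators (unitaries in all cases of the paper) or note that a zero word gives $W_B=0$ and trace $0$, so the traces still agree. Hence $\tau_A(W_A)=0=\tau_B(W_B)$. Therefore $\tau_B\circ\pi=\tau_A$ on $\mathcal F_t^A$ for all $t<d_{\mathrm{ord}}$, and Eq.~\eqref{eq:mf-trace-distance-characterization} gives $d_{\mathrm{rel}}\ge d_{\mathrm{ord}}$.

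For strictness, set $t=d_{\mathrm{ord}}<\infty$. It suffices to show that trace agreement between $A$ and $B$ still holds at length exactly $t$. Take a word $W_0$ of length $t$. If $\tau_B(W_B)=\tau_0(W_0)$, the argument above applies unchanged. If not, the two traces differ. Nonscalar $W_B$ has trace $0$, and scalar $W_0$ forces $W_B$ to be the same scalar, so the only possibility is $W_B=\lambda I_B$ with $\lambda\ne0$ and $W_0\ne\lambda I_0$. This is a target scalar-word relation relative to $\mathcal A_0$. It has minimum length, because any shorter one would produce a trace disagreement below $d_{\mathrm{ord}}$. By hypothesis it is inherited, so $W_A=\lambda I_A$ and $\tau_A(W_A)=\lambda=\tau_B(W_B)$. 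Thus agreement holds on $\mathcal F_t^A$, so $d_{\mathrm{rel}}>t$.

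Finally, when $\vartheta_A=\mathrm{id}$ we have $\mathcal A=\mathcal A_0$, $\tau_A=\tau_0$, $\pi=\vartheta_B$, and equal marked spaces. The two minimizations in Definition~\ref{def:mf-intrinsic-distance} and Eq.~\eqref{eq:mf-ordinary-reference-distance} are then literally identical, so the distances coincide.
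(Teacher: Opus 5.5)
Your proof is correct and essentially follows the paper's argument. Both reduce trace agreement to single marked words via Eq.~\eqref{eq:mf-scalar-or-traceless-words} and the fact that homomorphisms preserve scalars, so that a trace disagreement is exactly a nonzero target scalar-word relation absent at the lower level. The paper phrases this as a comparison of minima: $d_{\mathrm{rel}}$ minimizes over the non-inherited subset of the relations defining $d_{\mathrm{ord}}$, which also yields the converse of the strictness criterion. You instead verify trace agreement level by level, and the $W_A=0$ case you flag is harmless, as you note.
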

\begin{proof}
Homomorphisms preserve scalars. Thus, under
Eq.~\eqref{eq:mf-scalar-or-traceless-words}, for $L=0,A$,
\[
 \tau_B(W_B)\ne\tau_L(W_L)
 \quad\Longleftrightarrow\quad
 W_B=\lambda I_B,\ \lambda\ne0,
 \quad W_L\notin\mathbb C I_L.
\]
Since marked words span the filtered spaces, the ordinary distance
is the minimum length of a target scalar-word relation relative to
$\mathcal A_0$, while the relative distance is the minimum length
among those not inherited from $A$. The latter minimum is taken over
a subset, proving the inequality. For finite ordinary distance,
the inequality is strict exactly when every minimum-length target
relation is inherited. Empty minima give $+\infty$. When
$\mathcal A=\mathcal A_0$ and $\vartheta_A=\mathrm{id}$, the two
sets coincide, so the distances agree.
\end{proof}

In Appendix~\ref{subsec:source-relations}, we discuss how to design
the source so as to increase the relative distance.

\subsection{Operator bases and the two compatibility conditions}
\label{subsec:mf-cutoff-coordinates}
Fix $D$ with $2D<d_{\mathrm{rel}}$. Choose operator bases
$\{E_\alpha^L:\alpha\in\Omega_L\}$ orthonormal in the physical inner
products of Eq.~\eqref{eq:mf-traces}, with $E_0^L=I_L$.
The degree criterion is already basis independent; the following
conditions connect it to the coordinates used by the circuit.
It suffices to specify the bases on the degree-$D$ subspaces and complete
them on the remaining operator spaces.

We need two compatibility conditions:
\begin{description}[leftmargin=0pt,labelsep=0.5em]
\item[(C1) Generator--basis compatibility at $D$.]
There is a label set $S_D$ such that
\begin{equation}
 \mathcal F_D^A=\operatorname{span}\{E_\alpha^A:\alpha\in S_D\},
 \qquad E_0^A=I_A.
 \label{eq:mf-generator-basis}
\end{equation}
Thus these basis operators span the degree-$D$ space.
\item[(C2) Single-image basis compatibility.]
For orthonormal target coordinates containing $E_0^B=I_B$,
\begin{equation}
 \pi(E_\alpha^A)=c_\alpha E_{M(\alpha)}^B,\qquad \text{for every }\alpha\in\Omega_A.
 \label{eq:mf-basis-transport}
\end{equation}
\end{description}

\begin{proposition}[Compatible coordinates of the balanced space]
\label{prop:mf-compatible-coordinates}
Under (C1), (C2), and $2D<d_{\mathrm{rel}}$, one has
$|c_\alpha|=1$ for every $\alpha\in S_D$, and
$M|_{S_D}$ is injective.
\end{proposition}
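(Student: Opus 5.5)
The plan is to combine Proposition~\ref{thm:mf-degree-distance} with (C1) and (C2). Fix $D$ with $2D<d_{\mathrm{rel}}$. By (C1), each $E_\alpha^A$ with $\alpha\in S_D$ lies in $\mathcal F_D^A$. Proposition~\ref{thm:mf-degree-distance} then says $\pi$ preserves the physical inner products between any two such elements:
\[
\tau_B\bigl(\pi(E_\alpha^A)^*\pi(E_\beta^A)\bigr)=\tau_A\bigl((E_\alpha^A)^*E_\beta^A\bigr)=\delta_{\alpha\beta},
\qquad \alpha,\beta\in S_D ,
\]
where the last equality is the orthonormality of the source basis in the inner product of Eq.~\eqref{eq:mf-traces}.

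First, the modulus claim. Take $\alpha=\beta\in S_D$ and substitute (C2), $\pi(E_\alpha^A)=c_\alpha E^B_{M(\alpha)}$. The left-hand side becomes
\[
|c_\alpha|^2\,\tau_B\bigl((E^B_{M(\alpha)})^*E^B_{M(\alpha)}\bigr)=|c_\alpha|^2 ,
\]
because the target basis is orthonormal. Since the right-hand side equals $1$, we get $|c_\alpha|=1$.

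Second, injectivity. Take $\alpha\ne\beta$ in $S_D$ and suppose $M(\alpha)=M(\beta)=\gamma$. The left-hand side becomes $\overline{c_\alpha}c_\beta\,\|E^B_\gamma\|_{2,B}^2=\overline{c_\alpha}c_\beta$. By the first step this has modulus $1$, but orthonormality of the source basis forces it to equal $0$. This is a contradiction, so $M|_{S_D}$ is injective.

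I do not expect a genuine obstacle; the whole argument is a direct consequence of the isometry statement. The only point to check is that both test operators really lie in $\mathcal F_D^A$. This is exactly what (C1) guarantees for labels in $S_D$, so that their products $(E_\alpha^A)^*E_\beta^A$ lie in $\mathcal F_{2D}^A$, where traces match by Eq.~\eqref{eq:mf-trace-distance-characterization}. For labels outside $S_D$ nothing is claimed, consistent with the statement.
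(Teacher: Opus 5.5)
Your proposal is correct and follows the paper's proof: the paper likewise applies Proposition~\ref{thm:mf-degree-distance} to the basis elements indexed by $S_D$ and obtains $\delta_{\alpha\beta}=\overline{c_\alpha}c_\beta\,\delta_{M(\alpha),M(\beta)}$. It then reads off $|c_\alpha|=1$ from the diagonal entries and injectivity of $M|_{S_D}$ from the off-diagonal entries, exactly as you do.
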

\begin{proof}
Proposition~\ref{thm:mf-degree-distance} gives
\[
 \delta_{\alpha\beta}
 =\overline{c_\alpha}c_\beta
      \delta_{M(\alpha),M(\beta)}.
\]
Diagonal entries prove unit modulus; off-diagonal entries prove
injectivity.
\end{proof}
The target generator--basis compatibility is a consequence, rather than
an independent assumption. With the inherited target marking $b_i=\pi(a_i)$,
set $S_D^B=M(S_D)$. Then
\begin{equation}
 \mathcal F_D^B=\pi(\mathcal F_D^A)
 =\operatorname{span}\{E_\beta^B:\beta\in S_D^B\},\qquad
 M:S_D\longrightarrow S_D^B\text{ is bijective}.
 \label{eq:mf-target-cutoff-compatibility}
\end{equation}
Thus both degree-$D$ spaces have the required orthonormal coordinates.

\begin{proposition}[Relative distance in basis coordinates]
\label{prop:relative coordinates}
Define $w_A(\alpha)=\min\{t:E_\alpha^A\in\mathcal F_t^A\}$. If an operator basis is additionally adapted to the word filtration
at every degree
\begin{equation}
 \mathcal F_t^A
 =
 \operatorname{span}\{E_\alpha^A : w_A(\alpha)\le t\},
 \qquad \text{for every}~t\geq0,
 \label{eq:filtration-adapted-basis}
\end{equation}
the relative distance Definition~\ref{def:mf-intrinsic-distance} has the equivalent expression
\begin{equation}
 d_{\mathrm{rel}}=\min\{w_A(\alpha):\alpha\ne0,
 c_\alpha\ne0,\ M(\alpha)=0\}.
\end{equation}
\end{proposition}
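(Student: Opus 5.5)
The plan is to compute the trace defect $\tau_B(\pi(Y))-\tau_A(Y)$ in basis coordinates, and then use the filtration-adapted basis to turn the minimum over subspaces $\mathcal F_t^A$ into a minimum over basis labels. Write $d^\ast$ for the right-hand side of the claimed identity. By (C2) and orthonormality, with $E_0^A=I_A$ and $E_0^B=I_B$, we have for every $\alpha\in\Omega_A$
\[
 \tau_A(E_\alpha^A)=\langle I_A,E_\alpha^A\rangle_A=\delta_{\alpha 0},
 \qquad
 \tau_B(\pi(E_\alpha^A))=c_\alpha\langle I_B,E^B_{M(\alpha)}\rangle_B=c_\alpha\,\delta_{M(\alpha),0}.
\]
For $\alpha=0$, unitality gives $\pi(I_A)=I_B$, hence $c_0=1$ and $M(0)=0$, so the defect vanishes. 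For $\alpha\ne0$, the defect of $E_\alpha^A$ is $c_\alpha\delta_{M(\alpha),0}$. It is nonzero exactly when $c_\alpha\ne0$ and $M(\alpha)=0$. This is the label set appearing in the claimed formula.

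Next I would show $d_{\mathrm{rel}}\le d^\ast$. Suppose $\alpha\neq 0$ attains the minimum, so $c_\alpha\ne0$, $M(\alpha)=0$, and $t=w_A(\alpha)$. By definition of $w_A$, $E_\alpha^A\in\mathcal F_t^A$, and its defect is $c_\alpha\ne0$. Taking $Y=E_\alpha^A$ in Definition~\ref{def:mf-intrinsic-distance} gives $d_{\mathrm{rel}}\le t$. If the defining set is empty, $d^\ast=+\infty$ and this inequality is trivial.

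For the reverse inequality $d_{\mathrm{rel}}\ge d^\ast$, take any $t<d^\ast$ and any $Y\in\mathcal F_t^A$. By \eqref{eq:filtration-adapted-basis} we can expand $Y=\sum_{w_A(\alpha)\le t} y_\alpha E_\alpha^A$. Both $\tau_A$ and $\tau_B\circ\pi$ are linear, so the defect of $Y$ is $\sum_{w_A(\alpha)\le t} y_\alpha\,(\text{defect of }E_\alpha^A)$. Every $\alpha\ne0$ in this sum has $w_A(\alpha)\le t<d^\ast$, so it cannot belong to the defining set, and its defect is zero. The $\alpha=0$ term also has zero defect. Hence $\tau_B\circ\pi=\tau_A$ on $\mathcal F_t^A$ for all $t<d^\ast$. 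By \eqref{eq:mf-trace-distance-characterization} this gives $d^\ast\le d_{\mathrm{rel}}$.

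The main subtlety is the reverse inequality. There the adapted-basis hypothesis \eqref{eq:filtration-adapted-basis} is essential: it makes $\mathcal F_t^A$ exactly the span of the labels with $w_A\le t$, so no low-degree element can pick up a defect through a high-weight basis element. Without it, the first inequality would still hold, but the second could fail. One small point to check is that $w_A(\alpha)$ is finite for every $\alpha$. This holds because the marked words generate the finite-dimensional algebra $\mathcal A$, so $\mathcal F_t^A=\mathcal A$ for some finite $t$.
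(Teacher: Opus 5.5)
Your proof is correct and is essentially the paper's argument. Both compute $\tau_A(E_\alpha^A)=\delta_{\alpha,0}$ and $\tau_B(\pi(E_\alpha^A))=c_\alpha\delta_{M(\alpha),0}$ from orthonormality and unitality, use the filtration-adapted expansion of $Y\in\mathcal F_t^A$ to show the traces agree below the minimum, and take $Y=E_\alpha^A$ for a minimizing label to obtain the converse; your split into $d_{\mathrm{rel}}\le d^\ast$ and $d_{\mathrm{rel}}\ge d^\ast$ is only a more explicit packaging of the paper's ``first degree at which the traces disagree'' conclusion.
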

\begin{proof}
Orthonormality and $E_0^L=I_L$ give
\[
 \tau_A(E_\alpha^A)=\delta_{\alpha,0},
 \qquad
 \tau_B(\pi(E_\alpha^A))
 =c_\alpha\delta_{M(\alpha),0}.
\]
Unitality implies $c_0=1$ and $M(0)=0$, so the two traces agree
on the identity. Since the source basis is adapted to the word
filtration, every $Y\in\mathcal F_t^A$ has an expansion
\[
 Y=\sum_{w_A(\alpha)\le t}y_\alpha E_\alpha^A.
\]
Consequently,
\[
 \tau_B(\pi(Y))-\tau_A(Y)
 =
 \sum_{\substack{\alpha\ne0,\ w_A(\alpha)\le t\\
                  M(\alpha)=0}}
 y_\alpha c_\alpha.
\]
Thus the two traces agree on $\mathcal F_t^A$ if no nonidentity
label $\alpha$ of weight at most $t$ satisfies
$c_\alpha\ne0$ and $M(\alpha)=0$.
Conversely, if such a label exists, choosing $Y=E_\alpha^A$
gives
\[
 \tau_A(Y)=0,
 \qquad
 \tau_B(\pi(Y))=c_\alpha\ne0,
\]
so trace matching fails on $\mathcal F_t^A$.
The first degree at which the traces disagree is therefore
the minimum in the displayed characterization.
\end{proof}

\subsection{The four-step transduction circuit and the polynomial transduction
\label{subsec:mf-four-step-circuit}}
In the degree-$D$ setting with $2D<d_{\mathrm{rel}}$, take $S=S_D$ and
$\chi_\alpha=c_\alpha$. Proposition~\ref{prop:mf-compatible-coordinates}
gives $|\chi_\alpha|=1$ and makes $M$ injective on $S$.
Let the nonzero source guide in Eq.~\eqref{eq:mf-guide} be efficiently
preparable. Its normalized operator coefficients are
\begin{equation}
 x_\alpha:=\frac{\langle E_\alpha^A,q(a)\rangle_A}{\|q(a)\|_{2,A}},
 \qquad
 \frac{q(a)}{\|q(a)\|_{2,A}}
   =\sum_{\alpha\in S}x_\alpha E_\alpha^A,
 \qquad\sum_{\alpha\in S}|x_\alpha|^2=1.
 \label{eq:mf-normalized-coefficients}
\end{equation}
We treat preparation of this state as an input to the algorithm.

The computational label registers are denoted by $\mathsf S$ (source),
$\mathsf T$ (target), and $\mathsf R$ (temporary recovery). Labels are
encoded as bit strings. We suppress the fixed ancilla registers used
inside the operator--Fourier transforms for register reordering,
valid-label encodings, and multiplicity or spectator spaces. The four steps act coherently, without
measuring an intermediate label.

\paragraph{Step 1. Operator--Fourier transform of the efficient source.}
For $L=A,B$, the operator--Fourier transform is defined by
\begin{equation}
 \mathsf F_L|E_\alpha^L\rangle\!\rangle_L=|\alpha\rangle,
 \qquad
 \mathsf F_L|X\rangle\!\rangle_L
 =\sum_\alpha\langle E_\alpha^L,X\rangle_L|\alpha\rangle.
 \label{eq:mf-oft}
\end{equation}
This changes coordinates within one operator space. We require an
efficient implementation of $\mathsf F_A$ on the source subspace.
Starting from the polynomial source
guide, it produces
\begin{equation}
 |\psi_A^q\rangle|0\rangle_{\mathsf T}|0\rangle_{\mathsf R}
 \xrightarrow{\ \mathsf F_A\ }
 \sum_{\alpha\in S}x_\alpha
     |\alpha\rangle_{\mathsf S}|0\rangle_{\mathsf T}|0\rangle_{\mathsf R}.
 \label{eq:mf-source-analysis-step}
\end{equation}
A direct preparation of the same coefficient state can replace the
source preparation and analysis together.

\paragraph{Step 2. Compute the target label and apply its phase.}
An efficient reversible evaluation of $M$ implements
\begin{equation}
 \mathsf U_M:
 |\alpha\rangle_{\mathsf S}|0\rangle_{\mathsf T}
 \longmapsto
 |\alpha\rangle_{\mathsf S}|M(\alpha)\rangle_{\mathsf T}.
 \label{eq:mf-target-label-computation}
\end{equation}
Compute $\chi_\alpha$, apply it as the diagonal phase
$\mathsf U_\chi|\alpha\rangle=\chi_\alpha|\alpha\rangle$, and
uncompute the phase-evaluation workspace. The state is now
\begin{equation}
 \sum_{\alpha\in S}\chi_\alpha x_\alpha
 |\alpha\rangle_{\mathsf S}|M(\alpha)\rangle_{\mathsf T}
 |0\rangle_{\mathsf R}.
\end{equation}
At this point the source label is still present. Discarding it would
generally decohere the target superposition, so Step 3 uncomputes it.

\paragraph{Step 3. Uncompute the source label by relative decoding.}
We require an efficient clean reversible decoder on $M(S)$~\cite{Bennett1973}:
\begin{equation}
 \mathsf{Dec}_{A\leftarrow B}:
 |M(\alpha)\rangle_{\mathsf T}|0\rangle_{\mathsf R}
 \longmapsto
 |M(\alpha)\rangle_{\mathsf T}|\alpha\rangle_{\mathsf R}
 \qquad(\alpha\in S).
 \label{eq:mf-basis-decoder}
\end{equation}
Injectivity proves that this answer is unique. On each basis component, decoding, bitwise XOR from the
recovery register to the source register, and inverse decoding give
\begin{equation}
 \begin{aligned}
 |\alpha\rangle_{\mathsf S}|M(\alpha)\rangle_{\mathsf T}|0\rangle_{\mathsf R}
 &\xrightarrow{\ \mathsf{Dec}_{A\leftarrow B}\ }
 |\alpha\rangle_{\mathsf S}|M(\alpha)\rangle_{\mathsf T}|\alpha\rangle_{\mathsf R}\\
 &\xrightarrow{\ \mathrm{XOR}_{\mathsf R\to\mathsf S}\ }
 |0\rangle_{\mathsf S}|M(\alpha)\rangle_{\mathsf T}|\alpha\rangle_{\mathsf R}\\
 &\xrightarrow{\ \mathsf{Dec}_{A\leftarrow B}^{-1}\ }
 |0\rangle_{\mathsf S}|M(\alpha)\rangle_{\mathsf T}|0\rangle_{\mathsf R}.
 \end{aligned}
 \label{eq:mf-decoder-uncomputation}
\end{equation}
Together, Steps 2 and 3 implement the phase-and-label transfer
\begin{equation}
 \sum_{\alpha\in S}x_\alpha|\alpha\rangle|0\rangle
 \longmapsto
 |0\rangle\sum_{\alpha\in S}\chi_\alpha x_\alpha|M(\alpha)\rangle,
 \label{eq:mf-label-erasure}
\end{equation}
with the recovery register and all arithmetic workspace reset to zero.
Source Fourier analysis has already combined contributions from
different generator words into each coefficient $x_\alpha$, so the
decoder recovers only the corresponding operator-basis label.

\paragraph{Step 4. Inverse operator--Fourier transform of the target.}
The homomorphism and the balanced basis images give
\begin{equation}
 \frac{q(b)}{\|q(b)\|_{2,B}}
   =\sum_{\alpha\in S}\chi_\alpha x_\alpha E_{M(\alpha)}^B.
\end{equation}
An efficient inverse target operator--Fourier transform therefore acts as
\begin{equation}
 \sum_{\alpha\in S}\chi_\alpha x_\alpha|M(\alpha)\rangle_{\mathsf T}
 \xrightarrow{\ \mathsf F_B^{-1}\ }
 \frac{|q(b)\rangle\!\rangle_B}{\|q(b)\|_{2,B}}
 =|\psi_B^q\rangle.
 \label{eq:mf-target-synthesis-step}
\end{equation}
The source and recovery registers stay in their zero states.

We summarize the construction in the following theorem.
\begin{theorem}[Efficient polynomial transduction by relative decoding]
\label{thm:mf-polynomial-transduction}
Fix an integer $D\geq 0$. Assume the generator--basis compatibility
(C1) and the source--target basis compatibility (C2) hold at degree $D$,
and suppose
\begin{equation}
 2D<d_{\mathrm{rel}}.
 \label{eq:mf-transduction-distance-condition}
\end{equation}
Then, for
\begin{equation}
 S_D:=\{\alpha:w_A(\alpha)\leq D\},
\end{equation}
the source--target basis map satisfies
\begin{equation}
 \pi(E_\alpha^A)
 =
 c_\alpha E_{M(\alpha)}^B,
 \qquad
 |c_\alpha|=1,
 \qquad
 \alpha\in S_D,
 \label{eq:mf-degree-D-basis-map}
\end{equation}
and $M|_{S_D}$ is injective.

Suppose furthermore that

\begin{enumerate}
 \item the normalized source guide
 \begin{equation}
  |\psi_A^q\rangle
  =
  \frac{|q(a_1,\ldots,a_s)\rangle\!\rangle_A}
       {\|q(a_1,\ldots,a_s)\|_{2,A}}
 \end{equation}
 is efficiently preparable for the given polynomial $q$ of marked
 degree at most $D$;

 \item the source operator--Fourier transform $\mathsf F_A$ and the
 inverse target operator--Fourier transform $\mathsf F_B^{-1}$ are
 efficiently implementable on the corresponding degree-$D$ subspaces;

 \item the map $M(\alpha)$ and the phase $c_\alpha$ are efficiently
 computable for $\alpha\in S_D$;

 \item there is an efficient coherent relative decoder
 \begin{equation}
  \operatorname{Dec}_{A\leftarrow B}:
  |M(\alpha)\rangle|0\rangle
  \longmapsto
  |M(\alpha)\rangle|\alpha\rangle,
  \qquad \alpha\in S_D.
 \end{equation}
\end{enumerate}

Then the four-step circuit
\begin{equation}
 \begin{aligned}
 |\psi_A^q\rangle|0\rangle
 &\xrightarrow{\ 1.\ \mathsf F_A\ }
   \sum_{\alpha\in S_D}x_\alpha|\alpha\rangle|0\rangle
 \\[-1mm]
 &\xrightarrow{\ 2.\ \mathsf U_c\mathsf U_M\ }
   \sum_{\alpha\in S_D}
   c_\alpha x_\alpha
   |\alpha\rangle|M(\alpha)\rangle
 \\[-1mm]
 &\xrightarrow{\ 3.\ \mathrm{relative\ decoding}\ }
   |0\rangle
   \sum_{\alpha\in S_D}
   c_\alpha x_\alpha|M(\alpha)\rangle
 \\[-1mm]
 &\xrightarrow{\ 4.\ \mathsf F_B^{-1}\ }
   |0\rangle|\psi_B^q\rangle
 \end{aligned}
 \label{eq:mf-four-step-transduction}
\end{equation}
prepares
\begin{equation}
 |\psi_B^q\rangle
 =
 \frac{|q(b_1,\ldots,b_s)\rangle\!\rangle_B}
      {\|q(b_1,\ldots,b_s)\|_{2,B}}
\end{equation}
deterministically and efficiently.
\end{theorem}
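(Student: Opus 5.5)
The proof has two parts, matching the statement: the algebraic part, fixing the unit-modulus phases and the injectivity of $M$, and the circuit part, verifying that the four steps act as claimed. The algebraic part is essentially Proposition~\ref{prop:mf-compatible-coordinates}. The one point needing care is the identification of the label set. Under (C1), $\mathcal F_D^A=\operatorname{span}\{E_\alpha^A:\alpha\in S_D\}$, and I will read the set in the theorem, $S_D=\{\alpha:w_A(\alpha)\le D\}$, as that same set. Each basis element in the spanning family lies in $\mathcal F_D^A$, so it has $w_A\le D$. Conversely, a basis element with $w_A(\alpha)\le D$ lies in $\mathcal F_D^A$. By orthonormality of the full basis, it must then be one of the spanning labels. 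So the two descriptions coincide.

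With $S_D$ identified, fix $\alpha,\beta\in S_D$. Since $E_\alpha^A,E_\beta^A\in\mathcal F_D^A$ and $2D<d_{\mathrm{rel}}$, Proposition~\ref{thm:mf-degree-distance} gives $\tau_B(\pi(E_\alpha^A)^*\pi(E_\beta^A))=\delta_{\alpha\beta}$. Substituting (C2) turns this into $\overline{c_\alpha}c_\beta\delta_{M(\alpha)M(\beta)}=\delta_{\alpha\beta}$. Taking $\alpha=\beta$ gives $|c_\alpha|=1$. Taking $\alpha\neq\beta$ with $c_\alpha,c_\beta\neq0$ forces $M(\alpha)\neq M(\beta)$, which is injectivity. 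This establishes Eq.~\eqref{eq:mf-degree-D-basis-map}.

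For the circuit, I follow the state through each step, as in Section~\ref{subsec:mf-four-step-circuit}.

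\textbf{Step 1.} Since $q(a)\in\mathcal F_D^A$ by Eq.~\eqref{eq:mf-polynomial-identity}, it expands in $\{E_\alpha^A\}_{\alpha\in S_D}$ with the coefficients $x_\alpha$ of Eq.~\eqref{eq:mf-normalized-coefficients}. By Eq.~\eqref{eq:mf-oft}, $\mathsf F_A$ therefore maps the guide to $\sum_{\alpha\in S_D} x_\alpha|\alpha\rangle$.

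\textbf{Step 2.} The reversible evaluation $\mathsf U_M$ writes $M(\alpha)$ into the target register. The diagonal phase $\mathsf U_c$ is unitary because $|c_\alpha|=1$. Its workspace is cleaned by the standard compute--copy--uncompute pattern.

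\textbf{Step 3.} By hypothesis 4, the decoder sends $|M(\alpha)\rangle|0\rangle$ to $|M(\alpha)\rangle|\alpha\rangle$. The XOR then clears the source register, and inverse decoding clears the recovery register, as in Eq.~\eqref{eq:mf-decoder-uncomputation}. Injectivity of $M$ on $S_D$ matters here: it makes $\alpha$ a well-defined function of $M(\alpha)$, so a decoder with this action exists and acts linearly on the superposition.

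\textbf{Step 4.} Applying $\pi$ to the expansion of $q(a)/\|q(a)\|_{2,A}$ and using (C2) gives $q(b)/\|q(a)\|_{2,A}=\sum_{\alpha\in S_D} c_\alpha x_\alpha E^B_{M(\alpha)}$. By the isometry part of Proposition~\ref{thm:mf-degree-distance}, $\|q(b)\|_{2,B}=\|q(a)\|_{2,A}$, so the normalizations agree. Then $\mathsf F_B^{-1}$ returns exactly $|\psi_B^q\rangle$, and the result is nonzero because $q(a)\neq0$.

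Determinism follows because every step is unitary and every ancilla returns to $|0\rangle$, so no postselection occurs. Efficiency follows by summing the assumed polynomial costs of preparation, the two transforms, label and phase computation, and decoding.

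The main obstacle is the Step~3 bookkeeping. The decoder is specified only on inputs in $M(S_D)$, so I must check that linearity on the actual superposition is legitimate. This holds because the state is supported only on the valid labels $M(S_D)$, on which the decoder's action is determined.
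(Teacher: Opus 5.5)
Your proposal is correct and follows essentially the same route as the paper. The unit-modulus phases and the injectivity of $M|_{S_D}$ come from Proposition~\ref{prop:mf-compatible-coordinates}, that is, the isometry of Proposition~\ref{thm:mf-degree-distance} applied to (C2), and the circuit claim is the Step~1--4 computation of Section~\ref{subsec:mf-four-step-circuit}. Your orthogonality argument identifying the theorem's $S_D=\{\alpha:w_A(\alpha)\le D\}$ with the label set of (C1), and your explicit use of the isometry to match the normalizations in Step~4, are worthwhile details that the paper leaves implicit.
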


\subsection{Polynomial transduction in spin systems}
\label{subsec:mf-relative-codes}
We now specialize to Pauli Hamiltonians
(Fig.~\ref{fig:relative-decoding-framework}). This case makes the
framework concrete and shows how DQI and HDQI arise as relation-free
instances.

\begin{figure}[htbp]
\centering
\begin{tikzpicture}[
  x=1cm,y=1cm,>=Latex,
  font=\fontsize{9}{11}\selectfont,
  flow/.style={-Latex,semithick},
  panel/.style={draw,rounded corners=7pt,semithick,align=center,
                inner xsep=5pt,inner ysep=7pt},
  coeff/.style={panel,fill=orange!12,draw=orange!60!black,
                text width=5.35cm},
  state/.style={align=center,inner sep=2pt},
  operation/.style={align=center,text width=4.25cm,
                     font=\fontsize{9}{11}\selectfont}
]

  \node[panel,fill=blue!8,draw=blue!55!black,text width=7.2cm,
        minimum height=2.1cm] (source) at (12.25,7.25)
    {\textbf{Source polynomial guide}\\[2mm]
     $\displaystyle |\psi_A^p\rangle=
       \frac{|p(H_A)\rangle\!\rangle_A}{\|p(H_A)\|_{2,A}}$\\[2mm]
     purification of $\rho_A^p=p(H_A)^2/\Tr p(H_A)^2$};

  \node[coeff,minimum height=1.7cm] (ca) at (3.1,7.25)
    {\textbf{Source coefficients}\\[1mm]
     $\displaystyle\sum_{u\in S_D(A)}x_u|u\rangle$\\[1mm]
     DQI/HDQI: $K_A=\{0\}$};

  \node[state] (s0) at (1.7,5.25)
    {$|u\rangle|0\rangle$};
  \node[state] (s1) at (1.7,3.75)
    {$\chi(u)|u\rangle|Mu\rangle$};
  \node[state] (s2) at (1.7,2.25)
    {$\chi(u)|0\rangle|Mu\rangle$};

  \draw[flow] (ca.south -| s0.north) -- (s0.north);
  \draw[flow] (s0.south) -- (s1.north);
  \node[operation] at (5.0,4.5)
    {\textbf{2.} Compute the target label $Mu$\\
     and the phase $\chi(u)$};

  \draw[flow] (s1.south) -- (s2.north);
  \node[operation,text=red!65!black] at (5.0,3.0)
    {\textbf{3.} Relative decoder $Bc\mapsto Ac$\\
     erases the source label\\
     $2D<d_{\mathrm{rel}}$};

  \node[panel,draw=black!25,fill=black!2,text width=7.2cm,
        minimum height=2.35cm] at (12.25,4.05)
    {$\pi(a_i)=b_i,\qquad 2D<d_{\mathrm{rel}}$\\[1mm]
     $u=Ac,\quad Mu=Bc,\quad\ker M\simeq K_B/K_A$};

  \node[coeff,minimum height=1.35cm] (cb) at (3.1,0.75)
    {\textbf{Target coefficients}\\[1mm]
     $\displaystyle\sum_{u\in S_D(A)}
       \chi(u)x_u|Mu\rangle$};

  \node[panel,fill=green!8,draw=green!45!black,text width=7.2cm,
        minimum height=2.1cm] (target) at (12.25,0.75)
    {\textbf{Target polynomial output}\\[2mm]
     $\displaystyle |\psi_B^p\rangle=
       \frac{|p(H_B)\rangle\!\rangle_B}{\|p(H_B)\|_{2,B}}$\\[2mm]
     purification of $\rho_B^p=p(H_B)^2/\Tr p(H_B)^2$};

  \draw[flow] (s2.south) -- (s2.south |- cb.north);

  \draw[flow] (source.west) -- (ca.east)
    node[midway,above=2pt] {$1.\;\mathsf F_A$}
    node[midway,below=2pt,font=\fontsize{8}{9.5}\selectfont]
      {$|x,z\rangle\leftarrow|P(x,z)\rangle\!\rangle$};

  \draw[flow] (cb.east) -- (target.west)
    node[midway,above=2pt] {$4.\;\mathsf F_B^{-1}$}
    node[midway,below=2pt,font=\fontsize{8}{9.5}\selectfont]
      {$|x,z\rangle\rightarrow|P(x,z)\rangle\!\rangle$};
\end{tikzpicture}
\caption{Polynomial transduction in the spin system as an illustration.}
\label{fig:relative-decoding-framework}
\end{figure}

\subsubsection{Formulation: Pauli generators and quotient codes}
\label{subsubsec:spin-formulation}

\paragraph{Pauli Hamiltonians and their relation codes.}
We first instantiate the framework for Pauli Hamiltonians.  Consider a
source Hamiltonian and a target Hamiltonian
\begin{equation}
 H_A=\sum_{i=1}^s \lambda_i a_i,
 \qquad
 H_B=\sum_{i=1}^s \lambda_i b_i,
 \qquad
 \lambda_i>0,
 \label{eq:mf-pauli-hamiltonians}
\end{equation}
whose indexed signed Pauli terms are
\begin{equation}
 a_i=\eta_i^A P_{n_A}(u_i),
 \qquad
 b_i=\eta_i^B P_{n_B}(v_i),
 \qquad
 \eta_i^L\in\{\pm1\}.
 \label{eq:mf-pauli-terms}
\end{equation}
Here
\begin{equation}
 P_n(x,z):=i^{x\cdot z}X^xZ^z
\end{equation}
is the Hermitian Pauli representative, where the exponent
$x\cdot z$ is evaluated as an integer.  The individual Hamiltonian
terms $a_i$ and $b_i$ are the marked generators of the source and
target algebras,
\begin{equation}
 \mathcal A=C^*(I,a_1,\ldots,a_s),
 \qquad
 \mathcal B=C^*(I,b_1,\ldots,b_s).
\end{equation}
Our goal is Hamiltonian polynomial transduction: starting from the
operator state associated with $p(H_A)$, prepare the corresponding
operator state associated with $p(H_B)$.  Algebraically, this is
possible whenever the termwise assignment
\begin{equation}
 a_i\longmapsto b_i
\end{equation}
extends to a unital $*$-homomorphism
$\pi:\mathcal A\to\mathcal B$, because then
\begin{equation}
 H_B=\pi(H_A),
 \qquad
 p(H_B)=\pi\!\left(p(H_A)\right)
 \label{eq:mf-pauli-polynomial-map}
\end{equation}
for every polynomial $p$.

To characterize when such a homomorphism exists, collect the binary
Pauli labels of the Hamiltonian terms into~\cite{DehaeneDeMoor2003}
\begin{equation}
 A=[u_1\ \cdots\ u_s],
 \qquad
 B=[v_1\ \cdots\ v_s],
 \qquad
 K_A:=\ker A,
 \qquad
 K_B:=\ker B.
 \label{eq:mf-pauli-codes}
\end{equation}
All kernels and matrix operations in the label space are over
$\mathbb F_2$.  A vector $c\in K_L$ specifies a product of indexed
Hamiltonian terms whose projective Pauli label vanishes.  Accordingly,
for $c\in\mathbb F_2^s$ define the ordered products
\begin{equation}
 W_A(c):=a_1^{c_1}\cdots a_s^{c_s},
 \qquad
 W_B(c):=b_1^{c_1}\cdots b_s^{c_s}.
\end{equation}
If $c\in K_L$, then this product is scalar, and we write
\begin{equation}
 W_L(c)=\zeta_L(c)I.
\end{equation}
Thus $K_L$ records the multiplicative relations among the indexed
Hamiltonian terms, while $\zeta_L(c)$ records the corresponding scalar
phase.  We call the source Hamiltonian relation free when
\begin{equation}
 K_A=\{0\},
\end{equation}
that is, when no nontrivial reduced ordered product of its indexed Pauli terms is scalar.
\begin{theorem}\label{theorem:pauli}
The termwise assignment $a_i\mapsto b_i$ extends to a unital
$*$-homomorphism $\pi:\mathcal A\to\mathcal B$ exactly when
\begin{equation}
 A^{\mathsf T}J_{n_A}A
 =
 B^{\mathsf T}J_{n_B}B,
 \qquad
 K_A\subseteq K_B,
 \qquad
 \zeta_A(c)=\zeta_B(c)
 \quad(c\in K_A),
 \label{eq:mf-pauli-compatibility}
\end{equation}
where
\begin{equation}
 J_n=
 \begin{pmatrix}
 0&I_n\\
 I_n&0
 \end{pmatrix}.
\end{equation}
\end{theorem}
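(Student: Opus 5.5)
We prove both directions by comparing how the source and target terms multiply, and reduce the existence of $\pi$ to a statement about ordered reduced monomials.

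\emph{Necessity.} Suppose $\pi$ is a unital $*$-homomorphism with $\pi(a_i)=b_i$.

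First, commutation data are preserved. Pauli terms satisfy $a_ia_j=(-1)^{\omega(u_i,u_j)}a_ja_i$, where $\omega(u,v)=u^{\mathsf T}J_nv$. Applying $\pi$ gives $b_ib_j=(-1)^{\omega(u_i,u_j)}b_jb_i$. Since $b_ib_j\neq0$, we get $\omega(u_i,u_j)=\omega(v_i,v_j)$ for $i\ne j$. The diagonal entries of $J_n$-forms are automatically $0$ over $\mathbb F_2$. This gives $A^{\mathsf T}J_{n_A}A=B^{\mathsf T}J_{n_B}B$.

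Second, relations are preserved. For $c\in K_A$ we have $W_A(c)=\zeta_A(c)I_A$. Applying $\pi$ and using unitality gives $W_B(c)=\zeta_A(c)I_B$. Now $W_B(c)$ is a scalar multiple of $P(Bc)$ up to phase. A scalar multiple of the identity forces $Bc=0$, so $c\in K_B$, and comparing scalars gives $\zeta_B(c)=\zeta_A(c)$.

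\emph{Sufficiency.} This is the main step. Every word in the generators (all are Hermitian unitaries with $a_i^2=I$) can be reordered into $\pm W_A(c)$ for some $c\in\mathbb F_2^s$. The sign is determined only by the commutation matrix $A^{\mathsf T}J_{n_A}A$ and $a_i^2=I$. The same reordering applies verbatim on the target, with the same sign, because the commutation matrices agree.

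Hence $\mathcal A=\operatorname{span}\{W_A(c)\}$, and we try to define
\[
\pi\Big(\sum_c x_cW_A(c)\Big)=\sum_c x_cW_B(c).
\]
Well-definedness is the obstacle, because the $W_A(c)$ are linearly dependent exactly along cosets of $K_A$.

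To handle this, first note that $W_A(c)W_A(c')=\sigma(c,c')W_A(c+c')$, where $\sigma$ depends only on the commutation matrix. Consequently, for $k\in K_A$,
\[
W_A(c+k)=\sigma(c,k)^{-1}\zeta_A(k)\,W_A(c),
\]
and the analogous identity holds in $B$ with the same $\sigma$. Because $\zeta_B(k)=\zeta_A(k)$, the target identity is compatible with the source one. Moreover, $W_A(c)$ for $c$ in distinct cosets of $K_A$ have distinct Pauli labels $Ac$, so they are orthogonal under $\tau_A$. Therefore a relation $\sum x_cW_A(c)=0$ decomposes coset by coset. Within one coset it reads $\sum_k x_{c+k}\sigma(c,k)^{-1}\zeta_A(k)=0$, which transports to the same vanishing sum in $B$. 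So $\pi$ is well defined and linear.

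It remains to check the algebra structure. Multiplicativity follows from the shared cocycle $\sigma$. Unitality holds because $W(0)=I$. The $*$-property holds because $W(c)^*$ equals the reversed product, whose reordering sign is again fixed by the common commutation matrix. Finally, $\pi(a_i)=b_i$ by construction, completing the proof.
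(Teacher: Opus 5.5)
Your proof is correct and follows essentially the same route as the paper. Both use a common ordered-word normal form with a shared reordering cocycle, group terms by $K_A$-cosets using $K_A\subseteq K_B$ and the matching phases, and use Hilbert--Schmidt orthogonality of distinct Pauli labels for well-definedness, with necessity argued identically. The only difference is packaging: the paper proves $\ker\vartheta_A\subseteq\ker\vartheta_B$ for the evaluation maps from the free $*$-algebra and invokes a general lemma that supplies multiplicativity and the $*$-property automatically, whereas you verify those two properties directly from the cocycle.
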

The first condition says that corresponding Hamiltonian terms have the
same pairwise commutation and anticommutation relations.  The second
says that every scalar relation already present among the source terms
is also a relation among the target terms.  The third requires the
scalar phase of every such source relation to be preserved. Proofs can be found in Appendix~\ref{app:homomorphism-criteria}. Also, in Appendix~\ref{subsec:source-relations}, we prove that for every $K_A\subseteq K_B$, one can construct a source Hamiltonian $H_A$ with the label matrix kernel equal to $K_A$.

The quotient of label spaces is
\begin{equation}
 M:\mathbb F_2^s/K_A\longrightarrow\mathbb F_2^s/K_B,
 \qquad [c]_{K_A}\longmapsto[c]_{K_B},
 \label{eq:mf-code-quotient}
\end{equation}
or, in physical Pauli labels, $M(Ac)=Bc$. The source quotient weight and the coordinate expression for the
intrinsic relative distance are
\begin{equation}
 w_A(Ac):=\min_{k\in K_A}|c+k|,\qquad
 d_{A\to B}:=\min\{|c|:c\in K_B\setminus K_A\},
 \label{eq:mf-relative-distance}
\end{equation}
with $d_{A\to B}=+\infty$ if the set is empty. Here $|c|$ is Hamming
weight. Let $S_D(A):=\{u\in\operatorname{im}A:w_A(u)\le D\}$.
Products of at most $D$ marked generators have labels in $S_D(A)$, so
\begin{equation}
 \mathcal F_D^A
 =\operatorname{span}\{P_{n_A}(u):u\in S_D(A)\}.
 \label{eq:mf-pauli-filtration}
\end{equation}
The Pauli bases are orthonormal for the traces in Eq.~\eqref{eq:mf-traces},
and the homomorphism acts by
\begin{equation}
 \pi(P_{n_A}(u))=\chi(u)P_{n_B}(M(u)),\qquad \chi(u)\in\{\pm1\}.
 \label{eq:mf-pauli-monomial}
\end{equation}
The sign is obtained by choosing any $c$ with $Ac=u$ and comparing
the ordered source and target word phases. Eq.~\eqref{eq:mf-pauli-compatibility}
makes it independent of that choice. The operator--Fourier transform
is the phase-corrected Bell transform given below.

At the selected cutoff $D$, Eq.~\eqref{eq:mf-pauli-filtration}
verifies (C1), and
Eq.~\eqref{eq:mf-pauli-monomial} verifies (C2). Moreover,
\begin{equation}
 d_{\mathrm{rel}}
 =\min_{\substack{u\ne0\\M(u)=0}}w_A(u)
 =\min_{c\in K_B\setminus K_A}|c|
 =d_{A\to B}.
\end{equation}
Indeed, each word has zero normalized trace unless its reduced Pauli
label vanishes. The first word whose traces differ is a minimum-weight
element of $K_B\setminus K_A$; inherited scalar relations have the
same phase on both sides by
Eq.~\eqref{eq:mf-pauli-compatibility}. This proves the displayed
formula directly from the intrinsic definition. Proposition~\ref{thm:mf-degree-distance} therefore
gives injectivity on $S_D(A)$ and preservation of inner products
whenever $2D<d_{A\to B}$.

The uncomputation step therefore requires an efficient coherent decoder
satisfying
\begin{equation}
 \mathsf{Dec}_{A\leftarrow B}:|Bc\rangle|0\rangle
 \longmapsto |Bc\rangle|Ac\rangle
 \qquad(|c|\le D).
 \label{eq:mf-relative-decoder}
\end{equation}
The distance bound ensures that the decoder's output is unique on
the degree-$D$ support, as in Section~\ref{subsec:mf-four-step-circuit}.

\paragraph{Relative versus relation-free ordinary distance.}
The ordinary distance of the target relation code for this same marked
term list is
\begin{equation}
 d_B:=\min\{|c|:0\ne c\in K_B\}.
 \label{eq:mf-ordinary-target-distance}
\end{equation}
It governs recovery
of the complete selection vector $c$ from $Bc$, rather than recovery
only of its source class $Ac$.

\begin{proposition}[Relative distance vs ordinary distance]
\label{prop:mf-relative-vs-ordinary}
For a compatible Pauli source and target with the same indexed marking,
\begin{equation}
 d_{\mathrm{rel}}=d_{A\to B}\ge d_B,
 \qquad K_A=\{0\}\ \Longrightarrow\ d_{A\to B}=d_B.
 \label{eq:mf-ordinary-relative-comparison}
\end{equation}
If $d_B<\infty$, strict improvement holds exactly when every
minimum-weight target relation is already a source relation:
\begin{equation}
 d_{A\to B}>d_B
 \quad\Longleftrightarrow\quad
 \{c\in K_B:|c|=d_B\}\subseteq K_A.
 \label{eq:mf-strict-distance-improvement}
\end{equation}
More generally, for compatible source choices with kernels
$K_A\subseteq K_{A'}\subseteq K_B$ and the same target marking,
\begin{equation}
 d_{A\to B}\le d_{A'\to B}.
 \label{eq:mf-source-kernel-monotonicity}
\end{equation}
These statements use the extended ordering including $+\infty$;
if $K_A=K_B$, then $d_{A\to B}=+\infty$.
\end{proposition}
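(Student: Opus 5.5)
The whole proposition follows from comparing two minima of Hamming weight over nested subsets of $\mathbb F_2^s$. The identity $d_{\mathrm{rel}}=d_{A\to B}$ is already established in the displayed formula preceding the statement, which is derived from Definition~\ref{def:mf-intrinsic-distance}, so I will quote it. One could also phrase the argument as a special case of Theorem~\ref{the:mf-general-distance-comparison} with the relation-free Pauli presentation as $\mathcal A_0$. I prefer to argue directly with the codes $K_A\subseteq K_B$, since compatibility (Theorem~\ref{theorem:pauli}) provides exactly this inclusion.

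\textbf{Inequality and the relation-free case.} Compatibility gives $K_A\subseteq K_B$, so $0\notin K_B\setminus K_A$ and hence
\[
K_B\setminus K_A\subseteq K_B\setminus\{0\}.
\]
A minimum over a subset is at least the minimum over the whole set. This gives $d_{A\to B}\ge d_B$, with the convention $\min\emptyset=+\infty$ covering the degenerate cases. If $K_A=\{0\}$, the two index sets coincide, so the distances are equal. If $K_A=K_B$, the set $K_B\setminus K_A$ is empty, so $d_{A\to B}=+\infty$.

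\textbf{Strictness criterion.} Assume $d_B<\infty$ and let $T=\{c\in K_B:|c|=d_B\}$, which is nonempty.
\begin{itemize}
\item[($\Leftarrow$)] Suppose $T\subseteq K_A$, and take any $c\in K_B\setminus K_A$. Then $c\neq0$, so $|c|\ge d_B$. Also $c\notin T$, because $T\subseteq K_A$. Hence $|c|>d_B$. Taking the minimum over this finite set, or obtaining $+\infty$ if it is empty, gives $d_{A\to B}>d_B$.
\item[($\Rightarrow$)] Conversely, suppose some $c\in T$ lies outside $K_A$. Then $c\in K_B\setminus K_A$, so $d_{A\to B}\le |c|=d_B$. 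Combined with the inequality above, this gives equality, so the improvement is not strict.
\end{itemize}
The only care needed is that the minimum is over a finite set, so a strict bound on every element passes to a strict bound on the minimum.

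\textbf{Monotonicity.} If $K_A\subseteq K_{A'}\subseteq K_B$, then $K_B\setminus K_{A'}\subseteq K_B\setminus K_A$. The minimum over the smaller set is at least the minimum over the larger one, so $d_{A\to B}\le d_{A'\to B}$.

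The only place where anything beyond set inclusion could be needed is the first identity $d_{\mathrm{rel}}=d_{A\to B}$. That identity requires inherited scalar relations to carry the same phase on both sides, i.e.\ $\zeta_A=\zeta_B$ on $K_A$ from Eq.~\eqref{eq:mf-pauli-compatibility}. I rely on the earlier derivation for this rather than re-proving it, so I expect no real obstacle.
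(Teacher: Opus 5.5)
Your proof is correct and is essentially the paper's argument. The paper invokes Theorem~\ref{the:mf-general-distance-comparison}, whose proof is the same comparison of minima over nested sets of target scalar relations, and you carry it out directly in the code coordinates, using $K_B\setminus K_{A'}\subseteq K_B\setminus K_A\subseteq K_B\setminus\{0\}$; this also makes the biconditional in Eq.~\eqref{eq:mf-strict-distance-improvement} and the monotonicity in Eq.~\eqref{eq:mf-source-kernel-monotonicity} explicit, rather than leaving them to the translation from the general theorem.
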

\begin{proof}
Directly apply Theorem~\ref{the:mf-general-distance-comparison} here and replace with the Pauli notation.
\end{proof}

The relation-free lifts used for DQI and HDQI below have $K_A=\{0\}$,
so their relative distance is $d_B$. The ordinary
full-ball uniqueness condition is $2D<d_B$, while the relative condition
is $2D<d_{A\to B}$. For finite distances their guaranteed integer
cutoffs satisfy
\begin{equation}
 \left\lfloor\frac{d_B-1}{2}\right\rfloor
 \le
 \left\lfloor\frac{d_{A\to B}-1}{2}\right\rfloor.
 \label{eq:mf-compared-degree-windows}
\end{equation}
A strict increase of the integer cutoffs therefore allows a higher polynomial degree obeying $d_B\le2D<d_{A\to B}$. Preparing the new source and decoding
efficiently remain separate requirements.

\paragraph{The operator--Fourier circuit.}
For $u=(x,z)$,
\begin{equation}
 |P_n(x,z)\rangle\!\rangle
 =2^{-n/2}i^{x\cdot z}\sum_r(-1)^{z\cdot r}|r+x,r\rangle.
\end{equation}
The operator--Fourier transform
$|P_n(x,z)\rangle\!\rangle\mapsto|x,z\rangle$ is the $n$-fold
tensor product, over physical--reference qubit pairs, of the
two-qubit circuit below.
\begin{equation}
\Qcircuit @C=1.1em @R=0.9em {
\lstick{} & \ctrl{1} & \gate{H} & \ctrl{1} & \qswap            & \qw & \rstick{}\qw \\
\lstick{} & \targ    & \qw      & \gate{S} & \qswap\qwx[-1]   & \qw & \rstick{}\qw \\
}
\end{equation}
It is a Bell transform followed by a controlled-$S$ phase correction
and a register swap, as required by our Hermitian Pauli convention.

\subsubsection{Decoded quantum interferometry (DQI) as special relation-free source-to-target transduction}
\label{subsubsec:spin-dqi}
For the binary instance of decoded quantum interferometry
(DQI)~\cite{JordanEtAlDQI}, take $a_i=Z_i$ on $s$ source qubits and
$b_i=(-1)^{v_i}Z^{\beta_i}$ on $n$ target qubits, where
$\beta_i\in\mathbb F_2^n$. Then $K_A=\{0\}$ and all source
generators commute. The target Hamiltonian is diagonal, with
\begin{equation}
 H_B|x\rangle=f(x)|x\rangle,\qquad
 f(x):=\sum_{i=1}^s(-1)^{v_i+\beta_i\cdot x}.
 \label{eq:mf-dqi-objective}
\end{equation}
For $H_A=\sum_iZ_i$, the source Fourier coefficients of $p(H_A)$
depend only on the Hamming weight of the label and vanish above
$\deg p$. After the operator--Fourier transform they give the DQI
pilot state, a weighted superposition of Dicke states. Since $A=I_s$ in reduced \(Z\)-coordinates,
relative decoding reduces to ordinary syndrome decoding
$\mathsf{Dec}_{A\leftarrow B}:|Bc\rangle|0\rangle
 \longmapsto |Bc\rangle|Ac\rangle=|Bc\rangle|c\rangle$, and the phase
in Eq.~\eqref{eq:mf-pauli-monomial} is $(-1)^{v\cdot c}$.
After inverse Fourier transformation, the output is
\begin{equation}
 |\psi_B^p\rangle
 =\frac{\sum_{x\in\mathbb F_2^n}p(f(x))|x\rangle|x\rangle}
 {\bigl(\sum_x|p(f(x))|^2\bigr)^{1/2}}.
 \label{eq:mf-dqi-output}
\end{equation}
Bitwise controlled additions erase the second copy of $x$, leaving
the DQI state. On this diagonal algebra, the operator--Fourier transform
reduces to the ordinary binary Fourier transform (the Hadamard transform) after the same
duplicate-register removal.

\subsubsection{Hamiltonian decoded quantum interferometry (HDQI) as special relation-free source-to-target transduction}
\label{subsubsec:spin-hdqi}

The usual Hamiltonian DQI (HDQI) construction is recovered from a
relation-free source with independent binary generator labels and the same
commutation matrix as the target~\cite{HamiltonianDQI}. For
\begin{equation}
 H_B=\sum_{i=1}^s \eta_i^B P_{n_B}(v_i),
 \qquad
 \eta_i^B\in\{\pm1\},
\end{equation}
set
\begin{equation}
 \Sigma_{ij}:=v_i^{\mathsf T}J_{n_B}v_j
\end{equation}
and choose, on $s$ source qubits,
\begin{equation}
 a_i:=X_i\prod_{j<i}Z_j^{\Sigma_{ij}},
 \qquad
 H_A:=\sum_{i=1}^s a_i,
 \qquad
 \pi(a_i):=\eta_i^B P_{n_B}(v_i).
 \label{eq:mf-hdqi-lift}
\end{equation}
The coefficients of $p(H_A)$ in the ordered-word basis $\{a_1^{c_1}\cdots a_s^{c_s}:c\in\mathbb F_2^s\}$ give the HDQI pilot state, up to the computable phases that
convert these ordered products to the chosen Hermitian Pauli basis.

The binary Pauli-label matrix of the source is
\begin{equation}
 A=
 \begin{pmatrix}
 I_s\\
 T
 \end{pmatrix},
 \qquad
 T_{ji}:=
 \begin{cases}
  \Sigma_{ij}, & j<i,\\
  0, & j\ge i.
 \end{cases}
 \label{eq:mf-hdqi-source-label-matrix}
\end{equation}
The $X$-block of $A$ is the identity, so $A$ is injective and
\begin{equation}
 K_A=\ker A=\{0\}.
\end{equation}
Thus every source Pauli label $Ac$ uniquely determines the
term-selection vector $c\in\mathbb F_2^s$.  Equivalently, on the source
label space we may canonically identify
\begin{equation}
 Ac\longleftrightarrow c.
\end{equation}
Therefore, the relative-decoding task
\begin{equation}
 Bc\longmapsto Ac
\end{equation}
becomes,
\begin{equation}
 Bc\longmapsto c.
 \label{eq:mf-hdqi-ordinary-decoding}
\end{equation}
Hence HDQI is the relation-free case of the present
transduction framework: relative decoding reduces to ordinary decoding.

\clearpage
\section{Applications: optimization and Gibbs states}
\label{sec:performance}
This section develops optimization and Gibbs-state preparation. We first
choose a guiding state to maximize an objective, such as the target
energy or a constraint satisfaction score. When the required source
moments are accessible classically, as in DQI, the optimal polynomial
guide and its expected score can be computed classically. We use this
calculation for the nonlinear-chain comparison in
Section~\ref{sec:applications}. We then turn to Gibbs-state preparation.
Both applications can be combined with the efficient realizations in
Sections~\ref{sec:main-framework} and~\ref{sec:generalizations}.

\subsection{Bounded-degree approximate optimization}
\label{subsec:value-general}
\label{sec:achieved-value}

Fix a degree bound $D$ and a self-adjoint target observable
$O_B$. We optimize its expectation over guides $X_A\in\mathcal F_D^A$
whose images $X_B:=\pi(X_A)$ are nonzero. The target marginal and its
objective value are
\begin{equation}
 \rho_B[X_A]:=\frac{X_BX_B^*}{\operatorname{Tr}(X_BX_B^*)},
 \qquad
 \mathcal E_B(O_B;X_A):=
 \frac{\tau_B(X_B^*O_BX_B)}{\tau_B(X_B^*X_B)}.
 \label{eq:value-born}
\end{equation}
We first formulate the finite variational problem, then identify when
relative distance makes it computable from the source, and compare full
word and polynomial guides. Proofs are in
Appendix~\ref{app:optimization-details}.

\subsubsection{Variational optimization and achievable values}
\label{subsec:value-relative}

Choose $\mathcal W_D=\operatorname{span}\{F_0,\ldots,F_{R-1}\}
\subseteq\mathcal F_D^A$, write $X_A(c)=\sum_jc_jF_j$, and set
$Y_j=\pi(F_j)$. Define
\begin{equation}
 (G_\pi)_{ij}:=\tau_B(Y_i^*Y_j),\qquad
 (K_\pi)_{ij}:=\tau_B(Y_i^*O_BY_j).
 \label{eq:value-general-matrices}
\end{equation}
These matrices use the target trace, or equivalently the pulled-back
trace $\tau_\pi:=\tau_B\circ\pi$. Even if $O_B=\pi(O_A)$, one cannot
in general replace $\tau_\pi$ by $\tau_A$.

\begin{proposition}[Variational objective matrix; proof in Appendix~\ref{app:optimization-variational}]
\label{prop:value-ritz}
If the guide space has a nonzero target image, then
\begin{equation}
 \mathcal E_B(O_B;X_A(c))=\frac{c^*K_\pi c}{c^*G_\pi c},\qquad
 c^*G_\pi c>0,
 \label{eq:value-general-rayleigh}
\end{equation}
and its maximum is
\begin{equation}
 \max_{c:\,c^*G_\pi c>0}\mathcal E_B(O_B;X_A(c))
 =\lambda_{\max}\!\left(G_+^{-1/2}K_+G_+^{-1/2}\right).
 \label{eq:value-general-optimum}
\end{equation}
Here $G_+$ and $K_+$ are the restrictions to $\operatorname{ran}G_\pi$,
where $G_+>0$. For minimization use the smallest eigenvalue.
\end{proposition}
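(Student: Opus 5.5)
The plan is to reduce the objective to a generalized Rayleigh quotient and then maximize it by restricting to the range of $G_\pi$. Linearity of $\pi$ gives $X_B=\pi(X_A(c))=\sum_j c_jY_j$. Expanding the numerator and denominator of $\mathcal E_B(O_B;X_A(c))$ in Eq.~\eqref{eq:value-born} by sesquilinearity of $(X,Y)\mapsto\tau_B(X^*Y)$ and $(X,Y)\mapsto\tau_B(X^*O_BY)$ gives
\[
\tau_B(X_B^*X_B)=c^*G_\pi c,\qquad \tau_B(X_B^*O_BX_B)=c^*K_\pi c .
\]
Faithfulness of the normalized trace then gives $c^*G_\pi c=\|X_B\|_{2,B}^2>0$ if and only if $X_B\ne0$. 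This settles Eq.~\eqref{eq:value-general-rayleigh}. The hypothesis that the guide space has a nonzero target image ensures the feasible set is nonempty.

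For the maximum, I use two structural facts. First, $G_\pi$ is a positive semidefinite Gram matrix. Second, $K_\pi$ is Hermitian, because $O_B$ is self-adjoint. I then show that the kernel of $G_\pi$ is invisible to $K_\pi$. If $G_\pi k=0$, then $k^*G_\pi k=0$, so $\sum_jk_jY_j=0$. Hence $K_\pi k=0$ and $k^*K_\pi=0$.

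With this in hand, decompose $c=c_++k$ with $c_+\in\operatorname{ran}G_\pi$ and $k\in\ker G_\pi$; this is possible since $G_\pi$ is Hermitian. Both quadratic forms depend only on $c_+$, so the problem reduces to maximizing $c_+^*K_+c_+/c_+^*G_+c_+$ over nonzero $c_+$, where $G_+>0$.

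The substitution $y=G_+^{1/2}c_+$ is a bijection on $\operatorname{ran}G_\pi\setminus\{0\}$. It turns the ratio into $y^*G_+^{-1/2}K_+G_+^{-1/2}y/y^*y$. The ordinary Rayleigh--Ritz theorem for this Hermitian matrix then gives the maximum $\lambda_{\max}$, attained at a top eigenvector. The minimum case is identical with $\lambda_{\min}$.

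The only step that needs care is the kernel argument: it is what lets us discard $\ker G_\pi$ when $G_\pi$ is singular. Singularity happens exactly when $\pi$ fails to be injective on $\mathcal W_D$, which is possible when $2D\ge d_{\mathrm{rel}}$. The kernel argument rests on faithfulness of $\tau_B$. Everything else is routine linear algebra.
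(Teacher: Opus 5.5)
Your proof is correct and follows the paper's argument essentially step for step: you write both traces as quadratic forms in $c$, show $\ker G_\pi\subseteq\ker K_\pi$ because a zero-norm coefficient vector gives the zero target operator, restrict to $\operatorname{ran}G_\pi$, and apply Rayleigh--Ritz after substituting $G_+^{1/2}$. One small correction to your closing aside: $G_\pi$ is singular exactly when $\pi$ fails to be injective on $\mathcal W_D$ only if the $F_j$ are linearly independent, whereas the paper only requires them to span $\mathcal W_D$, so linear dependence among the $F_j$ also makes $G_\pi$ singular.
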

Indeed, $\ker G_\pi\subseteq\ker K_\pi$: a zero-norm coefficient
vector represents the zero target operator. Removing these directions
reduces the problem to the Rayleigh--Ritz principle~\cite{HornJohnson2012}, or equivalently
$K_\pi c=\lambda G_\pi c$ on the positive-norm subspace.

Relative distance identifies when the target variational matrices can be
evaluated entirely from source data.

\begin{theorem}[Expected-value transfer under relative decoding; proof in Appendix~\ref{subsec:value-relative:details}]
\label{thm:value-relative-expectation}
Let $0\ne X_A\in\mathcal F_D^A$ and
$O_A=O_A^*\in\mathcal F_r^A$. If
\begin{equation}
 2D+r<d_{\mathrm{rel}},
 \label{eq:value-observable-window}
\end{equation}
then $\pi(X_A)\ne0$ and
\begin{equation}
 \mathcal E_B(\pi(O_A);X_A)
 =\frac{\tau_A(X_A^*O_AX_A)}{\tau_A(X_A^*X_A)}.
 \label{eq:value-relative-mean}
\end{equation}
Thus the source Gram and objective matrices determine the target
variational problem on every $\mathcal W_D\subseteq\mathcal F_D^A$.
\end{theorem}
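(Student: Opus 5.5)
The plan is to reduce both claims to the trace-matching characterization \eqref{eq:mf-trace-distance-characterization}, applied on the enlarged filtration level $\mathcal F_{2D+r}^A$. The key observation is that the relevant products lie in low filtration levels. For $X_A\in\mathcal F_D^A$ and $O_A\in\mathcal F_r^A$, the adjoint satisfies $X_A^*\in\mathcal F_D^A$, because the word spaces are closed under adjoints: the adjoint of a marked word is again a marked word of the same length. Since a product of a word of length at most $t_1$ and one of length at most $t_2$ is a word of length at most $t_1+t_2$, we get $X_A^*O_AX_A\in\mathcal F_{2D+r}^A$ and $X_A^*X_A\in\mathcal F_{2D}^A\subseteq\mathcal F_{2D+r}^A$.

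\textbf{Step 1: nonvanishing.} Because $r\ge0$, the hypothesis $2D+r<d_{\mathrm{rel}}$ implies $2D<d_{\mathrm{rel}}$. Proposition~\ref{thm:mf-degree-distance} then makes $\pi$ an isometry on $\mathcal F_D^A$, so
\[
 \|\pi(X_A)\|_{2,B}^2=\tau_B(\pi(X_A^*X_A))=\tau_A(X_A^*X_A)=\|X_A\|_{2,A}^2.
\]
The right-hand side is positive because $\tau_A$ is the normalized trace of a faithful representation on $\mathcal H_A$ and $X_A\neq0$. Hence $\pi(X_A)\ne0$, and the denominator of $\mathcal E_B$ is nonzero.

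\textbf{Step 2: the numerator.} Write $X_B=\pi(X_A)$ and use the $*$-homomorphism property to get $X_B^*\pi(O_A)X_B=\pi(X_A^*O_AX_A)$. Since $X_A^*O_AX_A\in\mathcal F_{2D+r}^A$ and $2D+r<d_{\mathrm{rel}}$, \eqref{eq:mf-trace-distance-characterization} gives $\tau_B\circ\pi=\tau_A$ on this space. Therefore
\[
 \tau_B(X_B^*\pi(O_A)X_B)=\tau_A(X_A^*O_AX_A).
\]
Dividing this by the denominator identity from Step 1 yields \eqref{eq:value-relative-mean}.

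\textbf{Final claim.} For any $\mathcal W_D=\operatorname{span}\{F_j\}\subseteq\mathcal F_D^A$, the same argument applies entrywise with $O_B=\pi(O_A)$. Each $F_i^*F_j$ lies in $\mathcal F_{2D}^A$ and each $F_i^*O_AF_j$ lies in $\mathcal F_{2D+r}^A$, so trace matching gives
\[
 (G_\pi)_{ij}=\tau_A(F_i^*F_j),\qquad (K_\pi)_{ij}=\tau_A(F_i^*O_AF_j).
\]
Both matrices are thus source data. Since $2D<d_{\mathrm{rel}}$, $G_\pi$ is also positive definite on $\mathcal W_D$, so Proposition~\ref{prop:value-ritz} applies without any kernel removal.

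There is no real obstacle; the only care needed is bookkeeping of the filtration degrees. The points to state explicitly are that the filtration is closed under adjoints and multiplicative, $\mathcal F_{t_1}^A\mathcal F_{t_2}^A\subseteq\mathcal F_{t_1+t_2}^A$, which follows directly from the word-span definition \eqref{eq:mf-filtration}.
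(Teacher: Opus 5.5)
Your proof is correct and follows the paper's argument. Like the paper, you place $X_A^*X_A$ and $X_A^*O_AX_A$ in $\mathcal F_{2D}^A$ and $\mathcal F_{2D+r}^A$, apply the trace-matching characterization \eqref{eq:mf-trace-distance-characterization}, use positivity of the source denominator to get $\pi(X_A)\ne0$, and repeat the argument entrywise for $F_i^*F_j$ and $F_i^*O_AF_j$. One small caveat concerns your extra remark that $G_\pi$ is positive definite: this holds for the quadratic form on $\mathcal W_D$, but for the coefficient matrix it holds only when the $F_j$ are linearly independent.
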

The products in the denominator and numerator have degrees at most
$2D$ and $2D+r$, so Eq.~\eqref{eq:mf-trace-distance-characterization} identifies their
traces. This also explains why transferring a degree-$D$ guide requires
less than transferring its observable value. For
$H_A=\sum_i h_i a_i$ and $H_B=\pi(H_A)$, the energy window is
\begin{equation}
 2D+1<d_{\mathrm{rel}},
 \label{eq:value-energy-window}
\end{equation}
whereas guide transduction requires only $2D<d_{\mathrm{rel}}$,
together with the circuit assumptions of the main framework.

Thus accessible source moments determine the expected target energy.
If the stronger window $2D+2<d_{\mathrm{rel}}$ holds, applying the same
result to $H_A^2$ also determines the second moment, and hence the
variance, from the source. In the nonlinear-chain example these
calculations reduce to a small tridiagonal matrix, without simulating
the quantum circuit. We now compare optimization over the full word
space with optimization over polynomial guides.

\paragraph{The full degree-$D$ word space.}
In the Pauli case, choose one ordered word
$W_A(c)=a_1^{c_1}\cdots a_s^{c_s}$ of weight at most $D$ for each
source coset in the radius-$D$ quotient ball. These words form an
orthonormal basis indexed by $\mathcal R_D$. With
$W_B(c)=\pi(W_A(c))$, define for $L\in\{A,B\}$
\begin{equation}
 (G_D^L)_{cd}=\tau_L(W_L(c)^*W_L(d)),\qquad
 (M_D^L)_{cd}=\tau_L(W_L(c)^*H_LW_L(d)).
 \label{eq:value-relative-matrix-meaning}
\end{equation}
For $X_A(z)=\sum_c z_cW_A(c)$, the exact target quotient is
$z^*M_D^Bz/(z^*G_D^Bz)$. Relative distance gives
\begin{equation}
 2D<d_{\mathrm{rel}}\Longrightarrow G_D^B=I,\qquad
 2D+1<d_{\mathrm{rel}}\Longrightarrow M_D^B=M_D^A,
 \label{eq:value-relative-matrix-windows}
\end{equation}
and hence
\begin{equation}
 \max_{0\ne X_A\in\mathcal F_D^A}\mathcal E_B(H_B;X_A)
 =\lambda_{\max}(M_D^A),\qquad 2D+1<d_{\mathrm{rel}}.
 \label{eq:value-full-relative-optimum}
\end{equation}
Its eigenvector specifies the optimal word coefficients. Since the dimension
$|\mathcal R_D|$ can grow as $\sum_{j=0}^D\binom{s}{j}$, direct
optimization over the full word space can require exponentially large
matrices.
Appendix~\ref{app:optimization-spaces} gives the explicit Pauli phase
and scalar-relation formulas for these matrix entries.

\paragraph{Polynomial guides and the Jacobi matrix.}
Restricting the guide to a polynomial in the Hamiltonian leaves at most
$D+1$ coefficients to choose. The resulting optimization is an
eigenvalue problem: multiplication by energy, expressed in an
orthonormal polynomial basis, gives a tridiagonal matrix whose largest
eigenvalue is the best attainable mean energy.

For every represented family, the smaller choice
$X_A=p(H_A)$, $\deg p\le D$, uses
$\mathcal K_D(H_A)=\operatorname{span}\{I,H_A,\ldots,H_A^D\}$.
Writing $H_A=\sum_e eP_e^A$, let
$\mu_A(e)=\tau_A(P_e^A)$ and the moment $m_k=\sum_e\mu_A(e)e^k=\tau_A(H_A^k)$.
In the energy window,
\begin{equation}
 \mathcal E_B(H_B;p(H_A))
 =\frac{\sum_e\mu_A(e)e|p(e)|^2}{\sum_e\mu_A(e)|p(e)|^2}
 =\frac{c^*K_D^{\mathrm{poly}}c}{c^*G_D^{\mathrm{poly}}c},
 \label{eq:value-polynomial-rayleigh}
\end{equation}
where $p(x)=\sum_{j=0}^Dc_jx^j$,
$(G_D^{\mathrm{poly}})_{ij}=m_{i+j}$, and
$(K_D^{\mathrm{poly}})_{ij}=m_{i+j+1}$. Outside that window, the
exact spectral formula instead uses $\nu_\pi(e)=\tau_B(\pi(P_e^A))$.

Orthonormalizing $1,x,\ldots,x^D$ in $L^2(\mu_A)$ gives
$\varphi_0,\ldots,\varphi_{r_D-1}$, where
$r_D=\min\{D+1,|\operatorname{spec}(H_A)|\}$ and zero-norm
directions are removed. The compressed multiplication operator is the
real symmetric tridiagonal Jacobi matrix
\begin{equation}
 (J_D)_{ij}:=\sum_e\mu_A(e)\varphi_i(e)e\varphi_j(e).
 \label{eq:value-jacobi-definition}
\end{equation}
\begin{proposition}[Optimal polynomial-guide value; proof in Appendix~\ref{app:optimization-spaces}]
\label{prop:value-polynomial-optimum}
If $H_A\in\mathcal F_1^A$ and $2D+1<d_{\mathrm{rel}}$, then
\begin{equation}
 \max_{\substack{\deg p\le D\\\tau_A(|p(H_A)|^2)>0}}
 \mathcal E_B(H_B;p(H_A))=\lambda_{\max}(J_D).
 \label{eq:value-jacobi-optimum}
\end{equation}
A unit top eigenvector $z$ gives
$p_\star=\sum_{j=0}^{r_D-1}z_j\varphi_j$; use the bottom eigenpair
for minimization.
\end{proposition}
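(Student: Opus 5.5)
The plan is to reduce the optimization over polynomial guides to a Rayleigh quotient on $L^2(\mu_A)$ and then diagonalize the compressed multiplication operator. First, I transfer the target objective to source data. Since $H_A\in\mathcal F_1^A$ and $p(H_A)\in\mathcal F_D^A$, Theorem~\ref{thm:value-relative-expectation} applies with $r=1$ and $O_A=H_A$, because $\pi(H_A)=H_B$ and $2D+1<d_{\mathrm{rel}}$. It gives $\pi(p(H_A))\neq0$ whenever $p(H_A)\ne0$, and
\[
\mathcal E_B(H_B;p(H_A))=\frac{\tau_A(p(H_A)^*H_Ap(H_A))}{\tau_A(p(H_A)^*p(H_A))}.
\]
Next I use the spectral decomposition $H_A=\sum_e eP_e^A$ with $\mu_A(e)=\tau_A(P_e^A)\ge0$. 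Both traces then become $\sum_e\mu_A(e)e|p(e)|^2$ and $\sum_e\mu_A(e)|p(e)|^2$, which is Eq.~\eqref{eq:value-polynomial-rayleigh}. The nonvanishing condition $\tau_A(|p(H_A)|^2)>0$ is exactly $\|p\|_{L^2(\mu_A)}>0$.

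Second, I identify the feasible set. Polynomials of degree at most $D$, taken modulo those vanishing on $\operatorname{spec}(H_A)$ (the zero-norm directions), form a space $V_D\subseteq L^2(\mu_A)$. Its dimension is $r_D=\min\{D+1,|\operatorname{spec}(H_A)|\}$, because $\mu_A(e)>0$ on every spectral point: $\tau_A(P_e)=\operatorname{rank}P_e/N_A>0$. Gram--Schmidt applied to $1,x,\dots,x^D$ produces an orthonormal basis $\varphi_0,\dots,\varphi_{r_D-1}$ of $V_D$. The quantity $p(H_A)$, and hence the objective, depends only on the class of $p$ in $V_D$, so the maximization runs over nonzero $p=\sum_j z_j\varphi_j$. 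For such $p$ the denominator equals $\|z\|^2$ by orthonormality. Since $e$ is real, the numerator equals $\sum_{ij}\bar z_iz_j\sum_e\mu_A(e)\varphi_i(e)e\varphi_j(e)=z^*J_Dz$, where the $\varphi_i$ are real because Gram--Schmidt is applied to real polynomials with a real measure. The objective is therefore the Rayleigh quotient $z^*J_Dz/z^*z$ of the real symmetric matrix $J_D$. By Rayleigh--Ritz its supremum is $\lambda_{\max}(J_D)$, attained at a top unit eigenvector, which yields $p_\star$. Replacing max by min and using the bottom eigenpair handles minimization.

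Third, to justify calling $J_D$ a Jacobi matrix I would show it is tridiagonal. The function $x\varphi_j$ has degree $j+1$, so it is orthogonal to $\varphi_i$ for $i>j+1$, and symmetry of $J_D$ handles $i<j-1$. The one delicate point is the truncated case $r_D=|\operatorname{spec}(H_A)|<D+1$, where some polynomials collapse to zero in $L^2(\mu_A)$. Here I would check that the degree bookkeeping still holds, arguing with the filtration by degree modulo the null ideal rather than with raw degrees. Tridiagonality is not needed for the value formula anyway, so it is the main technical nuisance rather than an obstacle. The substantive step is the first one, the degree-window transfer; once Theorem~\ref{thm:value-relative-expectation} is in hand, the rest is standard orthogonal-polynomial linear algebra.
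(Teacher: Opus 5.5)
Your proposal is correct and follows essentially the same route as the paper. Both apply the expected-value transfer theorem with $O_A=H_A$ and $r=1$, rewrite the objective as a Rayleigh quotient in $L^2(\mu_A)$, expand in the orthonormal polynomials $\varphi_j$ to obtain $z^*J_Dz/z^*z$, and conclude by Rayleigh--Ritz. Your worry about tridiagonality in the saturated case $r_D=|\operatorname{spec}(H_A)|$ is unnecessary, and as you note it is irrelevant to the value formula anyway: for every $i<r_D$, the span of $\varphi_0,\ldots,\varphi_{i-1}$ in $L^2(\mu_A)$ equals that of $1,x,\ldots,x^{i-1}$, so the raw-degree argument goes through unchanged.
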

Polynomial
guides optimize a smaller family, with a matrix of size at most $D+1$. It is easy to see $\lambda_{\max}(J_D)\le\lambda_{\max}(M_D^A)
\le\lambda_{\max}(H_B)$.

\paragraph{Computability.}
Building $J_D$ requires the source moments $\tau_A(H_A^k)$ for
$k\le 2D+1$. When these moments can be computed classically with
polynomial cost and precision sufficient for stable orthogonalization,
$J_D$ and its optimal value can be obtained in polynomial time. Independent-block sources give such moment access when
their block moments are efficient, by
Eq.~\eqref{eq:value-independent-block-moments} in
Appendix~\ref{app:optimization-spaces}.
For ordinary DQI, $H_A=\sum_{i=1}^sZ_i$ gives the explicit matrix
\begin{equation}
 (J_D)_{j,j}=0,\qquad
 (J_D)_{j,j+1}=(J_D)_{j+1,j}=\sqrt{(j+1)(s-j)}.
 \label{eq:value-dqi-jacobi}
\end{equation}
Its extremal eigenvector selects the polynomial, and hence the Dicke
guide of Section~\ref{subsubsec:spin-dqi}.

\subsubsection{Limitations of expectation-value advantages and sampling}
\label{subsec:value-observable-limitation}

Theorem~\ref{thm:value-relative-expectation} identifies the target
expectation of $O_B=\pi(O_A)$ with its source expectation whenever the
source lift $O_A$ has marked degree $r$ and $2D+r<d_{\mathrm{rel}}$.
Given an efficiently accessible lift, a source estimator therefore needs
no transduction circuit (Corollary~\ref{cor:value-source-only-observables}).
This conclusion concerns marked degree; physical locality alone does
not guarantee such a lift. The estimate is stable under source-preparation
errors: a source error $\eta_{\mathrm{src}}$ in trace distance adds at most
$2\|O_A\|\eta_{\mathrm{src}}$ to the estimation error
(Proposition~\ref{prop:value-source-estimation-stability}). Within this
window, transduction provides no additional advantage for expectation
estimation, although the source computation may itself require a quantum
computer. 

Producing target samples is a different task. DQI prepares feasible
assignments, while its mean score is already source-computable. For a
normalized sampled score $S\in[0,1]$ of mean $\mu$,
\begin{equation}
 \Pr\{S>c\}\ge\frac{\mu-c}{1-c},\qquad 0\le c<\mu.
 \label{eq:value-threshold-from-mean}
\end{equation}
Eq.~\eqref{eq:value-threshold-from-mean} converts the mean into
a probability of obtaining a high-scoring target sample. Potential
advantages of transduction therefore lie in producing target samples or
accessing quantities outside the trace-matching window, such as sharp
score indicators and ground-space overlap. Using the target guiding state
for quantum phase estimation or ground-state
preparation~\cite{AbramsLloyd1999,LinTong2020} depends on this
full energy distribution; a mean bound alone guarantees ground-space
overlap only with additional spectral information.

\subsection{Thermofield-double (TFD) transduction for Gibbs sampling}
\label{sec:purified-gibbs-guides}
\label{subsec:gibbs-interface}
The polynomial guiding states of Section~\ref{sec:main-framework} can
also be used to transfer thermal states. The Gibbs-state application was
considered in HDQI~\cite{HamiltonianDQI}. Here we take a
coherently prepared source thermofield double as input, separating source
preparation from the transduction step. Existing preparation algorithms
supply this input in the regimes discussed below.
A common low-degree approximation to the source and target Gibbs
amplitudes controls the transfer error even when the exponential filter
lies outside the degree-$D$ subspace. The approximating polynomial $Q$
enters only the error analysis; the algorithm's input is the source TFD
itself. Full derivations appear in Appendix~\ref{app:gibbs-details}.

\paragraph{When is the source TFD efficiently available?}
The preparation of thermofield double states has received significant attention recently. Several preparation methods exist by now, including exponential
filtering~\cite{PoulinWocjan2009ThermalGibbs,ChowdhurySomma2017Gibbs,AnChildsLin2026Nonunitary, GilyenEtAl2019QSVT,ShangEtAl2025LindbladianODE,ShangEtAl2025FastForwarding},
quantum walks with coherent-access and energy-resolution
assumptions~\cite{YungAspuruGuzik2012QuantumMetropolis,WocjanTemme2023Szegedy}, and parent-Hamiltonian
constructions~\cite{CottrellEtAl2019,ChenKastoryanoBrandaoGilyen2025ThermalSimulation,ChenKastoryanoGilyen2023ExactSampler}.
The latter give polynomial-time adiabatic preparation when the parent
path is efficiently simulable, has an inverse-polynomial minimum gap,
and has polynomially bounded length and smoothness parameters.
Concrete guarantees include local Hamiltonians at
sufficiently high temperature~\cite{RouzeFrancaAlhambra2026}, $1D$ systems~\cite{BergamaschiChen1D} and local
commuting Hamiltonians with uniformly gapped preparation
paths~\cite{GeMolnarCirac2016Adiabatic}. For local commuting sources,
detectability-lemma methods also give coherent annealing with improved
gap dependence under their parent-access and path-gap
assumptions~\cite{FangLuTongZhao2026Detectability}.

At any fixed positive temperature, sufficiently weak quasi-local
fermionic interactions admit a system-size-independent parent
gap~\cite{SmidMeisterBertaBondesan2025FermiHubbard}. Combined with coherent
parent simulation, this allows adiabatic preparation along an interaction
path starting from a free-fermion TFD, within the temperature-dependent
perturbative regime. Modular-annihilator constructions provide further
parent Hamiltonians and controlled
approximations~\cite{YiTakahashiZhou2026ModularAnnihilator}; their use
requires efficient access to the dressed operators and control of the gap.

Thus, all of these methods and regimes can be explored in our framework as efficient sources.
Factorized-parent filtering improves gap dependence given coherent
factor access and a preparable state with constant overlap with the
canonical TFD~\cite{LengJiangLin2026WalkFree}. Fluctuation-theorem methods
prepare a target TFD from a simpler source TFD, with costs controlled by
work and free-energy quantities~\cite{HolmesEtAl2022Fluctuation}.
For a shifted Hamiltonian $H_A\ge0$, direct filtering of Bell pairs
retains an amplification factor
$\sqrt{\dim(\mathcal H_A)/Z_A(\beta)}$~\cite{AnChildsLin2026Nonunitary}.
It is efficient when this factor, the inverse temperature, the
Hamiltonian normalization, and the costs of coherent access are polynomially
bounded. Thus, all of these methods and regimes can be explored in our framework as efficient sources.

\paragraph{TFD transduction.} Let $H_A=H_A^*\in\mathcal A$ have marked degree at most one, and set
$H_B:=\pi(H_A)$. A unital $*$-homomorphism satisfies
$\operatorname{spec}(H_B)\subseteq\operatorname{spec}(H_A)$ and
$\pi(f(H_A))=f(H_B)$. Appendix~\ref{subsec:gibbs-spectral-cover} states these facts and
describes which spectral projections survive; their proof is given in
Appendix~\ref{app:spectral-properties}. Thus
target energies can disappear or change multiplicity, but no new energy
can appear. In particular, for every real polynomial $Q$ and $\beta\ge0$,
\begin{equation}
 \begin{aligned}
 \bigl\|e^{\beta H_B/2}Q(H_B)-I_B\bigr\|
 &=\max_{E\in\operatorname{spec}(H_B)}
       \bigl|e^{\beta E/2}Q(E)-1\bigr|\\
 &\le\bigl\|e^{\beta H_A/2}Q(H_A)-I_A\bigr\|.
 \end{aligned}
 \label{eq:gibbs-inherited-relative-error}
\end{equation}
A relative-amplitude approximation on a source spectral interval
therefore controls both systems.

For $L\in\{A,B\}$, define $Z_L(\beta):=\Tr(e^{-\beta H_L})$ and the
canonical purification
\begin{equation}
 \begin{aligned}
 |\mathrm{TFD}_L(\beta)\rangle
 &:=\frac{(e^{-\beta H_L/2}\otimes I)|\Phi_L\rangle}
        {\sqrt{\tau_L(e^{-\beta H_L})}}\\
 &=\frac1{\sqrt{Z_L(\beta)}}\sum_j e^{-\beta E_j^L/2}
       |E_j^L\rangle|\overline{E_j^L}\rangle.
 \end{aligned}
 \label{eq:gibbs-tfd}
\end{equation}
Its reduced density matrix is the Gibbs state $e^{-\beta H_L}/Z_L(\beta)$.
We assume that this canonical source purification can be prepared
coherently and efficiently to the required accuracy. The circuit acts on
both registers, so the Gibbs marginal alone is insufficient; any reference
transformation must be known and corrected.

Fix $D$ with $2D<d_{\mathrm{rel}}$ and the compatible bases, efficient
Fourier transforms, label maps, and relative decoder of
Theorem~\ref{thm:mf-polynomial-transduction}. Let $U_D$ be a fixed
unitary realization on a common enlarged input/output space, with $J$
collecting the workspace. For every real polynomial $Q$ of degree at
most $D$ with $Q(H_B)\ne0$, it obeys
\begin{equation}
 \begin{aligned}
 U_D\bigl(|\psi_A^Q\rangle|0\rangle_J\bigr)
 =|\psi_B^Q\rangle|0\rangle_J,~
 |\psi_L^Q\rangle
 :=\frac{(Q(H_L)\otimes I)|\Phi_L\rangle}
          {\sqrt{\tau_L(Q(H_L)^2)}}.
 \end{aligned}
 \label{eq:gibbs-polynomial-map}
\end{equation}

\begin{proposition}[Gibbs transduction from a polynomial certificate; proof in Appendix~\ref{subsec:gibbs-interface:details}]
\label{prop:gibbs-certificate}
Suppose a real polynomial $Q$ of degree at most $D$ satisfies
\begin{equation}
 \sup_{x\in I}\bigl|e^{\beta x/2}Q(x)-1\bigr|\le\eta<1
 \label{eq:gibbs-relative-approximation}
\end{equation}
on an interval $I\supseteq\operatorname{spec}(H_A)$, and the prepared
source state $\widehat\rho_A$ is within trace distance
$\delta_{\mathrm{src}}$ of
$|\mathrm{TFD}_A(\beta)\rangle\langle\mathrm{TFD}_A(\beta)|$.
Writing
$\widehat\omega_B:=U_D(\widehat\rho_A\otimes|0\rangle\langle0|_J)U_D^*$,
we have
\begin{equation}
 \frac12\bigl\|\widehat\omega_B-
 |\mathrm{TFD}_B(\beta),0\rangle
 \langle\mathrm{TFD}_B(\beta),0|\bigr\|_1
 \le\delta_{\mathrm{src}}+2\eta.
 \label{eq:gibbs-transfer-error}
\end{equation}
The same bound holds for the Gibbs marginal after discarding the
reference and workspace.
\end{proposition}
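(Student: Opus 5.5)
The proof combines the exact polynomial map of Eq.~\eqref{eq:gibbs-polynomial-map} with two elementary perturbation estimates. Pass through the intermediate normalized polynomial states $|\psi_A^Q\rangle$ and $|\psi_B^Q\rangle$, and use unitarity of $U_D$. The triangle inequality then splits the error into three terms:
\[
\tfrac12\bigl\|\widehat\omega_B-|\mathrm{TFD}_B,0\rangle\langle\mathrm{TFD}_B,0|\bigr\|_1
\le \delta_{\mathrm{src}}+T\bigl(\mathrm{TFD}_A,\psi_A^Q\bigr)+T\bigl(\psi_B^Q,\mathrm{TFD}_B\bigr).
\]
Here $T$ denotes trace distance between the pure states. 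The first term comes from unitary invariance of the trace norm applied to $\widehat\rho_A$ versus $|\mathrm{TFD}_A\rangle\langle\mathrm{TFD}_A|$. The second comes from $U_D$ mapping $|\mathrm{TFD}_A,0\rangle$ versus $|\psi_A^Q,0\rangle$; since $U_D$ acts exactly on the latter, the difference is controlled by the source distance. The third is target-side. It therefore suffices to show that each pure-state trace distance is at most $\eta$.

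\textbf{Pure-state estimate.} For $L\in\{A,B\}$, the relative bound on $I\supseteq\operatorname{spec}(H_A)$ transfers to $H_B$ by Eq.~\eqref{eq:gibbs-inherited-relative-error}, since $\operatorname{spec}(H_B)\subseteq\operatorname{spec}(H_A)$. In the eigenbasis of $H_L$, both $|\mathrm{TFD}_L\rangle$ and $|\psi_L^Q\rangle$ are superpositions of $|E_j\rangle|\overline{E_j}\rangle$, with unnormalized amplitudes $e^{-\beta E_j/2}$ and $Q(E_j)=e^{-\beta E_j/2}(1+\epsilon_j)$ respectively, where $|\epsilon_j|\le\eta$.

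\textbf{Overlap and conversion.} The normalized overlap is
\[
\frac{\sum_j w_j(1+\epsilon_j)}{\sqrt{\sum_j w_j}\,\sqrt{\sum_j w_j(1+\epsilon_j)^2}},\qquad w_j=e^{-\beta E_j}.
\]
This is the cosine between the constant function $1$ and $1+\epsilon$ in $L^2(w)$, so it is at least $\sqrt{1-\eta^2}$: the angle between $1$ and any vector within distance $\eta\|1\|$ of it satisfies $\sin\theta\le\eta$. The overlap is also positive because $\eta<1$ forces $1+\epsilon_j>0$, which in turn gives $Q(H_L)\ne0$ as Eq.~\eqref{eq:gibbs-polynomial-map} requires. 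For pure states, $T=\sqrt{1-|\langle\cdot|\cdot\rangle|^2}\le\eta$. Adding the three contributions gives $\delta_{\mathrm{src}}+2\eta$.

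\textbf{Marginal and main obstacle.} The Gibbs-marginal statement follows from monotonicity of trace distance under partial trace. The main obstacle is the cosine bound: it must be stated cleanly in the weighted space, and it relies on the inherited spectral inclusion to cover the target side with the same $\eta$. Everything else is bookkeeping with unitarity and the triangle inequality.
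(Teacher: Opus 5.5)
Your proposal is correct and follows the paper's proof: the same three-term triangle inequality through $|\psi_A^Q\rangle$ and $|\psi_B^Q\rangle$, unitary invariance under $U_D$, the pure-state overlap bound $|\langle \mathrm{TFD}_L(\beta)|\psi_L^Q\rangle|^2\ge 1-\eta^2$ (extended to the target by spectral inclusion), and contractivity under partial trace for the marginal. Your distance-to-a-line argument in $L^2(w)$ is the geometric form of the paper's algebraic step $(r_{L,j}-1)^2\le\eta^2 \Rightarrow \nu_L\le 2\mu_L-(1-\eta^2) \Rightarrow \mu_L^2/\nu_L\ge 1-\eta^2$, and your explicit check that $1+\epsilon_j>0$ forces $Q(H_L)\neq 0$ states a hypothesis of the exact polynomial map that the paper uses only implicitly.
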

For a concrete approximation bound, choose a source spectral interval
$[E_-,E_+]$ of
width $W:=E_+-E_->0$. The degree-$D$ Taylor polynomial about $E_+$ has
relative error
$\Pr\{\operatorname{Pois}(\beta W/2)\ge D+1\}$.
Appendix~\ref{app:gibbs-degree} gives the explicit Taylor polynomial
and the exact Poisson-tail identity. It also states and proves
Proposition~\ref{prop:gibbs-degree}, which bounds this tail, gives the
sufficient degree as a function of bandwidth, temperature, and accuracy,
and yields the following window.

\begin{corollary}[Inverse-temperature window from relative decoding; proof in Appendix~\ref{subsec:gibbs-temperature}]
\label{thm:gibbs-temperature}
Let $D_{\mathrm{rel}}\le\lfloor(d_{\mathrm{rel}}-1)/2\rfloor$ be a degree
at which the preceding efficient transduction assumptions hold. Fix
$0<\varepsilon<1$ and $0\le\delta_{\mathrm{src}}<\varepsilon$, and put
$k:=D_{\mathrm{rel}}+1$ and
$\ell:=\ln[2/(\varepsilon-\delta_{\mathrm{src}})]$.
Assume the source TFD is efficiently preparable to trace distance
$\delta_{\mathrm{src}}$ at the chosen $\beta$. If $k\ge\ell/3$, the
target TFD, and hence its Gibbs marginal, can be prepared to trace
distance at most $\varepsilon$ whenever
\begin{equation}
 0\le\beta\le\frac{2}{W}
 \left(\sqrt{k+\frac{\ell}{6}}-\sqrt{\frac{\ell}{2}}\right)^2.
 \label{eq:gibbs-temperature-window}
\end{equation}
In particular, suppose $H_L=\sum_{i=1}^{m}h_i a_i^L$ with
$\|a_i^L\|\le1$ and $|h_i|\le J$, where $J>0$, so one may take
$W\le2Jm$. If $d_{\mathrm{rel}}\ge\delta_{\mathrm{rel}}m$, efficient
transduction reaches the full radius
$D_{\mathrm{rel}}=\lfloor(d_{\mathrm{rel}}-1)/2\rfloor$, and
$\ell=o(m)$, then every fixed
\begin{equation}
 0\le\beta<\frac{\delta_{\mathrm{rel}}}{2J}
 \label{eq:gibbs-constant-temperature}
\end{equation}
is attainable for sufficiently large $m$, subject to the same source
preparation assumption.
\end{corollary}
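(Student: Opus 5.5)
The plan is to combine three ingredients. Proposition~\ref{prop:gibbs-certificate} converts a polynomial certificate into a trace-distance bound. The Poisson-tail identity for the Taylor polynomial (Proposition~\ref{prop:gibbs-degree} in the appendix) supplies that certificate. The degree $D_{\mathrm{rel}}$ is the one at which the four-step circuit of Theorem~\ref{thm:mf-polynomial-transduction} runs. Concretely, I take $Q$ to be the degree-$D_{\mathrm{rel}}$ Taylor polynomial of $e^{-\beta x/2}$ about $E_+$, on an interval $I=[E_-,E_+]\supseteq\operatorname{spec}(H_A)$ of width $W$. By the stated identity, $\sup_{x\in I}|e^{\beta x/2}Q(x)-1|=\Pr\{\operatorname{Pois}(\lambda)\ge k\}$ with $\lambda=\beta W/2$ and $k=D_{\mathrm{rel}}+1$. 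Since $2D_{\mathrm{rel}}<d_{\mathrm{rel}}$, the unitary $U_{D_{\mathrm{rel}}}$ exists. Proposition~\ref{prop:gibbs-certificate} then gives error at most $\delta_{\mathrm{src}}+2\eta$. It therefore suffices to show that $\eta:=\Pr\{\operatorname{Pois}(\lambda)\ge k\}\le(\varepsilon-\delta_{\mathrm{src}})/2=e^{-\ell}$. This also forces $\eta<1$, as that proposition requires. The same bound transfers to the Gibbs marginal by monotonicity of trace distance under partial trace.

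The core step is a Poisson upper-tail bound of Bernstein/Bennett type, valid for $k\ge\lambda$:
\[
\Pr\{\operatorname{Pois}(\lambda)\ge k\}\le\exp\!\Bigl(-\frac{(k-\lambda)^2}{2(\lambda+(k-\lambda)/3)}\Bigr).
\]
Setting the exponent equal to $\ell$ gives the quadratic $(k-\lambda)^2=2\ell\lambda+\tfrac{2\ell}{3}(k-\lambda)$, to be solved for $\lambda$. I would substitute $\lambda=y^2$ and complete squares. The target form is $\lambda\le(\sqrt{k+\ell/6}-\sqrt{\ell/2})^2$, which is exactly the displayed window after $\beta=2\lambda/W$. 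I would verify directly that for $\lambda$ at or below this value the exponent is at least $\ell$, using monotonicity in $\lambda$. The hypothesis $k\ge\ell/3$ is what makes $\sqrt{k+\ell/6}\ge\sqrt{\ell/2}$, so that the window is nonempty and $\lambda\le k$ holds in the regime where the bound applies.

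This algebra is the main obstacle. If the Bennett form does not reproduce the exact constants $\ell/6$ and $\ell/2$, I would fall back on the precise statement of Proposition~\ref{prop:gibbs-degree}, which presumably already packages the sufficient degree in this form. I would then invert that statement rather than rederive it.

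For the asymptotic part, $\|H_L-h_0I\|\le\sum_i|h_i|\|a_i\|\le Jm$, so $\operatorname{spec}(H_A)$ lies in an interval of width $W\le2Jm$. Taking $D_{\mathrm{rel}}=\lfloor(d_{\mathrm{rel}}-1)/2\rfloor$ gives $k\ge d_{\mathrm{rel}}/2\ge\delta_{\mathrm{rel}}m/2$, and $k\ge\ell/3$ holds eventually because $\ell=o(m)$. The right side of the window is then
\[
\frac{2}{W}\bigl(\sqrt{k+\ell/6}-\sqrt{\ell/2}\bigr)^2\ge\frac{1}{Jm}\bigl(k-O(\sqrt{k\ell})\bigr)=\frac{\delta_{\mathrm{rel}}}{2J}-o(1),
\]
so every fixed $\beta<\delta_{\mathrm{rel}}/(2J)$ lies inside the window for sufficiently large $m$.
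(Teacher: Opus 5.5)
Your architecture matches the paper's: Proposition~\ref{prop:gibbs-certificate} with $\eta=(\varepsilon-\delta_{\mathrm{src}})/2=e^{-\ell}$, the Taylor/Poisson-tail certificate, solving for $\lambda=\beta W/2$, and then the extensive limit. However, the step you call the core does not go through as written.

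The Bernstein-form bound $\exp\bigl(-(k-\lambda)^2/(2(\lambda+(k-\lambda)/3))\bigr)$ is strictly too weak for the stated constants. Solve your own quadratic, $\lambda^2-(2k+\tfrac{4\ell}{3})\lambda+k^2-\tfrac{2\ell k}{3}\ge0$. It certifies exponent $\ge\ell$ only for $\lambda\le k+\tfrac{2\ell}{3}-\sqrt{2k\ell+\tfrac{4\ell^2}{9}}$. The claimed window is $(\sqrt{k+\ell/6}-\sqrt{\ell/2})^2=k+\tfrac{2\ell}{3}-\sqrt{2k\ell+\tfrac{\ell^2}{3}}$. Since $\ell=\ln[2/(\varepsilon-\delta_{\mathrm{src}})]>\ln 2>0$ and $\tfrac49>\tfrac13$, your range is strictly smaller. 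So the planned ``verify directly that the exponent is at least $\ell$'' fails for $\lambda$ near the edge of the window.

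The reason is that the Bernstein form comes from the sub-gamma bound $\ln\mathbb{E}e^{\theta(X-\lambda)}\le\lambda\theta^2/(2(1-\theta/3))$ with the suboptimal choice $\theta=a/(\lambda+a/3)$. Optimizing $\theta$ instead, as the paper does with $\theta=y/(1+y/3)$ and $y=\sqrt{2\ell/\lambda}$, gives the sharper deviation $\Pr\{X-\lambda\ge\sqrt{2\lambda\ell}+\ell/3\}\le e^{-\ell}$. It is exactly this $\sqrt{2\lambda\ell}+\ell/3$ form that produces the constants $\ell/6$ and $\ell/2$.

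Your fallback is therefore mandatory, not optional, and it is precisely the paper's proof. Proposition~\ref{prop:gibbs-degree} shows that $\lambda+\sqrt{2\lambda\ell}+\ell/3\le k$ suffices. Writing $y=\sqrt{\lambda}$, this becomes $y^2+\sqrt{2\ell}\,y+\ell/3-k\le0$. Its roots are $-\sqrt{\ell/2}\pm\sqrt{k+\ell/6}$, so it holds exactly for $0\le y\le\sqrt{k+\ell/6}-\sqrt{\ell/2}$, and this range is nonempty precisely when $k\ge\ell/3$.

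Your treatment of the extensive case is correct and matches the paper. There $W\le2Jm$, $k\ge\delta_{\mathrm{rel}}m/2$, and the window is at least $(k-\sqrt{2k\ell})/(Jm)$, which is increasing in $k$ and tends to $\delta_{\mathrm{rel}}/(2J)$ when $\ell=o(m)$.
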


For $k\gg\ell$, the window is $(1-o(1))\,2k/W$, so linear relative
distance suffices for constant inverse temperature when $W=O(m)$ and
the other assumptions hold. This bound is sufficient but not necessary:
a smaller bandwidth, a tailored polynomial, or additional structure can
give a wider window.

\clearpage
\section{Generalizations beyond qubit spins}
\label{sec:generalizations}
Optimization and Gibbs-state preparation in Section~\ref{sec:performance} are not confined to qubit
spin systems. In quantum chemistry and many-body physics, the same
tasks arise for fermionic and bosonic Hamiltonians, as well as systems
with more than two local levels. The operator-algebraic formulation
allows us to address these settings without developing a separate
transduction principle for each: the construction relies on a
homomorphism between operator algebras, rather than on a particular
physical realization. We therefore extend the framework to fermions
using Majorana operators and to qudits and occupation-truncated
bosonic systems using Weyl operators, adapting the source preparation,
operator Fourier transforms, and relative decoding to each setting.

\subsection{Fermions and Majorana coordinates}
\label{subsec:families-fermion}
\subsubsection{Formulation: Majorana generators, relative codes, and Fourier coordinates}
\label{subsubsec:fermion-formulation}

\paragraph{Physical fermionic Hamiltonians.}

We begin with number-conserving fermionic Hamiltonians, including
molecular electronic and Fermi--Hubbard Hamiltonians~\cite{McArdleEtAl2020,ArovasEtAl2022}. For
$L\in\{A,B\}$, let $f_{L,j},f_{L,j}^*$ act on the full Fock space of
$n_L$ modes and satisfy
\begin{equation}
 \{f_{L,i},f_{L,j}^*\}=\delta_{ij}I_L,
 \qquad
 \{f_{L,i},f_{L,j}\}=0.
\end{equation}
A physical Hamiltonian with one- and two-particle terms has the form
\begin{equation}
 H_L=E_{c,L}I_L
 +\sum_{i,j}t_{ij}^L f_{L,i}^*f_{L,j}
 +\sum_{i,j,k,\ell}V_{ij;k\ell}^L
     f_{L,i}^*f_{L,j}^*f_{L,k}f_{L,\ell},
 \qquad H_L=H_L^*.
 \label{eq:fermion-general-interacting-hamiltonian}
\end{equation}
The coefficients include the chosen antisymmetry and counting
conventions. Each term creates and annihilates the same number of
fermions, so
\begin{equation}
 [H_L,\widehat N_L]=0,
 \qquad
 \widehat N_L:=\sum_j f_{L,j}^*f_{L,j}.
\end{equation}
Consequently $H_L$ also preserves fermion parity:
\begin{equation}
 [H_L,\Pi_L]=0,
 \qquad
 \Pi_L:=(-1)^{\widehat N_L}.
\end{equation}
Thus the Hamiltonian does not mix even- and odd-particle sectors.
The construction below requires only parity preservation, so it also
allows pairing terms and higher even interactions that need not
conserve particle number. Our goal is to prepare the target polynomial state of $p(H_B)$ from that of source $p(H_A)$.

\paragraph{Majorana expansion and product phases.}
Introduce the Majorana operators~\cite{BravyiKitaev2002,Bravyi2005}
\begin{equation}
 \gamma_{L,2j-1}:=f_{L,j}+f_{L,j}^*,
 \qquad
 \gamma_{L,2j}:=i(f_{L,j}^*-f_{L,j}),
 \qquad
 f_{L,j}=\frac{\gamma_{L,2j-1}+i\gamma_{L,2j}}2.
 \label{eq:fermion-ladder-majorana-conversion}
\end{equation}
They satisfy $\gamma_{L,j}^*=\gamma_{L,j}$ and
$\{\gamma_{L,j},\gamma_{L,k}\}=2\delta_{jk}I_L$.
For $u\in\mathbb F_2^{2n_L}$, define the Hermitian monomial
\begin{equation}
 \Gamma_L(u):=
 i^{|u|(|u|-1)/2}
 \prod_{j=1}^{2n_L}\gamma_{L,j}^{u_j},
 \label{eq:families-majorana-convention}
\end{equation}
with increasing product order. These monomials satisfy
$\Gamma_L(u)^*=\Gamma_L(u)$ and $\Gamma_L(u)^2=I_L$.
They form an orthonormal operator basis for the full-Fock normalized
trace $\tau_L=2^{-n_L}\operatorname{Tr}$. Since
$\Pi_L\Gamma_L(u)\Pi_L=(-1)^{|u|}\Gamma_L(u)$,
parity preservation gives
\begin{equation}
 H_L=\sum_{u:\,|u|\ {\rm even}}h_u^L\Gamma_L(u),
 \qquad
 h_u^L=2^{-n_L}\operatorname{Tr}\bigl(\Gamma_L(u)H_L\bigr)
 \in\mathbb R.
 \label{eq:fermion-general-majorana-expansion}
\end{equation}

The one- and two-particle terms can be written directly in this
Hermitian Majorana basis. Let $e_a$ denote the $a$th standard basis
vector of $\mathbb F_2^{2n_L}$. For an ordered list of Majorana indices
$\mathbf q=(q_1,\ldots,q_r)$, define
\begin{equation}
 u(\mathbf q):=\sum_{\nu=1}^r e_{q_\nu},
 \qquad
 \xi(\mathbf q):=
 (-1)^{\sum_{\nu<\mu}\mathbf1_{\{q_\nu>q_\mu\}}}
 i^{-|u(\mathbf q)|(|u(\mathbf q)|-1)/2},
\end{equation}
where label addition is binary. The list $\mathbf q$ records the
ordered factors, including repetitions, whereas $u(\mathbf q)$ records
which Majorana indices occur an odd number of times. Thus the list
length $r=|\mathbf q|$ need not equal the reduced weight
$|u(\mathbf q)|$. Reordering the factors and cancelling repeated
Majoranas gives
\begin{equation}
 \gamma_{L,q_1}\cdots\gamma_{L,q_r}
 =\xi(\mathbf q)\Gamma_L\bigl(u(\mathbf q)\bigr).
\end{equation}
The sign in $\xi$ records the reordering, while its power of $i$
converts the ordered product to the Hermitian convention. Thus
Eq.~\eqref{eq:fermion-ladder-majorana-conversion} gives
\begin{equation}
 \begin{aligned}
 f_{L,i}^*f_{L,j}
 &=
 \frac14\sum_{a,b=0}^1(-i)^a i^b\,
 \xi(2i-1+a,2j-1+b)\,
 \Gamma_L\bigl(e_{2i-1+a}+e_{2j-1+b}\bigr),\\[1mm]
 f_{L,i}^*f_{L,j}^*f_{L,k}f_{L,\ell}
 &=
 \frac1{16}\sum_{a,b,c,d=0}^1
 (-i)^{a+b}i^{c+d}\,
 \xi(\mathbf q_{abcd})\Gamma_L\bigl(u(\mathbf q_{abcd})\bigr),
 \end{aligned}
 \label{eq:fermion-density-example}
\end{equation}
where
\begin{equation}
 \mathbf q_{abcd}:=
 (2i-1+a,\;2j-1+b,\;2k-1+c,\;2\ell-1+d).
\end{equation}
These formulas hold for arbitrary mode indices, including repeated
ones, and $\Gamma_L(0)=I_L$. They contain at most four and sixteen
Majorana monomials, respectively. Their labels have weight zero or
two for a one-particle term, and weight zero, two, or four for a
two-particle term. The coefficients of an individual ladder-operator
term may be complex, but combining the terms in the self-adjoint
Hamiltonian gives the real coefficients in
Eq.~\eqref{eq:fermion-general-majorana-expansion}. More generally, a
product of $2k$ ladder operators expands into at most $2^{2k}$ Majorana
monomials before collecting terms, giving polynomial overhead for
fixed interaction order.

\paragraph{From product phases to signed Hamiltonian generators.}
After expanding the ladder operators, write the Hamiltonian as a finite
sum of ordered Majorana products,
\begin{equation}
 H_L=\sum_{\mathbf q}g_{\mathbf q}^L
       \gamma_{L,q_1}\cdots\gamma_{L,q_{|\mathbf q|}},
\end{equation}
where the empty product is $I_L$. The coefficient $g_{\mathbf q}^L$
includes the original Hamiltonian coefficient and the factors of $1/2$
and $\pm i$ from the ladder-operator expansion. Reducing each product
and collecting all contributions with the same label gives
\begin{equation}
 H_L=\sum_u h_u^L\Gamma_L(u),
 \qquad
 h_u^L=\sum_{\mathbf q:\,u(\mathbf q)=u}
             g_{\mathbf q}^L\,\xi(\mathbf q).
\end{equation}
Although $g_{\mathbf q}^L\xi(\mathbf q)$ may be complex, the collected
coefficient $h_u^L$ is real because $H_L$ is self-adjoint and the
$\Gamma_L(u)$ form a Hermitian operator basis. Terms with zero
collected coefficient are omitted.

For transduction, choose a source--target pair with matched nonzero
coefficient magnitudes and the same collected scalar term,
\begin{equation}
 h_0^A=h_0^B=:E_c,
 \qquad
 |h_{u_i}^A|=|h_{v_i}^B|=:\lambda_i>0,
 \qquad 1\le i\le s,
\end{equation}
and define the generator signs by
\begin{equation}
 \eta_i^A:=\frac{h_{u_i}^A}{|h_{u_i}^A|},
 \qquad
 \eta_i^B:=\frac{h_{v_i}^B}{|h_{v_i}^B|},
 \qquad
 \eta_i^L\in\{\pm1\}.
\end{equation}
Thus $\eta_i^L$ is the sign of the complete collected coefficient,
which incorporates the original couplings, the ladder-expansion
factors, and the product phases $\xi(\mathbf q)$. With these definitions, the Hamiltonians
and marked generators are
\begin{equation}
 \begin{aligned}
 H_A&=E_cI_A+\sum_{i=1}^s\lambda_i a_i,
 &a_i&:=\eta_i^A\Gamma_A(u_i),\\
 H_B&=E_cI_B+\sum_{i=1}^s\lambda_i b_i,
 &b_i&:=\eta_i^B\Gamma_B(v_i),
 \end{aligned}
 \qquad
 \lambda_i>0,\quad \eta_i^L\in\{\pm1\}.
 \label{eq:fermion-matched-hamiltonians}
\end{equation}
Here $u_i$ and $v_i$ are nonzero even-weight Majorana labels.
The signed Majorana terms $a_i,b_i$ are the marked generators of
\begin{equation}
 \mathcal A:=C^*(I_A,a_1,\ldots,a_s),
 \qquad
 \mathcal B:=C^*(I_B,b_1,\ldots,b_s).
\end{equation}
Each marked term has degree one, even when it contains four or more
Majoranas; marked degree is not Majorana support size. Individual
Majorana terms need not conserve particle number separately, even
when their sum does.

Define the binary label matrices and relation codes by
\begin{equation}
 A:=[u_1\ \cdots\ u_s],
 \qquad
 B:=[v_1\ \cdots\ v_s],
 \qquad
 K_A:=\ker A,\quad K_B:=\ker B.
 \label{eq:fermion-label-matrices}
\end{equation}
All label arithmetic is over $\mathbb F_2$. For
$c\in\mathbb F_2^s$, use the ordered products
\begin{equation}
 W_A(c):=a_1^{c_1}\cdots a_s^{c_s},
 \qquad
 W_B(c):=b_1^{c_1}\cdots b_s^{c_s}.
\end{equation}
For $c\in K_L$, write $W_L(c)=\zeta_L(c)I_L$, with
$\zeta_L(c)\in\{\pm1,\pm i\}$. Thus $K_L$ records which products
become scalar, and $\zeta_L$ records their scalar phases.

\begin{theorem}[Homomorphisms between Majorana-generated algebras]
\label{thm:fermion-homomorphism}
The assignment $a_i\mapsto b_i$ extends to a surjective unital
$*$-homomorphism $\pi:\mathcal A\to\mathcal B$ if and only if
\begin{equation}
 A^{\mathsf T}A=B^{\mathsf T}B,
 \qquad
 K_A\subseteq K_B,
 \qquad
 \zeta_A(c)=\zeta_B(c)\quad(c\in K_A).
 \label{eq:fermion-homomorphism-conditions}
\end{equation}
For the matched Hamiltonians in
Eq.~\eqref{eq:fermion-matched-hamiltonians}, such a map satisfies
\begin{equation}
 \pi(H_A)=H_B,
 \qquad
 \pi\bigl(p(H_A)\bigr)=p(H_B)
\end{equation}
for every polynomial $p$.
\end{theorem}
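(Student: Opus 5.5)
The plan is to reduce to Theorem~\ref{theorem:pauli} through the Jordan--Wigner transformation, keeping a direct word-basis argument in reserve if the phase bookkeeping becomes awkward. The Jordan--Wigner map is a unital $*$-isomorphism from the full-Fock Majorana algebra on $n_L$ modes onto $\mathcal B(\mathbb C^{2^{n_L}})$. It sends each Hermitian monomial $\Gamma_L(u)$ to $\pm P_{n_L}(\phi(u))$ for an injective $\mathbb F_2$-linear label map $\phi$. It also preserves the normalized trace and products. Composing with it therefore turns $a_i,b_i$ into signed Pauli generators $a_i',b_i'$ with label matrices $\phi A$ and $\phi B$. Injectivity of $\phi$ gives $\ker(\phi A)=K_A$ and $\ker(\phi B)=K_B$. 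Because the isomorphism is multiplicative and unital, the scalar phases $\zeta_L(c)$ of ordered words are unchanged.

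The key computation is the commutation form. For Majorana monomials one has
\[
\Gamma(u)\Gamma(v)=(-1)^{|u||v|+u\cdot v}\,\Gamma(v)\Gamma(u).
\]
All marked labels $u_i,v_i$ have even weight, so the sign for a pair of marked terms is $(-1)^{u_i\cdot u_j}$. The same sign is given by the symplectic form $(\phi u_i)^{\mathsf T}J\,\phi u_j$ of the images. Hence, mod 2, the off-diagonal entries of $A^{\mathsf T}A$ equal those of $(\phi A)^{\mathsf T}J(\phi A)$, and likewise for $B$. On the diagonal, $A^{\mathsf T}A$ has entries $|u_i| \bmod 2=0$, and the symplectic diagonal is also $0$. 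So the first condition of Eq.~\eqref{eq:fermion-homomorphism-conditions} is exactly the first condition of Eq.~\eqref{eq:mf-pauli-compatibility} for the transported generators. The remaining two conditions transfer verbatim. Theorem~\ref{theorem:pauli} then gives the equivalence, and conjugating back by the Jordan--Wigner isomorphisms yields $\pi$.

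If I instead argue directly, necessity is short. A unital $*$-homomorphism maps $a_ia_j=\pm a_ja_i$ to $b_ib_j=\pm b_jb_i$ with the same sign, and $b_i\neq0$ because $b_i^2=I$; this forces $A^{\mathsf T}A=B^{\mathsf T}B$. It also maps $W_A(c)=\zeta I_A$ to $W_B(c)=\zeta I_B$, which gives $K_A\subseteq K_B$ and equality of the phases.

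For sufficiency, the step I expect to be the main obstacle is well-definedness. I would define $\pi(W_A(c)):=W_B(c)$ on coset representatives of $\mathbb F_2^s/K_A$. These words are orthonormal, up to phase, because their reduced labels $Ac$ are distinct. One must check that the definition is consistent and multiplicative. Using $a_i^2=I$ and the commutation signs,
\[
W_L(c)W_L(d)=\sigma(c,d)\,W_L(c+d),
\]
where $\sigma$ depends only on the commutation matrix, hence is common to $L=A,B$. If $Ac=Ac'$, then $c+c'\in K_A$, so $W_A(c')=\sigma\,\zeta_A(c+c')\,W_A(c)$. The corresponding target identity holds with the same factors because $K_A\subseteq K_B$ and $\zeta_A=\zeta_B$ on $K_A$. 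The adjoint $W_L(c)^*$ is the reversed product, so it also differs from $W_L(c)$ by a commutation-determined sign; hence $\pi$ preserves $*$.

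Finally, $\pi(I_A)=I_B$, and $\pi$ is surjective because its image contains every generator $b_i$. The polynomial statement then follows. Since $h_0^A=h_0^B=E_c$ and the coefficients $\lambda_i$ match, linearity and unitality give $\pi(H_A)=H_B$. Multiplicativity then gives $\pi(p(H_A))=p(H_B)$.
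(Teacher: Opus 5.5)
Your proposal is correct, and your primary route differs from the paper's. The paper does not use Jordan--Wigner. Proposition~\ref{prop:app-majorana-homomorphism} reruns the Pauli argument inside the Majorana algebra, in four steps:
\begin{enumerate}
\item the even-label sign $(-1)^{u^{\mathsf T}v}$ gives a common reordering cocycle from $\Sigma=A^{\mathsf T}A=B^{\mathsf T}B$;
\item this yields a common ordered-word normal form;
\item the words are regrouped by $K_A$-cosets using $\zeta_A=\zeta_B$;
\item orthogonality of distinct $\Gamma_A(Ar)$ gives $\ker\vartheta_A\subseteq\ker\vartheta_B$, and Lemma~\ref{lem:app-evaluation-kernel} then produces the homomorphism.
\end{enumerate}

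Your reduction instead isolates the only fermionic input, namely $u_i^{\mathsf T}u_j\equiv(\phi u_i)^{\mathsf T}J(\phi u_j)\pmod 2$ for even labels. You justify this correctly: both sides encode the commutation sign of the same pair of operators, and both diagonals vanish. Injectivity of $\phi$ and multiplicativity of the isomorphism then carry $K_L$ and $\zeta_L$ over unchanged, so Theorem~\ref{theorem:pauli} settles both directions after conjugating back.

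Your route is more economical and makes precise the paper's remark that the Majorana conditions have the same origin as the Pauli ones. The paper's route avoids any qubit encoding and gives a single template that it applies unchanged to the Pauli and prime-dimensional Weyl generators.

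Your reserve direct argument is essentially the paper's proof, packaged differently. You define $\pi$ on a coset-representative word basis and check consistency, multiplicativity and adjoints through the common cocycle $\sigma$. The paper checks only kernel inclusion and gets multiplicativity from the evaluation-kernel lemma.

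One small precision in your necessity step: equal commutation signs are forced because $b_ib_j$ is invertible, not merely because $b_i\neq0$. This does follow from $b_i^2=I$.
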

\noindent The origin of these conditions is similar to those in Theorem~\ref{theorem:pauli}. Proofs can be found in Appendix~\ref{app:homomorphism-criteria}.

\paragraph{Relative decoding and distance.}
Under Eq.~\eqref{eq:fermion-homomorphism-conditions}, the forward
label map is well defined by
\begin{equation}
 M:\operatorname{im}A\longrightarrow\operatorname{im}B,
 \qquad
 M(Ac):=Bc.
\end{equation}
For $u\in\operatorname{im}A$, define
\begin{equation}
 w_A(u):=\min_{c:\,Ac=u}|c|.
\end{equation}
The degree-$D$ promise and relative distance are
\begin{equation}
 S_D:=\{u\in\operatorname{im}A:w_A(u)\le D\},
 \qquad
 d_{\mathrm{rel}}
 =\min_{c\in K_B\setminus K_A}|c|,
 \label{eq:families-fermion-distance}
\end{equation}
with $d_{\mathrm{rel}}=+\infty$ when $K_A=K_B$.
This agrees with Definition~\ref{def:mf-intrinsic-distance}: an ordered word
has zero full-Fock trace unless its label is zero, and inherited
scalar relations have the same phase on both sides. The required coherent relative decoder performs
\begin{equation}
 \operatorname{Dec}_{A\leftarrow B}:
 |Bc\rangle|0\rangle\longmapsto|Bc\rangle|Ac\rangle,
 \qquad |c|\le D.
 \label{eq:fermion-relative-decoder}
\end{equation}
When $K_A=\{0\}$, the source is relation free and recovering $Ac$
is equivalent to recovering $c$, going back to the ordinary decoding.

\paragraph{Operator basis and Fourier coordinates.}
Choose the generated Majorana bases
\begin{equation}
 E_u^A:=\Gamma_A(u)\quad(u\in\operatorname{im}A),
 \qquad
 E_v^B:=\Gamma_B(v)\quad(v\in\operatorname{im}B).
\end{equation}
Their orthonormality follows from
\begin{equation}
 \tau_L\bigl(\Gamma_L(u)^*\Gamma_L(v)\bigr)=\delta_{u,v}.
\end{equation}
Word reduction gives
\begin{equation}
 \mathcal F_D^A=\operatorname{span}\{E_u^A:u\in S_D\},
\end{equation}
which verifies generator--basis compatibility (C1) at the chosen
cutoff $D$. To describe the basis images, write
\begin{equation}
 W_A(c)=\nu_A(c)\Gamma_A(Ac),
 \qquad
 W_B(c)=\nu_B(c)\Gamma_B(Bc).
\end{equation}
Here $\nu_L(c)$ includes both the generator signs
$\prod_i(\eta_i^L)^{c_i}$ and the phase from multiplying their
Majorana basis operators. Then
\begin{equation}
 \pi(E_u^A)=\chi(u)E_{M(u)}^B,
 \qquad
 \chi(Ac):=\frac{\nu_B(c)}{\nu_A(c)}\in\{\pm1\}.
 \label{eq:fermion-basis-map}
\end{equation}
Compatibility makes this phase independent of the representative
$c$. It is a real sign because both basis operators are Hermitian
and $\pi$ preserves adjoints and unitarity. It can be computed by
choosing any solution of $Ac=u$ and multiplying the corresponding
Majorana terms in the fixed order; no minimum-weight search is needed.
This verifies (C2).

In the vectorization convention of the main framework, the
operator--Fourier transforms are
\begin{equation}
 \mathsf F_L|E_u^L\rangle\!\rangle_L=|u\rangle.
 \label{eq:fermion-operator-fourier}
\end{equation}

\paragraph{Qubit realization.}
Encode the occupation basis of $n_L$ modes using $n_L$ qubits.
Jordan--Wigner~\cite{JordanWigner1928,BravyiKitaev2002} represents the Majoranas as
\begin{equation}
 \gamma_{L,2j-1}=Z_1\cdots Z_{j-1}X_j,
 \qquad
 \gamma_{L,2j}=Z_1\cdots Z_{j-1}Y_j.
\end{equation}
Consequently,
\begin{equation}
 \Gamma_L(u)=\varepsilon_L(u)P_{n_L}(T_Lu),
 \qquad \varepsilon_L(u)\in\{\pm1\},
\end{equation}
where $T_L$ is an invertible binary map. Explicitly, if
$T_Lu=(x,z)$, then
\begin{equation}
 x_j=u_{2j-1}+u_{2j},
 \qquad
 z_j=u_{2j}+\sum_{k>j}(u_{2k-1}+u_{2k})
 \pmod2.
\end{equation}
Thus phase-corrected Pauli Bell transform, inverse label conversion,
and the Majorana sign correction implement
\begin{equation}
 |\Gamma_L(u)\rangle\!\rangle_L
 \longmapsto
 \varepsilon_L(u)|T_Lu\rangle
 \longmapsto
 \varepsilon_L(u)|u\rangle
 \longmapsto
 |u\rangle.
\end{equation}
The sign is computed by multiplying the Jordan--Wigner strings in the
fixed order, including the phase in
Eq.~\eqref{eq:families-majorana-convention}. Reversing these operations
gives the inverse target transform.

\paragraph{Postselection onto a fixed particle number.}
The construction above prepares a polynomial guide on the full Fock
space. To obtain a state with particle number $N$, we
measure only the total target particle number
$\widehat N_B=\sum_j f_{B,j}^*f_{B,j}$ and retain outcome $N$.
Let $P_{B,N}$ denote the projector onto this sector. For $n_B$
fermionic modes,
\begin{equation}
 \operatorname{rank}P_{B,N}=\binom{n_B}{N},
 \qquad
 \dim\mathcal H_B=2^{n_B}.
\end{equation}
In particular, at half filling, with even $n_B$ and $N=n_B/2$,
\begin{equation}
 \frac{\operatorname{rank}P_{B,n_B/2}}{\dim\mathcal H_B}
 =
 \frac{\binom{n_B}{n_B/2}}{2^{n_B}}
 \sim\sqrt{\frac{2}{\pi n_B}}.
\end{equation}
Thus the half-filled sector occupies an inverse-polynomial fraction
of the full Fock space. Half filling is particularly important for the repulsive spinful
single-band Hubbard model, where it corresponds to one fermion per
site on average and provides a central setting for studying Mott
insulating behavior and antiferromagnetic correlations~\cite{ArovasEtAl2022}. Half filling also occurs in electronic-structure
calculations in quantum chemistry~\cite{MottaEtAl2017}.

The above dimension ratio is the postselection
probability for the unfiltered guide $p=1$, whose physical marginal
is maximally mixed; a nonconstant filter changes the sector weights.

For a number-conserving target Hamiltonian,
$[H_B,\widehat N_B]=0$, the actual success probability for the
polynomial guide is
\begin{equation}
 s_N(p)
 =
 \frac{\operatorname{Tr}\!\left(P_{B,N}|p(H_B)|^2\right)}
      {\operatorname{Tr}|p(H_B)|^2}.
\end{equation}
Provided $s_N(p)>0$, the conditional physical state is
\begin{equation}
 \rho_{B,N}^{p}
 =
 \frac{P_{B,N}|p(H_B)|^2}
      {\operatorname{Tr}\!\left(P_{B,N}|p(H_B)|^2\right)}.
\end{equation}
Measuring only the total number preserves coherence within the selected
sector. Repeating the preparation until acceptance requires
$1/s_N(p)$ attempts on average.

\subsubsection{Illustrative examples for Fermions}
\label{subsubsec:fermion-examples}

We give two examples with nontrivial source relations. The first has a quadratic, free-Fermion~\cite{SuraceTagliacozzo2022} source, an interacting noncommuting target, and a relative distance
larger than the ordinary distance. The second starts from an
interacting density chain and imposes an extensive number of code
constraints. Source preparation follows
Appendix~\ref{subsec:source-block-guides}: the first example directly
uses independent blocks, while the second extends the same polynomial
accumulation tensors by one binary memory index.
The examples here only aim to show properties of the Fermion transduction and we don't claim advantages over classical and other quantum methods.

\paragraph{Example 1: a quadratic source with a larger relative distance.}
Take $m\ge3$ source blocks of two modes each, so $n_A=2m$.
The target has $m$ active blocks of two modes each, followed by
$m-1$ control modes, so $n_B=3m-1$. Let $e_r^L$ denote the
standard basis vectors of $\mathbb F_2^{2n_L}$, and write
$e_{j,a}^L:=e_{4(j-1)+a}^L$, $1\le a\le4$, for active block $j$.
Define
\begin{equation}
 \begin{aligned}
 u_j^x&:=e_{j,2}^A+e_{j,3}^A,&
 u_j^y&:=e_{j,1}^A+e_{j,3}^A,&
 u_j^z&:=e_{j,1}^A+e_{j,2}^A,\\
 v_j^x&:=e_{j,2}^B+e_{j,3}^B,&
 v_j^y&:=e_{j,1}^B+e_{j,3}^B,&
 v_j^z&:=e_{j,1}^B+e_{j,2}^B,
 \end{aligned}
\end{equation}
and the source block-parity label
\begin{equation}
 w_{A,j}:=e_{j,1}^A+e_{j,2}^A+e_{j,3}^A+e_{j,4}^A.
\end{equation}
With the Hermitian Majorana convention,
\begin{equation}
 \Gamma_A(w_{A,j})
 =-\gamma_{j1}\gamma_{j2}\gamma_{j3}\gamma_{j4},
 \qquad
 \Gamma_A(w_{A,j})\Gamma_A(u_j^\mu)
 =\epsilon_\mu\Gamma_A(w_{A,j}+u_j^\mu),
\end{equation}
where $\epsilon_x=\epsilon_z=1$ and $\epsilon_y=-1$.
The three operators $\Gamma_A(u_j^\mu)$ satisfy the Pauli
multiplication rules and commute with $\Gamma_A(w_{A,j})$.
Choose the six signed bilinear generators in each block as
\begin{equation}
 a_{j,0}^\mu:=\Gamma_A(u_j^\mu),
 \qquad
 a_{j,1}^\mu:=\epsilon_\mu\Gamma_A(w_{A,j}+u_j^\mu),
 \qquad \mu\in\{x,y,z\}.
\end{equation}
Both labels have weight two. For positive coefficients
$h_{j\mu},g_{j\mu}$, the source is therefore quadratic:
\begin{equation}
 H_A=\sum_{j=1}^m\sum_{\mu=x,y,z}
 \left[
 h_{j\mu}\Gamma_A(u_j^\mu)
 +g_{j\mu}\epsilon_\mu
       \Gamma_A(w_{A,j}+u_j^\mu)
 \right].
 \label{eq:fermion-example-quadratic-source}
\end{equation}
Nevertheless, it is not relation free, since
\begin{equation}
 \Gamma_A(u_j^x)\Gamma_A(u_j^y)\Gamma_A(u_j^z)=iI_A.
\end{equation}
The six labels in each block have binary rank three, so the source
label matrix has rank $3m$, with $s=6m$ and $\dim K_A=3m$.

\emph{Source preparation.}
Write $H_A=\sum_jK_j$, where $K_j$ is the $j$th block sum in
Eq.~\eqref{eq:fermion-example-quadratic-source}. Each $K_j$ is an
explicit $4\times4$ Hermitian matrix, and distinct blocks act on
disjoint tensor factors. Thus
Proposition~\ref{prop:source-block-mps} in
Appendix~\ref{subsec:source-block-guides} applies directly: every
given polynomial $p$ of degree at most $D$ with $p(H_A)\ne0$
has a normalized operator state
\begin{equation}
 |\psi_A^p\rangle
 =\frac{|p(H_A)\rangle\!\rangle_A}{\|p(H_A)\|_{2,A}}
\end{equation}
with an explicit matrix product representation of bond dimension at
most $D+1$.

\emph{Target generators and homomorphism.}
For control mode $a$, let
\begin{equation}
 d_a^B:=e_{4m+2a-1}^B+e_{4m+2a}^B,
 \qquad 1\le a\le m-1.
\end{equation}
Then $\Gamma_B(d_a^B)=2n_a^{\mathrm c}-I_B$, where
$n_a^{\mathrm c}$ is its occupation operator. Let
$R\in\mathbb F_2^{(m-1)\times m}$ satisfy
$(Rz)_a=z_a+z_{a+1}$, and set
\begin{equation}
 w_{B,j}:=\sum_{a=1}^{m-1}R_{aj}d_a^B,
 \qquad
 \kappa_j:=(-1)^{|Re_j|}.
\end{equation}
Here $e_j\in\mathbb F_2^m$ and $|Re_j|$ is its column's Hamming
weight. Thus $\kappa_1=\kappa_m=-1$ and $\kappa_j=1$ for
$1<j<m$. The signs retain the occupation-parity convention:
\begin{equation}
 \kappa_j\Gamma_B(w_{B,j})
 =\prod_{a=1}^{m-1}(I_B-2n_a^{\mathrm c})^{R_{aj}}.
\end{equation}
These operators commute with the active blocks and their product
over $j$ is $I_B$. Choose target marked generators
\begin{equation}
 b_{j,0}^\mu:=\Gamma_B(v_j^\mu),
 \qquad
 b_{j,1}^\mu:=\kappa_j\Gamma_B(w_{B,j}+v_j^\mu),
\end{equation}
and define the assignment directly on the Majorana basis by
\begin{equation}
 \begin{aligned}
 \pi\bigl(\Gamma_A(u_j^\mu)\bigr)
   &:=\Gamma_B(v_j^\mu),\\
 \pi\bigl(\Gamma_A(w_{A,j}+u_j^\mu)\bigr)
   &:=\epsilon_\mu\kappa_j
      \Gamma_B(w_{B,j}+v_j^\mu),\\
 H_B&:=\sum_{j=1}^m\sum_{\mu=x,y,z}
 \left[
 h_{j\mu}\Gamma_B(v_j^\mu)
 +g_{j\mu}\kappa_j\Gamma_B(w_{B,j}+v_j^\mu)
 \right].
 \end{aligned}
 \label{eq:fermion-example-quadratic-target}
\end{equation}
In particular, $\pi(a_{j,t}^\mu)=b_{j,t}^\mu$ for $t=0,1$,
and multiplication gives
\begin{equation}
 \pi\bigl(\Gamma_A(w_{A,j})\bigr)
 =\kappa_j\Gamma_B(w_{B,j}).
\end{equation}
Each source block generates $\mathbb C^2\otimes M_2(\mathbb C)$,
with $\Gamma_A(w_{A,j})$ central. The assignment preserves the local
multiplication rules and imposes the single additional central
relation
\begin{equation}
 \pi\!\left(\prod_j\Gamma_A(w_{A,j})\right)=I_B.
\end{equation}
Hence it extends to a surjective unital $*$-homomorphism with
$\pi(H_A)=H_B$. The control and active labels are disjoint.
Since $|w_{B,j}|$ is two at the endpoints and four in the interior,
the second family of target terms is quartic or sextic in Majoranas.
The target has noncommuting marked terms. Both Hamiltonians preserve
parity, but need not conserve particle number.

\emph{Relative distance and decoding.}
Set $u_j^0=v_j^0=0$ and $\epsilon_0=1$. The source Majorana basis
can be indexed by
\begin{equation}
 u(z,\boldsymbol\mu)
 :=\sum_{j=1}^m(z_jw_{A,j}+u_j^{\mu_j}),
 \qquad z\in\mathbb F_2^m,\quad \mu_j\in\{0,x,y,z\}.
\end{equation}
For target coordinates $(y,\boldsymbol\mu)$, write
\begin{equation}
 v(y,\boldsymbol\mu)
 :=\sum_{a=1}^{m-1}y_ad_a^B+\sum_{j=1}^mv_j^{\mu_j}.
\end{equation}
The preceding generator images give the basis map
\begin{equation}
 \pi\bigl(\Gamma_A(u(z,\boldsymbol\mu))\bigr)
 =\chi(z,\boldsymbol\mu)
       \Gamma_B(v(Rz,\boldsymbol\mu)),
 \qquad
 \chi(z,\boldsymbol\mu)
 :=\prod_{j=1}^m(\epsilon_{\mu_j}\kappa_j)^{z_j}.
\end{equation}
Thus both the forward label map and its sign are explicit. The
minimum marked cost of the source label is
\begin{equation}
 w_A(u(z,\boldsymbol\mu))
 =\#\{j:\mu_j\ne0\}
 +2\#\{j:\mu_j=0,\ z_j=1\}.
 \label{eq:fermion-example-quadratic-word-cost}
\end{equation}
A nonidentity active letter costs one, whether or not its label
contains $w_{A,j}$, because both bilinear families are marked.
A pure block parity costs two, since
$a_{j,0}^\mu a_{j,1}^\mu=\Gamma_A(w_{A,j})$.
Disjoint blocks make these local costs additive.

Since $\ker R=\{0,\mathbf1\}$, the only nonzero source basis label
mapped to the zero target label is $\sum_jw_{A,j}$, of cost $2m$.
The target retains the local three-term scalar relations, and its
$6m$ marked labels are distinct and nonzero. Therefore
\begin{equation}
 K_B/K_A\cong\mathbb F_2,
 \qquad
 d_{\mathrm{rel}}=2m,
 \qquad
 d_{\mathrm{ord}}=3.
 \label{eq:fermion-example-quadratic-distances}
\end{equation}
Given a target label $v(y,\boldsymbol\mu)$, extract
$(y,\boldsymbol\mu)$ and solve $Rz_0=y$ by successive binary
additions. The two source candidates have coordinates
$(z_0,\boldsymbol\mu)$ and $(z_0+\mathbf1,\boldsymbol\mu)$, with
\begin{equation}
 w_A(u(z_0,\boldsymbol\mu))
 +w_A(u(z_0+\mathbf1,\boldsymbol\mu))=2m.
\end{equation}
On the degree-$D$ promise with $D<m$, exactly one candidate has cost
at most $D$. This gives an $O(m)$-operation relative decoder;
coordinate conversion and the sign $\chi$ are also efficient.
Together with the Majorana operator--Fourier transforms, the circuit
therefore transduces every nonzero polynomial state with
$D\le m-1$, whereas ordinary unique decoding on the same marking
is guaranteed only for $D\le1$.

\paragraph{Example 2: an interacting source with extensive code constraints.}
\label{subsec:code-constrained-interacting-targets}
For $n_L$ modes, define the pair-label embedding
\begin{equation}
 d_{L,j}:=e_{2j-1}^L+e_{2j}^L,
 \qquad
 \iota_L u:=\sum_{j=1}^{n_L}u_jd_{L,j},
 \qquad u\in\mathbb F_2^{n_L}.
\end{equation}
The Hermitian convention implies
\begin{equation}
 \Gamma_L(\iota_Lu)
 =\prod_j(2f_{L,j}^*f_{L,j}-I_L)^{u_j}
 =(-1)^{|u|}\prod_j(I_L-2f_{L,j}^*f_{L,j})^{u_j}.
\end{equation}
In particular, occupation parity is $-\Gamma_L(d_{L,j})$.
On $m$ source modes, choose the marked generators
\begin{equation}
 a_j:=-\Gamma_A(d_{A,j}),
 \qquad
 a_{m+j}:=\Gamma_A(d_{A,j}+d_{A,j+1}),
\end{equation}
where the first family has $1\le j\le m$ and the second has
$1\le j<m$. With nonzero real coefficients, take
\begin{equation}
 H_A=-\sum_{j=1}^m h_j\Gamma_A(d_{A,j})
     +\sum_{j=1}^{m-1}J_j
                \Gamma_A(d_{A,j}+d_{A,j+1}).
 \label{eq:code-chain-source}
\end{equation}
This is the same number-conserving density chain written in Majorana
coordinates: its marked terms are bilinear and quartic, respectively.
The relations $a_ja_{j+1}a_{m+j}=I_A$ make the source relation rich.

\emph{Source preparation.}
In occupation coordinates the energy is
\begin{equation}
 E_A(x)=\sum_jh_j(-1)^{x_j}
        +\sum_jJ_j(-1)^{x_j+x_{j+1}}.
\end{equation}
Unlike Example 1, neighboring interactions overlap. We use the
polynomial-accumulation construction in the proof of
Proposition~\ref{prop:source-block-mps}, with one additional binary
virtual index for the preceding occupation. Specifically, write
\begin{equation}
 E_A(x)=\sum_{j=1}^m e_j(x_{j-1},x_j),
 \qquad
 e_j(\sigma,x):=h_j(-1)^x+J_{j-1}(-1)^{\sigma+x},
 \qquad J_0:=0,\quad x_0:=0.
\end{equation}
In Eq.~\eqref{eq:source-block-tensors}, replace the block energy by
$e_j(\sigma,x_j)$ and carry $x_j$ to the next tensor. A transition
from polynomial index $a$ to $b\ge a$ contributes
$e_j(\sigma,x_j)^{b-a}/(b-a)!$, with the same polynomial boundary
coefficients. The multinomial identity of
Eq.~\eqref{eq:source-block-tensor-product} then gives amplitudes
$p(E_A(x))$, with bond dimension at most $2(D+1)$.

The norm contractions, canonicalization, and sequential preparation~\cite{Schollwock2011,SchonEtAl2005}
of Appendix~\ref{subsec:source-block-guides} produce
\begin{equation}
 |\psi_A^p\rangle
 =\frac{\sum_xp(E_A(x))|x\rangle|x\rangle}
        {\left(\sum_x|p(E_A(x))|^2\right)^{1/2}}.
\end{equation}

\emph{Target generators and homomorphism.}
Choose a full-row-rank matrix
$R\in\mathbb F_2^{n_B\times m}$, with $n_B<m$, and put
$C:=\ker R$, with minimum nonzero Hamming weight $d(C)$.
Assume $C$ has a polynomial-time syndrome decoder that recovers $u$ from
$Ru$ whenever $|u|\le t_{\mathrm{dec}}$.
For a given shift $\mathbf b\in\mathbb F_2^m$, the Majorana
basis map is
\begin{equation}
 \pi\bigl(\Gamma_A(\iota_Au)\bigr)
 :=(-1)^{|u|+\mathbf b\cdot u+|Ru|}
       \Gamma_B(\iota_BRu).
\end{equation}
The additional signs convert between the Hermitian Majorana
convention and the occupation-parity convention. Setting $r_j:=Re_j$,
the marked images are
\begin{equation}
 \begin{aligned}
 \pi(a_j)
 &=(-1)^{b_j+|r_j|}\Gamma_B(\iota_Br_j),\\
 \pi(a_{m+j})
 &=(-1)^{b_j+b_{j+1}+|r_j+r_{j+1}|}
       \Gamma_B\bigl(\iota_B(r_j+r_{j+1})\bigr).
 \end{aligned}
\end{equation}
The target Hamiltonian is therefore
\begin{equation}
 \begin{aligned}
 H_B={}&\sum_{j=1}^m
 h_j(-1)^{b_j+|r_j|}\Gamma_B(\iota_Br_j)\\
 &+\sum_{j=1}^{m-1}
 J_j(-1)^{b_j+b_{j+1}+|r_j+r_{j+1}|}
       \Gamma_B\bigl(\iota_B(r_j+r_{j+1})\bigr).
 \end{aligned}
 \label{eq:code-chain-target}
\end{equation}
In occupation coordinates this map is the affine restriction
\begin{equation}
 (\pi F)(y)=F(\mathbf b+R^{\mathsf T}y),
 \qquad y\in\mathbb F_2^{n_B}.
 \label{eq:code-chain-restriction}
\end{equation}
It is a surjective unital $*$-homomorphism and satisfies
$\pi(H_A)=H_B$. Both Hamiltonians conserve particle number.
If every column of $R$ has weight at most $w$, the target terms
act on at most $2w$ modes. Constant interaction order requires
this additional sparsity property.

\emph{Relative distance and decoding.}
Let $T\in\mathbb F_2^{m\times(m-1)}$ have columns
$e_j+e_{j+1}$. Viewing the embeddings $\iota_L$ as binary matrices,
the full Majorana label matrices of the marked terms are
\begin{equation}
 A=\iota_A[I_m\ \ T],
 \qquad
 B=\iota_B R[I_m\ \ T].
\end{equation}
Since the embeddings are injective, the induced basis-label map is
$\iota_Au\mapsto\iota_BRu$. Its source word cost satisfies
\begin{equation}
 w_A(\iota_Au)=\min_{x+Te=u}(|x|+|e|).
\end{equation}
Each marked term involves at most two mode pairs, while every
single pair is available. Consequently,
\begin{equation}
 \begin{gathered}
 \left\lceil\frac{|u|}{2}\right\rceil
 \le w_A(\iota_Au)\le|u|,
 \qquad K_B/K_A\cong C,\\
 d_{\mathrm{rel}}
 =\min_{0\ne u\in C}w_A(\iota_Au)
 \ge\left\lceil\frac{d(C)}2\right\rceil.
 \end{gathered}
 \label{eq:code-chain-relative-distance}
\end{equation}
If $d(C)>4$, the marked target labels are nonzero and pairwise
distinct. The inherited three-term scalar relations then give
$d_{\mathrm{ord}}=3$.

For a degree-$D$ guide, $w_A(\iota_Au)\le D$ implies $|u|\le2D$.
Given the target Majorana label $\iota_BRu$, undo the pair embedding,
apply the assumed syndrome decoder, and restore the source embedding:
\begin{equation}
 \iota_BRu\longmapsto Ru\longmapsto u\longmapsto\iota_Au.
\end{equation}
This is efficient on the full degree-$D$ promise whenever
\begin{equation}
 2D\le t_{\mathrm{dec}},\qquad 4D<d(C).
 \label{eq:code-chain-degree-window}
\end{equation}
These conditions imply $2D<d_{\mathrm{rel}}$. The sign
$(-1)^{|u|+\mathbf b\cdot u+|Ru|}$ is also efficiently computable.
The Majorana operator--Fourier transforms and this decoder therefore
prepare
\begin{equation}
 |\psi_B^p\rangle
 \propto\sum_{y\in\mathbb F_2^{n_B}}
 p\!\left(E_A(\mathbf b+R^{\mathsf T}y)\right)|y\rangle|y\rangle.
\end{equation}
For a code family with the assumed decoder and
$\dim C=\Theta(m)$, $d(C)=\Theta(m)$, and
$t_{\mathrm{dec}}=\Theta(m)$, the available polynomial degree is
$D=\Theta(m)$ despite ordinary distance three.

\emph{Comparison with direct source-sector projection.}
The retained source configurations form
$\mathcal S:=\mathbf b+\operatorname{im}R^{\mathsf T}$.
In Majorana coordinates, their projector is
\begin{equation}
 P_{\mathcal S}
 =\frac1{|C|}\sum_{v\in C}
 (-1)^{\mathbf b\cdot v+|v|}\Gamma_A(\iota_Av).
\end{equation}
Each marked source term contains at most two mode pairs, so
$|p(H_A)|^2$ has no Majorana coefficient on $\Gamma_A(\iota_Av)$
with $|v|>4D$. Under Eq.~\eqref{eq:code-chain-degree-window}, only
the zero codeword contributes to its trace against $P_{\mathcal S}$.
The direct-projection probability is therefore
\begin{equation}
 s_{\mathrm{proj}}(p)
 :=\frac{\tau_A(P_{\mathcal S}|p(H_A)|^2)}
          {\tau_A(|p(H_A)|^2)}
 =\frac1{|C|}=2^{-\dim C}.
 \label{eq:code-chain-projection-probability}
\end{equation}
For $\dim C=\Theta(m)$, repeated source preparation and projection
has exponential expected cost, whereas relative decoding prepares
the normalized target guide without this postselection. Note that
projection leaves the state on the source registers.
However, since $R$ has full row rank, the map
$y\mapsto\mathbf b+R^{\mathsf T}y$ is a bijection from
$\mathbb F_2^{n_B}$ onto $\mathcal S$. Hence
$\operatorname{rank}P_{\mathcal S}=2^{n_B}$, equal to the target
Hilbert-space dimension. Binary Gaussian elimination gives an
efficient affine unitary satisfying
\begin{equation}
 U_{\mathrm{aff}}|\mathbf b+R^{\mathsf T}y\rangle
 =
 |y\rangle|0^{m-n_B}\rangle.
\end{equation}
Conditioned on projecting the physical source register onto
$\mathcal S$, applying $U_{\mathrm{aff}}$ to both the physical and
reference registers produces $|\psi_B^p\rangle$ together with
$2(m-n_B)$ zero qubits, because
$E_B(y)=E_A(\mathbf b+R^{\mathsf T}y)$.
Thus the cost is dominated by the probability of accepting the projection.

\paragraph{Scope.}
These examples illustrate the fermionic construction on full-Fock
polynomial states. The first shows
that a free-fermion source can require relative decoding;
the second uses a tractable interacting source to absorb local
relations while leaving an extensive and decodable quotient.
The cost comparison in Example 2 is specific to direct source-sector
projection; a computational advantage over other classical or quantum
preparation methods remains open.
For the number-conserving second target, fixed-number preparation
additionally requires the particle-number postselection discussed
above, with its own success probability. That test is distinct from
the source code-sector projection in
Eq.~\eqref{eq:code-chain-projection-probability}.

\subsection{Bosons and prime-dimensional qudits: Weyl coordinates}
\label{subsec:families-boson}
\label{subsec:families-qudit}

We first encode a fixed-number bosonic problem in finite occupation
registers and expand its Hamiltonian exactly in Weyl operators. We then
formulate source-to-target polynomial transduction for these finite
Weyl algebras (qudits).

\subsubsection{Bosonic Hamiltonians and exact finite Weyl reduction}
\label{subsubsec:boson-formulation}
\label{subsubsec:boson-finite-weyl}

\paragraph{Physical Hamiltonians and fixed particle number.}
For $m$ bosonic modes, let $b_j,b_j^*$ act on
$\mathcal H_\infty=\ell^2(\mathbb N_0)^{\otimes m}$ and satisfy~\cite{GerryKnight2023}
\begin{equation}
 [b_i,b_j^*]=\delta_{ij}I,\qquad [b_i,b_j]=0.
\end{equation}
Consider a number-conserving polynomial Hamiltonian
\begin{equation}
 \begin{aligned}
 H_\infty&=E_cI+\sum_\nu
       \bigl(h_\nu G_\nu+\overline{h_\nu}G_\nu^*\bigr),\\
 G_\nu&:=\prod_{j=1}^m(b_j^*)^{r_{\nu j}}b_j^{s_{\nu j}},
 \qquad
 \sum_jr_{\nu j}=\sum_js_{\nu j}.
 \end{aligned}
 \label{eq:boson-general-interacting-hamiltonian}
\end{equation}
This includes hopping, two-particle interactions, and on-site terms
$n_j(n_j-1)$, where $n_j=b_j^*b_j$~\cite{JakschEtAl1998}. We assume a self-adjoint
realization for which the finite-particle sectors are invariant.
In particular,
\begin{equation}
 [H_\infty,\widehat N]=0,\qquad
 \widehat N:=\sum_j b_j^*b_j.
\end{equation}
For a prescribed total particle number $N$, define
\begin{equation}
 \mathcal H_N:=\operatorname{span}
 \{|\mathbf n\rangle:n_j\ge0,\ \textstyle\sum_jn_j=N\},
 \qquad
 \Pi_N:=\mathbf1_{\{\widehat N=N\}}.
\end{equation}
The sector dimension is
\begin{equation}
 d_N:=\dim\mathcal H_N=\binom{N+m-1}{N}.
 \label{eq:boson-fixed-number-dimension}
\end{equation}
Write $H_N:=H_\infty|_{\mathcal H_N}$ and
$\tau_N:=d_N^{-1}\operatorname{Tr}_{\mathcal H_N}$. Our physical
preparation goal is
\begin{equation}
 |\psi_N^{p_0}\rangle
 :=\frac{(p_0(H_N)\otimes I)|\Phi_N\rangle}
         {\|p_0(H_N)\|_{2,N}},
 \qquad
 |\Phi_N\rangle:=d_N^{-1/2}\sum_{\sum_jn_j=N}
                   |\mathbf n\rangle|\mathbf n\rangle,
\end{equation}
where $\deg p_0\le D$, $p_0(H_N)\ne0$, and
$\|X\|_{2,N}^2:=\tau_N(X^*X)$.
Its physical marginal is
\begin{equation}
 \rho_N^{p_0}
 =\frac{|p_0(H_N)|^2}
       {\operatorname{Tr}_{\mathcal H_N}|p_0(H_N)|^2}.
\end{equation}

Full bosonic Fock space has no normalized maximally entangled
reference, and its canonical polynomial operator vector is generally
not normalizable. Fixing total particle number removes this problem:
$\mathcal H_N$ is finite dimensional.

\paragraph{Finite occupation registers.}
Choose an odd prime $p>N$ and define the occupation isometry
\begin{equation}
 V_p:\mathcal K_p:=(\mathbb C^p)^{\otimes m}
       \longrightarrow\mathcal H_\infty,\qquad
 V_p|\mathbf r\rangle:=|\mathbf r\rangle_{\rm Fock},
 \quad 0\le r_j<p.
\end{equation}
Set
\begin{equation}
 \begin{aligned}
 H^{(p)}&:=V_p^*H_\infty V_p,
 &\widehat n_p&:=\sum_{r=0}^{p-1}r|r\rangle\langle r|,\\
 \widehat N^{(p)}&:=\sum_j\widehat n_{p,j},
 &\Pi_N^{(p)}&:=\mathbf1_{\{\widehat N^{(p)}=N\}}.
 \end{aligned}
\end{equation}
Every configuration with total occupation $N$ has local occupation
at most $N<p$. Therefore
\begin{equation}
 V_p\Pi_N^{(p)}V_p^*=\Pi_N,\qquad
 \operatorname{rank}\Pi_N^{(p)}=d_N,\qquad
 [H^{(p)},\widehat N^{(p)}]=0.
 \label{eq:boson-exact-sector-encoding}
\end{equation}
The total occupation is computed by integer addition.
A local dimension greater than $N$ is sufficient for this exact
sector encoding; primality is used for the finite-field
transduction formulation.

\paragraph{Exact expansion in Weyl operators.}
Let $\omega=e^{2\pi i/p}$ and define the shift and clock matrices
\begin{equation}
 X_p|r\rangle=|r+1\bmod p\rangle,\qquad
 Z_p|r\rangle=\omega^r|r\rangle.
\end{equation}
They satisfy $Z_pX_p=\omega X_pZ_p$.
For $u=(x,z)\in\mathbb F_p^{2m}$, define the Weyl operator~\cite{Gross2006}
\begin{equation}
 W_m(x,z):=\omega^{x\cdot z/2}
              \bigotimes_{j=1}^mX_p^{x_j}Z_p^{z_j},
 \qquad
 [u,v]:=z\cdot x'-x\cdot z',
 \quad v=(x',z').
 \label{eq:families-weyl}
\end{equation}
All exponents are evaluated in $\mathbb F_p$. These matrices form an orthonormal
basis under $p^{-m}\operatorname{Tr}$ and satisfy
$W_m(u)^*=W_m(-u)$.

The compressed ladder matrices have the exact expressions
\begin{equation}
 b_p=\sum_{r=1}^{p-1}\sqrt r\,|r-1\rangle\langle r|
     =X_p^{-1}\sqrt{\widehat n_p},
 \qquad
 b_p^*=\sqrt{\widehat n_p}X_p.
 \label{eq:boson-exact-finite-ladder}
\end{equation}
The diagonal factor gives the amplitudes $\sqrt r$ and removes
cyclic wraparound. Expanding it in clock powers gives
\begin{equation}
 \begin{aligned}
 \sqrt{\widehat n_p}&=\sum_{t=0}^{p-1}s_tZ_p^t,
 &s_t&:=\frac1p\sum_{r=0}^{p-1}\sqrt r\,\omega^{-tr},\\
 b_p&=\sum_{t=0}^{p-1}s_tX_p^{-1}Z_p^t.
 \end{aligned}
 \label{eq:boson-exact-ladder-weyl}
\end{equation}
Thus a compressed ladder operator is an exact linear combination
of Weyl strings.

Equivalently, expand the complete finite Hamiltonian as
\begin{equation}
 H^{(p)}=\sum_{u\in\mathbb F_p^{2m}}h_p(u)W_m(u),
 \qquad
 h_p(u):=p^{-m}\operatorname{Tr}
                  \bigl(W_m(u)^*H^{(p)}\bigr).
 \label{eq:boson-finite-weyl-expansion}
\end{equation}
This is an exact matrix identity on all of $\mathcal K_p$, with
$h_p(-u)=\overline{h_p(u)}$. Its nonzero Weyl terms and their adjoints,
with the identity term separated, determine the finite marking.
A term supported on $k$ modes has at most $p^{2k}$ Weyl coefficients.
Hence a polynomial-size list of bounded-support interactions has
an explicit polynomial-size expansion when $p$ is polynomially
bounded.

The truncated ladder operators obey the modified commutation relation
\begin{equation}
 [b_p,b_p^*]=I_p-p|p-1\rangle\langle p-1|.
\end{equation}
The finite Hamiltonian $H^{(p)}$ agrees with the physical Hamiltonian
on the invariant $N$-particle sector:

\begin{lemma}[Exact finite Weyl realization at fixed particle number]
\label{lem:boson-fixed-number-exact}
For the number-conserving Hamiltonian above and $p>N$,
\begin{equation}
 \left\|(H_\infty V_p-V_pH^{(p)})\Pi_N^{(p)}\right\|=0.
 \label{eq:boson-fixed-number-zero-error}
\end{equation}
For every polynomial $p_0$, of any degree,
\begin{equation}
 p_0(H_\infty)V_p\Pi_N^{(p)}
 =V_pp_0(H^{(p)})\Pi_N^{(p)}.
 \label{eq:boson-fixed-number-polynomial-map}
\end{equation}
Let $\mathcal K_{p,N}:=\Pi_N^{(p)}\mathcal K_p$,
$V_{p,N}:=V_p|_{\mathcal K_{p,N}}$, and
$H_N^{(p)}:=H^{(p)}|_{\mathcal K_{p,N}}$.
Their normalized polynomial operator states agree under
$V_{p,N}\otimes\overline{V_{p,N}}$, whenever the polynomial
operator is nonzero.
\end{lemma}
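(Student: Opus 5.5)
The plan is to reduce everything to the invariance of the $N$-particle sector and the fact that $V_p$ restricted to it is a unitary onto $\mathcal H_N$. First, because every basis vector $|\mathbf r\rangle$ with $\sum_j r_j=N$ has $r_j\le N<p$, the range of $V_p\Pi_N^{(p)}$ is exactly $\mathcal H_N$ and $V_p\Pi_N^{(p)}V_p^*=\Pi_N$, which is Eq.~\eqref{eq:boson-exact-sector-encoding}. Since $H_\infty$ commutes with $\widehat N$ and leaves $\mathcal H_N$ invariant, $H_\infty V_p\Pi_N^{(p)}=\Pi_N H_\infty V_p\Pi_N^{(p)}$. Moreover $\Pi_N\le V_pV_p^*$ because $\mathcal H_N\subseteq\operatorname{ran}V_p$. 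Therefore
\[
H_\infty V_p\Pi_N^{(p)}=V_pV_p^*H_\infty V_p\Pi_N^{(p)}=V_pH^{(p)}\Pi_N^{(p)},
\]
which gives Eq.~\eqref{eq:boson-fixed-number-zero-error}.

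Next I will record that $H^{(p)}$ commutes with $\Pi_N^{(p)}$. Write $V_p^*\Pi_NV_p=\Pi_N^{(p)}$, and apply $V_p^*$ to the identity just obtained to get $H^{(p)}\Pi_N^{(p)}=V_p^*H_\infty V_p\Pi_N^{(p)}$, which lies in $\operatorname{ran}\Pi_N^{(p)}$. This yields invariance of $\mathcal K_{p,N}$ under $H^{(p)}$; since $H^{(p)}$ is self-adjoint, it commutes with $\Pi_N^{(p)}$, consistent with Eq.~\eqref{eq:boson-exact-sector-encoding}. The polynomial identity then follows by induction on the power $k$: assuming $H_\infty^kV_p\Pi_N^{(p)}=V_p(H^{(p)})^k\Pi_N^{(p)}$, we compute
\[
H_\infty^{k+1}V_p\Pi_N^{(p)}=H_\infty V_p\Pi_N^{(p)}(H^{(p)})^k\Pi_N^{(p)}=V_p(H^{(p)})^{k+1}\Pi_N^{(p)}.
\]
Here the commutation lets us reinsert $\Pi_N^{(p)}$, and the base case is the first step. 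Linearity extends this to every polynomial $p_0$, giving Eq.~\eqref{eq:boson-fixed-number-polynomial-map}.

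For the state statement, $V_{p,N}:\mathcal K_{p,N}\to\mathcal H_N$ is a unitary that maps occupation basis vectors to occupation basis vectors. Hence $(V_{p,N}\otimes\overline{V_{p,N}})|\Phi_{p,N}\rangle=|\Phi_N\rangle$, where both maximally entangled states are normalized by $d_N^{-1/2}$; here $\overline{V_{p,N}}=V_{p,N}$ because its matrix in these bases is real. Restricting Eq.~\eqref{eq:boson-fixed-number-polynomial-map} to the sector gives $p_0(H_N)V_{p,N}=V_{p,N}p_0(H_N^{(p)})$. Consequently $(p_0(H_N)\otimes I)|\Phi_N\rangle=(V_{p,N}\otimes\overline{V_{p,N}})(p_0(H_N^{(p)})\otimes I)|\Phi_{p,N}\rangle$, and the normalized trace norms agree by unitary invariance. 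This means the two polynomial operators vanish together and, when nonzero, their normalized states coincide under $V_{p,N}\otimes\overline{V_{p,N}}$.

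The only delicate point is the first step. There one must use that $H_\infty$ maps $V_p\Pi_N^{(p)}$ into $\mathcal H_N\subseteq\operatorname{ran}V_p$, so that the compression by $V_pV_p^*$ loses nothing, rather than relying on the truncated commutation relation. This is where the hypothesis $p>N$ enters, together with the assumed invariance of finite-particle sectors under the self-adjoint realization; everything after that is formal.
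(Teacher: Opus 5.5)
Your proposal is correct and matches the paper's proof step for step. Both use the compression identity $H_\infty V_p\Pi_N^{(p)}=V_pV_p^*H_\infty V_p\Pi_N^{(p)}$, which follows from $\mathcal H_N\subseteq\operatorname{ran}V_p$ and invariance of $\mathcal H_N$, then iterate using invariance of $\mathcal K_{p,N}$ under $H^{(p)}$, and finally use the unitary identification $V_{p,N}$ of the sectors and their maximally entangled references. You also fill in two points the paper only asserts: invariance of $\mathcal K_{p,N}$, and that $\overline{V_{p,N}}=V_{p,N}$ in the occupation bases. For invariance, the content comes from $V_p^*\Pi_N=\Pi_N^{(p)}V_p^*$ together with $H_\infty V_p\Pi_N^{(p)}=\Pi_NH_\infty V_p\Pi_N^{(p)}$, not from the identity $H^{(p)}\Pi_N^{(p)}=V_p^*H_\infty V_p\Pi_N^{(p)}$, which holds by definition.
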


\begin{proof}
The sector $\mathcal H_N$ lies inside $\operatorname{ran}V_p$
and is invariant under $H_\infty$. Hence
\[
 V_pH^{(p)}\Pi_N^{(p)}
 =V_pV_p^*H_\infty V_p\Pi_N^{(p)}
 =H_\infty V_p\Pi_N^{(p)}.
\]
Since $H^{(p)}$ also preserves $\mathcal K_{p,N}$, iteration and
linearity prove the polynomial identity. Equivalently,
\[
 H_N=V_{p,N}H_N^{(p)}V_{p,N}^*.
\]
The map $V_{p,N}$ is a unitary identification of the two sector
spaces, preserving spectra, multiplicities, and normalized traces.
Their maximally entangled references also correspond under
$V_{p,N}\otimes\overline{V_{p,N}}$. This proves the
polynomial-state assertion.
\end{proof}

\paragraph{Full finite transduction followed by number postselection.}
The finite source and target models must satisfy the Weyl
homomorphism and decoding conditions formulated below. Assume that
the resulting circuit prepares the full finite target polynomial
operator state, and put
\begin{equation}
 F_B:=p_0(H_B^{(p)}),\qquad
 \rho_B^{p_0,(p)}:=\frac{|F_B|^2}{\operatorname{Tr}|F_B|^2}.
\end{equation}
Here $H_B^{(p)}$ is the occupation compression of a physical
number-conserving target, with local dimension $p>N$.
Neither the source guide nor this full finite target state is
required to have fixed particle number.

Measure only the total target occupation $\widehat N_B^{(p)}$,
without resolving the individual occupations, and retain outcome
$N$. The probability is
\begin{equation}
 s_N^{(p)}(p_0)
 :=\frac{\operatorname{Tr}
        \bigl(\Pi_{B,N}^{(p)}|p_0(H_B^{(p)})|^2\bigr)}
          {\operatorname{Tr}|p_0(H_B^{(p)})|^2}.
 \label{eq:boson-number-postselection-probability}
\end{equation}
If this probability is positive, the conditional marginal is
\begin{equation}
 \rho_{B,N}^{p_0,(p)}
 =\frac{\Pi_{B,N}^{(p)}|F_B|^2\Pi_{B,N}^{(p)}}
        {\operatorname{Tr}(\Pi_{B,N}^{(p)}|F_B|^2)}.
\end{equation}
The full-space normalization cancels against the postselection
probability. Moreover, number conservation implies
\begin{equation}
 \left.p_0(H_B^{(p)})\right|_{\mathcal K_{p,N}}
 =p_0(H_{B,N}^{(p)}).
\end{equation}
Therefore Lemma~\ref{lem:boson-fixed-number-exact} gives
\begin{equation}
 V_{p,N}\rho_{B,N}^{p_0,(p)}V_{p,N}^*
 =
 \frac{|p_0(H_{B,N})|^2}
      {\operatorname{Tr}_{\mathcal H_{B,N}}|p_0(H_{B,N})|^2}.
\end{equation}
Here $V_{p,N}$ and $\mathcal K_{p,N}$ refer to the target sector.
The corresponding pure operator states agree as well, since
\begin{equation}
 (\Pi_{B,N}^{(p)}\otimes I)|\psi_B^{p_0,(p)}\rangle
 =
 \sqrt{s_N^{(p)}(p_0)}\,|\psi_{B,N}^{p_0,(p)}\rangle.
\end{equation}
Thus full finite preparation followed by postselection gives the
exact encoded fixed-number polynomial guide.

\paragraph{Approximation with a smaller local occupation cutoff.}
Even when $N\ge p$, the local cutoff can retain almost all of the
$N$-photon sector. Let $\Pi_{N,<p}$ project onto configurations
$|\mathbf n\rangle$ satisfying
\begin{equation}
 \sum_{j=1}^m n_j=N,
 \qquad
 0\le n_j<p,
\end{equation}
and define
\begin{equation}
 d_{N,p}:=\operatorname{rank}\Pi_{N,<p},
 \qquad
 d_N:=\operatorname{rank}\Pi_N
     =\binom{N+m-1}{N}.
\end{equation}
Then
\begin{equation}
 \frac{d_{N,p}}{d_N}
 =
 \frac{[z^N](1+z+\cdots+z^{p-1})^m}
      {\binom{N+m-1}{N}}.
 \label{eq:boson-retained-sector-fraction}
\end{equation}
The discarded fraction is the probability, under the uniform
distribution on the $N$-photon sector, that at least one mode contains
$p$ or more photons. A union bound gives
\begin{equation}
 1-\frac{d_{N,p}}{d_N}
 \le
 m\frac{\binom{N-p+m-1}{m-1}}
        {\binom{N+m-1}{m-1}}
 \le
 m\left(\frac{N}{N+m-1}\right)^p.
 \label{eq:boson-retained-sector-bound}
\end{equation}
Therefore, for any constant $c>0$, choosing
\begin{equation}
 p\ge
 \frac{(c+1)\ln m}
      {\ln\!\left(1+\frac{m-1}{N}\right)}
 \label{eq:boson-cutoff-logarithmic}
\end{equation}
is sufficient to guarantee
\begin{equation}
 \frac{d_{N,p}}{d_N}\ge 1-m^{-c}.
 \label{eq:boson-retained-sector-polynomial}
\end{equation}
Hence the cutoff retains all but an inverse-polynomial fraction of the
$N$-photon sector. In particular, when $N=\Theta(m)$, the denominator
in Eq.~\eqref{eq:boson-cutoff-logarithmic} is a positive constant, so
a local cutoff $p=O(\log m)$ already gives
\begin{equation}
 \frac{d_{N,p}}{d_N}
 =1-\frac{1}{\operatorname{poly}(m)},
\end{equation}
even though $p\ll N$.

\subsubsection{Qudit transduction: Weyl generators, relative codes, and Fourier coordinates}
\label{subsubsec:qudit-formulation}

The finite Weyl formulation applies both to the encoded bosonic
models above and to Hamiltonians defined directly on prime-dimensional
qudits. We use an odd prime $p$ and the convention in
Eq.~\eqref{eq:families-weyl}~\cite{HostensEtAl2005,Gross2006}; binary systems use the Pauli
formulation. For $L\in\{A,B\}$, let
\begin{equation}
 \mathcal H_{L,p}:=(\mathbb C^p)^{\otimes m_L},
 \qquad
 \tau_{L,p}:=p^{-m_L}\operatorname{Tr}.
\end{equation}

\paragraph{Marked Weyl directions and field symbols.}
Choose nonzero Weyl directions
\begin{equation}
 u_i\in\mathbb F_p^{2m_A},
 \qquad
 v_i\in\mathbb F_p^{2m_B},
 \qquad i\in[s],
\end{equation}
and phases $r_i^A,r_i^B\in\mathbb F_p$. For every
$\xi\in\mathbb F_p$, define
\begin{equation}
 a_i(\xi):=\omega^{\xi r_i^A}W_{m_A}(\xi u_i),
 \qquad
 b_i(\xi):=\omega^{\xi r_i^B}W_{m_B}(\xi v_i).
 \label{eq:qudit-local-symbols}
\end{equation}
Thus $a_i(0)=I_A$, $b_i(0)=I_B$, and
\begin{equation}
 a_i(\xi)a_i(\eta)=a_i(\xi+\eta),
 \qquad
 b_i(\xi)b_i(\eta)=b_i(\xi+\eta),
 \qquad
 a_i(\xi)^*=a_i(-\xi),
 \qquad
 b_i(\xi)^*=b_i(-\xi).
\end{equation}
The powers along one Weyl direction are therefore treated as the
symbols of a single $\mathbb F_p$-valued coordinate, rather than as
independent generator coordinates.

For each $i$, choose a nonempty symmetric symbol set
$\Lambda_i=-\Lambda_i\subseteq\mathbb F_p^\times$. Each
$a_i(\xi)$ and $b_i(\xi)$ with $\xi\in\Lambda_i$ is a
length-one letter. The induced local word metric is
\begin{equation}
 \rho_i(\xi):=
 \min\left\{t:\xi=\xi_1+\cdots+\xi_t,\
                 \xi_1,\ldots,\xi_t\in\Lambda_i\right\},
 \qquad \rho_i(0):=0,
 \label{eq:qudit-local-symbol-cost}
\end{equation}
where addition is in $\mathbb F_p$. A natural choice is
\begin{equation}
 \Lambda_i=\mathbb F_p^\times,
 \qquad
 \rho_i(\xi)=\mathbf1_{\xi\ne0},
\end{equation}
which is used below in Section~\ref{subsubsec:qudit-opi}.

Consider matched Hermitian Hamiltonians
\begin{equation}
 \begin{aligned}
 H_A^{(p)}&=E_cI_A+
   \sum_{i=1}^s\sum_{\xi\in\Lambda_i}
      \lambda_{i,\xi}a_i(\xi),\\
 H_B^{(p)}&=E_cI_B+
   \sum_{i=1}^s\sum_{\xi\in\Lambda_i}
      \lambda_{i,\xi}b_i(\xi),
 \end{aligned}
 \qquad
 \lambda_{i,-\xi}=\overline{\lambda_{i,\xi}},
 \label{eq:qudit-matched-hamiltonians}
\end{equation}
where $E_c\in\mathbb R$. Both Hamiltonians have marked degree at
most one. Put
\begin{equation}
 \mathcal A:=C^*(I_A,a_i(\xi):i\in[s],\xi\in\mathbb F_p),
 \qquad
 \mathcal B:=C^*(I_B,b_i(\xi):i\in[s],\xi\in\mathbb F_p).
\end{equation}
The formulation does not assume commutativity between different
Weyl directions.

\paragraph{Commutation phases and relation codes.}
Define
\begin{equation}
 A:=[u_1\ \cdots\ u_s],
 \qquad
 B:=[v_1\ \cdots\ v_s],
 \qquad
 K_A:=\ker A,
 \qquad
 K_B:=\ker B,
 \label{eq:qudit-label-matrices}
\end{equation}
with all arithmetic over $\mathbb F_p$. For
$c=(c_1,\ldots,c_s)\in\mathbb F_p^s$, let
\begin{equation}
 W_A(c):=a_1(c_1)\cdots a_s(c_s),
 \qquad
 W_B(c):=b_1(c_1)\cdots b_s(c_s).
\end{equation}
A reduced symbol vector $c$ belongs to $K_L$ exactly when
$W_L(c)$ is scalar; write
\begin{equation}
 W_L(c)=\zeta_L(c)I_L
 \qquad(c\in K_L).
\end{equation}
Thus $K_L$ records scalar relations remaining after the local
field-symbol laws have already been incorporated.

The Weyl multiplication rules are
\begin{equation}
 W_m(u)W_m(v)=\omega^{[u,v]/2}W_m(u+v),
 \qquad
 W_m(u)W_m(v)=\omega^{[u,v]}W_m(v)W_m(u),
 \label{eq:qudit-weyl-products}
\end{equation}
where
\begin{equation}
 \Omega_m:=\begin{pmatrix}0&-I_m\\ I_m&0\end{pmatrix},
 \qquad
 [u,v]:=u^{\mathsf T}\Omega_m v.
\end{equation}
Hence
\begin{equation}
 a_i(\xi)a_j(\eta)
 =\omega^{\xi\eta\Gamma^A_{ij}}a_j(\eta)a_i(\xi),
 \qquad
 \Gamma_A:=A^{\mathsf T}\Omega_{m_A}A,
 \label{eq:qudit-symbol-commutation}
\end{equation}
with the analogous formula for
$\Gamma_B:=B^{\mathsf T}\Omega_{m_B}B$. The matrices
$\Gamma_A$ and $\Gamma_B$ may be nonzero.

\begin{theorem}[Homomorphisms between Weyl-generated algebras]
\label{thm:qudit-homomorphism}
The assignment
\begin{equation}
 a_i(\xi)\longmapsto b_i(\xi),
 \qquad i\in[s],\quad \xi\in\mathbb F_p,
\end{equation}
extends to a surjective unital $*$-homomorphism
$\pi_p:\mathcal A\to\mathcal B$ if and only if
\begin{equation}
 A^{\mathsf T}\Omega_{m_A}A
 =B^{\mathsf T}\Omega_{m_B}B,
 \qquad
 K_A\subseteq K_B,
 \qquad
 \zeta_A(c)=\zeta_B(c)\quad(c\in K_A).
 \label{eq:qudit-homomorphism-conditions}
\end{equation}
For the matched Hamiltonians in
Eq.~\eqref{eq:qudit-matched-hamiltonians}, this gives
\begin{equation}
 \pi_p(H_A^{(p)})=H_B^{(p)},
 \qquad
 \pi_p\bigl(p_0(H_A^{(p)})\bigr)
 =p_0(H_B^{(p)})
\end{equation}
for every polynomial $p_0$.
\end{theorem}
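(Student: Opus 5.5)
Necessity is immediate from multiplicativity. If a unital $*$-homomorphism $\pi_p$ with $\pi_p(a_i(\xi))=b_i(\xi)$ exists, apply it to the group commutator relation in Eq.~\eqref{eq:qudit-symbol-commutation} with $\xi=\eta=1$. This gives $\omega^{\Gamma^A_{ij}}=\omega^{\Gamma^B_{ij}}$, hence $\Gamma_A=\Gamma_B$ over $\mathbb F_p$. Next, for $c\in K_A$ we have $W_A(c)=\zeta_A(c)I_A$. Applying $\pi_p$ and using unitality gives $W_B(c)=\zeta_A(c)I_B$. Since $W_B(c)$ is a nonzero multiple of $W_{m_B}(Bc)$, which is scalar only when $Bc=0$, this forces $c\in K_B$ and $\zeta_B(c)=\zeta_A(c)$.

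For sufficiency I will construct $\pi_p$ on a linear basis and then check multiplicativity. The first step is to show that $\{W_A(c)\}$, with $c$ running over coset representatives of $\mathbb F_p^s/K_A$, spans $\mathcal A$. Every word in the letters reduces to some $W_A(c)$ up to a phase, by reordering with Eq.~\eqref{eq:qudit-symbol-commutation} and merging with $a_i(\xi)a_i(\eta)=a_i(\xi+\eta)$. Moreover, $W_A(c)$ is a phase times $W_{m_A}(Ac)$, and distinct labels $Ac$ give orthonormal Weyl operators. So $\mathcal A$ has the basis $\{W_{m_A}(u):u\in\operatorname{im}A\}$, and the same holds for $\mathcal B$ with $\operatorname{im}B$.

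I then define $\pi_p(W_A(c)):=W_B(c)$ and extend linearly. The main obstacle is well-definedness. If $Ac=Ac'$, then $W_A(c)$ and $W_A(c')$ are proportional, and I must show that $W_B(c)$ and $W_B(c')$ carry the same proportionality constant. To do this, write $W_A(c')=W_A(c)\,\cdot$ (a reordered product) and compute
\[
W_A(c)^{-1}W_A(c')=\omega^{\phi(c,c')}W_A(c'-c),
\]
where the phase $\phi$ is a polynomial expression in the entries of $\Gamma_A$ and of $c,c'$ alone. This holds because only the commutation phases and the additive symbol law enter the reordering. Since $c'-c\in K_A$, we get $W_A(c'-c)=\zeta_A(c'-c)I$. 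The identical computation on the target side uses $\Gamma_B=\Gamma_A$, together with $c'-c\in K_B$ and $\zeta_B=\zeta_A$, so both constants coincide. For the same reason, $Bc$ depends only on $Ac$, and the label map $M(Ac)=Bc$ is well defined.

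Multiplicativity follows by the same computation: $W_A(c)W_A(d)=\omega^{\psi(c,d)}W_A(c+d)$, with $\psi$ determined by $\Gamma_A$, and the target satisfies the identical rule. Compatibility with the $*$-operation follows from $W_A(c)^*=a_s(-c_s)\cdots a_1(-c_1)$, which after reordering becomes a $\Gamma$-determined phase times $W_A(-c)$; again the phase matches on both sides. Unitality holds because $W_A(0)=I$. Surjectivity holds because the image contains all the generators $b_i(\xi)$. Finally, $\pi_p(H_A^{(p)})=H_B^{(p)}$ by linearity from the matched coefficients, and $\pi_p(p_0(H_A^{(p)}))=p_0(H_B^{(p)})$ follows because $\pi_p$ is a unital algebra homomorphism.
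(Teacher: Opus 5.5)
Your proof is correct. The necessity half matches the paper's argument: apply $\pi_p$ to the reordering rule to get $\Gamma_A=\Gamma_B$, then use that a Weyl operator is scalar only at label zero to get $K_A\subseteq K_B$ and $\zeta_A=\zeta_B$ on $K_A$.

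For sufficiency you take a somewhat different route. The paper never constructs $\pi_p$ on a basis. It invokes the evaluation-kernel criterion (Lemma~\ref{lem:app-evaluation-kernel}), so it only has to show that every $*$-polynomial vanishing on the $a_i(\xi)$ also vanishes on the $b_i(\xi)$. It does this in three moves:
\begin{itemize}
\item it brings both evaluations to the same ordered-word normal form;
\item it groups terms by $K_A$-cosets with identical coefficients, using $\Gamma_A=\Gamma_B$ and $\zeta_A=\zeta_B$ on $K_A$;
\item it uses linear independence of distinct source Weyl operators.
\end{itemize}
Multiplicativity and $*$-compatibility then come for free from the free $*$-algebra.

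You instead define $\pi_p$ linearly on the source Weyl basis and check well-definedness, multiplicativity, and the adjoint rule by hand, each time from the common reordering cocycle. The ingredients are the same, but you pay with two extra (easy) verifications. In exchange, your construction yields the explicit basis map $\pi_p(W_{m_A}(u))=\chi(u)W_{m_B}(M(u))$ and the well-definedness of $M$ inside the proof itself. The paper derives these separately after the theorem in Eq.~\eqref{eq:qudit-basis-map}. The paper's lemma, for its part, is reused unchanged for the Pauli and Majorana criteria.
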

\noindent
The theorem is proved in Appendix~\ref{app:homomorphism-criteria}.

\paragraph{Marked metric, relative distance, and decoding.}
For $c\in\mathbb F_p^s$, define
\begin{equation}
 \ell_\rho(c):=\sum_{i=1}^s\rho_i(c_i).
 \label{eq:qudit-symbol-weight}
\end{equation}
Under the homomorphism conditions, the forward basis-label map is
\begin{equation}
 M:\operatorname{im}A\longrightarrow\operatorname{im}B,
 \qquad
 M(Ac):=Bc.
\end{equation}
It is well defined because $K_A\subseteq K_B$, and
$\ker M\cong K_B/K_A$. For $u\in\operatorname{im}A$, put
\begin{equation}
 \begin{aligned}
 w_A(u)&:=\min_{Ac=u}\ell_\rho(c),
 &S_D&:=\{u\in\operatorname{im}A:w_A(u)\le D\},\\
 d_{\mathrm{rel}}
 &:=\min_{c\in K_B\setminus K_A}\ell_\rho(c).
 \end{aligned}
 \label{eq:qudit-distance}
\end{equation}
The empty minimum is $+\infty$.

The required relative decoder performs
\begin{equation}
 \operatorname{Dec}_{A\leftarrow B}:
 |Bc\rangle|0\rangle\longmapsto|Bc\rangle|Ac\rangle,
 \qquad \ell_\rho(c)\le D.
 \label{eq:qudit-relative-decoder}
\end{equation}
The metrics $\rho_i$ are symmetric and subadditive, so if
$Bc=Bc'$ and both inputs satisfy the promise, then
\begin{equation}
 c-c'\in K_B,
 \qquad
 \ell_\rho(c-c')
 \le\ell_\rho(c)+\ell_\rho(c')
 \le2D.
\end{equation}
For $2D<d_{\mathrm{rel}}$, this forces $c-c'\in K_A$ and hence
$Ac=Ac'$. For a relation-free source, $K_A=\{0\}$, relative
decoding reduces to ordinary recovery of the reduced field-symbol
vector $c$. 

\paragraph{Operator basis and phase transfer.}
Use the Weyl bases
\begin{equation}
 E_u^A:=W_{m_A}(u)
 \quad(u\in\operatorname{im}A),
 \qquad
 E_v^B:=W_{m_B}(v)
 \quad(v\in\operatorname{im}B).
\end{equation}
They are orthonormal under the normalized full traces, and word
reduction gives
\begin{equation}
 \mathcal F_D^A
 =\operatorname{span}\{E_u^A:u\in S_D\}.
\end{equation}
Write
\begin{equation}
 W_A(c)=\nu_A(c)W_{m_A}(Ac),
 \qquad
 W_B(c)=\nu_B(c)W_{m_B}(Bc).
\end{equation}
For example,
\begin{equation}
 \nu_A(c)=
 \omega^{\sum_i r_i^Ac_i+
       \frac12\sum_{i<j}c_ic_j[u_i,u_j]},
\end{equation}
with the analogous target expression. The homomorphism gives the
global single-image basis map
\begin{equation}
 \pi_p(E_u^A)=\chi(u)E_{M(u)}^B,
 \qquad
 \chi(Ac):=\frac{\nu_B(c)}{\nu_A(c)}
 =\omega^{\sum_i(r_i^B-r_i^A)c_i}.
 \label{eq:qudit-basis-map}
\end{equation}
Agreement of scalar relations makes $\chi(Ac)$ independent of the
chosen solution of $Ac=u$. Both $M(u)$ and $\chi(u)$ can therefore
be computed by solving a linear system over $\mathbb F_p$; no
minimum-weight solution is required for this forward computation.

Let
\begin{equation}
 |\Phi_{L,p}\rangle
 :=p^{-m_L/2}\sum_{\mathbf r\in\mathbb F_p^{m_L}}
       |\mathbf r\rangle|\mathbf r\rangle,
 \qquad
 |X\rangle\!\rangle_{L,p}:=(X\otimes I)|\Phi_{L,p}\rangle,
\end{equation}
then
\begin{equation}
 \mathsf F_L|E_u^L\rangle\!\rangle_{L,p}=|u\rangle
\end{equation}
is the operator--Fourier transform. 

\paragraph{Generalized Bell transform and qubit implementation.}
For one physical--reference qudit pair,
\begin{equation}
 |W_1(x,z)\rangle\!\rangle
 =\frac{\omega^{xz/2}}{\sqrt p}
       \sum_{r\in\mathbb F_p}\omega^{zr}|r+x\rangle|r\rangle.
\end{equation}
Define modular subtraction and the additive Fourier transform by
\begin{equation}
 \mathrm{SUB}|a\rangle|b\rangle:=|a-b\rangle|b\rangle,
 \qquad
 \mathcal Q_p|r\rangle
 :=p^{-1/2}\sum_{t\in\mathbb F_p}\omega^{rt}|t\rangle.
\end{equation}
Then
\begin{equation}
 \begin{aligned}
 |W_1(x,z)\rangle\!\rangle
 &\xrightarrow{\mathrm{SUB}}
   \omega^{xz/2}|x\rangle
      \left(p^{-1/2}\sum_r\omega^{zr}|r\rangle\right)\\
 &\xrightarrow{I\otimes\mathcal Q_p^*}
   \omega^{xz/2}|x\rangle|z\rangle
 \xrightarrow{\text{phase }\omega^{-xz/2}}
   |x\rangle|z\rangle.
 \end{aligned}
 \label{eq:qudit-operator-fourier-circuit}
\end{equation}
Applying this transform to each physical--reference pair gives
$\mathsf F_L$, up to a fixed permutation of the label registers.
On a qubit computer, each field symbol uses
$\lceil\log_2p\rceil$ qubits, and all arithmetic is performed
modulo $p$.

\subsubsection{Prime-field optimal polynomial intersection as a relation-free special case}
\label{subsubsec:qudit-opi}

We now recover prime-field optimal polynomial intersection (OPI)~\cite{JordanEtAlDQI}
as a commuting, relation-free instance of the field-symbol Weyl
formulation. Let $p$ be an odd prime, fix $1\le k<m\le p$, and
choose distinct points
$\alpha_1,\ldots,\alpha_m\in\mathbb F_p$. For
$a=(a_0,\ldots,a_{k-1})\in\mathbb F_p^k$, define
\begin{equation}
 g_a(t):=\sum_{\ell=0}^{k-1}a_\ell t^\ell,
\end{equation}
and let
\begin{equation}
 V\in\mathbb F_p^{m\times k},
 \qquad
 V_{j,\ell}:=\alpha_j^\ell,
 \qquad
 1\le j\le m,
 \quad 0\le\ell<k,
 \label{eq:opi-vandermonde-map}
\end{equation}
so that $(Va)_j=g_a(\alpha_j)$.

For each $j$, let $S_j\subset\mathbb F_p$ be nonempty and proper,
and define
\begin{equation}
 f_j(t):=2\mathbf1_{t\in S_j}-1.
\end{equation}
The independent-variable source score and the OPI target score are
\begin{equation}
 H_A(x):=\sum_{j=1}^m f_j(x_j),
 \qquad
 H_B(a):=\sum_{j=1}^m f_j((Va)_j),
 \label{eq:opi-source-target-scores}
\end{equation}
where $x\in\mathbb F_p^m$ and $a\in\mathbb F_p^k$.
Let
\begin{equation}
 \mathcal A=C(\mathbb F_p^m),
 \qquad
 \mathcal B=C(\mathbb F_p^k)
\end{equation}
with normalized uniform traces. Restriction to the polynomial
evaluation set defines
\begin{equation}
 (\pi F)(a):=F(Va).
 \label{eq:opi-restriction-map}
\end{equation}
Since $V$ has rank $k$, this is a surjective unital
$*$-homomorphism and $\pi(H_A)=H_B$.

Use the diagonal Weyl bases
\begin{equation}
 E_u^A:=W_m(0,u),
 \qquad
 E_v^B:=W_k(0,v).
 \label{eq:opi-diagonal-weyl-bases}
\end{equation}
For $u\in\mathbb F_p^m$,
\begin{equation}
 \pi(E_u^A)=E_{V^{\mathsf T}u}^B.
 \label{eq:opi-weyl-map}
\end{equation}
Equivalently, choose one field-symbol coordinate for each evaluation
point,
\begin{equation}
 a_j(\xi):=W_m(0,\xi e_j)=Z_j^\xi,
 \qquad
 b_j(\xi):=W_k(0,\xi V^{\mathsf T}e_j),
 \qquad
 \xi\in\mathbb F_p.
\end{equation}
The local Fourier expansion
\begin{equation}
 f_j(t)=\sum_{\xi\in\mathbb F_p}
             \widehat f_j(\xi)\omega^{\xi t},
 \qquad
 \widehat f_j(\xi)
 =\frac1p\sum_{t\in\mathbb F_p}f_j(t)\omega^{-\xi t},
\end{equation}
expresses $H_A$ and $H_B$ in the matched form
Eq.~\eqref{eq:qudit-matched-hamiltonians}. For $\xi\ne0$,
\begin{equation}
 \widehat f_j(\xi)
 =\frac2p\sum_{t\in S_j}\omega^{-\xi t}\ne0.
\end{equation}
The nonvanishing follows because a proper subset sum of the $p$th
roots of unity cannot vanish when $p$ is prime. We therefore use
the full symbol alphabet
$\Lambda_j=\mathbb F_p^\times$, for which
\begin{equation}
 \rho_j(\xi)=\mathbf1_{\xi\ne0}.
\end{equation}
Thus the marked weight of $u\in\mathbb F_p^m$ is
$\operatorname{wt}_H(u)$, and both score Hamiltonians have marked
degree at most one.

In the full Weyl phase spaces, the reduced label matrices are
\begin{equation}
 A_{\mathrm{OPI}}
 :=\begin{pmatrix}0\\I_m\end{pmatrix},
 \qquad
 B_{\mathrm{OPI}}
 :=\begin{pmatrix}0\\V^{\mathsf T}\end{pmatrix}.
 \label{eq:opi-weyl-label-matrices}
\end{equation}
Hence
\begin{equation}
 K_A=\{0\},
 \qquad
 K_B=\ker V^{\mathsf T}.
 \label{eq:opi-relation-codes}
\end{equation}
Moreover, all marked directions are
$Z$-type, so both commutation matrices vanish. OPI is therefore a
commuting, relation-free specialization of the general formulation.

The basis-label map is
\begin{equation}
 M(u)=V^{\mathsf T}u.
 \label{eq:opi-basis-label-map}
\end{equation}
Since every $k$ columns of $V^{\mathsf T}$ are independent while
any $k+1$ columns are dependent~\cite{ReedSolomon1960,HuffmanPless2003},
\begin{equation}
 d_{\mathrm{rel}}=d_{\mathrm{ord}}
 =\min_{0\ne u\in\ker V^{\mathsf T}}
       \operatorname{wt}_H(u)
 =k+1.
 \label{eq:opi-relative-distance}
\end{equation}

\subsubsection{An illustrative example for Bosons }
\label{subsubsec:boson-dimer-example}

We construct an explicit bosonic source--target pair with
$d_{\mathrm{ord}}=3$ and $d_{\mathrm{rel}}=2m$.
The source consists of independent four-mode blocks, while the target
contains hopping and density interactions sharing control modes.
The construction uses exact invariant particle-number sectors,
rather than an approximate occupation truncation.

\paragraph{Exact qutrit encoding in a bosonic dimer.}
For two bosonic modes with annihilation operators $\ell,r$, restrict
to the two-particle sector
\begin{equation}
 \mathcal K
 :=\operatorname{span}\{|q,2-q\rangle:q=0,1,2\},
 \qquad
 |q\rangle_{\mathrm{log}}:=|q,2-q\rangle.
 \label{eq:boson-dimer-encoding}
\end{equation}
The occupation imbalance and hopping operators restrict to
\begin{equation}
 Q:=\ell^\dagger\ell-I
 =\begin{pmatrix}-1&0&0\\0&0&0\\0&0&1\end{pmatrix},
 \qquad
 T:=\ell^\dagger r+r^\dagger\ell
 =\sqrt2\begin{pmatrix}0&1&0\\1&0&1\\0&1&0\end{pmatrix}.
 \label{eq:boson-dimer-local-operators}
\end{equation}
Both preserve $\mathcal K$, and $[Q,T]\ne0$.
On this encoded qutrit, define
\begin{equation}
 X|q\rangle_{\mathrm{log}}
 =|q+1\bmod3\rangle_{\mathrm{log}},
 \qquad
 Z|q\rangle_{\mathrm{log}}
 =\omega^q|q\rangle_{\mathrm{log}},
 \qquad \omega=e^{2\pi i/3}.
 \label{eq:boson-dimer-weyl}
\end{equation}
The cyclic shift $X$ is a logical Weyl operator, not a bosonic
creation operator. Physical operators such as $T$ are expanded
in this Weyl basis.

\paragraph{Independent source blocks and the target Hamiltonian.}
Fix $m\ge3$. The source contains $m$ active dimers and $m$ control
dimers, each restricted to $\mathcal K$. Let $Q_j,T_j$ act on
active dimer $j$, and let $\nu_j$ be the left-mode occupation of
source control dimer $j$. Define
\begin{equation}
 H_A=\sum_{j=1}^m h_j,
 \qquad
 h_j:=-T_j+\nu_jQ_j.
 \label{eq:boson-dimer-source}
\end{equation}
Each $h_j$ acts on an independent four-mode block whose selected
sector has dimension nine.

The target retains the $m$ active dimers but uses only $m-1$
control dimers, with left-mode occupations $n_1,\ldots,n_{m-1}$.
On occupations $a,b\in\{0,1,2\}$, define
\begin{equation}
 F(a,b):=[a+b]_3
 =a+b+\frac34ab(3ab-5a-5b+7),
 \label{eq:boson-dimer-modular-density}
\end{equation}
where $[\,\cdot\,]_3$ denotes the representative in $\{0,1,2\}$.
Set
\begin{equation}
 H_B
 =-\sum_{j=1}^mT_j
   +n_1Q_1
   +\sum_{j=2}^{m-1}F(n_{j-1},n_j)Q_j
   +n_{m-1}Q_m.
 \label{eq:boson-dimer-target}
\end{equation}
This is a polynomial bosonic Hamiltonian with hopping and density
interactions. For example,
$[F(n_{j-1},n_j)Q_j,T_j]\ne0$.
Every dimer's particle number is conserved, so the product of the
two-particle sectors is exactly invariant in both systems.
All traces and operator states below refer to these selected sectors.

\paragraph{The homomorphism.}
Write $S_j:=\omega^{\nu_j}$ for the source control clocks and
$C_a:=\omega^{n_a}$ for the target control clocks. The relevant
algebras are
\begin{equation}
 \mathcal A\cong M_3^{\otimes m}\otimes C(\mathbb F_3^m),
 \qquad
 \mathcal B\cong M_3^{\otimes m}\otimes C(\mathbb F_3^{m-1}),
 \label{eq:boson-dimer-algebras}
\end{equation}
with normalized physical traces.
Define $R:\mathbb F_3^m\to\mathbb F_3^{m-1}$ by
\begin{equation}
 (Rs)_a=s_a+s_{a+1},
 \qquad 1\le a<m.
 \label{eq:boson-dimer-control-map}
\end{equation}
Since $R^{\mathsf T}$ is injective, restriction of matrix-valued
functions,
$(\pi G)(t):=G(R^{\mathsf T}t)$,
defines a surjective unital $*$-homomorphism. It fixes the active
operators and maps the control clocks according to
\begin{equation}
 \pi(S_1)=C_1,\qquad
 \pi(S_j)=C_{j-1}C_j\quad(2\le j<m),\qquad
 \pi(S_m)=C_{m-1}.
 \label{eq:boson-dimer-clock-images}
\end{equation}
Consequently,
\begin{equation}
 \pi(H_A)=H_B,
 \qquad
 \pi\bigl(p_0(H_A)\bigr)=p_0(H_B).
 \label{eq:boson-dimer-polynomial-map}
\end{equation}

\paragraph{Weyl marking and distance improvement.}
For each source block, mark the six cyclic Weyl directions
\begin{equation}
 U_{j,t}:=W_j(1,t),
 \qquad
 V_{j,t}:=Z_jS_j^t,
 \qquad t\in\mathbb F_3,
 \label{eq:boson-dimer-marked-directions}
\end{equation}
and mark their images in the target. Each nonzero power of a
direction has length one. A direct Weyl expansion of
Eq.~\eqref{eq:boson-dimer-local-operators} shows that $-T_j$
contains exactly the three $U_{j,t}$ directions and their
adjoints, while $\nu_jQ_j$ contains exactly the three $V_{j,t}$
directions and their adjoints, all with nonzero coefficients.
Thus both Hamiltonians have marked degree one, and no auxiliary
Hamiltonian terms are introduced to obtain the marking.

For $x,z,s\in\mathbb F_3^m$, the source basis operators are
\begin{equation}
 E_{x,z,s}^A
 :=\prod_{j=1}^m W_j(x_j,z_j)S_j^{s_j}.
 \label{eq:boson-dimer-source-basis}
\end{equation}
Their images have basis labels
\begin{equation}
 M(x,z,s)=(x,z,Rs),
 \qquad
 \ker R
 =\operatorname{span}_{\mathbb F_3}
       \{(1,-1,1,-1,\ldots)\}.
 \label{eq:boson-dimer-basis-map}
\end{equation}
The active Weyl operators are unchanged, so their noncommutation
phases are preserved. The homomorphism also preserves every
source scalar relation.

\begin{proposition}[Bosonic distance separation]
\label{prop:boson-dimer-distances}
For the field-symbol marking in
Eq.~\eqref{eq:boson-dimer-marked-directions},
\begin{equation}
 d_{\mathrm{ord}}=3,
 \qquad
 d_{\mathrm{rel}}=2m,
 \qquad
 K_B/K_A\cong\mathbb F_3.
 \label{eq:boson-dimer-distances}
\end{equation}
Consequently, relative decoding certifies polynomial degrees
$D<m$, whereas ordinary unique decoding for the same marking
certifies only $D\le1$.
\end{proposition}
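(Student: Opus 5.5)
Everything reduces to coordinates on the $\mathbb F_3$ label space. Index the marked field-symbol coordinates by $(j,\mathrm U,t)$ and $(j,\mathrm V,t)$ with $t\in\mathbb F_3$, so $s=6m$. Each $U_{j,t}=W_j(1,t)$ carries active label $(1,t)$ on dimer $j$ and no control label. Each $V_{j,t}=Z_jS_j^t$ carries active label $(0,1)$ on dimer $j$ and control label $t e_j$. For $c\in\mathbb F_3^{6m}$ the source label is $Ac=(x,z,s)$ with
\[
x_j=\sum_t c_{j,\mathrm U,t},\qquad
z_j=\sum_t t\,c_{j,\mathrm U,t}+\sum_t c_{j,\mathrm V,t},\qquad
s_j=\sum_t t\,c_{j,\mathrm V,t}.
\]
By Eq.~\eqref{eq:boson-dimer-basis-map} the target label is $Bc=(x,z,Rs)$. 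Hence
\[
K_A=\{c:x=z=s=0\},\qquad K_B=\{c:x=z=0,\ Rs=0\},
\]
and the map $c\mapsto s$ induces $K_B/K_A\cong\ker R\cong\mathbb F_3$. That is the third claim. Since all traced words are Weyl strings, scalar-or-traceless holds, so I will use Theorem~\ref{the:mf-general-distance-comparison} and the coordinate form Eq.~\eqref{eq:qudit-distance} with $\rho=\mathbf 1_{\xi\neq0}$, i.e.\ Hamming weight of $c$.

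For $d_{\mathrm{ord}}$, the reference $\mathcal A_0$ is the relation-free presentation, so $d_{\mathrm{ord}}$ is the minimum weight of a nonzero element of $K_B$. First I rule out weights $1$ and $2$. The $6m$ marked target labels are pairwise non-proportional and nonzero: the $U$'s differ in $z$ per block, and the $V_{j,t}$ have distinct $t e_j$ images under $R$ (here $Re_j\neq0$ and $m\ge3$). So no single symbol or pair of symbols gives a relation. For weight $3$ I exhibit $U_{j,0}U_{j,1}U_{j,2}$, i.e.\ $c_{j,\mathrm U,t}=1$ for all $t$. This gives $x_j=3=0$, $z_j=0+1+2=0$ and $s=0$, so $c\in K_B$. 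It also lies in $K_A$ and is therefore inherited; that is fine, because the ordinary distance counts inherited relations too. Hence $d_{\mathrm{ord}}=3$.

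For $d_{\mathrm{rel}}$ I need the minimum weight of $c\in K_B\setminus K_A$. Such a $c$ has $x=z=0$ and $s\in\ker R\setminus\{0\}$, i.e.\ $s=\pm(1,-1,1,\dots)$. In particular $s_j\neq0$ for every $j$. This is the key step, and it is a per-block bound. The control coordinate $s_j$ is produced only by $V_{j,1},V_{j,2}$, so block $j$ must use some $V_{j,t}$ with $t\neq0$. Each $V$ symbol contributes $c_{j,\mathrm V,t}\neq0$ to $z_j$, and $z_j=0$ must hold. So block $j$ needs at least one more nonzero symbol: another $V$, or a $U$ paired so that $x_j=0$. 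The cheapest option is two $V$'s, namely $V_{j,1}$ with coefficient $a$ and $V_{j,0}$ with coefficient $-a$, giving $z_j=0$ and $s_j=a$. Thus each block costs at least $2$, with equality achievable, so $d_{\mathrm{rel}}=2m$. The claimed certifications then follow from $2D<d$: relative decoding allows $D<m$, while ordinary decoding allows only $D\le1$.

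I expect the main obstacle to be the lower bound of $2$ per block, because one must exclude a single symbol satisfying $z_j=0$ with $s_j\neq0$. The two-term cases also need care over $\mathbb F_3$ arithmetic (for instance $U$ plus $V$ forces $x_j\neq0$). I would verify these by the short case analysis above rather than by enumeration.
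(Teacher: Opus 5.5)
Your proof is correct and follows essentially the same route as the paper: a weight-three inherited relation plus pairwise non-proportionality of the marked target labels gives $d_{\mathrm{ord}}=3$, and the per-block lower bound of two letters for a nontrivial control clock, with block costs adding, gives $d_{\mathrm{rel}}=2m$ and $K_B/K_A\cong\ker R\cong\mathbb F_3$. The differences are cosmetic: you write out the coordinates of $Ac$ explicitly, and you use $U_{j,0}U_{j,1}U_{j,2}$ rather than the paper's $V_{j,0}V_{j,1}V_{j,2}$ as the weight-three witness, which is equally valid over $\mathbb F_3$.
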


\begin{proof}
The local relation
\[
 V_{j,0}V_{j,1}V_{j,2}=Z_j^3S_j^3=I
\]
is inherited by the target and has symbol weight three.
Every target marked direction is nonzero, and no two distinct
target directions are proportional: their active labels
distinguish the blocks and the two direction types, while
$Re_j\ne0$ distinguishes the three density directions within
a block. Hence there is no scalar relation of weight one or
two, proving $d_{\mathrm{ord}}=3$.

By Eq.~\eqref{eq:boson-dimer-basis-map}, a nonidentity source
basis operator maps to the identity exactly when
\[
 x=z=0,\qquad
 s=\lambda(1,-1,1,-1,\ldots),\qquad
 \lambda\in\mathbb F_3^\times.
\]
A nontrivial pure control clock requires exactly two letters:
\[
 S_j^\xi=V_{j,0}^{-\xi}V_{j,1}^{\xi},
 \qquad \xi\ne0,
\]
and one letter cannot suffice because every marked letter has
a nonzero active label. Source blocks have disjoint support,
so their word costs add. Every new scalar relation therefore
has minimum length $2m$. Finally,
$\ker M\cong\ker R\cong\mathbb F_3$ gives the quotient-code
statement.
\end{proof}

\paragraph{Preparation and explicit relative decoding.}
Each source block is an explicit $9\times9$ matrix.
Proposition~\ref{prop:source-block-mps} therefore gives an MPS
representation of the canonical source polynomial state with
bond dimension at most $D+1$, using the paired-block ordering.
With efficiently specified polynomial coefficients and controlled
numerical precision, these tensors supply source preparation
through standard sequential MPS methods.

Given a target label $(x,z,y)$, the relative decoder considers
the three solutions of $Rs=y$. For each choice
$\lambda\in\mathbb F_3$, obtain one candidate by
\begin{equation}
 s_1=\lambda,\qquad
 s_{j+1}=y_j-s_j\quad(1\le j<m).
 \label{eq:boson-dimer-decoder}
\end{equation}
The local source word costs can be tabulated once on the
$27$ labels $(x_j,z_j,s_j)\in\mathbb F_3^3$.
Summing these costs tests each candidate against the degree
promise. When $D<m$, at most one candidate has total cost at
most $D$, by Proposition~\ref{prop:boson-dimer-distances}.
The decoder thus uses $O(m)$ local weight evaluations and admits
a polynomial-size reversible implementation.

The general Weyl circuit consequently implements
\begin{equation}
 \frac{|p_0(H_A)\rangle\!\rangle_A}
      {\|p_0(H_A)\|_{2,A}}
 \longmapsto
 \frac{|p_0(H_B)\rangle\!\rangle_B}
      {\|p_0(H_B)\|_{2,B}},
 \qquad
 \deg p_0\le D<m,
 \label{eq:boson-dimer-transduction}
\end{equation}
for every prepared nonzero source guide.
Already at $m=3$, this permits quadratic filters while the
ordinary bound permits only linear filters.

The source is tractable because its interacting blocks are
independent, not because it is a free-boson Hamiltonian.
Likewise, the target controls remain conserved, and conditioned
on their occupations the active dimers separate. The example
establishes a larger certified degree window for this fixed
marking, not a classical hardness claim. Direct encoding of
the specified product sectors introduces neither occupation
truncation error nor a particle-number postselection step.

\clearpage
\section{Advantage with relative decoding: Code-constrained optimization with pairwise tests}
\label{sec:applications}
\label{app:chain-example}
 
We now apply the framework to optimization problems in which the
unknowns are the values of a low-degree polynomial and every constraint
tests two neighbouring values. This generalizes the optimal polynomial intersection (OPI) problem
studied in DQI~\cite{JordanEtAlDQI} to tests that couple neighbouring
values. We develop the algorithm for pairwise tests and then specialize
to a nonlinear residual chain. In this example, relative decoding
supports a higher filter degree than ordinary term-selection decoding
with the same Fourier-term marking. Section~\ref{subsec:pairwise-score}
computes the exact quantum mean, and
Section~\ref{subsec:chain-benchmark-summary} compares it with the
scores found by classical heuristics.
 
\subsection{The template and the quantum algorithm}
\label{subsec:pairwise-template}
 We consider the following generalization of OPI:
\paragraph{The problem.}
Let $\F_q$ be a finite field, put $L=q-1$, and let
$\alpha_1,\ldots,\alpha_L$ be its nonzero elements in a fixed order. We
think of these points as the vertices of a path, with \emph{edge} $i$
joining vertices $i$ and $i+1$, so there are $M=L-1$ edges. An instance
assigns to every edge a pass/fail test
$C_i:\F_q\times\F_q\to\{0,1\}$. For a degree cap $2\le n<L$, the task is
\begin{equation}
 \max_{\deg P<n}\operatorname{Sat}(P),
 \qquad
 \operatorname{Sat}(P)=\frac1M\sum_{i=1}^{M}
 C_i\bigl(P(\alpha_i),P(\alpha_{i+1})\bigr).
 \label{eq:numerical-satisfaction}
\end{equation}
 
What makes this interesting is the polynomial constraint. For the tests
below, the problem would be easy if the $L$ values could be chosen
freely: one would satisfy the edges one at a time, from left to right.
A polynomial of degree less than $n$, however, has only $n$ free
coefficients, so fixing any $n$ values determines all the others. 
Optimal polynomial intersection (OPI)~\cite{JordanEtAlDQI,opi-optimized}
is the analogous problem with tests on single values,
$P(\alpha_i)\in F_i$. Here each test couples two neighbouring values.
 
\paragraph{The code, the source, and the homomorphism.}
Let $V\in\F_q^{L\times n}$ with $V_{ij}=\alpha_i^{\,j}$ for
$1\le i\le L$ and $0\le j<n$. The value
vectors of feasible polynomials form the Reed--Solomon code
$C=\operatorname{im}V$. Since any $n$ rows of $V$ form an invertible
Vandermonde matrix, two facts follow:
\begin{enumerate}[leftmargin=*,label=(\roman*),itemsep=0.2em]
 \item if $a\in\F_q^n$ is uniform, then $Va$ is uniform on any $n$
 coordinates;
 \item every nonzero vector in $\ker V^{\mathsf T}$ has at least $n+1$
 nonzero entries.
\end{enumerate}
 
The \emph{source} is the same problem without the polynomial
constraint: assignments $x\in\F_q^L$ with the uniform measure.
Restricting source functions to the code,
\begin{equation}
 (\pi F)(a)=F(Va),\qquad a\in\F_q^n ,
\end{equation}
is a surjective unital $*$-homomorphism of diagonal algebras, and it
maps the source score to the target score. In Fourier coordinates it is
a relabeling. Let the characters $\chi_u(x)=\omega^{\Tr(u\cdot x)}$ for
$u\in\F_q^L$, where $\omega=e^{2\pi i/\ell}$, $\ell$ is the
characteristic of $\F_q$, and $\Tr$ is the trace from $\F_q$ to
$\F_\ell$. Then
\begin{equation}
 \pi(\chi_u)=\chi_{V^{\mathsf T}u}.
\end{equation}
We call $u$ a \emph{vertex label}: it assigns a field element to every
vertex.
 
\paragraph{The guiding state.}
Suppose that under the uniform source the tests are independent, each
passing with probability $\rho\in(0,1)$, as they are for the nonlinear
residual chain below. Let
\begin{equation}
 \tilde f_i=\frac{C_i-\rho}{\sqrt{\rho(1-\rho)}}
\end{equation}
be the centred and normalized tests. The products
$\tilde f_E=\prod_{i\in E}\tilde f_i$ over edge sets $E$ are then
orthonormal, and so are the \emph{shell functions}
\begin{equation}
 \varphi_k=\binom{M}{k}^{-1/2}\sum_{|E|=k}\tilde f_E ,
 \qquad k=0,\ldots,M.
\end{equation}
Each $\varphi_k$ is a polynomial of degree $k$ in the source score
(Section~\ref{subsec:pairwise-score}). A degree-$D$ filter is therefore
a guide
\begin{equation}
 g=\sum_{k=0}^{D}c_k\varphi_k=\sum_{|E|\le D}w_E\,\tilde f_E,
 \qquad
 w_E=\frac{c_{|E|}}{\sqrt{\binom{M}{|E|}}},
\end{equation}
with $\sum_{k=0}^D|c_k|^2=1$, so that $\mathbb E_x|g(x)|^2=1$.
The target state is
$\ket{\psi_B}\propto\sum_{a\in\F_q^n}g(Va)\ket{a}$. Measuring it
returns a feasible polynomial $P_a$ with probability proportional to
$|g(Va)|^2$.
 
\paragraph{The algorithm.}
The four steps of Theorem~\ref{thm:mf-polynomial-transduction} become:
\begin{enumerate}[leftmargin=*,itemsep=0.2em]
 \item \emph{Prepare the Fourier coefficients}
 $\sum_u\widehat g(u)\ket{u}$ of the guide. This is the only step that
 depends on the tests; Lemma~\ref{app:residual-source-lemma}
 supplies it for the nonlinear residual chain.
 \item \emph{Compute the syndrome:} $\ket{u}\mapsto\ket{u}\ket{V^{\mathsf T}u}$.
 \item \emph{Decode and erase:} recover $u$ from $V^{\mathsf T}u$
 (Lemma~\ref{app:chain-decoding-lemma}) and use it to clear the first
 register.
 \item \emph{Inverse Fourier transform} over $\F_q^n$. Since
 $\sum_u\widehat g(u)\chi_{V^{\mathsf T}u}(a)=g(Va)$, this gives
 $\ket{\psi_B}$.
\end{enumerate}
Each test depends on two vertices, so its Fourier labels are nonzero
only on those two vertices. A guide of degree $D$ therefore only has
labels with at most $2D$ nonzero entries, and the next two lemmas hold
for every pairwise test family satisfying the guide assumptions above.
 
\begin{lemma}[Decoding]
\label{app:chain-decoding-lemma}
If $4D\le n$, every vertex label $u$ with $\wt(u)\le2D$ is determined by
its syndrome $V^{\mathsf T}u$, and can be recovered from it by a
polynomial-size reversible circuit.
\end{lemma}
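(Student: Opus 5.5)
Uniqueness comes first and uses only fact (ii). Suppose $u,u'$ both have weight at most $2D$ and share a syndrome, $V^{\mathsf T}u=V^{\mathsf T}u'$. Then $u-u'\in\ker V^{\mathsf T}$, and
\[
\operatorname{wt}(u-u')\le 4D\le n<n+1 .
\]
By fact (ii), every nonzero kernel vector has weight at least $n+1$, so $u=u'$. In the language of Section~\ref{subsec:mf-relative-codes}, the syndrome map restricted to the radius-$2D$ ball is injective. This is exactly the half-distance condition for the dual code $\ker V^{\mathsf T}$, which has minimum distance $n+1$.

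For efficient recovery I would identify $\ker V^{\mathsf T}$ as a generalized Reed--Solomon code. The rows of $V^{\mathsf T}$ are the evaluations of $t^j$ for $j<n$ at the $L=q-1$ points $\alpha_i$, so $\ker V^{\mathsf T}$ is the dual of $\mathrm{RS}_n$ on these points. That dual is itself a GRS code of dimension $L-n$, with column multipliers $v_i=\prod_{j\ne i}(\alpha_i-\alpha_j)^{-1}$; here all nonzero field elements are used, so $v_i$ is explicit. The syndrome $s_j=\sum_i u_i\alpha_i^j$ for $0\le j<n$ then gives $n$ consecutive power sums of the error. With error-locator polynomial $\Lambda(t)=\prod_{i\in\operatorname{supp}u}(1-\alpha_i t)$ of degree $e\le 2D\le n/2$, the Newton/key-equation relations $\sum_{r=0}^{e}\Lambda_r s_{j-r}=0$ for $e\le j<n$ give at least $n-e\ge e$ equations. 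These determine $\Lambda$, for example by Berlekamp--Massey, which needs $2e\le n$ syndromes, precisely $4D\le n$. Chien search over the $L$ points then finds the support. Forney's formula, or solving the resulting Vandermonde system on the support, gives the values $u_i$. Equivalently, one can run Berlekamp--Welch or Peterson--Gorenstein--Zierler, all of which correct up to $\lfloor n/2\rfloor$ errors given $n$ syndromes.

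To make the decoder coherent, I would note that all these steps are deterministic classical algorithms using $\operatorname{poly}(q)$ field operations with no data-dependent failure on the promise set. Bennett's construction~\cite{Bennett1973} then turns the algorithm into a polynomial-size reversible circuit $|V^{\mathsf T}u\rangle|0\rangle\mapsto|V^{\mathsf T}u\rangle|u\rangle$ with clean workspace. Branching on $e$, the unknown number of errors, is handled by running Berlekamp--Massey, which finds the minimal-length recurrence without knowing $e$. Alternatively, one can evaluate all candidate $e\le 2D$ in parallel and select reversibly. Field arithmetic in $\F_q$ costs $\operatorname{poly}\log q$ gates.

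The main obstacle I expect is checking that the syndrome convention matches the standard GRS decoding setup. The labels live in $\F_q^L$, and the syndrome uses powers $j=0,\dots,n-1$ rather than $1,\dots,n$, so I would confirm that the shift only rescales the multipliers and does not change the decoding radius. For a general $\F_q$ with trace-based characters, no extra twist should enter, since $\pi(\chi_u)=\chi_{V^{\mathsf T}u}$ is already linear over $\F_q$.
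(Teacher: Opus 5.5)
Your proposal is correct and follows the paper's proof. Uniqueness comes from fact (ii), since the difference of two such labels has weight at most $4D\le n$. Recovery treats $V^{\mathsf T}u$ as a Reed--Solomon syndrome: the error locator is found as the unique shortest recurrence (Berlekamp--Massey, valid because $n\ge 2\,\wt(u)$), its roots among the $\alpha_i$ give the support, a Vandermonde solve gives the values, and Bennett's construction makes the decoder reversible. Your reversed locator $\prod_i(1-\alpha_i t)$ with the backward recurrence is equivalent to the paper's monic $\prod_i(T-\alpha_i)$ with the forward one, and the GRS-dual identification is a harmless side remark that the argument does not need.
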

\begin{proof}
Two such labels with the same syndrome differ by a nonzero vector of
$\ker V^{\mathsf T}$ with at most $4D\le n$ nonzero entries, which
contradicts (ii). For recovery, $s_j=\sum_iu_i\alpha_i^{\,j}$ with
$0\le j<n$ is a Reed--Solomon syndrome with $t=\wt(u)\le n/2$ errors.
The error locator $\Lambda(T)=\prod_{i:u_i\ne0}(T-\alpha_i)$ is the
shortest monic recurrence satisfied by the sequence $(s_j)$, namely
$\sum_{\ell=0}^{t}\Lambda_\ell s_{j+\ell}=0$ for $0\le j<n-t$. It is
unique because $n\ge2t$, and can be found by Berlekamp--Massey or by
linear algebra. Its roots among the $\alpha_i$ give the support, and a
Vandermonde solve gives the values. All of this takes polynomially many
field operations and can be made reversible~\cite{Bennett1973}.
\end{proof}
 
\begin{lemma}[Norm and score transfer]
\label{app:chain-transfer-lemma}
If $4D+2\le n$, then, for $a$ uniform on $\F_q^n$ and $x$ uniform on
$\F_q^L$,
\begin{equation}
 \mathbb E_a|g(Va)|^2=\mathbb E_x|g(x)|^2=1,
 \qquad
 \mathbb E_a\bigl[T(Va)|g(Va)|^2\bigr]=\mathbb E_x\bigl[T(x)|g(x)|^2\bigr],
 \label{app:chain-norm-transfer}
\end{equation}
where $T=\sum_iC_i$ is the number of passing edges.
\end{lemma}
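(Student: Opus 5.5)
The plan is to reduce both identities to a statement about Fourier supports and then invoke fact (i), the $n$-wise uniformity of $Va$. For any function $F$ on $\F_q^L$ whose Fourier expansion $F=\sum_u\widehat F(u)\chi_u$ is supported on labels with $\wt(u)\le n$, I will show that $\mathbb E_a F(Va)=\mathbb E_x F(x)$.

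The identity follows from the character calculation. Since $\chi_u(Va)=\chi_{V^{\mathsf T}u}(a)$, averaging over uniform $a$ gives $\mathbb E_a\chi_u(Va)=\mathbf 1\{V^{\mathsf T}u=0\}$, while $\mathbb E_x\chi_u(x)=\mathbf 1\{u=0\}$. By fact (ii), a nonzero $u$ with $V^{\mathsf T}u=0$ has weight at least $n+1$. So on labels of weight at most $n$ the two averages coincide. Equivalently, $F$ depends on at most $n$ coordinates per monomial, and fact (i) applies directly.

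Next I bound the Fourier supports of the two integrands.

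\emph{Norm identity.} The guide is $g=\sum_{|E|\le D}w_E\tilde f_E$. Each $\tilde f_i$ depends only on vertices $i$ and $i+1$, so every Fourier label of $\tilde f_E$ is supported on the vertex set of $E$, which has at most $2|E|\le 2D$ vertices. Then $|g|^2=g\,\overline g$, and multiplying characters adds labels, $\chi_u\overline{\chi_{u'}}=\chi_{u-u'}$. Hence $|g|^2$ has labels of weight at most $4D$. Under $4D+2\le n$ this is at most $n$, so $\mathbb E_a|g(Va)|^2=\mathbb E_x|g(x)|^2$. The right-hand side equals $\sum_k|c_k|^2=1$ by orthonormality of the shell functions under the independent uniform source.

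\emph{Score identity.} The integrand $T|g|^2=\sum_i C_i|g|^2$ requires a finer count, and this is the main obstacle. A crude bound gives weight at most $4D+2$, because $C_i$ adds the two vertices of edge $i$. This is exactly the hypothesis $4D+2\le n$, so the crude count already suffices.

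I will state it per term. A product $C_i\,\tilde f_E\,\overline{\tilde f_{E'}}$ depends only on the vertices of $\{i\}\cup E\cup E'$, and there are at most $2+2D+2D$ of them. Its Fourier expansion is therefore supported on labels of weight at most $4D+2\le n$, and the character identity above applies term by term. By linearity, $\mathbb E_a[T(Va)|g(Va)|^2]=\mathbb E_x[T(x)|g(x)|^2]$.

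One point remains: fact (i) needs $n\le L$, which holds since $n<L$.
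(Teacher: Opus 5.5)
Your proof is correct and follows the paper's argument. Each term of $|g|^2$ and of $T|g|^2$ depends on at most $4D$, respectively $4D+2\le n$, vertices, and $n$-wise uniformity of $Va$ then equates the source and code expectations term by term. Your character computation via fact (ii) is just the dual form of the paper's direct appeal to fact (i).
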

\begin{proof}
The function $|g|^2$ is a combination of functions of at most $2D$
tests, and $T|g|^2$ of at most $2D+1$. These depend on at most $4D$,
respectively $4D+2$, vertices. By (i), $Va$ is uniform on such a set,
so each term has the same expectation on the source and on the code.
\end{proof}

\paragraph{Relative decoding and the Fourier-term marking.}
After the Fourier transform, the register holds the vertex label $u$.
Different products of Fourier terms can give the same $u$, and the guide
has already added their amplitudes into $\widehat g(u)$. Relative
decoding recovers this label, which is exactly what erasing the register
requires. Distinct source labels can have the same target syndrome only
if their difference has at least $n+1$ nonzero vertices. Since each
marked Fourier term touches at most two vertices,
$d_{\mathrm{rel}}\ge\lceil(n+1)/2\rceil$, allowing filter degrees
$D\approx n/4$.

\subsection{The nonlinear residual chain}
\label{subsec:residual-chain}
We now specify the nonlinear tests and an efficient guiding-state preparation.
\paragraph{The tests.}
Let $q=p$ be an odd prime, and write $\omega=e^{2\pi i/p}$. For every
edge the instance supplies a nonzero coefficient $a_i\in\F_p$ and a set
$F_i\subset\F_p$ of size $(p-1)/2$. Define
\begin{equation}
 C_i(x_i,x_{i+1})=1\iff r_i:=x_{i+1}+a_ix_i^2\in F_i .
 \label{eq:residual-test}
\end{equation}
This is a pairwise, nonlinear version of OPI: the quantity tested
against a random half-set is the \emph{residual} $r_i$, which mixes two
neighbouring values quadratically. Without the polynomial constraint
the problem is again easy, since one can choose each
$x_{i+1}=r_i-a_ix_i^2$ with $r_i\in F_i$.
 
\paragraph{Short Fourier-term relations.}
For the nonlinear residual chain, the Fourier-term marking also has
short relations already in the source. On any one edge, all characters
$\chi_{(s,t)}$ with $t\ne0$ have nonzero Fourier coefficients: these
coefficients factor into a nonzero quadratic Gauss sum and a nontrivial
Fourier coefficient of the proper, nonempty subset $F_i\subset\F_p$.
The latter is nonzero because the minimal polynomial of a primitive
$p$th root of unity is $1+z+\cdots+z^{p-1}$. In particular,
\begin{equation}
 \chi_{(0,1)}\chi_{(2,1)}=\chi_{(1,1)}^2
\end{equation}
is a length-four relation absent from the reference with independent
Fourier-term generators. Thus $d_{\mathrm{ord}}\le4$, and the ordinary
distance criterion for this marking certifies no filter degree above
$D=1$. Relative decoding retains this source relation while decoding
only the aggregate label $u$.

\paragraph{The source in residual coordinates.}
The map $(x_1,\ldots,x_L)\mapsto(x_1,r_1,\ldots,r_M)$ is a bijection,
with inverse $x_{i+1}=r_i-a_ix_i^2$. Under the uniform source the
residuals are therefore independent and uniform, and the tests are
independent with pass probability $\rho=(p-1)/(2p)$. Every score-polynomial guide $g$ is
a symmetric function of the pass/fail pattern $y\in\{0,1\}^M$, i.e.\
$g=G(|y|)$. The values $G(m)$, for $m=0,\ldots,M$, are computed
classically by evaluating the recurrence~\eqref{app:chain-recurrence}
at $T=m$.
 
\begin{lemma}[Guiding state preparation]
\label{app:residual-source-lemma}
The state $p^{-L/2}\sum_xg(x)\ket{x}$ can be prepared in polynomial
time.
\end{lemma}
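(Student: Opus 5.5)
We prepare the state in residual coordinates and then undo the change of variables with an efficient reversible circuit. Because $x\mapsto(x_1,r_1,\ldots,r_M)$ is a bijection of $\F_p^L$, it suffices to prepare
\[
 p^{-L/2}\sum_{x_1,r}G(|y(r)|)\,\ket{x_1}\ket{r_1,\ldots,r_M},
 \qquad y_i(r)=\mathbf 1_{r_i\in F_i}.
\]
We then apply the in-place map $\ket{x_1,r_1,\ldots,r_M}\mapsto\ket{x_1,x_2,\ldots,x_L}$ sequentially. At step $i$ we add $-a_ix_i^2$ into register $r_i$, which leaves $x_{i+1}$ there. Each step is a controlled modular arithmetic gate using $O(\log^2p)$ gates, and its inverse exists because $x_i$ is still available. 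The register $x_1$ is simply uniform, so we prepare it by a quantum Fourier transform on $\ket0$.

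The main work is the residual register, where the amplitude depends only on the pass count $|y|$. We factor each $r_i$ into its pass bit and a uniform element of the corresponding class. First we prepare the symmetric state
\[
 \sum_{y\in\{0,1\}^M}\sqrt{\rho^{|y|}(1-\rho)^{M-|y|}}\,G(|y|)\,\ket{y}.
\]
The weights are chosen so that, once each bit is expanded into a uniform superposition over $F_i$ (size $(p-1)/2$) or its complement (size $(p+1)/2$), every residual string receives amplitude $p^{-M/2}G(|y|)$. The normalization uses $\sum_{k}\binom Mk\rho^k(1-\rho)^{M-k}|G(k)|^2=\mathbb E|g|^2=1$, which holds because the tests are independent with pass probability $\rho$.

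We prepare the symmetric state in two stages.
\begin{enumerate}[leftmargin=*,itemsep=0.2em]
 \item Prepare the weight register $\sum_k\beta_k\ket k$ with $\beta_k=\sqrt{\binom Mk\rho^k(1-\rho)^{M-k}}\,G(k)$. This is a state on $M+1=O(p)$ levels with classically computable amplitudes, so a standard Grover--Rudolph style preparation or a sequence of controlled rotations suffices. The values $G(k)$ come from the recurrence.
 \item Map $\ket k$ to the normalized Dicke state $\ket{D^M_k}$ and uncompute $k$ by counting the Hamming weight. Dicke preparation is polynomial time via the standard recursive controlled-rotation circuit, as in the DQI pilot state. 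Alternatively, we generate bits left to right with a weight counter, rotating bit $i$ by angle $\arccos\sqrt{(M-i+1-k')/(M-i+1)}$, where $k'$ is the remaining count; afterwards the counter is reset to zero.
\end{enumerate}

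Finally, for each $i$ we map $\ket{y_i}$ to a uniform superposition over $F_i$ if $y_i=1$ and over $\F_p\setminus F_i$ if $y_i=0$, and the bit is erased because $y_i=\mathbf 1_{r_i\in F_i}$ is computable from $r_i$. Concretely, we prepare a uniform index $j<|F_i|$ (or $<p-|F_i|$), controlled on $y_i$, by exact uniform superposition over a non-power-of-two range. We then compute $r_i$ as the $j$th element of the chosen set, which needs efficient indexing into $F_i$, for instance from a sorted table given with the instance of size $O(p)$. Next we uncompute $j$ from $r_i$ by rank computation and uncompute $y_i$ by the membership test.

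We expect the main obstacle to be precision. The amplitudes $\beta_k$ and the uniform superpositions over non-power-of-two ranges must be implemented exactly, or to inverse-polynomial accuracy with a tracked error. Exact uniform superpositions over arbitrary ranges are known in polynomial time, and approximate rotations cost only polylogarithmic overhead per gate. The total gate count is then $\mathrm{poly}(p)=\mathrm{poly}(L)$, which is polynomial in the instance size.
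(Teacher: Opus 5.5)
Your proposal is correct and follows the paper's proof step for step. You prepare the weight register with amplitudes $\sqrt{\binom Mm\rho^m(1-\rho)^{M-m}}\,G(m)$, expand it to a Dicke state and uncompute the weight, fill each residual uniformly from $F_i$ or its complement and erase $y_i$ by the membership test (using $\rho/|F_i|=(1-\rho)/(p-|F_i|)=1/p$), and then adjoin a uniform $x_1$ and apply the in-place inverse residual map. The implementation details you add, which the paper leaves implicit, all check out: the counter-based Dicke circuit, the table lookup into $F_i$, exact uniform superpositions over non-power-of-two ranges, and the precision accounting.
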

\begin{proof}
Prepare
$\sum_m\sqrt{\binom Mm\rho^m(1-\rho)^{M-m}}\,G(m)\ket{m}$, which is
normalized because $\mathbb E|g|^2=1$. Then prepare the Dicke state of
weight $m$ on an $M$-qubit register $y$, and uncompute $m$. For each
edge, conditioned on $y_i$, prepare $r_i$ uniformly in $F_i$ (if
$y_i=1$) or in its complement (if $y_i=0$), and uncompute $y_i$, which
equals $\mathbf 1\{r_i\in F_i\}$. Since
$\rho/|F_i|=(1-\rho)/(p-|F_i|)=1/p$, every residual string now has
amplitude $p^{-M/2}G(|y|)$. Finally adjoin a uniform $x_1$ and apply the
inverse residual map, replacing each $r_i$ in place by
$x_{i+1}=r_i-a_ix_i^2$. Each step is a reversible field subtraction.
\end{proof}

\paragraph{The expected quantum score.}
\label{subsec:pairwise-score}
As in DQI~\cite{JordanEtAlDQI}, a tridiagonal matrix determines the
optimal polynomial guide and its expected score.
Multiplying $\tilde f_E$ by $\tilde f_i$ adds edge $i$ if $i\notin E$.
If $i\in E$, it gives $\tilde f_{E\setminus i}+\gamma\tilde f_E$, with
$\gamma=(1-2\rho)/\sqrt{\rho(1-\rho)}$, since
$\tilde f_i^2=1+\gamma\tilde f_i$. Summing over $i$, the number of
passing edges $T=M\rho+\sqrt{\rho(1-\rho)}\sum_i\tilde f_i$ acts on the
shells as
\begin{equation}
 \begin{aligned}
 T\varphi_k={}&\sqrt{\rho(1-\rho)}\Bigl[\sqrt{k(M-k+1)}\,\varphi_{k-1}
 +\sqrt{(k+1)(M-k)}\,\varphi_{k+1}\Bigr]\\
 &+\bigl(M\rho+(1-2\rho)k\bigr)\varphi_k .
 \end{aligned}
 \label{app:chain-recurrence}
\end{equation}
Let $J_{\rho,D}$ be the $(D+1)\times(D+1)$ tridiagonal matrix of this
action, with diagonal $M\rho+(1-2\rho)k$ and off-diagonal
$\sqrt{\rho(1-\rho)(k+1)(M-k)}$. For $\rho=\tfrac12$, it is $M I/2+J_D/2$, where $J_D$ is the
DQI matrix in Eq.~\eqref{eq:value-dqi-jacobi} with $s=M$.
 
\begin{theorem}[Expected quantum score]
\label{app:chain-performance}
For the nonlinear residual chain, let
$D=\lfloor(n-2)/4\rfloor$ and let $c$ be a real unit top eigenvector of
$J_{\rho,D}$. The algorithm of Section~\ref{subsec:pairwise-template} runs in polynomial time and outputs a polynomial $P$ of
degree less than $n$ with
\begin{equation}
 \mathbb E\bigl[\operatorname{Sat}(P)\bigr]=\frac{\lambda_{\max}(J_{\rho,D})}{M},
 \label{app:chain-finite-guarantee}
\end{equation}
the best expected score achievable by any degree-$D$ filter of the
score. As $p\to\infty$ with $n/L\to R\in(0,1)$, we have
$\rho\to\tfrac12$ and $D/M\to R/4$, and the expected score tends to
\begin{equation}
 \frac12+\sqrt{\frac R4\Bigl(1-\frac R4\Bigr)} .
 \label{app:chain-asymptotic}
\end{equation}
\end{theorem}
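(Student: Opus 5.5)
With $D=\lfloor(n-2)/4\rfloor$ we have $4D+2\le n$, and hence also $4D\le n$. So both Lemma~\ref{app:chain-decoding-lemma} and Lemma~\ref{app:chain-transfer-lemma} apply, and the argument splits into three parts: runtime and correctness of the circuit, exact evaluation of the mean score, and the asymptotic limit.

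\emph{Runtime and correctness.} I would follow the four steps of Section~\ref{subsec:pairwise-template}.
\begin{enumerate}[leftmargin=*,itemsep=0.1em]
 \item Lemma~\ref{app:residual-source-lemma} prepares $p^{-L/2}\sum_x g(x)\ket{x}$, where $g=\sum_{k\le D}c_k\varphi_k$. The values $G(m)$ come from iterating the recurrence~\eqref{app:chain-recurrence}. An $\F_p^L$ Fourier transform then yields $\sum_u\widehat g(u)\ket u$.
 \item The syndrome $V^{\mathsf T}u$ is computed coherently.
 \item Every label in the support of $\widehat g$ has $\wt(u)\le 2D$, since each $\tilde f_i$ touches two vertices. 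Lemma~\ref{app:chain-decoding-lemma} therefore uniquely and efficiently recovers $u$, and the first register is erased as in Eq.~\eqref{eq:mf-decoder-uncomputation}.
 \item An inverse Fourier transform over $\F_p^n$ produces an amplitude proportional to $g(Va)$. The norm is correct because $\mathbb E_a|g(Va)|^2=1$ by Lemma~\ref{app:chain-transfer-lemma}.
\end{enumerate}
Measuring gives $a$ with probability $p^{-n}|g(Va)|^2$. The map $a\mapsto P_a$ has degree $<n$.

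\emph{Exact score.} The output satisfies $\mathbb E[\operatorname{Sat}(P)]=M^{-1}\mathbb E_a[T(Va)|g(Va)|^2]$. By the second identity of Lemma~\ref{app:chain-transfer-lemma} this equals $M^{-1}\mathbb E_x[T(x)|g(x)|^2]$. The shells $\varphi_0,\dots,\varphi_D$ are orthonormal in $L^2$ of the source, and $T$ acts on them tridiagonally by Eq.~\eqref{app:chain-recurrence}. Hence $\mathbb E_x[T|g|^2]=c^{\mathsf T}J_{\rho,D}c$. The component of $T\varphi_D$ along $\varphi_{D+1}$ is orthogonal to $g$ and drops out. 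The top eigenvector therefore gives $\lambda_{\max}(J_{\rho,D})/M$.

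Optimality among degree-$D$ score filters follows from the Rayleigh quotient, in the spirit of Proposition~\ref{prop:value-polynomial-optimum}. Any polynomial of degree $\le D$ in $T$ lies in $\operatorname{span}\{\varphi_k\}_{k\le D}$, because $\varphi_k$ is a degree-$k$ polynomial in $T$. The shells have nonzero norm for $k\le M$, so that span is all of $\mathcal K_D(T)$. Transfer holds for every such guide, so the target value equals the source Rayleigh quotient, which is at most $\lambda_{\max}$.

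\emph{Asymptotics.} This is the main obstacle. Here $\rho=(p-1)/(2p)\to 1/2$ and $M=p-2$. Also $D/M\to R/4$, since $n\approx Rp$ and $D\approx n/4$. For $\rho=1/2$ the matrix is $MI/2+J_D/2$, with $J_D$ the DQI Jacobi matrix of Eq.~\eqref{eq:value-dqi-jacobi}. The known DQI semicircle-law asymptotics~\cite{JordanEtAlDQI} give $\lambda_{\max}(J_D)/M\to 2\sqrt{\mu(1-\mu)}$ for $\mu=D/M\le 1/2$, and here $\mu=R/4\le 1/4$.

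I would justify this directly. For the lower bound, take a test vector spread uniformly over $k\in[D-\epsilon M,D]$; the off-diagonal entries there are $\approx M\sqrt{\mu(1-\mu)}$, which gives value $\ge 2\sqrt{\mu(1-\mu)}-o(1)$. For the upper bound, use Gershgorin or a weighted Schur test. The off-diagonals $\sqrt{(k+1)(M-k)}$ increase in $k$ for $k<M/2$, so every row sum is at most $2\sqrt{(D+1)(M-D)}$.

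Finally, the perturbation from $\rho\ne 1/2$ has to be controlled. The diagonal is $M\rho+(1-2\rho)k$ with $|1-2\rho|=1/p$, so it shifts entries by $O(D/p)=O(1)$. The off-diagonal prefactor $\sqrt{\rho(1-\rho)}$ differs from $1/2$ by $O(p^{-2})$. Weyl's inequality then bounds the change in $\lambda_{\max}$ by $O(1)$, which is $o(M)$. Altogether the expected score tends to $1/2+\sqrt{(R/4)(1-R/4)}$.
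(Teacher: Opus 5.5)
Your proposal is correct and follows essentially the same route as the paper: correctness from the decoding and transfer lemmas using $4D+2\le n$, the score as the Rayleigh quotient $c^{\mathsf T}J_{\rho,D}c/M$, and the limit via a maximum-row-sum (Gershgorin) upper bound together with a trial vector on a short block of indices just below $D$. The only difference is cosmetic: you compare with the exact $\rho=\tfrac12$ DQI matrix and absorb $\rho\ne\tfrac12$ with Weyl's inequality at cost $O(1)=o(M)$, whereas the paper passes directly to the uniform limit of the entries of $J_{\rho,D}/M$.
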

\begin{proof}
Efficient preparation of the guiding state follows from
Lemma~\ref{app:residual-source-lemma}. Steps
2--4 are correct by Lemmas~\ref{app:chain-decoding-lemma}
and~\ref{app:chain-transfer-lemma}, since $4D+2\le n$. The expected
score is
$\mathbb E_a[T(Va)|g(Va)|^2]/M=\mathbb E_x[T|g|^2]/M=c^{\mathsf T}J_{\rho,D}c/M$,
and the top eigenvector maximizes this. For the limit, write
$\delta=R/4$. As $\rho\to1/2$, the diagonal entries of
$J_{\rho,D}/M$ tend uniformly to $1/2$, and its off-diagonal entries
at $k/M=x$ tend to $\sqrt{x(1-x)}/2$. Since this last expression is
increasing for $0\le x\le\delta<1/4$, the maximum row sum tends to
$1/2+\sqrt{\delta(1-\delta)}$. A trial vector supported on a growing
block of $o(M)$ consecutive indices just below $D$ gives the matching
lower bound.
\end{proof}

\subsection{Classical attacks}
\label{subsec:chain-benchmark-summary}
\label{subsec:additional-numerics}
 
We compare the quantum mean above with the best scores found by
classical algorithms for the nonlinear residual chain.
 
\paragraph{The attacks and why they are reasonable.}
Each attack family targets one natural source of structure.
\begin{itemize}[leftmargin=*,itemsep=0.3em]
 \item \emph{Information-set interpolation.} Any $n$ values determine
 $P$. These attacks satisfy every test inside a window of $n$ values
 by choosing passing residuals, interpolate, and then move the window
 or the information set. The use of an information set follows the
 code-based search principle of Prange~\cite{Prange1962}; here the
 objective is to satisfy tests, rather than minimize error weight.
 The interpolation baseline is analogous to that for
 OPI~\cite{JordanEtAlDQI}. For a fixed interpolation window, its
 mean over the random instance ensemble is approximately $0.55$.

 \item \emph{Local search over feasible polynomials.} Greedy block
 updates~\cite{Besag1986}, simulated annealing~\cite{KirkpatrickGelattVecchi1983},
 and parallel tempering~\cite{HukushimaNemoto1996} run over the
 coefficients of $P$ or over information-set coordinates, so that every
 candidate is feasible. Additional moves keep already passing
 residuals in their sets. These methods combine direct improvement with
 stochastic exploration of the feasible polynomials.

 \item \emph{Inference followed by projection.} Without the polynomial
 constraint the problem is a chain, so exact forward--backward
 inference over $\F_p$ gives per-vertex marginals in $O(Lp^2)$
 time~\cite{KschischangFreyLoeliger2001}.
 The marginals are rounded to a Reed--Solomon codeword, and only the
 resulting feasible score is counted. This attack also exploits the
 tractable source; its projection step is related to soft-decision
 Reed--Solomon decoding~\cite{KoetterVardy2003}.
\end{itemize}
 
\paragraph{Protocol.}
Methods and budgets were fixed on separate tuning instances and then
run on 10 new random coefficient vectors $(a_i)$, each with two random
choices of the sets $(F_i)$ and two solver streams per choice. Deep
search runs evaluated approximately $5\times10^6$ candidate
polynomials each; the controls and projection attacks used smaller
budgets. Every evaluated candidate counts toward the budget,
including rejected Monte Carlo proposals.

\begin{table}[!htbp]\centering\small
\begin{tabular}{cccc}\toprule
Coefficient vectors & Quantum mean & Classical median & Gap (pp)\\\midrule
10 & 0.64313 & 0.60666 & 3.65\\\bottomrule
\end{tabular}
\caption{Ideal quantum expected score and classical search results for
$p=2053$, $n=205$, and $D=50$. For each of the 10 coefficient vectors,
we average the best classical scores for its two predicate choices;
the table reports the median of these 10 averages. The gap is in
percentage points.}
\label{tab:additional-numerics}
\end{table}
 
\paragraph{Results.}
The gap was positive for all 10 coefficient vectors, with median
$3.65$ percentage points. For every coefficient vector, the averaged best classical
score remained more than one percentage point below the quantum mean.
These results compare an ideal quantum expectation with the best
scores found by the specified classical searches at finite budgets;
they do not establish a classical lower bound. Full instance and
protocol details are in Appendix~\ref{app:chain-benchmark}; the frozen
benchmark and scripts reproducing the table are provided in
Ref.~\cite{RDQINumerics2026}.

\section{Discussion and outlook}
\label{sec:discussion}

This work develops a general framework for quantum polynomial transduction
by relative decoding.  The basic idea is to prepare the polynomial state in a simpler source system, transport the polynomial
through a unital $*$-homomorphism, and use a relative decoder for coherent uncomputation. This viewpoint provides several advantages
over the original DQI paradigm.  First, the complicated and case-by-case pilot preparation is reduced to
a unified task of source polynomial-state preparation, to which existing state-preparation methods apply.  Second, the relevant decoding radius is controlled by the
relative distance: relations already present in the source do not
count as decoding obstructions, so the relative distance can be much larger
than the ordinary target distance and can therefore support substantially
higher-degree polynomial filters.  Third, the $*$-homomorphism language naturally enables the construction to accommodate nonuniform Hamiltonian coefficients and extends beyond Pauli
systems to fermionic Majorana algebras, prime-dimensional Weyl systems,
and bosonic Hamiltonians.  We apply the framework to both optimization and Gibbs
state preparation. In the nonlinear-chain example, relative decoding
supports polynomial filters beyond the degree window certified by
ordinary term-selection decoding with the same marking. The exact
quantum mean score also exceeds the scores reached by the tested
classical methods at the reported budgets. This comparison motivates
further study of the classical complexity of the sampling task.

Several open questions remain.  First, fully exploiting the larger
degree window offered by relative decoding requires efficiently
preparable structured source states whose inherited relations are
rich enough to increase the relative distance without making the source
itself difficult to prepare.  Finding constructions of such
sources or identifying what class of sources can be efficiently preparable is therefore central to the applicable regimes of the framework.  Second, a large relative
distance is only useful algorithmically when the corresponding
relative decoder can be implemented efficiently and coherently;
developing decoder families beyond the code structures considered here
may substantially enlarge the scope of the framework. It would be especially remarkable if ordinary decoding were hard while relative decoding were easy.  Third, it is
important to identify more problem classes for which the additional
freedom of source design and relative decoding leads to end-to-end advantages.  Finally, for fermionic and bosonic
constructions that select the desired particle-number sector after
transduction, a small postselection probability can dominate the total
cost. The directly encoded bosonic dimer example in
Section~\ref{subsubsec:boson-dimer-example} already avoids this step by
working within the specified product sectors. A natural direction is to
extend such sector encodings to broader classes of Hamiltonians, by
designing the source, homomorphism, and decoder so that the target image
is supported on the desired number sector from the outset.

\section{Data and code availability}
The nonlinear residual-chain implementation, reference instances,
polynomial witnesses, and reproduction scripts are publicly available
in \texttt{RDQINumerics}, release v1.0.0~\cite{RDQINumerics2026}, at
\url{https://github.com/dsfranca/RDQINumerics}.
Appendix~\ref{app:chain-benchmark} describes the instances in more detail, reproduction and archive
coverage.

\section{Acknowledgments and AI use statement}
\noindent\textbf{Acknowledgments. }The authors would like to thank Zoë Holmes, Stephen Jordan, Yihui Quek, and Yufei Wang for relevant discussions. Z.S. and D.S.F. acknowledge support by the ERC grant GIFNEQ 101163938. D.S.F. acknowledges financial support from the Novo Nordisk Foundation (Grant No. NNF20OC0059939 Quantum for Life).

\noindent\textbf{AI use statement.} The core ideas of viewing the pilot state of DQI and HDQI as the polynomial state of the source Hamiltonian, generalizing the relation-free sources to general sources, and observing the broader $*$-homomorphism structure are from the human authors. ChatGPT 5.5 Pro, ChatGPT 5.6 Sol, ChatGPT 6 Astra, and Claude Fable 5.1 were used to help formalize the whole framework and prove results, especially identifying that the quotient code and the relative decoding are what is needed in the general source-target transduction. The same models were also used for proofreading. We used ChatGPT 6 Astra to find the nonlinear-chain example in Section~\ref{sec:applications} and ChatGPT 5.6 Sol to code the benchmarks.

\bibliographystyle{unsrturl}
\bibliography{qipgenfram}

\clearpage
\appendix

\section{$C^*$-algebras, $*$-homomorphisms, and related facts}
\label{app:cstar-background}

This appendix collects the operator-algebraic facts used throughout the
main framework~\cite{Murphy1990}.  We work only with finite-dimensional represented
$C^*$-algebras, so all operators may be regarded as finite matrices.

\subsection{Finite-dimensional $C^*$-algebras}

A unital $C^*$-algebra $\mathcal A$ is a complex algebra equipped with
an involution $X\mapsto X^*$, an identity $I_{\mathcal A}$, and a norm
satisfying
\begin{equation}
 \|XY\|\le \|X\|\|Y\|,
 \qquad
 \|X^*\|=\|X\|,
 \qquad
 \|X^*X\|=\|X\|^2.
\end{equation}
In this work, $\mathcal A$ is always represented as a unital
$*$-subalgebra of operators on a finite-dimensional Hilbert space,
\begin{equation}
 \mathcal A\subseteq\mathcal B(\mathcal H_A).
\end{equation}
Thus addition, multiplication, and adjoint are the ordinary matrix
operations.

Every finite-dimensional $C^*$-algebra is $*$-isomorphic to a direct
sum of full matrix algebras~\cite{Murphy1990}.  In a concrete representation one may
write
\begin{equation}
 \mathcal H_A
 \cong
 \bigoplus_x
 \left(\mathbb C^{d_x}\otimes\mathbb C^{\mu_x}\right),
 \qquad
 \mathcal A
 \cong
 \bigoplus_x
 \left(M_{d_x}(\mathbb C)\otimes I_{\mu_x}\right).
 \label{eq:app-cstar-wedderburn}
\end{equation}
The factors $\mathbb C^{\mu_x}$ record representation multiplicities.
They do not change the abstract algebra, but they do change the
physical trace used in the main text.

For a represented algebra on $\mathcal H_A$, we use the normalized
physical trace
\begin{equation}
 \tau_A(X):=
 \frac{\operatorname{Tr}_{\mathcal H_A}(X)}
      {\dim\mathcal H_A},
\end{equation}
and the associated operator inner product
\begin{equation}
 \langle X,Y\rangle_A
 :=
 \tau_A(X^*Y),
 \qquad
 \|X\|_{2,A}^2=\tau_A(X^*X).
\end{equation}
The representation multiplicities in
Eq.~\eqref{eq:app-cstar-wedderburn} enter this trace.
Consequently, two isomorphic abstract $C^*$-algebras need not carry
the same normalized physical trace.

\subsection{$*$-homomorphisms and quotients}

A map
\begin{equation}
 \pi:\mathcal A\longrightarrow\mathcal B
\end{equation}
is a unital $*$-homomorphism if, for every $X,Y\in\mathcal A$ and
$\lambda\in\mathbb C$,
\begin{equation}
\begin{aligned}
 \pi(X+\lambda Y)&=\pi(X)+\lambda\pi(Y),\\
 \pi(XY)&=\pi(X)\pi(Y),\\
 \pi(X^*)&=\pi(X)^*,\\
 \pi(I_{\mathcal A})&=I_{\mathcal B}.
\end{aligned}
\end{equation}
Thus a $*$-homomorphism preserves all algebraic expressions built from
addition, multiplication, scalar multiplication, adjoints, and the
identity.

In the main framework, source and target generators satisfy
\begin{equation}
 \pi(a_i)=b_i,
 \qquad i=1,\ldots,s.
\end{equation}
Therefore every noncommutative $*$-polynomial $q$ obeys
\begin{equation}
\pi\!\left(q(a_1,\ldots,a_s,a_1^*,\ldots,a_s^*)\right)
 =
 q(b_1,\ldots,b_s,b_1^*,\ldots,b_s^*).
 \label{eq:app-cstar-polynomial-transfer}
\end{equation}
This elementary identity is the algebraic basis of polynomial
transduction.

The kernel of a $*$-homomorphism,
\begin{equation}
 \ker\pi:=\{X\in\mathcal A:\pi(X)=0\},
\end{equation}
is a two-sided $*$-ideal.  If $\pi$ is surjective, the first
isomorphism theorem gives~\cite{Murphy1990}
\begin{equation}
 \mathcal B\cong \mathcal A/\ker\pi.
 \label{eq:app-cstar-first-isomorphism}
\end{equation}
The target may therefore satisfy additional relations that are absent
from the source.  Relative decoding is designed to detect
the first such new relation inside the marked degree
filtration.

A $*$-homomorphism is automatically positive:
\begin{equation}
 X\ge0
 \quad\Longrightarrow\quad
 \pi(X)\ge0,
\end{equation}
because $X=Y^*Y$ implies
\begin{equation}
 \pi(X)=\pi(Y)^*\pi(Y)\ge0.
\end{equation}
It is also contractive in operator norm,
\begin{equation}
 \|\pi(X)\|\le\|X\|.
\end{equation}
Neither of these properties implies preservation of the normalized
physical trace.  In general,
\begin{equation}
 \tau_B(\pi(X))\ne\tau_A(X).
\end{equation}
The relative-distance condition in the main text identifies a
low-degree subspace on which these traces, and hence the corresponding
operator inner products, do agree.

\subsection{Spectrum cover property}
\label{app:spectral-properties}

For $X\in\mathcal A$, its spectrum is
\begin{equation}
 \operatorname{spec}(X)
 :=
 \{\lambda\in\mathbb C:X-\lambda I_{\mathcal A}
   \text{ is not invertible in }\mathcal A\}.
\end{equation}
For self-adjoint $H=H^*$ in finite dimension,
\begin{equation}
 H=\sum_{e\in\operatorname{spec}(H)}eP_e,
\end{equation}
where the $P_e$ are mutually orthogonal spectral projections satisfying
\begin{equation}
 P_eP_{e'}=\delta_{e,e'}P_e,
 \qquad
 \sum_eP_e=I.
\end{equation}

A unital $*$-homomorphism preserves polynomial and, more generally,
continuous functional calculus~\cite{Murphy1990}.  Thus
\begin{equation}
 \pi(f(H))=f(\pi(H))
 \label{eq:app-cstar-functional-calculus}
\end{equation}
for every function $f$ defined on the finite spectrum of $H$.
In particular,
\begin{equation}
 \pi(H)
 =
 \sum_e e\,\pi(P_e),
\end{equation}
where every $\pi(P_e)$ is again a projection, possibly zero, and
\begin{equation}
 \pi(P_e)\pi(P_{e'})=0
 \qquad(e\ne e').
\end{equation}

\begin{proposition}[Spectral cover]
\label{prop:app-cstar-spectral-cover}
Let $\pi:\mathcal A\to\mathcal B$ be a unital $*$-homomorphism and
let $H=H^*\in\mathcal A$. Then
\begin{equation}
 \operatorname{spec}(\pi(H))
 \subseteq
 \operatorname{spec}(H).
 \label{eq:app-cstar-spectrum-inclusion}
\end{equation}
More precisely, if
\begin{equation}
 H=\sum_e eP_e,
\end{equation}
then
\begin{equation}
 \pi(H)=\sum_e e\,\pi(P_e),
 \label{eq:app-cstar-spectral-projectors}
\end{equation}
and an eigenvalue $e$ occurs in $\pi(H)$ exactly when
$\pi(P_e)\ne0$.
\end{proposition}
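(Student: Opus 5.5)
The plan is to use the spectral decomposition of $H$ together with the fact that a unital $*$-homomorphism respects every algebraic identity among the spectral projections. We want to read off $\operatorname{spec}(\pi(H))$ from the images $\pi(P_e)$. First, since $H=\sum_e eP_e$ is a finite sum and $\pi$ is linear, Eq.~\eqref{eq:app-cstar-spectral-projectors} follows at once.

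Next, we check that the images $Q_e:=\pi(P_e)$ form a resolution of the identity in $\mathcal B$, except that some of them may vanish. Multiplicativity and adjoint preservation give $Q_e^*=Q_e$, $Q_e^2=Q_e$, and $Q_eQ_{e'}=\pi(P_eP_{e'})=0$ for $e\ne e'$. Unitality gives $\sum_eQ_e=\pi(I_{\mathcal A})=I_{\mathcal B}$. Hence the nonzero $Q_e$ are mutually orthogonal projections summing to $I_{\mathcal B}$.

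For the spectrum we work in $\mathcal B$; in finite dimensions this agrees with the spectrum of $\pi(H)$ as a matrix on $\mathcal H_B$. Fix $\lambda\in\mathbb C$. Then
\[
\pi(H)-\lambda I_{\mathcal B}=\sum_e(e-\lambda)Q_e .
\]
If $\lambda\notin\operatorname{spec}(H)$, every $e-\lambda\ne0$, so $\sum_e(e-\lambda)^{-1}Q_e$ is a two-sided inverse. This uses only orthogonality and $\sum_eQ_e=I_{\mathcal B}$, and it proves Eq.~\eqref{eq:app-cstar-spectrum-inclusion}. Alternatively, one can apply $\pi$ to the inverse $(H-\lambda I_{\mathcal A})^{-1}\in\mathcal A$, since homomorphisms preserve invertibility.

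For the exact characterization, take $e_0\in\operatorname{spec}(H)$. If $Q_{e_0}\ne0$, pick a nonzero vector $\psi$ in the range of $Q_{e_0}$. Orthogonality gives $\pi(H)\psi=e_0\psi$, so $e_0$ is an eigenvalue of $\pi(H)$. If instead $Q_{e_0}=0$, the same inverse formula with the sum restricted to those $e$ with $Q_e\ne0$ inverts $\pi(H)-e_0I_{\mathcal B}$, so $e_0\notin\operatorname{spec}(\pi(H))$.

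The only real subtlety is the invertibility step: we must make sure that vanishing $Q_e$ terms are harmlessly dropped, and that spectrum in the $C^*$-subalgebra $\mathcal B$ coincides with the matrix spectrum. The latter holds by spectral permanence for $C^*$-subalgebras, or directly because the explicit inverse we build already lies in $\mathcal B$.
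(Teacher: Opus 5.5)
Your proof is correct and follows essentially the same route as the paper. The paper proves the inclusion by applying $\pi$ to the inverse of $H-\lambda I$ (your stated alternative), and gets the exact characterization by noting that the $\pi(P_e)$ are mutually orthogonal projections summing to $I_{\mathcal B}$, so the nonzero ones are the spectral projections of $\pi(H)$; your eigenvector and restricted-inverse argument just spells out that final step in more detail.
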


\begin{proof}
If $\lambda\notin\operatorname{spec}(H)$, then
$H-\lambda I$ has an inverse $R\in\mathcal A$. Applying $\pi$ gives
\[
 (\pi(H)-\lambda I)\pi(R)
 =
 \pi((H-\lambda I)R)
 =
 I,
\]
and similarly
\[
 \pi(R)(\pi(H)-\lambda I)=I.
\]
Thus $\pi(H)-\lambda I$ is invertible, so
$\lambda\notin\operatorname{spec}(\pi(H))$. This proves
Eq.~\eqref{eq:app-cstar-spectrum-inclusion}.

For the second statement, apply $\pi$ to the spectral decomposition of
$H$. Multiplicativity and preservation of adjoints imply that the
$\pi(P_e)$ remain mutually orthogonal projections and sum to the
identity. Hence the nonzero $\pi(P_e)$ are the spectral
projections of $\pi(H)$.
\end{proof}

We refer to this inclusion as the \emph{spectral cover property}: the
source spectrum covers the target spectrum.  The homomorphism may
remove spectral sectors but cannot create a new eigenvalue.

Although a $*$-homomorphism cannot create new eigenvalues, it need not
preserve their physical multiplicities.  This distinction is important
for polynomial operator states and Gibbs states.

Suppose
\begin{equation}
 H_A=\sum_e eP_e^A,
 \qquad
 H_B:=\pi(H_A).
\end{equation}
Then
\begin{equation}
 P_e^B:=\pi(P_e^A)
\end{equation}
is either the target spectral projector at energy $e$ or zero.
However,
\begin{equation}
 \operatorname{Tr}_{\mathcal H_B}(P_e^B)
\end{equation}
need not equal
\begin{equation}
 \operatorname{Tr}_{\mathcal H_A}(P_e^A).
\end{equation}
The source and target can represent the same abstract matrix block with
different multiplicities, as in
Eq.~\eqref{eq:app-cstar-wedderburn}. Consequently,
\begin{equation}
 \tau_B(P_e^B)
 \ne
 \tau_A(P_e^A)
\end{equation}
in general.

This is why polynomial transport and norm preservation are logically
different statements.  Algebraically,
\begin{equation}
 \pi(p(H_A))=p(H_B)
\end{equation}
holds for every polynomial $p$ solely from the homomorphism property.
But the normalized operator-state norm
\begin{equation}
 \|p(H_L)\|_{2,L}^2
 =
 \tau_L(|p(H_L)|^2)
\end{equation}
depends on the physical trace.  Relative distance determines the degree through which these trace
inner products agree on the
low-degree subspace.

\subsection{Operator states and homomorphic transport}

For a represented algebra
$\mathcal A\subseteq\mathcal B(\mathcal H_A)$, define
\begin{equation}
 |\Phi_A\rangle
 :=
 \frac1{\sqrt{\dim\mathcal H_A}}
 \sum_j|j\rangle|j\rangle,
 \qquad
 |X\rangle\!\rangle_A
 :=
 (X\otimes I)|\Phi_A\rangle.
\end{equation}
Then
\begin{equation}
 \langle\!\langle X|Y\rangle\!\rangle_A
 =
 \tau_A(X^*Y).
\end{equation}
Hence a homomorphism $\pi$ does not automatically induce an isometry
between the physical operator-state spaces, because the normalized
traces may differ.

If, however, a subspace $\mathcal V\subseteq\mathcal A$ satisfies
\begin{equation}
 \tau_B(\pi(X)^*\pi(Y))
 =
 \tau_A(X^*Y)
 \qquad(X,Y\in\mathcal V),
\end{equation}
then $\pi$ is an isometry on $\mathcal V$ in the operator-state inner
product. In particular,
\begin{equation}
 \|\pi(X)\|_{2,B}=\|X\|_{2,A},
 \qquad X\in\mathcal V,
\end{equation}
and therefore $\pi$ is injective on $\mathcal V$.

The degree--distance criterion (Proposition~\ref{thm:mf-degree-distance}) applies this observation to
$\mathcal V=\mathcal F_D^A$.  The condition
\begin{equation}
 2D<d_{\mathrm{rel}}
\end{equation}
guarantees the inner-product preservation needed to transfer
a degree-$D$ polynomial state without changing its normalization.
Together with compatible operator bases, injectivity on
$\mathcal F_D^A$ also makes the target basis label invertible on the
degree-$D$ support, which is the algebraic reason that the source label
can be coherently recovered and uncomputed in the relative-decoding
circuit.

\section{Source design, constructive preparation}
\label{sec:constructing-sources}

\subsection{Designing a source by absorbing short relations}
\label{subsec:source-relations}

Consider the Pauli realization of
Section~\ref{subsec:mf-relative-codes}, with $m$ marked generators,
target label matrix $B$, relation code $K_B=\ker B$, and
anticommutation matrix $\Sigma=B^{\mathsf T}J_{n_B}B$.
For any source kernel $K_A\subseteq K_B$, set
\begin{equation}
 d_{A\to B}:=\min\{|c|:c\in K_B\setminus K_A\},
\end{equation}
with value $+\infty$ when $K_A=K_B$.
For an integer $t\ge1$, define the span of the short target relations
by
\begin{equation}
 K_B^{<t}:=\operatorname{span}_{\mathbb F_2}
             \{c\in K_B:|c|<t\}.
 \label{eq:source-short-span}
\end{equation}

\begin{proposition}[Source relations for a prescribed distance]
\label{thm:source-design}
We use the Pauli language for illustration. For $K_A\subseteq K_B$,
\begin{equation}
 d_{A\to B}\ge t
 \quad\Longleftrightarrow\quad
 K_B^{<t}\subseteq K_A.
 \label{eq:source-design-criterion}
\end{equation}
In particular, a proper source kernel $K_A\subsetneq K_B$ attaining
this distance exists if and only if $K_B^{<t}\subsetneq K_B$.
More generally, for $1\le r\le\dim K_B$, a source satisfying
\begin{equation}
 \dim(K_B/K_A)=r,\qquad d_{A\to B}\ge t
\end{equation}
exists if and only if
\begin{equation}
 \dim K_B^{<t}\le \dim K_B-r.
 \label{eq:source-distance-codimension}
\end{equation}
\end{proposition}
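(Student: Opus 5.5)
The criterion \eqref{eq:source-design-criterion} is pure linear algebra over $\mathbb F_2$, and I will prove it in both directions.

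For ($\Leftarrow$), assume $K_B^{<t}\subseteq K_A$. Any $c\in K_B$ with $|c|<t$ is a spanning vector of $K_B^{<t}$, so it lies in $K_A$. Hence every $c\in K_B\setminus K_A$ has $|c|\ge t$, which gives $d_{A\to B}\ge t$; this also covers the empty case $d_{A\to B}=+\infty$.

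For ($\Rightarrow$), assume $d_{A\to B}\ge t$. Every generator $c\in K_B$ of $K_B^{<t}$ with $|c|<t$ cannot lie in $K_B\setminus K_A$, so it lies in $K_A$. Since $K_A$ is a subspace, it contains the whole span $K_B^{<t}$.

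The existence statements then reduce to choosing subspaces sandwiched between $K_B^{<t}$ and $K_B$. For the general codimension claim, a valid source kernel is exactly a subspace $K_A$ with
\[
 K_B^{<t}\subseteq K_A\subseteq K_B,\qquad \dim K_A=\dim K_B-r .
\]
Such a subspace exists iff $\dim K_B^{<t}\le\dim K_B-r$. For necessity, $K_A$ contains $K_B^{<t}$, so its dimension bounds $\dim K_B^{<t}$ from above. For sufficiency, extend a basis of $K_B^{<t}$ by $\dim K_B-r-\dim K_B^{<t}$ further vectors of $K_B$, which is possible since this number is nonnegative and at most $\dim K_B-\dim K_B^{<t}$.

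The properness statement is the union over $r\ge1$. A proper $K_A$ with $d_{A\to B}\ge t$ exists iff some $r\ge1$ satisfies $\dim K_B^{<t}\le\dim K_B-r$, i.e.\ iff $\dim K_B^{<t}<\dim K_B$. Because $K_B^{<t}\subseteq K_B$, this is equivalent to $K_B^{<t}\subsetneq K_B$.

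The only point needing care is realizability: a prescribed kernel $K_A$ has to come from an actual compatible Pauli source satisfying Theorem~\ref{theorem:pauli}. I will not settle this here. The paper states that for every $K_A\subseteq K_B$ a source Hamiltonian with label kernel exactly $K_A$ can be constructed, and the proposition as stated concerns only the kernels, so I take that construction as given. If it has to be checked directly, I would build $A$ from a relation-free HDQI-type lift by adjoining $\mathbb F_2$-combinations that impose the relations in $K_A$, and then verify the commutation and phase conditions of Theorem~\ref{theorem:pauli}. I expect that verification, not the linear algebra above, to be the main obstacle.
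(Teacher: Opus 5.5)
Your proof is correct and follows the paper's argument: $d_{A\to B}\ge t$ says exactly that every target relation of weight below $t$ lies in $K_A$, which for a subspace is equivalent to $K_B^{<t}\subseteq K_A$. The existence claims then reduce to finding a subspace of dimension $\dim K_B-r$ sandwiched between $K_B^{<t}$ and $K_B$, with properness being the case $r\ge1$, and the realizability issue you defer is not part of this statement but is handled separately in Proposition~\ref{lem:source-realization}.
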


\begin{proof}
The inequality $d_{A\to B}\ge t$ means exactly that every target
relation of weight below $t$ lies in $K_A$. Since $K_A$ is a
subspace, this is equivalent to containing their span.
For the last assertion, such a span is contained in a subspace
of $K_B$ of dimension $\dim K_B-r$ exactly when
Eq.~\eqref{eq:source-distance-codimension} holds.
\end{proof}

For a degree-$D$ promise, the relative-distance condition
$2D<d_{A\to B}$ therefore becomes
\begin{equation}
 K_B^{<2D+1}\subseteq K_A.
 \label{eq:source-degree-design}
\end{equation}
Thus the source must already represent every target relation
involving at most $2D$ marked generators. As in
Proposition~\ref{prop:mf-relative-vs-ordinary}, enlarging $K_A$ can only
increase the relative distance:
\begin{equation}
 K_{A'}\subseteq K_A\subseteq K_B
 \quad\Longrightarrow\quad
 d_{A'\to B}\le d_{A\to B}.
\end{equation}
There is also a useful obstruction. If $K_B$ is generated by
relations of weight at most $w$, every proper source kernel
satisfies $d_{A\to B}\le w$. A large relative distance in the
presence of short target relations requires those short relations
to span a proper subspace of $K_B$.

The next proposition constructs signed Pauli sources realizing these
criteria.

\begin{proposition}[Realization of a prescribed source kernel]
\label{lem:source-realization}
Fix signed Hermitian target Pauli generators $b_1,\ldots,b_m$.
For every given subspace $K_A\subseteq K_B$, one can construct
signed Hermitian Pauli generators $a_1,\ldots,a_m$ with relation
kernel $K_A$ such that $a_i\mapsto b_i$ extends to a unital
$*$-homomorphism. The construction uses polynomial-time binary
linear algebra and Pauli phase arithmetic. It can be realized on
\begin{equation}
 n_A=q+s_A,\qquad
 q:=\tfrac12\operatorname{rank}\Sigma,\qquad
 s_A:=\dim\ker\Sigma-\dim K_A
 \label{eq:source-realization-size}
\end{equation}
qubits.
\end{proposition}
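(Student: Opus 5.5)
The plan is to build the source label matrix $A$ by factoring through the quotient $\mathbb F_2^m/K_A$. We embed this quotient symplectically into a small Pauli phase space, then fix the generator signs so that every inherited scalar phase is matched. The homomorphism then follows from Theorem~\ref{theorem:pauli}: we need $A^{\mathsf T}J_{n_A}A=\Sigma$, $\ker A=K_A$ (which gives $K_A\subseteq K_B$), and $\zeta_A=\zeta_B$ on $K_A$.

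\emph{Step 1 (labels).} First, $\Sigma$ is alternating over $\mathbb F_2$, since $\Sigma_{ii}=v_i^{\mathsf T}J v_i=0$. Also, $K_B\subseteq\ker\Sigma$, because $\Sigma c=B^{\mathsf T}J(Bc)=0$ for $c\in K_B$. Since $K_A\subseteq K_B\subseteq\ker\Sigma$, the form $\Sigma$ descends to an alternating form $\bar\Sigma$ on $V:=\mathbb F_2^m/K_A$. The radical of $\bar\Sigma$ is $\ker\Sigma/K_A$, of dimension $s_A=\dim\ker\Sigma-\dim K_A$, and its rank is $\operatorname{rank}\Sigma=2q$.

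By symplectic Gram--Schmidt (polynomial-time linear algebra), choose a basis $e_1,f_1,\ldots,e_q,f_q,r_1,\ldots,r_{s_A}$ of $V$ with $\bar\Sigma(e_j,f_k)=\delta_{jk}$, all other pairings zero, and the $r_l$ spanning the radical. Define the linear map $E:V\to\mathbb F_2^{2n_A}$, with $n_A=q+s_A$, by
\[
e_j\mapsto X\text{-label on qubit }j,\qquad f_j\mapsto Z\text{-label on qubit }j,\qquad r_l\mapsto Z\text{-label on qubit }q+l .
\]
Then $E$ is injective and $E^{\mathsf T}J E=\bar\Sigma$. Setting $A:=E\circ Q$, with $Q$ the quotient map, gives $\ker A=K_A$ exactly and $A^{\mathsf T}J_{n_A}A=\Sigma$. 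The source generators are $a_i=\eta_i^A P_{n_A}(u_i)$ with $u_i=Ae_i$ and signs still to be chosen.

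\emph{Step 2 (phases).} This is where the real work lies. Write $\zeta^0_A(c)$ for the scalar of the ordered product $W_A(c)$ when all $\eta_i^A=1$, so that $\zeta_A(c)=\prod_i(\eta_i^A)^{c_i}\zeta_A^0(c)$.

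I will show that $\lambda(c):=\zeta_B(c)/\zeta_A^0(c)$ is a group homomorphism $K_A\to\{\pm1\}$. For $c,c'\in K_A$, compare $W(c)W(c')$ with $W(c+c')$ on each side. Bringing the concatenated ordered product into increasing order, and cancelling squares $P^2=I$, $b_i^2=I$, produces a phase that depends only on the pairwise commutation signs $(-1)^{\Sigma_{ij}}$ and on $c,c'$. Since these signs agree on source and target, the phases cancel in the ratio. Hence $\lambda(c+c')=\lambda(c)\lambda(c')$. Because $K_A$ has exponent two, $\lambda$ takes values in $\{\pm1\}$.

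Extend $\lambda$ to a linear functional $\ell\in(\mathbb F_2^m)^*$ with $(-1)^{\ell(c)}=\lambda(c)$ on $K_A$, and set $\eta_i^A:=(-1)^{\ell(e_i)}$. Then $\zeta_A=\zeta_B$ on $K_A$. All of this is phase arithmetic on $O(m)$ Pauli products together with binary linear algebra.

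The main obstacle is the bookkeeping in Step 2: verifying carefully that the reordering phase relating $W(c)W(c')$ to $W(c+c')$ is determined by $\Sigma$ alone. This includes the Hermitian convention $i^{x\cdot z}$, which contributes factors $\pm i$ that must cancel between the two sides. I would first handle this by comparing each side with the product $\prod_i b_i^{c_i+c'_i}$ written as a reordered string, so that only commutation swaps appear.
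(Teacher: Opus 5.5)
Your proposal is correct and follows the paper's proof essentially step for step: you descend $\Sigma$ to $\mathbb F_2^m/K_A$, take a symplectic basis together with a radical basis, send the pairs to $X/Z$ labels on $q$ qubits and the radical vectors to $Z$ labels on $s_A$ further qubits, compose with the quotient map, and then fix the signs by extending the $\{\pm1\}$-character $\zeta_B/\zeta_A^0$ of $K_A$ to all of $\mathbb F_2^m$. The bookkeeping you flag as the main obstacle is automatic: the word-product cocycle $\kappa_\Sigma(c,c')$ uses only that each generator is a Hermitian involution with commutation signs $(-1)^{\Sigma_{ij}}$, so the internal phases $i^{x\cdot z}$ never enter (also rename your symplectic basis vectors, since they clash with the standard basis vectors $e_i$ in $u_i=Ae_i$).
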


\begin{proof}
Since $K_A\subseteq K_B\subseteq\ker\Sigma$, the alternating form
$c^{\mathsf T}\Sigma d$ descends to $V:=\mathbb F_2^m/K_A$.
It has rank $2q$ and radical dimension $s_A$.
Binary elimination gives a basis consisting of $q$ symplectic
pairs and $s_A$ radical vectors. Map the pairs to the $X$ and $Z$
labels on $q$ qubits, and the radical vectors to independent $Z$
labels on $s_A$ further qubits. This embeds $V$ injectively in a
Pauli label space. Composing with the quotient map gives a label
matrix $A$ satisfying
\[
 \ker A=K_A,\qquad A^{\mathsf T}J_{n_A}A=\Sigma.
\]

Choose Hermitian Pauli representatives $a_i^0$ of its columns.
For $c\in K_A$, let $\zeta_A^0(c)$ and $\zeta_B(c)$ be the scalar
phases of the ordered source and target products.
Their multiplication rules have the same reordering cocycle,
determined by $\Sigma$. Hence
$\chi(c):=\zeta_B(c)/\zeta_A^0(c)$ is a character of $K_A$.
It takes values in $\{\pm1\}$ and extends to a character
$(-1)^{u\cdot c}$ of $\mathbb F_2^m$.
Setting $a_i:=(-1)^{u_i}a_i^0$ matches all source scalar phases
with the target. The homomorphism conditions
Eq.~\eqref{eq:mf-pauli-compatibility} now hold.
Finding the basis, evaluating phases on a basis of $K_A$, and
solving for $u$ all take polynomial time.
\end{proof}

The proposition takes a basis of $K_A$ as input, leaving the tasks of
finding $K_B^{<t}$ and arranging a source with suitable geometry and
efficient state preparation. The block constructions below address
the latter tasks for a concrete family.

\paragraph{An algebraic formulation of the same design principle.}
Let $\mathcal U$ be a common marked finite-dimensional
$C^*$-algebra, let $\mathcal F_D^{\mathcal U}$ be its word
filtration, and write the target as $\mathcal U/I_B$ for a proper
two-sided $*$-ideal $I_B$. Define
\begin{equation}
 J_D:=\left\langle I_B\cap\mathcal F_D^{\mathcal U}
       \right\rangle_{\text{two-sided }*\text{-ideal}}.
 \label{eq:source-generated-ideal}
\end{equation}
An intermediate source $\mathcal U/I_A$, with $I_A\subseteq I_B$,
has an induced map $\pi$ to $\mathcal U/I_B$. If
$q_A:\mathcal U\to\mathcal U/I_A$ is the quotient map, then
\begin{equation}
 \pi|_{q_A(\mathcal F_D^{\mathcal U})}\text{ is injective}
 \quad\Longleftrightarrow\quad J_D\subseteq I_A.
 \label{eq:source-ideal-criterion}
\end{equation}
Indeed, the kernel of the restriction is
$q_A(I_B\cap\mathcal F_D^{\mathcal U})$.
Thus a proper intermediate ideal $I_A\subsetneq I_B$ with this
property exists exactly when $J_D\subsetneq I_B$, and $J_D$
is the smallest possible choice. This establishes linear injectivity;
physical-trace matching and the circuit requirements of the framework
remain separate conditions.

\subsection{Efficient guides from independent blocks}
\label{subsec:source-block-guides}

Independent blocks provide sources that can contain internal
relations without sacrificing tractable guide preparation.
Let
\begin{equation}
 \mathcal H_A=\bigotimes_{j=1}^{L}\mathcal H_j,\qquad
 H_A=\sum_{j=1}^{L}h_j,\qquad d_j:=\dim\mathcal H_j,
 \label{eq:source-block-hamiltonian}
\end{equation}
where $h_j=h_j^*$ acts only on block $j$ and identities on the
other blocks are implicit. Each $h_j$ may itself be a sum of
noncommuting marked generators. Group each physical block with
its reference block in the vectorization.

\begin{proposition}[Polynomial guides of independent blocks]
\label{prop:source-block-mps}
Let $p(x)=\sum_{k=0}^{D}p_kx^k$ and suppose $p(H_A)\ne0$.
The normalized canonical operator state
\begin{equation}
 |\psi_A^p\rangle
 =\frac{(p(H_A)\otimes I)|\Phi_A\rangle}
        {\sqrt{\tau_A(p(H_A)^*p(H_A))}}
 \label{eq:source-block-guide}
\end{equation}
has an exact matrix product state representation of bond
dimension at most $D+1$ in the paired-block ordering.
Its tensors can be constructed explicitly from the local
matrices $h_j$ and the coefficients of $p$.
\end{proposition}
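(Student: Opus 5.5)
The plan is to write $p(H_A)$ as an explicit sum over multi-indices and encode that sum in a transfer-matrix product. First I expand with the multinomial theorem: since the $h_j$ act on disjoint tensor factors they commute, so
\[
 p(H_A)=\sum_{k=0}^{D}p_k\sum_{k_1+\cdots+k_L=k}\frac{k!}{k_1!\cdots k_L!}\,h_1^{k_1}\otimes\cdots\otimes h_L^{k_L}.
\]
Writing $p_k k!=:\tilde p_k$, each term factorizes as a product of local factors $h_j^{k_j}/k_j!$ times a boundary weight depending only on the total $k$. I introduce a virtual index $a\in\{0,\dots,D\}$ recording the running degree $k_1+\cdots+k_{j}$. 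For block $j$ the tensor is
\[
 A_j^{[a,b]}:=\frac{h_j^{\,b-a}}{(b-a)!}\quad(b\ge a),\qquad A_j^{[a,b]}:=0\quad(b<a),
\]
an operator on $\mathcal H_j$. I take the left boundary vector $e_0$ and the right boundary vector $(\tilde p_0,\dots,\tilde p_D)$. Contracting then gives $\sum_{0=a_0\le a_1\le\cdots\le a_L}\tilde p_{a_L}\prod_j h_j^{a_j-a_{j-1}}/(a_j-a_{j-1})!$, which equals $p(H_A)$ by the displayed multinomial identity with $k_j=a_j-a_{j-1}$. The ordering of the $h_j$ is irrelevant because they act on different factors. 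This is a matrix product operator of bond dimension $D+1$.

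Next I pass to the operator state. The canonical $|\Phi_A\rangle$ factorizes over the paired blocks as $\bigotimes_j|\Phi_j\rangle$, since its normalization $(\dim\mathcal H_A)^{-1/2}=\prod_j d_j^{-1/2}$ splits. Therefore $(p(H_A)\otimes I)|\Phi_A\rangle$ is an MPS on the paired sites $\mathcal H_j\otimes\mathcal H_j$. Its site tensor is the vector $(A_j^{[a,b]}\otimes I)|\Phi_j\rangle\in\mathcal H_j\otimes\mathcal H_j$, i.e.\ the vectorization of $h_j^{b-a}/(b-a)!$, with the same boundary vectors and the same bond dimension $D+1$. Dividing by the scalar $\sqrt{\tau_A(p(H_A)^*p(H_A))}$, which is nonzero by assumption, only rescales the boundary vector. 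It can be computed by contracting the transfer matrices $\sum_{a,b,a',b'}\tau_j\bigl((h_j^{b'-a'})^*h_j^{b-a}\bigr)/((b-a)!(b'-a')!)$ of size $(D+1)^2$.

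All tensors are built from the powers $h_j^0,\dots,h_j^D$ and the coefficients $p_k$, so the construction is explicit. The only delicate points are bookkeeping rather than real obstacles. The first is that commutativity of distinct blocks is needed to apply the multinomial theorem even when each $h_j$ is itself a sum of noncommuting generators; within a block only powers of the single matrix $h_j$ occur, so no ordering issue arises. The second is that the bond dimension stays $D+1$, not larger, because truncating $a\le D$ loses nothing since terms of total degree above $D$ have zero coefficient.
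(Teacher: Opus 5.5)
Your proposal is correct and essentially identical to the paper's construction: the same upper-triangular operator-valued tensors $h_j^{\,b-a}/(b-a)!$, left boundary $e_0$, right boundary $(0!\,p_0,\ldots,D!\,p_D)$, the multinomial identity giving $H_A^k/k!$, and vectorization of each local operator on the paired blocks. The paper additionally bounds the Schmidt rank across each cut via a binomial split of $p(H_A)$, but your explicit tensors already give the bond dimension bound $D+1$ without that step.
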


\begin{proof}
Across any cut between blocks, write
$H_A=H_{\mathrm L}\otimes I+I\otimes H_{\mathrm R}$.
The binomial formula gives
\[
 p(H_A)=\sum_{r=0}^{D}H_{\mathrm L}^{r}\otimes
 \left(\sum_{k=r}^{D}p_k\binom{k}{r}H_{\mathrm R}^{k-r}\right).
\]
Vectorization turns this into a sum of at most $D+1$ product
vectors, so every such Schmidt rank is at most $D+1$.

For an explicit construction, use auxiliary indices
$0\le a,b\le D$ and define operator-valued upper-triangular
matrices
\begin{equation}
 (T_j)_{ab}:=
 \begin{cases}
 h_j^{\,b-a}/(b-a)!,&b\ge a,\\
 0,&b<a.
 \end{cases}
 \label{eq:source-block-tensors}
\end{equation}
Multiplication of these matrices uses tensor products between
different physical blocks. The multinomial formula yields
\begin{equation}
 (T_1\cdots T_L)_{0k}
 =\sum_{r_1+\cdots+r_L=k}
       \bigotimes_{j=1}^{L}\frac{h_j^{r_j}}{r_j!}
 =\frac{H_A^k}{k!}.
 \label{eq:source-block-tensor-product}
\end{equation}
Thus the left boundary $e_0^{\mathsf T}$ and right boundary
$(0!p_0,1!p_1,\ldots,D!p_D)^{\mathsf T}$ contract to $p(H_A)$.
Vectorizing each local operator produces the claimed matrix
product state.
\end{proof}

After canonicalization, this matrix product state can be prepared
sequentially with a number of local operations polynomial in the
number of blocks $L$, the bond dimension $D+1$, and the paired-block
dimensions $d_j^2$, with gates synthesized to the desired
precision~\cite{SchonEtAl2005,Schollwock2011}.

\section{Bounded-degree approximate optimization: proofs and details}
\label{app:optimization-details}
\label{subsec:value-general:details}
\label{sec:achieved-value:details}

This appendix gives details for Section~\ref{subsec:value-general}. Fix a marked degree bound $D$ and a self-adjoint target observable
$O_B\in\mathcal B$.  Our goal is to optimize the expectation value of
$O_B$ over target states obtained by transducing source guide operators of
marked degree at most $D$.  Thus we choose
\begin{equation}
 0\ne X_A\in\mathcal F_D^A,
 \qquad
 X_B:=\pi(X_A)\ne0,
\end{equation}
and prepare the normalized target operator state whose marginal is
\begin{equation}
 \rho_B[X_A]
 :=\frac{X_BX_B^*}{\operatorname{Tr}(X_BX_B^*)}.
 \label{eq:value-target-state}
\end{equation}
The corresponding objective value is
\begin{equation}
 \mathcal E_B(O_B;X_A)
 :=\operatorname{Tr}\!\left(O_B\rho_B[X_A]\right)
 =\frac{\tau_B(X_B^*O_BX_B)}{\tau_B(X_B^*X_B)}.
 \label{eq:value-born:details}
\end{equation}

The optimization problem therefore has three layers.  First, one chooses a
bounded-degree guide space in the source.  Second, one writes the exact
target expectation as a finite variational problem.  Third, relative
distance determines when this target variational problem can be evaluated
entirely from source data.  We treat these steps in this order and then
specialize to the full degree-$D$ word space and to the smaller polynomial
subspace $\operatorname{span}\{I,H_A,\ldots,H_A^D\}$.

\subsection{Exact variational formulation on a chosen guide space}
\label{app:optimization-variational}

Suppose first that the target observable has a self-adjoint source lift,
\begin{equation}
 O_B=\pi(O_A),
 \qquad
 O_A=O_A^*\in\mathcal A.
\end{equation}
Introduce the pulled-back target trace
\begin{equation}
 \tau_\pi:=\tau_B\circ\pi.
 \label{eq:value-pullback-trace}
\end{equation}
It is a positive normalized trace on $\mathcal A$, possibly nonfaithful.
Multiplicativity and preservation of adjoints give
\begin{equation}
 \mathcal E_B(O_B;X_A)
 =\frac{\tau_\pi(X_A^*O_AX_A)}{\tau_\pi(X_A^*X_A)}.
 \label{eq:value-pullback-quotient}
\end{equation}
In general $\tau_\pi\ne\tau_A$. Thus the source spectrum alone does not determine the target mean.

Now choose a finite-dimensional guide space
\begin{equation}
 \mathcal W_D:=\operatorname{span}\{F_0,\ldots,F_{R-1}\}
 \subseteq\mathcal F_D^A
\end{equation}
and write
\begin{equation}
 X_A(c):=\sum_{j=0}^{R-1}c_jF_j,
 \qquad
 Y_j:=\pi(F_j).
 \label{eq:value-guide-family}
\end{equation}
Define the target Gram and objective matrices
\begin{equation}
 (G_\pi)_{ij}:=\tau_B(Y_i^*Y_j),
 \qquad
 (K_\pi)_{ij}:=\tau_B(Y_i^*O_BY_j).
 \label{eq:value-general-matrices:details}
\end{equation}
When the lift $O_A$ is used, these entries are
$\tau_\pi(F_i^*F_j)$ and $\tau_\pi(F_i^*O_AF_j)$.

\begin{proposition}[Variational objective matrix]
\label{prop:value-ritz:details}
Assume the guide family has a nonzero target image.  For every $c$ with
$c^*G_\pi c>0$,
\begin{equation}
 \mathcal E_B(O_B;X_A(c))
 =\frac{c^*K_\pi c}{c^*G_\pi c}.
 \label{eq:value-general-rayleigh:details}
\end{equation}
The maximum over $\mathcal W_D$ is
\begin{equation}
 \max_{c:\,c^*G_\pi c>0}\mathcal E_B(O_B;X_A(c))
 =\lambda_{\max}\!\left(G_+^{-1/2}K_+G_+^{-1/2}\right),
 \label{eq:value-general-optimum:details}
\end{equation}
where $G_+$ and $K_+$ are the restrictions to
$\operatorname{ran}G_\pi
 =
 \bigl(\ker G_\pi\bigr)^\perp$, therefore, $G_+$ is strictly positive definite. Equivalently, after removing zero-norm
directions, one solves
\begin{equation}
 K_\pi c=\lambda G_\pi c.
\end{equation}
For minimization, replace the largest eigenvalue by the smallest.
\end{proposition}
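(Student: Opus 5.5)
The plan is to reduce the statement to finite-dimensional linear algebra on $\mathbb C^R$, equipped with the two Hermitian forms $G_\pi$ and $K_\pi$. I will then invoke the Rayleigh--Ritz principle on the positive part of $G_\pi$.

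\emph{Step 1: the Rayleigh quotient.} Fix $c$ with $c^*G_\pi c>0$ and put $X_A(c)=\sum_j c_jF_j$. Linearity of $\pi$ gives $X_B:=\pi(X_A(c))=\sum_j c_jY_j$. Expanding sesquilinearly,
\[
\tau_B(X_B^*X_B)=\sum_{i,j}\overline{c_i}c_j\tau_B(Y_i^*Y_j)=c^*G_\pi c,\qquad
\tau_B(X_B^*O_BX_B)=c^*K_\pi c .
\]
Substituting these into Eq.~\eqref{eq:value-born:details} gives Eq.~\eqref{eq:value-general-rayleigh:details}. The positivity $c^*G_\pi c>0$ is equivalent to $X_B\neq0$, because $\tau_B$ is faithful on the represented algebra. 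If some lift $O_A$ is used, multiplicativity and the $*$-property of $\pi$ rewrite the entries as $\tau_\pi(F_i^*F_j)$ and $\tau_\pi(F_i^*O_AF_j)$. The hypothesis that the guide family has a nonzero target image ensures that the feasible set is nonempty.

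\emph{Step 2: discarding null directions.} $G_\pi$ is a Gram matrix, so it is positive semidefinite, and $K_\pi$ is Hermitian because $O_B=O_B^*$. The key claim is $\ker G_\pi\subseteq\ker K_\pi$. Indeed, $c^*G_\pi c=\|\sum_jc_jY_j\|_{2,B}^2=0$ forces $\sum_jc_jY_j=0$. Then $(K_\pi c)_i=\tau_B(Y_i^*O_B\sum_jc_jY_j)=0$ for every $i$.

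Now decompose $c=c_++c_0$ with $c_+\in\operatorname{ran}G_\pi$ and $c_0\in\ker G_\pi$. The claim and hermiticity of $K_\pi$ give $c^*K_\pi c=c_+^*K_+c_+$ and $c^*G_\pi c=c_+^*G_+c_+$. Hence the quotient depends only on $c_+$, and $c^*G_\pi c>0$ holds exactly when $c_+\neq0$.

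\emph{Step 3: Rayleigh--Ritz.} On $\operatorname{ran}G_\pi$ the form $G_+$ is strictly positive definite, so $G_+^{\pm1/2}$ exist. Substitute $y=G_+^{1/2}c_+$, which is a bijection of the nonzero vectors in $\operatorname{ran}G_\pi$. The quotient becomes $y^*(G_+^{-1/2}K_+G_+^{-1/2})y/y^*y$, and its supremum over $y\neq0$ is attained and equals $\lambda_{\max}$ by Rayleigh--Ritz.

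The maximizer $c_+=G_+^{-1/2}y$ satisfies $K_+c_+=\lambda G_+c_+$. Adding any $c_0$ preserves the generalized eigenvalue equation $K_\pi c=\lambda G_\pi c$, since both sides annihilate $\ker G_\pi$. Running the same argument with $\lambda_{\min}$ handles minimization.

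\emph{Main obstacle.} The one substantive point is the inclusion $\ker G_\pi\subseteq\ker K_\pi$ in Step 2. It is what lets a possibly singular generalized eigenproblem be reduced to a standard one. Without it, the quotient could be unbounded near zero-norm directions. The inclusion rests on faithfulness of the physical trace $\tau_B$, which gives $\|Z\|_{2,B}=0\Rightarrow Z=0$. Everything else is routine.
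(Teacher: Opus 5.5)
Your proposal is correct and follows the paper's proof essentially step for step. Both arguments use the sesquilinear identities $c^*G_\pi c=\|X_B(c)\|_{2,B}^2$ and $c^*K_\pi c=\tau_B(X_B(c)^*O_BX_B(c))$, then the inclusion $\ker G_\pi\subseteq\ker K_\pi$ from $\|X_B(c)\|_{2,B}=0\Rightarrow X_B(c)=0$, and finally Rayleigh--Ritz after substituting $G_+^{1/2}$ on $\operatorname{ran}G_\pi$. The only cosmetic difference is that you split $c=c_++c_0$, whereas the paper parametrizes $\operatorname{ran}G_\pi$ by an isometry $V$ with $c=V\widetilde c$, which amounts to the same thing.
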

\begin{proof}
Define the corresponding target guide operator by
\[
 X_B(c):=\pi(X_A(c))=\sum_{j=0}^{R-1}c_jY_j.
\]
By the definitions of $G_\pi$ and $K_\pi$,
\[
 c^*G_\pi c
 =
 \tau_B\!\left(X_B(c)^*X_B(c)\right)
 =
 \|X_B(c)\|_{2,B}^2,
\]
and
\[
 c^*K_\pi c
 =
 \tau_B\!\left(X_B(c)^*O_BX_B(c)\right).
\]
Substituting these identities into
Eq.~\eqref{eq:value-born:details} proves
Eq.~\eqref{eq:value-general-rayleigh:details}.

Now suppose $c\in\ker G_\pi$. Since $G_\pi\ge0$,
\[
 0=c^*G_\pi c=\|X_B(c)\|_{2,B}^2,
\]
and hence $X_B(c)=0$. Therefore, for every $i$,
\[
 (K_\pi c)_i
 =
 \tau_B\!\left(Y_i^*O_BX_B(c)\right)
 =0,
\]
so
\[
 \ker G_\pi\subseteq\ker K_\pi.
\]
Thus adding a vector in $\ker G_\pi$ changes neither the target guide
nor either quadratic form. The optimization may therefore be restricted
to
\[
 \operatorname{ran}G_\pi
 =
 (\ker G_\pi)^\perp.
\]

Let $V$ have orthonormal columns spanning
$\operatorname{ran}G_\pi$, and define
\[
 G_+:=V^*G_\pi V,
 \qquad
 K_+:=V^*K_\pi V.
\]
Then $G_+>0$. Writing $c=V\widetilde c$, the quotient becomes
\[
 \frac{c^*K_\pi c}{c^*G_\pi c}
 =
 \frac{\widetilde c^*K_+\widetilde c}
      {\widetilde c^*G_+\widetilde c}.
\]
With the change of variables
\[
 z:=G_+^{1/2}\widetilde c,
\]
this is
\[
 \frac{
 z^*G_+^{-1/2}K_+G_+^{-1/2}z
 }{z^*z}.
\]
The Rayleigh--Ritz principle therefore gives
\[
 \max_{c:\,c^*G_\pi c>0}
 \frac{c^*K_\pi c}{c^*G_\pi c}
 =
 \lambda_{\max}\!\left(
 G_+^{-1/2}K_+G_+^{-1/2}
 \right),
\]
which proves Eq.~\eqref{eq:value-general-optimum:details}. Equivalently, on the
positive-norm subspace, the stationary vectors satisfy the generalized
eigenvalue equation
\[
 K_\pi c=\lambda G_\pi c.
\]
\end{proof}

A general sufficient condition for evaluating the same variational problem
from the source only is the following.  If, for one common $\lambda>0$,
\begin{equation}
 \tau_\pi(Y)=\lambda\tau_A(Y)
 \quad\text{on}\quad
 \operatorname{span}\{X^*Z,\,X^*O_AZ:X,Z\in\mathcal W_D\},
 \label{eq:value-trace-matching}
\end{equation}
then $G_\pi=\lambda G_A$ and $K_\pi=\lambda K_A$, so the factor
$\lambda$ cancels from every Rayleigh quotient.  Balance on
$\mathcal W_D$ alone gives the Gram-matrix identity; preserving the
objective requires the additional insertion of $O_A$.  The relative-distance
theorem below gives a uniform degree criterion for
Eq.~\eqref{eq:value-trace-matching}.

\subsection{Relative distance reduces the target problem to the source}
\label{subsec:value-relative:details}

The variational matrices above are defined using the target trace.  The
main optimization consequence of relative decoding is that, in a
low-degree window, these target traces equal source traces.

\begin{theorem}[Expected-value transfer under relative decoding]
\label{thm:value-relative-expectation:details}
Let $0\ne X_A\in\mathcal F_D^A$ and let
$O_A=O_A^*\in\mathcal F_r^A$.  If
\begin{equation}
 2D+r<d_{\mathrm{rel}},
 \label{eq:value-observable-window:details}
\end{equation}
then $\pi(X_A)\ne0$ and
\begin{equation}
 \mathcal E_B(\pi(O_A);X_A)
 =\frac{\tau_A(X_A^*O_AX_A)}{\tau_A(X_A^*X_A)}.
 \label{eq:value-relative-mean:details}
\end{equation}
Consequently, for any guide space
$\mathcal W_D\subseteq\mathcal F_D^A$, the source Gram and objective
matrices determine the target variational problem.
\end{theorem}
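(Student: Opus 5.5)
The plan is to reduce everything to the trace-matching characterization in Eq.~\eqref{eq:mf-trace-distance-characterization}: $\tau_B\circ\pi=\tau_A$ on $\mathcal F_t^A$ for every $t<d_{\mathrm{rel}}$. The key observation is a degree count. Since $X_A\in\mathcal F_D^A$, its adjoint $X_A^*$ also lies in $\mathcal F_D^A$, because the marked word spaces are closed under adjoints (each letter's adjoint is again a letter). Products of words concatenate, so $X_A^*X_A\in\mathcal F_{2D}^A$ and $X_A^*O_AX_A\in\mathcal F_{2D+r}^A$. The hypothesis $2D+r<d_{\mathrm{rel}}$ (with $r\ge0$) then puts both products strictly below the relative distance.

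The steps, in order, are as follows.

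First, use multiplicativity and the $*$-property of $\pi$ to write $\pi(X_A)^*\pi(X_A)=\pi(X_A^*X_A)$ and $\pi(X_A)^*\pi(O_A)\pi(X_A)=\pi(X_A^*O_AX_A)$.

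Second, apply trace matching on $\mathcal F_{2D+r}^A\supseteq\mathcal F_{2D}^A$ to obtain
\[
\tau_B(\pi(X_A)^*\pi(X_A))=\tau_A(X_A^*X_A)
\quad\text{and}\quad
\tau_B(\pi(X_A)^*\pi(O_A)\pi(X_A))=\tau_A(X_A^*O_AX_A).
\]
The first identity is also available directly from Proposition~\ref{thm:mf-degree-distance}, since $2D<d_{\mathrm{rel}}$.

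Third, prove nonvanishing. The physical trace $\tau_A$ is faithful on $\mathcal A\subseteq\mathcal B(\mathcal H_A)$, so $X_A\ne0$ gives $\tau_A(X_A^*X_A)>0$. By the first identity, $\|\pi(X_A)\|_{2,B}>0$, hence $\pi(X_A)\ne0$, and the Born ratio in Eq.~\eqref{eq:value-born:details} is well defined.

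Fourth, divide the two identities to obtain Eq.~\eqref{eq:value-relative-mean:details}.

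For the consequence, I will take any basis $F_0,\dots,F_{R-1}$ of $\mathcal W_D\subseteq\mathcal F_D^A$ and apply the same computation entrywise. Each $F_i^*F_j$ lies in $\mathcal F_{2D}^A$ and each $F_i^*O_AF_j$ lies in $\mathcal F_{2D+r}^A$, so $G_\pi=G_A$ and $K_\pi=K_A$. These are the source matrices defined with $\tau_A$ in place of $\tau_\pi$; this is Eq.~\eqref{eq:value-trace-matching} with $\lambda=1$. Moreover $G_A$ is positive definite because $\tau_A$ is faithful, so Proposition~\ref{prop:value-ritz:details} applies with no kernel to remove.

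There is no substantial obstacle. The only point requiring care is the filtration bookkeeping: checking that $\mathcal F_s^A\mathcal F_t^A\subseteq\mathcal F_{s+t}^A$ and $(\mathcal F_t^A)^*=\mathcal F_t^A$, both of which follow directly from Eq.~\eqref{eq:mf-filtration}. A second, minor point is handling the degenerate case $r=0$, where $O_A$ is scalar; the argument is uniform and covers it.
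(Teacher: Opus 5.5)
Your proposal is correct and follows essentially the same route as the paper's proof. That proof uses the degree count $X_A^*X_A\in\mathcal F_{2D}^A$ and $X_A^*O_AX_A\in\mathcal F_{2D+r}^A$, trace matching via Eq.~\eqref{eq:mf-trace-distance-characterization}, positivity of the source denominator to conclude $\pi(X_A)\ne0$, and the same argument applied entrywise to $F_i^*F_j$ and $F_i^*O_AF_j$; your extra checks (closure of the filtration under products and adjoints, and positive definiteness of the source Gram matrix for a basis of $\mathcal W_D$) only make explicit what the paper leaves implicit.
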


\begin{proof}
The products $X_A^*X_A$ and $X_A^*O_AX_A$ have marked degrees at most
$2D$ and $2D+r$.  Eq.~\eqref{eq:mf-trace-distance-characterization} therefore
identifies their source and target traces.  The source denominator is
positive, so its target image is nonzero.  The same argument applied to
$F_i^*F_j$ and $F_i^*O_AF_j$ identifies the corresponding Gram and
objective matrices.
\end{proof}

\paragraph{Hamiltonian-value optimization.}
The central energy-optimization problem is obtained by taking
\begin{equation}
 O_A=H_A,
 \qquad
 O_B=H_B=\pi(H_A).
\end{equation}
Because the indexed Hamiltonian terms are the marked generators,
$H_A\in\mathcal F_1^A$ and therefore the observable degree in
Theorem~\ref{thm:value-relative-expectation:details} is $r=1$.  Hence a
marked-degree-$D$ guide has a source-computable target energy whenever
\begin{equation}
 2D+1<d_{\mathrm{rel}}.
 \label{eq:value-energy-window:details}
\end{equation}
The full-word matrix $M_D^A$ and the Jacobi matrix $J_D$ introduced below
are the two corresponding energy-optimization problems over,
respectively, all of $\mathcal F_D^A$ and the smaller polynomial guide
space.

\subsection{Two bounded-degree optimization spaces}
\label{app:optimization-spaces}

Theorem~\ref{thm:value-relative-expectation:details} applies to any chosen
subspace of $\mathcal F_D^A$.  We now distinguish two natural choices.
The first is the full word space $\mathcal F_D^A$; the second is the
smaller polynomial, or Krylov, space
$\operatorname{span}\{I,H_A,\ldots,H_A^D\}$.

\paragraph{The full degree-$D$ word space.} Here, without loss of generality, we use the spin Pauli case for illustration. For $L\in\{A,B\}$, define the ordered words
\begin{equation}
 W_A(c)=a_1^{c_1}\cdots a_s^{c_s},
 \qquad
 W_B(c)=b_1^{c_1}\cdots b_s^{c_s}.
\end{equation}
Let $\mathcal R_D\subseteq\mathbb F_2^s$ contain one representative of
weight at most $D$ for each source coset in the radius-$D$ quotient
ball.  Then
\begin{equation}
 \{W_A(c):c\in\mathcal R_D\}
\end{equation}
is an orthonormal basis of $\mathcal F_D^A$.  The homomorphism sends $W_A(c)$ to $W_B(c)$.

Let
\begin{equation}
 \Sigma=A^{\mathsf T}J_{n_A}A=B^{\mathsf T}J_{n_B}B
\end{equation}
and define
\begin{equation}
 \kappa(c,d):=(-1)^{\sum_{i>j}c_i d_j\Sigma_{ij}},
 \qquad
 t_L(z):=
 \begin{cases}
  \zeta_L(z),&z\in K_L,\\
  0,&z\notin K_L.
 \end{cases}
 \label{eq:value-word-phases}
\end{equation}
Then
\begin{equation}
 W_L(c)W_L(d)=\kappa(c,d)W_L(c+d),
 \qquad
 W_L(c)^*=\kappa(c,c)W_L(c),
 \qquad
 \tau_L(W_L(z))=t_L(z),
 \label{eq:value-word-products}
\end{equation}
with all additions in $\mathbb F_2^s$.

For the degree-one Hamiltonians
\begin{equation}
 H_A=\sum_{i=1}^s h_i a_i,
 \qquad
 H_B=\sum_{i=1}^s h_i b_i,
 \qquad
 h_i\in\mathbb R,
\end{equation}
define matrices indexed by $c,d\in\mathcal R_D$:
\begin{equation}
 (G_D^L)_{cd}
 :=\kappa(c,c)\kappa(c,d)t_L(c+d),
 \label{eq:value-relative-gram}
\end{equation}
\begin{equation}
 (M_D^L)_{cd}
 :=\sum_{i=1}^s h_i\,
 \kappa(c,c)\kappa(c,e_i)\kappa(c+e_i,d)
 t_L(c+e_i+d).
 \label{eq:value-relative-score-matrix}
\end{equation}
These formulas are valid without a distance assumption and satisfy
\begin{equation}
 (G_D^L)_{cd}=\tau_L(W_L(c)^*W_L(d)),
 \qquad
 (M_D^L)_{cd}=\tau_L(W_L(c)^*H_LW_L(d)).
 \label{eq:value-relative-matrix-meaning:details}
\end{equation}
In particular, both matrices are Hermitian.

For
\begin{equation}
 X_A(z)=\sum_{c\in\mathcal R_D}z_cW_A(c),
\end{equation}
Proposition~\ref{prop:value-ritz:details} gives the exact target value
\begin{equation}
 \mathcal E_B(H_B;X_A(z))
 =\frac{z^*M_D^Bz}{z^*G_D^Bz},
 \label{eq:value-relative-exact-quotient}
\end{equation}
whenever $z^*G_D^Bz>0$.  By construction $G_D^A=I$, and
Theorem~\ref{thm:value-relative-expectation:details} gives
\begin{equation}
 2D<d_{\mathrm{rel}}\Longrightarrow G_D^B=I,
 \qquad
 2D+1<d_{\mathrm{rel}}\Longrightarrow M_D^B=M_D^A.
 \label{eq:value-relative-matrix-windows:details}
\end{equation}
Consequently,
\begin{equation}
 \max_{0\ne X_A\in\mathcal F_D^A}
 \mathcal E_B(H_B;X_A)
 =\lambda_{\max}(M_D^A),
 \qquad
 2D+1<d_{\mathrm{rel}}.
 \label{eq:value-full-relative-optimum:details}
\end{equation}
For minimization use $\lambda_{\min}(M_D^A)$.  The optimizing
eigenvector gives the coefficients of the source guide in the ordered-word
basis.  If only $2D<d_{\mathrm{rel}}$ holds, the guide still transfers,
and the target optimum is determined by $M_D^B$.

\paragraph{Polynomial guides and the Jacobi matrix.} This part applies to any realization. A smaller variational family is obtained by restricting to
\begin{equation}
 X_A=p(H_A),
 \qquad
 \deg p\le D.
\end{equation}
Without a distance assumption, the exact target value is described by the
pulled-back spectral weights
\begin{equation}
 \nu_\pi(e):=\tau_\pi(P_e^A),
 \qquad
 H_A=\sum_e eP_e^A,
\end{equation}
namely
\begin{equation}
 \mathcal E_B(H_B;p(H_A))
 =\frac{\sum_e\nu_\pi(e)e|p(e)|^2}
        {\sum_e\nu_\pi(e)|p(e)|^2}.
 \label{eq:value-general-spectral}
\end{equation}
Equivalently, in the monomial family $1,H_A,\ldots,H_A^D$, the exact
target moment matrices are
\begin{equation}
 (G_\pi)_{ij}=m_{i+j}^\pi,
 \qquad
 (K_\pi)_{ij}=m_{i+j+1}^\pi,
 \qquad
 m_k^\pi:=\tau_\pi(H_A^k)=\tau_B(H_B^k).
 \label{eq:value-target-moment-matrices}
\end{equation}
Relative distance determines when these target moments may be replaced by
source moments.

This polynomial family is the Krylov subspace~\cite{GolubMeurant2010}
\begin{equation}
 \mathcal K_D(H_A)
 :=\operatorname{span}\{I,H_A,\ldots,H_A^D\}
 \subseteq\mathcal F_D^A,
\end{equation}
which has dimension at most $D+1$ even when $\mathcal F_D^A$ is much
larger.  Let
\begin{equation}
 H_A=\sum_e eP_e^A,
 \qquad
 \mu_A(e):=\tau_A(P_e^A),
 \qquad
 m_k:=\tau_A(H_A^k)=\sum_e\mu_A(e)e^k.
 \label{eq:value-source-moments}
\end{equation}
For $p(x)=\sum_{j=0}^D c_jx^j$, define
\begin{equation}
 (G_D^{\mathrm{poly}})_{ij}=m_{i+j},
 \qquad
 (K_D^{\mathrm{poly}})_{ij}=m_{i+j+1},
 \qquad
 0\le i,j\le D.
 \label{eq:value-polynomial-matrices}
\end{equation}
Theorem~\ref{thm:value-relative-expectation:details} gives
\begin{equation}
 \mathcal E_B(H_B;p(H_A))
 =\frac{\sum_e\mu_A(e)e|p(e)|^2}
        {\sum_e\mu_A(e)|p(e)|^2}
 =\frac{c^*K_D^{\mathrm{poly}}c}
        {c^*G_D^{\mathrm{poly}}c},
 \qquad
 2D+1<d_{\mathrm{rel}}.
 \label{eq:value-polynomial-rayleigh:details}
\end{equation}
Thus only source moments through order $2D+1$ enter the polynomial
optimization.

To turn the polynomial-guide optimization into an ordinary Hermitian
eigenvalue problem, consider the space of real polynomials of degree at
most $D$, viewed as functions on the source spectrum:
\begin{equation}
 \mathcal P_D
 :=
 \operatorname{span}\{1,x,\ldots,x^D\}
 \subseteq L^2(\mu_A).
\end{equation}
Let
\begin{equation}
 N_{\mathrm{lev}}
 :=
 |\operatorname{supp}\mu_A|
 =
 |\operatorname{spec}(H_A)|
\end{equation}
be the number of distinct source energy levels.  Since a function on
these $N_{\mathrm{lev}}$ spectral points has at most
$N_{\mathrm{lev}}$ independent values,
\begin{equation}
 r_D:=\dim\mathcal P_D
 =
 \min\{D+1,N_{\mathrm{lev}}\}.
\end{equation}

Apply Gram--Schmidt~\cite{GolubMeurant2010} to
\begin{equation}
 1,x,\ldots,x^D
\end{equation}
in the inner product
\begin{equation}
 \langle f,g\rangle_{L^2(\mu_A)}
 :=
 \int f(x)g(x)\,d\mu_A(x).
\end{equation}
After removing any zero-norm directions, this produces real
orthonormal polynomials
\begin{equation}
 \varphi_0,\varphi_1,\ldots,\varphi_{r_D-1}
\end{equation}
satisfying
\begin{equation}
 \deg\varphi_j=j,
 \qquad
 \int\varphi_i(x)\varphi_j(x)\,d\mu_A(x)=\delta_{ij}.
\end{equation}
The source spectral measure $\mu_A$ determines these polynomials up to
an overall sign at each degree.  Writing
\begin{equation}
 \varphi_j(x)
 =
 \kappa_jx^j+\text{lower-degree terms},
\end{equation}
we fix this sign ambiguity by requiring the leading coefficient
$\kappa_j$ to be positive.  Since $\mu_A$ is a probability measure,
$\varphi_0=1$.

Let $\mathsf M_x$ denote multiplication by the energy variable,
\begin{equation}
 (\mathsf M_xf)(x):=xf(x),
\end{equation}
and let $\Pi_D$ be the orthogonal projection onto $\mathcal P_D$.
The Jacobi matrix $J_D$ is the matrix of the compressed multiplication
operator
\begin{equation}
 \left.\Pi_D\mathsf M_x\Pi_D\right|_{\mathcal P_D}
\end{equation}
in the basis
$\{\varphi_0,\ldots,\varphi_{r_D-1}\}$:
\begin{equation}
 (J_D)_{ij}
 :=
 \int\varphi_i(x)x\varphi_j(x)\,d\mu_A(x),
 \qquad
 0\le i,j<r_D.
 \label{eq:value-jacobi-definition:details}
\end{equation}

To see the structure of this matrix, expand
\begin{equation}
 \Pi_D(x\varphi_j)
 =
 \sum_{i=0}^{r_D-1}
 \langle\varphi_i,x\varphi_j\rangle_{L^2(\mu_A)}
 \varphi_i.
\end{equation}
If $i\ge j+2$, then $x\varphi_j$ has degree at most $i-1$, so
orthogonality gives
\begin{equation}
 \langle\varphi_i,x\varphi_j\rangle_{L^2(\mu_A)}=0.
\end{equation}
If $i\le j-2$, then multiplication by the real variable $x$ is
self-adjoint in $L^2(\mu_A)$, and hence
\begin{equation}
 \langle\varphi_i,x\varphi_j\rangle_{L^2(\mu_A)}
 =
 \langle x\varphi_i,\varphi_j\rangle_{L^2(\mu_A)}
 =
 0,
\end{equation}
because $x\varphi_i$ has degree at most $j-1$.  Therefore only the
components with $i=j-1,j,j+1$ can remain.  Defining
\begin{equation}
 \alpha_j
 :=
 \int x\varphi_j(x)^2\,d\mu_A(x),
 \qquad
 \beta_{j+1}
 :=
 \int x\varphi_j(x)\varphi_{j+1}(x)\,d\mu_A(x),
\end{equation}
we obtain the three-term recurrence
\begin{equation}
 \Pi_D(x\varphi_j)
 =
 \beta_j\varphi_{j-1}
 +
 \alpha_j\varphi_j
 +
 \beta_{j+1}\varphi_{j+1},
\end{equation}
where terms with indices outside
$\{0,\ldots,r_D-1\}$ are omitted.  With the positive-leading-coefficient
convention,
\begin{equation}
 \beta_{j+1}
 =
 \frac{\kappa_j}{\kappa_{j+1}}
 >0
 \qquad
 (j+1<r_D).
\end{equation}

Equivalently,
\begin{equation}
 (J_D)_{ij}
 =
 \begin{cases}
  \alpha_j,&i=j,\\
  \beta_{j+1},&i=j+1,\\
  \beta_{i+1},&j=i+1,\\
  0,&|i-j|>1.
 \end{cases}
 \label{eq:value-jacobi-entries}
\end{equation}
Thus $J_D$ is real, symmetric, and tridiagonal.

The degree cutoff $D$ determines both the admissible polynomial-guide
space and the size of the Jacobi matrix.  If
$D<N_{\mathrm{lev}}$, then $J_D$ has size $D+1$.  Once
$D\ge N_{\mathrm{lev}}-1$, the space saturates at dimension
$N_{\mathrm{lev}}$, because higher-degree polynomials introduce no new
functions on the finite spectrum of $H_A$.

\begin{proposition}[Optimal polynomial-guide value]
\label{prop:value-polynomial-optimum:details}
If $H_A$ has marked degree one and $2D+1<d_{\mathrm{rel}}$, then
\begin{equation}
 \max_{\substack{\deg p\le D\\\tau_A(|p(H_A)|^2)>0}}
 \mathcal E_B(H_B;p(H_A))
 =\lambda_{\max}(J_D).
 \label{eq:value-jacobi-optimum:details}
\end{equation}
If $z$ is a unit top eigenvector, then
$p_\star=\sum_{j=0}^{r_D-1}z_j\varphi_j$ attains the optimum.  For energy
minimization use the bottom eigenpair.
\end{proposition}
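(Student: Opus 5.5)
The plan is to reduce the statement to the Rayleigh quotient of Eq.~\eqref{eq:value-polynomial-rayleigh:details}, which is already available from Theorem~\ref{thm:value-relative-expectation:details}. Since $H_A\in\mathcal F_1^A$, every $p(H_A)$ with $\deg p\le D$ lies in $\mathcal K_D(H_A)\subseteq\mathcal F_D^A$. Under $2D+1<d_{\mathrm{rel}}$, that theorem (with $r=1$) therefore does two things. First, it guarantees $\pi(p(H_A))\ne0$ whenever $\tau_A(|p(H_A)|^2)>0$, so the feasible set in the maximum is exactly the set of such $p$. Second, it gives
\[
\mathcal E_B(H_B;p(H_A))=\frac{\int x|p(x)|^2\,d\mu_A(x)}{\int|p(x)|^2\,d\mu_A(x)},
\]
where the spectral form follows from $p(H_A)^*H_Ap(H_A)=\sum_e e|p(e)|^2P_e^A$ and $\tau_A(P_e^A)=\mu_A(e)$. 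The whole problem is thus a Rayleigh quotient of the multiplication operator $\mathsf M_x$ on the finite-dimensional space $\mathcal P_D\subseteq L^2(\mu_A)$, now allowing complex coefficients.

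Next, I will pass to the orthonormal basis $\varphi_0,\ldots,\varphi_{r_D-1}$. Every polynomial of degree at most $D$ agrees $\mu_A$-almost everywhere with a unique $\sum_j z_j\varphi_j$, $z\in\mathbb C^{r_D}$. The condition $\tau_A(|p(H_A)|^2)>0$ becomes $z\ne0$, because the norm equals $\|z\|^2$ by orthonormality. Polynomials that vanish on $\operatorname{spec}(H_A)$ give the zero operator $p(H_A)=0$, so they are excluded and do not affect the value. Conversely, each $z\ne0$ is realized by an honest polynomial of degree at most $D$, namely $\sum_j z_j\varphi_j$, since $\deg\varphi_j=j\le D$.

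In these coordinates the numerator is $\sum_{i,j}\bar z_i z_j\int\varphi_i x\varphi_j\,d\mu_A=z^*J_Dz$, by the definition in Eq.~\eqref{eq:value-jacobi-definition:details}, and the denominator is $z^*z$. Because $J_D$ is real symmetric and hence Hermitian, the Rayleigh--Ritz principle gives that the maximum over $z\ne0$ of $z^*J_Dz/z^*z$ is $\lambda_{\max}(J_D)$. This value is attained at a unit top eigenvector $z$, which identifies $p_\star=\sum_j z_j\varphi_j$ as an optimizer. For minimization, the same argument applied to $\lambda_{\min}(J_D)$ handles the bottom eigenpair.

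The only delicate step is making sure the parametrization by $z$ is a bijection onto the admissible guides modulo operators that are zero. This is where the reduction to $r_D=\min\{D+1,N_{\mathrm{lev}}\}$ and the removal of zero-norm Gram--Schmidt directions matter: two polynomials that agree on the spectrum give the same $p(H_A)$, and hence the same state and the same value. Everything else is routine linear algebra.
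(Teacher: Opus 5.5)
Your proposal is correct and follows the paper's proof essentially step for step: apply Theorem~\ref{thm:value-relative-expectation:details} with $X_A=p(H_A)$, $O_A=H_A$, $r=1$ to reduce to the source Rayleigh quotient in $L^2(\mu_A)$, expand $p=\sum_j z_j\varphi_j$ to obtain $z^*J_Dz/z^*z$, and conclude by Rayleigh--Ritz. Your extra remarks make explicit what the paper's ``conversely, every $z$ defines a polynomial'' step leaves implicit, namely that the target image is nonzero and that $z\in\mathbb C^{r_D}\setminus\{0\}$ parametrizes exactly the admissible guides modulo polynomials with $p(H_A)=0$.
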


\begin{proof}
Since $H_A\in\mathcal F_1^A$ and $\deg p\le D$, one has
$p(H_A)\in\mathcal F_D^A$.  The condition
$2D+1<d_{\mathrm{rel}}$ therefore allows
Theorem~\ref{thm:value-relative-expectation:details} to be applied with
$X_A=p(H_A)$ and $O_A=H_A$.  Hence
\[
 \mathcal E_B(H_B;p(H_A))
 =
 \frac{\int x|p(x)|^2\,d\mu_A(x)}
      {\int |p(x)|^2\,d\mu_A(x)}.
\]

Expand $p$ in the orthonormal polynomial basis:
\[
 p(x)=\sum_{j=0}^{r_D-1}z_j\varphi_j(x).
\]
Since the $\varphi_j$ are orthonormal in $L^2(\mu_A)$,
\[
 \int |p(x)|^2\,d\mu_A(x)
 =
 \sum_{j=0}^{r_D-1}|z_j|^2
 =
 z^*z.
\]
Moreover, by the definition of $J_D$,
\[
\begin{aligned}
 \int x|p(x)|^2\,d\mu_A(x)
 &=
 \sum_{i,j=0}^{r_D-1}
 \overline{z_i}z_j
 \int\varphi_i(x)x\varphi_j(x)\,d\mu_A(x)\\
 &=
 z^*J_Dz.
\end{aligned}
\]
Therefore every nonzero admissible polynomial satisfies
\[
 \mathcal E_B(H_B;p(H_A))
 =
 \frac{z^*J_Dz}{z^*z}.
\]
Conversely, every $z\in\mathbb C^{r_D}$ defines a polynomial of degree at
most $D$ in this basis.  The Rayleigh--Ritz principle now gives
\[
 \max_{z\ne0}\frac{z^*J_Dz}{z^*z}
 =
 \lambda_{\max}(J_D),
\]
which proves Eq.~\eqref{eq:value-jacobi-optimum:details}.

If $z$ is a unit eigenvector of $J_D$ with eigenvalue
$\lambda_{\max}(J_D)$, then
\[
 p_\star(x)=\sum_{j=0}^{r_D-1}z_j\varphi_j(x)
\]
attains the maximum.  Replacing the top eigenpair by the bottom
eigenpair gives the corresponding minimum.
\end{proof}

The two variational problems are different:
\begin{equation}
 \lambda_{\max}(J_D)
 \le\lambda_{\max}(M_D^A)
 \le\lambda_{\max}(H_B),
 \qquad
 2D+1<d_{\mathrm{rel}}.
 \label{eq:value-guide-space-comparison}
\end{equation}
The Jacobi matrix optimizes only polynomial functions of $H_A$; the full
matrix $M_D^A$ optimizes every guide in $\mathcal F_D^A$.

\paragraph{Computability of the Jacobi matrix.}
A sufficient condition for computing $J_D$ efficiently is access, to the
required precision, to the source moments
\begin{equation}
 m_k=\tau_A(H_A^k),
 \qquad
 0\le k\le 2D+1.
\end{equation}
From these moments one can construct the orthonormal polynomials and
their recurrence coefficients by
\begin{equation}
 \alpha_j=\langle\varphi_j,x\varphi_j\rangle_{L^2(\mu_A)},
\end{equation}
\begin{equation}
 r_{j+1}
 =(x-\alpha_j)\varphi_j-\beta_j\varphi_{j-1},
 \qquad
 \beta_{j+1}=\|r_{j+1}\|_{L^2(\mu_A)},
 \qquad
 \varphi_{j+1}=r_{j+1}/\beta_{j+1}.
\end{equation}
Thus, if $D$ is polynomially bounded, the moments through order
$2D+1$ are efficiently computable, and the precision needed for stable
orthogonalization is polynomially bounded, then $J_D$ and its extremal
eigenvalues are computable in polynomial time. The eigenvalue problem
also yields optimizing guides; recovering a particular eigenvector to a
prescribed accuracy requires the corresponding spectral conditioning.
Classical moment access is an additional requirement beyond efficient
source-state preparation.

For an independent-block source Hamiltonian, global moments are efficiently computable whenever the required individual block moments are efficiently computable. Suppose
\begin{equation}
 \mathcal H_A
 =
 \bigotimes_{j=1}^{N_{\mathrm{blk}}}\mathcal H_j,
 \qquad
 H_A
 =
 \sum_{j=1}^{N_{\mathrm{blk}}}h_j,
\end{equation}
where $h_j$ acts nontrivially only on block $j$ with product normalized trace
$\tau_A=\bigotimes_{j=1}^{N_{\mathrm{blk}}}\tau_j$, its moments are
\begin{equation}
\begin{aligned}
 m_k
 :=\tau_A(H_A^k)
 &=
 \tau_A\!\left[
 \left(\sum_{j=1}^{N_{\mathrm{blk}}}h_j\right)^k
 \right]
 \\
 &=
 \sum_{\substack{t_1,\ldots,t_{N_{\mathrm{blk}}}\ge0\\
                  t_1+\cdots+t_{N_{\mathrm{blk}}}=k}}
 \frac{k!}{t_1!\cdots t_{N_{\mathrm{blk}}}!}
 \tau_A\!\left(
 \prod_{j=1}^{N_{\mathrm{blk}}}h_j^{t_j}
 \right)
 \\
 &=
 \sum_{\substack{t_1,\ldots,t_{N_{\mathrm{blk}}}\ge0\\
                  t_1+\cdots+t_{N_{\mathrm{blk}}}=k}}
 \frac{k!}{t_1!\cdots t_{N_{\mathrm{blk}}}!}
 \prod_{j=1}^{N_{\mathrm{blk}}}\tau_j(h_j^{t_j})
 \\
 &=
 k!\,[z^k]
 \sum_{t_1,\ldots,t_{N_{\mathrm{blk}}}=0}^{k}
 \left(
 \prod_{j=1}^{N_{\mathrm{blk}}}
 \frac{\tau_j(h_j^{t_j})}{t_j!}
 \right)
 z^{t_1+\cdots+t_{N_{\mathrm{blk}}}}
 \\
 &=
 k!\,[z^k]
 \prod_{j=1}^{N_{\mathrm{blk}}}
 \left(
  \sum_{t=0}^{k}
  \frac{\tau_j(h_j^t)}{t!}z^t
 \right).
 \label{eq:value-independent-block-moments}
\end{aligned}
\end{equation}
Here $[z^k]F(z)$ denotes the coefficient of $z^k$ in the formal
polynomial $F(z)$.  Expanding the product shows that this operation
selects the tuples $(t_1,\ldots,t_{N_{\mathrm{blk}}})$ with
$t_1+\cdots+t_{N_{\mathrm{blk}}}=k$.

\subsection{Low-degree expectations versus sampling and state preparation}
\label{subsec:value-observable-limitation:details}
We prove the source-only evaluation and stability results used in
Section~\ref{subsec:value-observable-limitation}.

\begin{corollary}[Source-only evaluation of target observables]
\label{cor:value-source-only-observables}
Let $X_A=q(a)\ne0$ have marked degree at most $D$, set
$X_B=\pi(X_A)$, and define
\begin{equation}
 \rho_L^q:=\frac{X_LX_L^*}{\operatorname{Tr}(X_LX_L^*)},
 \qquad
 L=A,B.
\end{equation}
Let $O_A=O_A^*\in\mathcal F_r^A$ be a given lift of
$O_B=\pi(O_A)$, and assume $2D+r<d_{\mathrm{rel}}$.  Then
\begin{equation}
 \operatorname{Tr}(O_B\rho_B^q)
 =\operatorname{Tr}(O_A\rho_A^q)
 =\langle\psi_A^q|(O_A\otimes I)|\psi_A^q\rangle.
 \label{eq:value-source-only-observables}
\end{equation}
Consequently, any source algorithm estimating the normalized source
expectation to additive error $\epsilon$ and failure probability at most
$\delta$ also estimates the target expectation with the same guarantees.
No source-to-target label map, relative decoder, or inverse target
operator--Fourier transform is needed after the lift is known.
\end{corollary}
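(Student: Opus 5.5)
The plan is to reduce the corollary to Theorem~\ref{thm:value-relative-expectation:details}, applied in the ``left'' ordering that appears in $\rho_L^q$. First I will rewrite both traces so that they use the same normalized trace on each side. Since $X_L X_L^*$ appears in $\rho_L^q$,
\[
\operatorname{Tr}(O_L\rho_L^q)=\frac{\tau_L(X_L^*O_LX_L)}{\tau_L(X_L^*X_L)}
\]
by cyclicity of the trace; the common factor $N_L$ cancels between numerator and denominator. The denominator is positive on the source side because $X_A\neq0$ and $\tau_A$ is faithful. For the target, $X_B^*X_B=\pi(X_A^*X_A)$ and $X_B^*O_BX_B=\pi(X_A^*O_AX_A)$, because $\pi$ is a unital $*$-homomorphism with $O_B=\pi(O_A)$.

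Next comes the degree count. $X_A^*X_A\in\mathcal F_{2D}^A$, because each word in $X_A$ has length at most $D$ and taking adjoints preserves length. Likewise $X_A^*O_AX_A\in\mathcal F_{2D+r}^A$. The hypothesis $2D+r<d_{\mathrm{rel}}$, together with $2D\le 2D+r$, lets me invoke the trace-matching characterization Eq.~\eqref{eq:mf-trace-distance-characterization} on both elements. This gives $\tau_B(X_B^*X_B)=\tau_A(X_A^*X_A)>0$, so $X_B\neq0$ and $\rho_B^q$ is well defined, and also $\tau_B(X_B^*O_BX_B)=\tau_A(X_A^*O_AX_A)$. Dividing yields the first equality in Eq.~\eqref{eq:value-source-only-observables}. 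This is exactly Theorem~\ref{thm:value-relative-expectation:details} with $X_A=q(a)$, so I would simply cite it.

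The second equality is a vectorization identity. I will use $|\psi_A^q\rangle=(X_A\otimes I)|\Phi_A\rangle/\|X_A\|_{2,A}$ and
\[
\langle\Phi_A|(Y\otimes I)|\Phi_A\rangle=\tau_A(Y).
\]
Together these give
\[
\langle\psi_A^q|(O_A\otimes I)|\psi_A^q\rangle=\frac{\tau_A(X_A^*O_AX_A)}{\tau_A(X_A^*X_A)},
\]
which matches the expression above after cyclicity. Equivalently, the physical marginal of $|\psi_A^q\rangle$ is $\rho_A^q$, as in Eq.~\eqref{eq:mf-marginal}, and $O_A\otimes I$ acts only on the physical factor.

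The estimation consequence is immediate: the target expectation is numerically equal to the source quantity, so any estimator with error $\epsilon$ and failure probability $\delta$ for the source value has the same guarantees for the target value. The steps $\mathsf F_B^{-1}$, $M$, and $\operatorname{Dec}_{A\leftarrow B}$ never enter the argument.

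The only step that needs care is the ordering mismatch. The corollary is stated with $X_LX_L^*$, while the distance window controls words of the form $X^*(\cdot)X$. This is where the main care goes, and it is handled by cyclicity of the finite-dimensional trace before applying the distance bound. I also need to confirm that $X_A^*O_AX_A$ lies in $\mathcal F_{2D+r}^A$; it does, because the filtration is closed under adjoints and $\mathcal F_s^A\mathcal F_t^A\subseteq\mathcal F_{s+t}^A$ by concatenating words.
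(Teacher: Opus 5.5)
Your proposal is correct and takes essentially the same route as the paper: cyclicity of the trace rewrites $\operatorname{Tr}(O_L\rho_L^q)$ as $\tau_L(X_L^*O_LX_L)/\tau_L(X_L^*X_L)$. Theorem~\ref{thm:value-relative-expectation:details}, via the degree count $2D$ and $2D+r$, then identifies the source and target quotients and gives $X_B\neq0$, and vectorization yields the second equality. Your explicit checks that the filtration is closed under adjoints and products, and your handling of the $X_LX_L^*$ versus $X_L^*(\cdot)X_L$ ordering, only spell out what the paper's three-line proof leaves implicit.
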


\begin{proof}
Theorem~\ref{thm:value-relative-expectation:details} identifies the normalized
source and target quotients.  Cyclicity of the physical traces gives the
first equality in Eq.~\eqref{eq:value-source-only-observables}, while
vectorization gives the second.
\end{proof}

For finite distance and $2D<d_{\mathrm{rel}}$, the uniform observable
budget is
\begin{equation}
 0\le r\le d_{\mathrm{rel}}-2D-1.
 \label{eq:value-source-observable-budget}
\end{equation}
In particular, the Hamiltonian expectation corresponds to $r=1$ and
requires $2D+1<d_{\mathrm{rel}}$.

The following proposition also shows the stability of the source-only evaluation.

\begin{proposition}[Stability of source-only expectation estimation]
\label{prop:value-source-estimation-stability}
Assume the hypotheses of
Corollary~\ref{cor:value-source-only-observables}.  Let
\begin{equation}
 \sigma_A^q
 :=
 |\psi_A^q\rangle\langle\psi_A^q|,
 \qquad
 \rho_A^q
 :=
 \operatorname{Tr}_{\mathrm{ref}}\sigma_A^q
\end{equation}
be the ideal full source operator state and its physical marginal.
Suppose the actually prepared full source state $\widehat\sigma_A$
satisfies
\begin{equation}
 \frac12
 \left\|
 \widehat\sigma_A-\sigma_A^q
 \right\|_1
 \le\eta_{\mathrm{src}},
 \label{eq:value-full-source-error}
\end{equation}
and define
\begin{equation}
 \widetilde\rho_A
 :=
 \operatorname{Tr}_{\mathrm{ref}}\widehat\sigma_A.
\end{equation}

Suppose a source-only estimator outputs $\widehat e$ such that,
\begin{equation}
 \left|
 \widehat e-\operatorname{Tr}(O_A\widetilde\rho_A)
 \right|
 \le\epsilon.
\end{equation}
Then,
\begin{equation}
 \left|
 \widehat e-\operatorname{Tr}(O_B\rho_B^q)
 \right|
 \le
 \epsilon+2\|O_A\|\eta_{\mathrm{src}}.
 \label{eq:value-source-estimation-stability}
\end{equation}
Thus estimating the ideal target expectation requires only preparation
and measurement of the source guide; the target state and the
source-to-target transduction circuit need not be implemented.
\end{proposition}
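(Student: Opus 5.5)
The estimate follows from a triangle inequality with two terms. The first is the estimator error, which the hypothesis bounds by $\epsilon$. The second is the gap between the expectation in the prepared marginal and the ideal target value:
\[
 \bigl|\widehat e-\Tr(O_B\rho_B^q)\bigr|
 \le
 \bigl|\widehat e-\Tr(O_A\widetilde\rho_A)\bigr|
 +\bigl|\Tr(O_A\widetilde\rho_A)-\Tr(O_B\rho_B^q)\bigr|.
\]
Corollary~\ref{cor:value-source-only-observables} applies under exactly the present hypotheses: $X_A=q(a)\ne0$, $O_A\in\mathcal F_r^A$ and $2D+r<d_{\mathrm{rel}}$. It gives $\Tr(O_B\rho_B^q)=\Tr(O_A\rho_A^q)$, so the second term becomes $|\Tr(O_A(\widetilde\rho_A-\rho_A^q))|$, a purely source-side quantity.

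Next, I will move from marginals to full states. Because $\rho_A^q=\Tr_{\mathrm{ref}}\sigma_A^q$ and $\widetilde\rho_A=\Tr_{\mathrm{ref}}\widehat\sigma_A$, we have
\[
 \Tr\bigl(O_A(\widetilde\rho_A-\rho_A^q)\bigr)
 =\Tr\bigl((O_A\otimes I)(\widehat\sigma_A-\sigma_A^q)\bigr).
\]
This can also be seen from monotonicity of trace distance under partial trace, but the identity is enough. Hölder's inequality $|\Tr(YZ)|\le\|Y\|\,\|Z\|_1$, together with $\|O_A\otimes I\|=\|O_A\|$, bounds this by $\|O_A\|\,\|\widehat\sigma_A-\sigma_A^q\|_1$. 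Assumption~\eqref{eq:value-full-source-error} gives $\|\widehat\sigma_A-\sigma_A^q\|_1\le2\eta_{\mathrm{src}}$, so the second term is at most $2\|O_A\|\eta_{\mathrm{src}}$.

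Combining the two terms yields Eq.~\eqref{eq:value-source-estimation-stability}. No step here is a real obstacle. The one point needing care is that the error hypothesis concerns the full purified state, while the estimator only sees the marginal, so the reduction through the partial trace must be stated explicitly. Hölder could be sharpened to the factor $\|O_A\|$ times the trace distance, or even to a spectral-width factor, but the stated constant $2$ is what the plain Hölder bound gives. The closing remark of the proposition, that neither the target state nor the transduction circuit needs to be implemented, follows because every quantity above is computed from $\widehat\sigma_A$ and $O_A$ alone.
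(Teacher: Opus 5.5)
Your proof is correct and is essentially the paper's own argument. It uses the same triangle inequality, the same appeal to Corollary~\ref{cor:value-source-only-observables}, and Hölder. The only difference is cosmetic: you apply Hölder to $(O_A\otimes I)(\widehat\sigma_A-\sigma_A^q)$ on the full space, whereas the paper first contracts the trace distance under $\operatorname{Tr}_{\mathrm{ref}}$ and then applies Hölder to the marginals.

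Your closing aside needs one correction. A bound of $\|O_A\|\eta_{\mathrm{src}}$ is false in general. Take $q=1$, $O_A=Z$, ideal state $|\Phi\rangle$, and prepared state $\cos\theta|\Phi\rangle+\sin\theta\,(Z\otimes I)|\Phi\rangle$. Then $\eta_{\mathrm{src}}=\sin\theta$, while the expectation shifts by $\sin 2\theta\approx 2\eta_{\mathrm{src}}$. So the factor $2$ is asymptotically tight. The only valid sharpening is the spectral-width bound $(\lambda_{\max}(O_A)-\lambda_{\min}(O_A))\,\eta_{\mathrm{src}}$, obtained by replacing $O_A$ with $O_A-cI$.
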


\begin{proof}
Contractivity of trace distance under partial trace~\cite{Watrous2018} gives
\[
 \frac12
 \left\|
 \widetilde\rho_A-\rho_A^q
 \right\|_1
 \le\eta_{\mathrm{src}}.
\]
By Corollary~\ref{cor:value-source-only-observables},
\[
 \operatorname{Tr}(O_A\rho_A^q)
 =
 \operatorname{Tr}(O_B\rho_B^q).
\]
Therefore, on the estimator's success event,
\begin{align*}
 \left|
 \widehat e-\operatorname{Tr}(O_B\rho_B^q)
 \right|
 &\le
 \left|
 \widehat e-\operatorname{Tr}(O_A\widetilde\rho_A)
 \right|\\
 &\quad+
 \left|
 \operatorname{Tr}\!\left[
 O_A(\widetilde\rho_A-\rho_A^q)
 \right]
 \right|\\
 &\quad+
 \left|
 \operatorname{Tr}(O_A\rho_A^q)
 -
 \operatorname{Tr}(O_B\rho_B^q)
 \right|.
\end{align*}
The first term is at most $\epsilon$, and the last term vanishes.
Hölder's inequality gives
\[
 \left|
 \operatorname{Tr}\!\left[
 O_A(\widetilde\rho_A-\rho_A^q)
 \right]
 \right|
 \le
 \|O_A\|
 \left\|
 \widetilde\rho_A-\rho_A^q
 \right\|_1
 \le
 2\|O_A\|\eta_{\mathrm{src}}.
\]
This proves Eq.~\eqref{eq:value-source-estimation-stability}.
\end{proof}

For completeness, the sampling bound in
Eq.~\eqref{eq:value-threshold-from-mean} follows from
$\mathbb E S\le c+(1-c)\Pr\{S>c\}$ for $S\in[0,1]$.

\section{Details of thermofield-double transduction}
\label{app:gibbs-details}
\label{sec:purified-gibbs-guides:details}

This appendix gives details for Section~\ref{sec:purified-gibbs-guides}.
We derive a common source and target spectral error bound, prove the
transfer-error estimate, and obtain an inverse-temperature window. Throughout, $H_A=H_A^*\in\mathcal A$ has
marked degree at most one and $H_B:=\pi(H_A)$.

\subsection{Spectral containment and a common error bound}
\label{subsec:gibbs-spectral-cover}
We first show why a source spectral error bound controls both
Hamiltonians. Let $H_A=H_A^*\in\mathcal A$ and
$H_B=\pi(H_A)$ for the represented homomorphism of
Section~\ref{sec:main-framework}. Spectral containment uses only the
unital $*$-homomorphism.

\begin{proposition}[Spectral containment and inherited filters]
\label{prop:mf-spectral-containment}
For every $X\in\mathcal A$,
\begin{equation}
 \operatorname{spec}(\pi(X))\subseteq\operatorname{spec}(X).
 \label{eq:mf-spectral-containment}
\end{equation}
In particular, if $H_A=H_A^*$ and $H_B=\pi(H_A)$, write
$H_A=\sum_{E\in\operatorname{spec}(H_A)}E P_E^A$. Then
\begin{equation}
 \begin{gathered}
 P_E^B:=\pi(P_E^A),\qquad
 H_B=\sum_{E\in\operatorname{spec}(H_A)}E P_E^B,\\
 \operatorname{spec}(H_B)
   =\{E\in\operatorname{spec}(H_A):P_E^B\ne0\}.
 \end{gathered}
 \label{eq:mf-spectral-survival}
\end{equation}
Here $P_E^B=0$ is allowed; otherwise it is the target spectral projection
at $E$. For every function $f:\operatorname{spec}(H_A)\to\mathbb C$,
\begin{equation}
 \pi(f(H_A))=f(H_B),
 \label{eq:mf-functional-calculus}
\end{equation}
where $f$ is restricted to the target spectrum on the right.
\end{proposition}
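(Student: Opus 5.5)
The proof goes in three steps: spectral containment via the resolvent, then survival of spectral projections, then functional calculus through Lagrange interpolation on the finite spectrum. Everything is finite dimensional, so spectra are finite sets and invertibility in $\mathcal A$ coincides with invertibility as a matrix. Inverse-closedness of finite-dimensional unital $*$-subalgebras gives this, or directly: the inverse of $X-\lambda I$ is a polynomial in $X$ by Cayley--Hamilton.

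For Eq.~\eqref{eq:mf-spectral-containment}, take $\lambda\notin\operatorname{spec}(X)$. Let $R\in\mathcal A$ be the inverse of $X-\lambda I_A$. Applying the unital homomorphism gives
\[
 (\pi(X)-\lambda I_B)\pi(R)=\pi(R)(\pi(X)-\lambda I_B)=I_B,
\]
so $\lambda\notin\operatorname{spec}(\pi(X))$. This is the argument of Proposition~\ref{prop:app-cstar-spectral-cover}, now without the self-adjointness hypothesis. The only point to check is that spectrum is taken in $\mathcal A$ and in $\mathcal B$. In finite dimensions both agree with the matrix spectra on $\mathcal H_A$ and $\mathcal H_B$, by the Cayley--Hamilton remark above.

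For Eq.~\eqref{eq:mf-spectral-survival}, each $P_E^A$ lies in $\mathcal A$, since it is a Lagrange-interpolation polynomial in $H_A$. Multiplicativity and the $*$-property make the operators $P_E^B=\pi(P_E^A)$ self-adjoint idempotents that are pairwise orthogonal. Unitality makes them sum to $I_B$, and linearity gives $H_B=\sum_E E P_E^B$. After discarding the zero projections, this is a resolution of the identity by orthogonal projections with distinct coefficients. By uniqueness of the spectral decomposition of a self-adjoint matrix, the nonzero $P_E^B$ are exactly the target spectral projections, and $\operatorname{spec}(H_B)$ is the set of $E$ with $P_E^B\ne0$.

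For Eq.~\eqref{eq:mf-functional-calculus}, write $f(H_A)=\sum_E f(E)P_E^A$ and apply $\pi$ to get $\sum_E f(E)P_E^B$. The terms with $P_E^B=0$ drop out, and what remains is by definition $f(H_B)$ with $f$ restricted to $\operatorname{spec}(H_B)$. Alternatively, choose a polynomial interpolating $f$ on $\operatorname{spec}(H_A)$ and use $\pi(p(H_A))=p(H_B)$. Both routes agree because $\operatorname{spec}(H_B)\subseteq\operatorname{spec}(H_A)$.

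None of the steps is a real obstacle. The only subtlety is the ambient algebra in which spectra are computed, and the polynomial-inverse observation settles it.
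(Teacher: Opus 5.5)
Your proof is correct and follows the paper's argument in Appendix~\ref{app:spectral-properties}: the resolvent argument gives containment, and applying $\pi$ to the spectral decomposition gives mutually orthogonal projections summing to $I_B$. Your extra remarks usefully fill in details the paper leaves implicit: the resolvent step never uses self-adjointness (the appendix states it only for $H=H^*$), $P_E^A\in\mathcal A$ by Lagrange interpolation, spectra in $\mathcal A$ and $\mathcal B$ agree with matrix spectra by Cayley--Hamilton, and you derive $\pi(f(H_A))=f(H_B)$ explicitly where the paper only cites the literature.
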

\noindent The proof is given in Appendix~\ref{app:spectral-properties}.

Thus the target can lose energies, but cannot acquire an energy absent
from the source. In particular,
\begin{equation}
 \lambda_{\min}(H_A)\le\lambda_{\min}(H_B)
 \le\lambda_{\max}(H_B)\le\lambda_{\max}(H_A).
 \label{eq:mf-spectral-edge-bounds}
\end{equation}

\paragraph{Passing a uniform approximation error to the target.}
For inverse temperature $\beta\ge0$ and a real polynomial $Q$, spectral
containment and functional calculus imply
\begin{equation}
 \begin{aligned}
 \bigl\|e^{\beta H_B/2}Q(H_B)-I_B\bigr\|
 &=\max_{E\in\operatorname{spec}(H_B)}
           \bigl|e^{\beta E/2}Q(E)-1\bigr|\\
 &\le\max_{E\in\operatorname{spec}(H_A)}
           \bigl|e^{\beta E/2}Q(E)-1\bigr|
 =\bigl\|e^{\beta H_A/2}Q(H_A)-I_A\bigr\|.
 \end{aligned}
 \label{eq:gibbs-inherited-relative-error:details}
\end{equation}
Thus a uniform relative-amplitude bound on an interval containing the
source spectrum automatically holds on the target spectrum. This is
why one polynomial approximation controls both errors in
Proposition~\ref{prop:gibbs-certificate:details}.

\subsection{Transduction of the purified Gibbs state}
\label{subsec:gibbs-interface:details}

Let $H_A=H_A^*\in\mathcal A$ and $H_B:=\pi(H_A)$, with $H_A$ of marked degree at most one. For $L\in\{A,B\}$
and inverse temperature $\beta\ge0$, define
\begin{equation}
 Z_L(\beta):=\operatorname{Tr}(e^{-\beta H_L}),\qquad
 |\mathrm{TFD}_L(\beta)\rangle
 :=\frac{(e^{-\beta H_L/2}\otimes I)|\Phi_L\rangle}
 {\sqrt{\tau_L(e^{-\beta H_L})}}.
 \label{eq:gibbs-tfd:details}
\end{equation}
Here $|\Phi_L\rangle$ and $\tau_L$ use the physical representation
and fixed reference basis from the framework. In an eigenbasis,
\begin{equation}
 |\mathrm{TFD}_L(\beta)\rangle
 =\frac1{\sqrt{Z_L(\beta)}}
 \sum_j e^{-\beta E_j^L/2}
 |E_j^L\rangle|\overline{E_j^L}\rangle,
 \qquad
 \rho_{L,\beta}:=
 \operatorname{Tr}_{\mathrm{ref}}\!
 \bigl(|\mathrm{TFD}_L(\beta)\rangle
       \langle\mathrm{TFD}_L(\beta)|\bigr)
 =\frac{e^{-\beta H_L}}{Z_L(\beta)}.
 \label{eq:gibbs-marginal}
\end{equation}
We require efficient coherent preparation of this canonical purification
to the accuracy in Proposition~\ref{prop:gibbs-certificate:details}, using
the reference convention of Eq.~\eqref{eq:gibbs-tfd:details}.

Functional calculus gives
$\pi(e^{-\beta H_A/2})=e^{-\beta H_B/2}$. To implement an approximate
transfer, fix a degree $D$ satisfying the compatibility and efficient
circuit assumptions of Theorem~\ref{thm:mf-polynomial-transduction},
with $2D<d_{\mathrm{rel}}$. Denote its fixed unitary realization on a
common enlarged input/output space by $U_D$. Its action outside the
degree-$D$ subspace is arbitrary but fixed.
With the input and output embeddings understood, it obeys
\begin{equation}
 U_D\bigl(|\psi_A^Q\rangle|0\rangle_J\bigr)
 =|\psi_B^Q\rangle|0\rangle_J,
 \qquad
 |\psi_L^Q\rangle:=
 \frac{(Q(H_L)\otimes I)|\Phi_L\rangle}
 {\sqrt{\tau_L(Q(H_L)^2)}},
 \label{eq:gibbs-polynomial-map:details}
\end{equation}
for real polynomials $Q$ of degree at most $D$ with nonzero target
image.

\begin{proposition}[Gibbs transduction from a polynomial certificate]
\label{prop:gibbs-certificate:details}
Suppose a real polynomial $Q$ of degree at most $D$ satisfies
\begin{equation}
 \sup_{x\in I}
 \bigl|e^{\beta x/2}Q(x)-1\bigr|\le\eta<1,
 \label{eq:gibbs-relative-approximation:details}
\end{equation}
where $I$ contains $\operatorname{spec}(H_A)$ and therefore also
$\operatorname{spec}(H_B)$ by
Proposition~\ref{prop:mf-spectral-containment}. Let the prepared source
state $\widehat\rho_A$ satisfy
\begin{equation}
 \frac12\bigl\|\widehat\rho_A-
 |\mathrm{TFD}_A(\beta)\rangle
 \langle\mathrm{TFD}_A(\beta)|\bigr\|_1
 \le\delta_{\mathrm{src}}.
\end{equation}
Then the output
\begin{equation}
 \widehat\omega_B:=U_D\bigl(\widehat\rho_A\otimes
 |0\rangle\langle0|_J\bigr)U_D^*
\end{equation}
obeys
\begin{equation}
 \frac12\bigl\|\widehat\omega_B-
 |\mathrm{TFD}_B(\beta),0\rangle
 \langle\mathrm{TFD}_B(\beta),0|\bigr\|_1
 \le\delta_{\mathrm{src}}+2\eta.
 \label{eq:gibbs-transfer-error:details}
\end{equation}
\end{proposition}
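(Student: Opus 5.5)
The plan is a triangle inequality through the ideal polynomial guide $|\psi_A^Q\rangle$. Since $U_D$ is unitary and trace distance is unitarily invariant, we have
\[
 \tfrac12\bigl\|\widehat\omega_B-|\mathrm{TFD}_B,0\rangle\langle\mathrm{TFD}_B,0|\bigr\|_1
 \le \tfrac12\bigl\|\widehat\rho_A-|\mathrm{TFD}_A\rangle\langle\mathrm{TFD}_A|\bigr\|_1
 +\tfrac12\bigl\|U_D(|\mathrm{TFD}_A,0\rangle\langle\mathrm{TFD}_A,0|)U_D^*-|\mathrm{TFD}_B,0\rangle\langle\mathrm{TFD}_B,0|\bigr\|_1 .
\]
The first term is at most $\delta_{\mathrm{src}}$. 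For the second term, insert $U_D|\psi_A^Q,0\rangle=|\psi_B^Q,0\rangle$ from Eq.~\eqref{eq:gibbs-polynomial-map:details}. By the triangle inequality and unitary invariance, it is bounded by $T(\mathrm{TFD}_A,\psi_A^Q)+T(\psi_B^Q,\mathrm{TFD}_B)$, where $T$ denotes pure-state trace distance $\sqrt{1-|\langle\cdot|\cdot\rangle|^2}$. It therefore suffices to show that $T(\mathrm{TFD}_L,\psi_L^Q)\le\eta$ for $L=A,B$. The common bound for both $L$ comes from Eq.~\eqref{eq:gibbs-inherited-relative-error:details}, which transfers the relative-amplitude estimate to $\operatorname{spec}(H_B)$. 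This also ensures $Q(H_B)\ne0$, since $|e^{\beta E/2}Q(E)|\ge1-\eta>0$ on the target spectrum, so Eq.~\eqref{eq:gibbs-polynomial-map:details} applies.

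The key step is the per-system overlap estimate. In an eigenbasis of $H_L$, both states are diagonal Schmidt vectors with respective amplitudes proportional to $g_j:=e^{-\beta E_j/2}$ and $g_j r_j$, where $r_j:=e^{\beta E_j/2}Q(E_j)\in[1-\eta,1+\eta]$. The overlap is then the weighted ratio
\[
 \frac{\sum_j g_j^2 r_j}{\sqrt{\sum_j g_j^2}\sqrt{\sum_j g_j^2 r_j^2}}
 =\frac{\mathbb E[r]}{\sqrt{\mathbb E[r^2]}},
\]
with expectation under the Gibbs weights $g_j^2/Z_L$. Writing $\mathbb E[r^2]=\mathbb E[r]^2+\operatorname{Var}(r)$ gives $1-|\langle\cdot\rangle|^2=\operatorname{Var}(r)/\mathbb E[r^2]$. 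Since $r$ ranges over an interval of half-width $\eta$ around $1$, we have $\operatorname{Var}(r)\le\eta^2$ and $\mathbb E[r^2]\ge(1-\eta)^2$. This yields $T\le\eta/(1-\eta)$, which is not quite $\eta$. A sharper bound uses $\operatorname{Var}(r)\le\mathbb E[(r-1)^2]\le\eta^2$ together with $\mathbb E[r^2]\ge1-\ldots$; if that still falls short, the fallback is the cruder vector-norm route. Normalize so that $\|v\|=1$ for $v=\mathrm{TFD}$ and $w=R v$ with $R$ diagonal, $\|R-I\|\le\eta$. Then $\|w-v\|\le\eta$ and $\bigl|\|w\|-1\bigr|\le\eta$, so $\|w/\|w\|-v\|\le2\eta$. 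Since trace distance is at most the vector distance, this gives $T\le2\eta$, and then the total is $\delta_{\mathrm{src}}+4\eta$.

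The main obstacle is matching the exact constant $2\eta$ in the statement, that is, $\eta$ per side rather than $2\eta$. I will first try the variance computation, noting that $T^2=\operatorname{Var}(r)/\mathbb E[r^2]$ with $\operatorname{Var}(r)\le\mathbb E[(r-1)^2]\le\eta^2$. The remaining gap is the denominator $\mathbb E[r^2]$. Because the state norm is insensitive to rescaling $Q$, $r$ may be replaced by $r/c$ for an optimal scalar $c$. Choosing $c^2=\mathbb E[r^2]$ makes the denominator $1$, but this alters the variance bound to $\mathbb E[(r-c)^2]$, which need not be at most $\eta^2$. So I expect to need a careful one-variable inequality: show $\operatorname{Var}(r)/\mathbb E[r^2]\le\eta^2$ whenever $r\in[1-\eta,1+\eta]$, for instance by maximizing over two-point distributions supported at the endpoints. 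That check is the crux. The final claim about the Gibbs marginal then follows from contractivity of trace distance under the partial trace over the reference and workspace.
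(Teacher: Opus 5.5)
Your architecture is the same as the paper's. You use the triangle inequality through $|\psi_A^Q\rangle$ and $|\psi_B^Q\rangle$, unitary invariance, and the transfer of the relative-amplitude bound to $\operatorname{spec}(H_B)$; your remark that this also gives $Q(H_B)\ne0$ is a good touch. You then reduce to the per-system fidelity $|\langle\mathrm{TFD}_L|\psi_L^Q\rangle|^2=\mathbb E[r]^2/\mathbb E[r^2]$, which is the paper's $\mu_L^2/\nu_L$.

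The gap is the one inequality that carries the stated constant: $\mathbb E[r]^2\ge(1-\eta^2)\,\mathbb E[r^2]$ for $r\in[1-\eta,1+\eta]$. You leave it as an unverified ``crux.'' What you actually prove is $T\le\eta/(1-\eta)$, or $T\le2\eta$ via the fallback. The fallback yields only $\delta_{\mathrm{src}}+4\eta$, which does not prove the proposition.

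Bounding $\operatorname{Var}(r)$ and $\mathbb E[r^2]$ separately cannot close this gap. The largest possible variance is $\eta^2$ and the smallest possible second moment is $(1-\eta)^2$, but these occur at different distributions. The true supremum of $\operatorname{Var}(r)/\mathbb E[r^2]$ is exactly $\eta^2$, attained by the two-point law on the endpoints with weight $(1-\eta)/2$ on $1+\eta$. So any successful argument must use a joint constraint on the two moments.

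The missing idea is the chord inequality for the convex function $r\mapsto r^2$ on the interval. Since $(r-(1-\eta))(r-(1+\eta))\le0$ there, you get $r^2\le2r-(1-\eta^2)$. Averaging with the Gibbs weights gives $\nu\le2\mu-(1-\eta^2)$, where $\mu=\mathbb E[r]$ and $\nu=\mathbb E[r^2]$. Multiplying by $1-\eta^2>0$ then gives
\[
 \mu^2-(1-\eta^2)\nu\ \ge\ \mu^2-2(1-\eta^2)\mu+(1-\eta^2)^2=\bigl(\mu-(1-\eta^2)\bigr)^2\ \ge\ 0,
\]
so $\mu^2/\nu\ge1-\eta^2$ and $T\le\eta$ on each side. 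This is also exactly the justification your two-point idea needed. For fixed mean, the chord bound shows that $\mathbb E[r^2]$ is maximized by the endpoint distribution, and so is $1-\mu^2/\nu=\operatorname{Var}(r)/\mathbb E[r^2]$, since it increases in $\nu$. With this step inserted, your proof coincides with the paper's.
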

\begin{proof}

For $L\in\{A,B\}$, let
\[
 p_{L,j}:=\frac{e^{-\beta E_{L,j}}}{Z_L(\beta)},
 \qquad
 r_{L,j}:=e^{\beta E_{L,j}/2}Q(E_{L,j}).
\]
By Eq.~\eqref{eq:gibbs-relative-approximation:details},
$r_{L,j}\in[1-\eta,1+\eta]$. Since
$Q(E_{L,j})=e^{-\beta E_{L,j}/2}r_{L,j}$, the normalized polynomial
state has amplitudes
$\sqrt{p_{L,j}}\,r_{L,j}/\sqrt{\nu_L}$ in the same
energy-purification basis as the TFD, where
\[
 \mu_L:=\sum_j p_{L,j}r_{L,j},
 \qquad
 \nu_L:=\sum_j p_{L,j}r_{L,j}^2.
\]
Hence
\[
 |\langle\mathrm{TFD}_L(\beta)|\psi_L^Q\rangle|^2
 =\frac{\mu_L^2}{\nu_L}.
\]
Moreover,
\[
 (r_{L,j}-(1-\eta))(r_{L,j}-(1+\eta))\le0
\]
Expanding this
gives
$r_{L,j}^2\le2r_{L,j}-(1-\eta^2)$.
Averaging with respect to $p_{L,j}$ then yields
\[
 \nu_L\le2\mu_L-(1-\eta^2).
\]
Since $1-\eta^2>0$, multiplying by $1-\eta^2$ gives
\[
 (1-\eta^2)\nu_L
 \le
 2(1-\eta^2)\mu_L-(1-\eta^2)^2.
\]
Therefore,
\begin{align*}
 \mu_L^2-(1-\eta^2)\nu_L
 &\ge
 \mu_L^2-2(1-\eta^2)\mu_L+(1-\eta^2)^2\\
 &=
 \bigl(\mu_L-(1-\eta^2)\bigr)^2
 \ge0.
\end{align*}
Hence, since $\nu_L>0$,
\[
 \frac{\mu_L^2}{\nu_L}\ge 1-\eta^2.
\]
Thus
\[
 |\langle\mathrm{TFD}_L(\beta)|\psi_L^Q\rangle|^2
 \ge1-\eta^2,
\]
so the trace distance between these two pure states is at most $\eta$~\cite{FuchsVanDeGraaf1999}.

By the source-preparation assumption, $\widehat\rho_A$ is therefore
within $\delta_{\mathrm{src}}+\eta$ of
$|\psi_A^Q\rangle\langle\psi_A^Q|$. Since $Q$ has degree at most $D$,
Eq.~\eqref{eq:gibbs-polynomial-map:details} maps
$|\psi_A^Q\rangle|0\rangle_J$ exactly to
$|\psi_B^Q\rangle|0\rangle_J$. Unitary invariance of trace distance
then gives an output within $\delta_{\mathrm{src}}+\eta$ of the target
polynomial state. Applying the same $\eta$ bound on the target side and
using the triangle inequality yields
\[
 \frac12\bigl\|\widehat\omega_B-
 |\mathrm{TFD}_B(\beta),0\rangle
 \langle\mathrm{TFD}_B(\beta),0|\bigr\|_1
 \le\delta_{\mathrm{src}}+2\eta.
\]
The same bound holds after tracing out the reference and workspace by
contractivity of trace distance~\cite{Watrous2018}.
\end{proof}

\subsection{An explicit degree bound}
\label{subsec:gibbs-degree}
\label{app:gibbs-degree}

Choose an interval $[E_-,E_+]$ containing
$\operatorname{spec}(H_A)$, and write
\begin{equation}
 W:=E_+-E_-,\qquad \lambda:=\frac{\beta W}{2}.
 \label{eq:gibbs-bandwidth}
\end{equation}
By Proposition~\ref{prop:mf-spectral-containment},
$\operatorname{spec}(H_B)\subseteq\operatorname{spec}(H_A)$,
so this interval also contains the target spectrum.

If $W=0$ or $\beta=0$, a constant polynomial gives zero approximation
error. For $W,\beta>0$, construct $Q_D$ by the positive Taylor truncation about the
upper spectral edge,
\begin{equation}
 Q_D(x):=e^{-\beta E_+/2}
 \sum_{j=0}^{D}\frac{\bigl[\beta(E_+-x)/2\bigr]^j}{j!}.
 \label{eq:gibbs-taylor}
\end{equation}
The scalar prefactor cancels in the normalized guiding state. For
$t:=\beta(E_+-x)/2\in[0,\lambda]$,
\begin{equation}
 e^{\beta x/2}Q_D(x)
 =e^{-t}\sum_{j=0}^{D}\frac{t^j}{j!}
 =\Pr\{\operatorname{Pois}(t)\le D\}.
 \label{eq:gibbs-poisson-identity}
\end{equation}
Monotonicity of the Poisson upper tail therefore gives the exact
interval approximation error
\begin{equation}
 \sup_{x\in[E_-,E_+]}
 \bigl|e^{\beta x/2}Q_D(x)-1\bigr|
 =\eta_D(\lambda)
 :=\Pr\{\operatorname{Pois}(\lambda)\ge D+1\}.
 \label{eq:gibbs-exact-tail}
\end{equation}
This tail can be evaluated directly to choose a degree or bound
the admissible inverse temperature.

\begin{proposition}[Sufficient Gibbs approximation degree]
\label{prop:gibbs-degree}
For $0<\eta<1$, let $\ell:=\ln(1/\eta)$. The polynomial
Eq.~\eqref{eq:gibbs-taylor} satisfies
Eq.~\eqref{eq:gibbs-relative-approximation:details} whenever
\begin{equation}
\quad
 D+1\ge\frac{\beta W}{2}
       +\sqrt{\beta W\ell}+\frac{\ell}{3}.
 \quad
 \label{eq:gibbs-degree-condition}
\end{equation}
In particular, a sufficient degree scales as
\begin{equation}
 D=\frac{\beta W}{2}
   +O\!\left(\sqrt{\beta W\ln(1/\eta)}+\ln(1/\eta)+1\right).
 \label{eq:gibbs-degree-scaling}
\end{equation}
\end{proposition}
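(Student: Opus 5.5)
By Eq.~\eqref{eq:gibbs-exact-tail}, the Taylor polynomial $Q_D$ has uniform relative error exactly $\eta_D(\lambda)=\Pr\{\operatorname{Pois}(\lambda)\ge D+1\}$ on $[E_-,E_+]$, with $\lambda=\beta W/2$. So it suffices to show $\Pr\{\operatorname{Pois}(\lambda)\ge k\}\le e^{-\ell}=\eta$ whenever $k:=D+1\ge\lambda+\sqrt{2\lambda\ell}+\ell/3$. Here $\sqrt{\beta W\ell}=\sqrt{2\lambda\ell}$. The degenerate cases $\beta=0$ or $W=0$ are trivial, since then a constant polynomial is exact. The plan is a Bernstein-type Chernoff bound for the Poisson upper tail.

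\textbf{Chernoff step.} For $\theta>0$, the Poisson moment generating function $\mathbb E e^{\theta N}=e^{\lambda(e^\theta-1)}$ gives
\[
\Pr\{N\ge\lambda+x\}\le\exp\bigl(-\lambda h(x/\lambda)\bigr),\qquad h(u)=(1+u)\ln(1+u)-u .
\]
Next I use the standard Bernstein lower bound
\[
h(u)\ge\frac{u^2}{2(1+u/3)},
\]
which follows from comparing derivatives at $u=0$: both sides and their first derivatives vanish there, and the second derivative of the left side, $1/(1+u)$, dominates that of the right side. This yields
\[
\Pr\{N\ge\lambda+x\}\le\exp\!\left(-\frac{x^2}{2(\lambda+x/3)}\right).
\]

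\textbf{Inverting the exponent.} It now suffices to have $x^2\ge 2\ell(\lambda+x/3)$, i.e.\ $x^2-\tfrac{2\ell}{3}x-2\lambda\ell\ge0$. The positive root is
\[
x_\ast=\frac{\ell}{3}+\sqrt{\frac{\ell^2}{9}+2\lambda\ell}\le\frac{\ell}{3}+\frac{\ell}{3}+\sqrt{2\lambda\ell}.
\]
This would give the threshold $\lambda+\sqrt{2\lambda\ell}+2\ell/3$, which has $2\ell/3$ rather than the stated $\ell/3$. This mismatch in the additive constant is the main obstacle.

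To close the gap, I would plug $x=\sqrt{2\lambda\ell}+\ell/3$ directly into the quadratic rather than bounding the root. The condition $x^2\ge 2\ell\lambda+\tfrac{2\ell}{3}x$ becomes
\[
2\lambda\ell+\tfrac{2\ell}{3}\sqrt{2\lambda\ell}+\tfrac{\ell^2}{9}\ \ge\ 2\lambda\ell+\tfrac{2\ell}{3}\sqrt{2\lambda\ell}+\tfrac{2\ell^2}{9},
\]
which fails by exactly $\ell^2/9$. So the variance-form Bernstein bound alone is slightly too weak. The fallback is to use the sharper bound
\[
h(u)\ge\frac{u^2}{2}\cdot\frac{1}{1+u/3}+\text{(correction)},
\]
or to work with the exact inequality $\lambda h(x/\lambda)\ge\ell$. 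Concretely, I would set $u=x/\lambda$ and verify $\lambda h(u)\ge\ell$ at $x=\sqrt{2\lambda\ell}+\ell/3$ by checking the two asymptotic regimes: $\lambda\gg\ell$, where the Gaussian term $x^2/(2\lambda)$ dominates, and $\lambda\ll\ell$, where $(1+u)\ln(1+u)$ grows superlinearly. I would then interpolate by monotonicity in the ratio $\ell/\lambda$. Since the bound must hold for all $\lambda$, this is a one-variable inequality in $s=\ell/\lambda$, and I would check it by calculus. If it fails for some $s$, I would accept the constant $2\ell/3$, which still gives the scaling in Eq.~\eqref{eq:gibbs-degree-scaling}.

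The scaling statement itself then follows immediately by taking $D=\lceil\lambda+\sqrt{2\lambda\ell}+c\,\ell\rceil$.
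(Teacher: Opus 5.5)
\textbf{Gap: the $\ell/3$ threshold is not proved.} Your reduction to the Poisson tail is correct, and so is your Chernoff/Bernstein computation. You also diagnose correctly that passing through $\exp(-x^2/(2(\lambda+x/3)))$ only yields the threshold $\lambda+\sqrt{2\lambda\ell}+2\ell/3$. But the proposal stops there. Checking $\lambda h(x/\lambda)\ge\ell$ at $x=\sqrt{2\lambda\ell}+\ell/3$ ``in two asymptotic regimes and interpolating by monotonicity'' is not an argument, since no monotone quantity is identified. Your stated fallback is to accept $2\ell/3$. So as written you prove only a weaker condition. Eq.~\eqref{eq:gibbs-degree-condition} with $\ell/3$ remains open, although the scaling Eq.~\eqref{eq:gibbs-degree-scaling} does follow from what you have.

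\textbf{Where the loss comes from.} The loss is caused by the relaxation $h(u)\ge u^2/(2(1+u/3))$, not by the Chernoff method. That rational bound is exactly what you get from the sub-gamma estimate $\lambda(e^\theta-1-\theta)\le\lambda\theta^2/(2(1-\theta/3))$ for $0<\theta<3$ (which follows from $j!\ge2\cdot3^{j-2}$) with the suboptimal choice $\theta=x/(\lambda+x/3)$.

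\textbf{The fix, as in the paper.} Keep the sub-gamma estimate and optimize $\theta$ exactly at the target deviation $a=\sqrt{2\lambda\ell}+\ell/3$. Take $y=\sqrt{2\ell/\lambda}$ and $\theta=y/(1+y/3)$. Then:
\begin{itemize}
\item $1-\theta/3=1/(1+y/3)$, so $\lambda\theta^2/(2(1-\theta/3))=\ell/(1+y/3)$;
\item $\theta a=(2\ell+\ell y/3)/(1+y/3)$;
\item hence the Chernoff exponent $-\theta a+\lambda\theta^2/(2(1-\theta/3))$ equals exactly $-\ell$.
\end{itemize}
This is the paper's argument, and this $\theta$ is in fact the stationary point of the sub-gamma exponent at $a$.

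\textbf{Link to your one-variable inequality.} Since $h(u)=\sup_\theta(\theta u-(e^\theta-1-\theta))$, the same choice gives $h(u)\ge 9(1+u/3-\sqrt{1+2u/3})$. At $u=\sqrt{2s}+s/3$ with $s=\ell/\lambda$ one has $\sqrt{1+2u/3}=1+\sqrt{2s}/3$, so the right side equals $s$ exactly. This proves your inequality $\lambda h(x/\lambda)\ge\ell$ for every $s>0$, with no case analysis. It also accounts for the $\ell^2/9$ you lost, since Bernstein's bound is a strict weakening of this Legendre transform.
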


\begin{proof}
The case $\lambda=0$ is exact. Assume $\lambda>0$ and let
$X\sim\operatorname{Pois}(\lambda)$. For $0<\theta<3$,
\[
 \ln\mathbb E e^{\theta(X-\lambda)}
 =\lambda(e^\theta-1-\theta).
\]
Using
\[
 e^\theta-1-\theta
 =\sum_{j=2}^\infty \frac{\theta^j}{j!}
 \le
 \frac{\theta^2}{2}
 \sum_{j=0}^\infty\left(\frac{\theta}{3}\right)^j
 =
 \frac{\theta^2}{2(1-\theta/3)},
\]
where $j!\ge 2\cdot 3^{j-2}$ for $j\ge2$, we obtain
\[
 \ln\mathbb E e^{\theta(X-\lambda)}
 \le
 \frac{\lambda\theta^2}{2(1-\theta/3)}.
\]

Hence, for any $a>0$, exponential Markov~\cite{BoucheronEtAl2013} gives
\[
 \Pr\{X-\lambda\ge a\}
 \le
 \exp\!\left(
 -\theta a
 +\frac{\lambda\theta^2}{2(1-\theta/3)}
 \right).
\]
Set
\[
 \ell:=\ln(1/\eta),
 \qquad
 y:=\sqrt{\frac{2\ell}{\lambda}},
 \qquad
 \theta:=\frac{y}{1+y/3}.
\]
With
\[
 a:=\sqrt{2\lambda\ell}+\frac{\ell}{3},
\]
one checks that
\[
 -\theta a
 +\frac{\lambda\theta^2}{2(1-\theta/3)}
 =-\ell.
\]
Therefore
\[
 \Pr\!\left\{
 X-\lambda\ge
 \sqrt{2\lambda\ell}+\frac{\ell}{3}
 \right\}
 \le e^{-\ell}
 =\eta.
\]

Thus, if
\[
 D+1
 \ge
 \lambda+\sqrt{2\lambda\ell}+\frac{\ell}{3},
\]
then
\[
 \Pr\{X\ge D+1\}\le\eta.
\]
Using
\[
 \lambda=\frac{\beta W}{2},
 \qquad
 \sqrt{2\lambda\ell}=\sqrt{\beta W\ell},
\]
and Eq.~\eqref{eq:gibbs-exact-tail} gives
Eq.~\eqref{eq:gibbs-degree-condition}. The scaling
in Eq.~\eqref{eq:gibbs-degree-scaling} follows immediately.
\end{proof}

\subsection{Inverse temperature from the relative-decoding degree}
\label{subsec:gibbs-temperature}
We give the calculation behind Corollary~\ref{thm:gibbs-temperature}.

\begin{corollary}[Inverse-temperature window from relative decoding]
\label{thm:gibbs-temperature:details}
Assume the canonical source TFD can be prepared to the accuracy below,
and that the compatibility and efficient circuit assumptions of
Theorem~\ref{thm:mf-polynomial-transduction} hold at degree
$D_{\mathrm{rel}}$, with
\begin{equation}
 D_{\mathrm{rel}}
 \le
 \left\lfloor\frac{d_{\mathrm{rel}}-1}{2}\right\rfloor.
 \label{eq:gibbs-relative-degree}
\end{equation}
Fix a desired target trace-distance error $0<\varepsilon<1$ and a
source error $0\le\delta_{\mathrm{src}}<\varepsilon$, and define
\begin{equation}
 \eta:=\frac{\varepsilon-\delta_{\mathrm{src}}}{2},
 \qquad
 \ell:=\ln\frac{2}{\varepsilon-\delta_{\mathrm{src}}},
 \qquad
 k:=D_{\mathrm{rel}}+1.
 \label{eq:gibbs-error-budget}
\end{equation}
Let $[E_-,E_+]$ be a common spectral interval for $H_A$ and
$H_B$, with width $W:=E_+-E_->0$.

If $k\ge\ell/3$, then the target TFD and its Gibbs marginal can be prepared to trace
distance at most $\varepsilon$ whenever
\begin{equation}
 0\le\beta\le\frac{2}{W}
 \left(
 \sqrt{k+\frac{\ell}{6}}
 -
 \sqrt{\frac{\ell}{2}}
 \right)^2.
 \label{eq:gibbs-temperature-window:details}
\end{equation}

Suppose moreover that
\begin{equation}
 H_L=\sum_{i=1}^{m}h_i a_i^L,
 \qquad
 \|a_i^L\|\le1,
 \qquad
 |h_i|\le J,
\end{equation}
with $J>0$, so that one can choose $W\le2Jm$, and assume a linear relative distance
\begin{equation}
 d_{\mathrm{rel}}\ge\delta_{\mathrm{rel}}m
\end{equation}
with efficient relative decoding reaching the full radius in
Eq.~\eqref{eq:gibbs-relative-degree}. If $\ell=o(m)$, then every fixed
\begin{equation}
 \beta<\frac{\delta_{\mathrm{rel}}}{2J}
 \label{eq:gibbs-constant-temperature:details}
\end{equation}
is attainable for sufficiently large $m$, subject to the same efficient
source-preparation and transduction assumptions.
\end{corollary}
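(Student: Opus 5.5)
We combine Proposition~\ref{prop:gibbs-certificate:details} with the Poisson tail bound of Proposition~\ref{prop:gibbs-degree}, taking the polynomial degree to be the largest one the transduction circuit supports. Set $D:=D_{\mathrm{rel}}$. Since $D_{\mathrm{rel}}\le\lfloor(d_{\mathrm{rel}}-1)/2\rfloor$, we have $2D<d_{\mathrm{rel}}$, so the circuit $U_D$ of Theorem~\ref{thm:mf-polynomial-transduction} and the exact map in Eq.~\eqref{eq:gibbs-polynomial-map:details} are available. Choose $Q=Q_D$, the Taylor truncation of Eq.~\eqref{eq:gibbs-taylor} about $E_+$. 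By Eq.~\eqref{eq:gibbs-exact-tail}, its relative error on $[E_-,E_+]$ is exactly $\eta_D(\lambda)$ with $\lambda=\beta W/2$. If $\eta_D(\lambda)\le\eta=(\varepsilon-\delta_{\mathrm{src}})/2<1$, Proposition~\ref{prop:gibbs-certificate:details} gives total error at most $\delta_{\mathrm{src}}+2\eta=\varepsilon$, and the same bound holds for the Gibbs marginal. The whole proof therefore reduces to a scalar condition on $\beta$.

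\emph{Step 1 (inverting the degree condition).} By Proposition~\ref{prop:gibbs-degree} with $\ell=\ln(1/\eta)=\ln[2/(\varepsilon-\delta_{\mathrm{src}})]$, it suffices that
\[
 k\ge\lambda+\sqrt{2\lambda\ell}+\ell/3 .
\]
Put $x=\sqrt{\lambda}$ and complete the square:
\[
 \lambda+\sqrt{2\lambda\ell}=\bigl(x+\sqrt{\ell/2}\bigr)^2-\ell/2 .
\]
The condition becomes $\bigl(x+\sqrt{\ell/2}\bigr)^2\le k+\ell/2-\ell/3=k+\ell/6$. This holds whenever
\[
 0\le x\le\sqrt{k+\ell/6}-\sqrt{\ell/2}.
\]
The right-hand side is nonnegative exactly when $k\ge\ell/3$, which is the stated hypothesis. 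Squaring and using $\beta=2\lambda/W$ gives Eq.~\eqref{eq:gibbs-temperature-window:details}. The case $\beta=0$ is trivial, since a constant polynomial is exact.

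\emph{Step 2 (asymptotics).} With $H_L=\sum_i h_ia_i^L$, $\|a_i^L\|\le1$ and $|h_i|\le J$, we get $\|H_L\|\le Jm$, so we may take $[E_-,E_+]=[-Jm,Jm]$ and $W=2Jm$. Spectral containment, Proposition~\ref{prop:mf-spectral-containment}, makes this interval common to source and target. Because $d_{\mathrm{rel}}\ge\delta_{\mathrm{rel}}m$, taking the full radius gives $k=\lfloor(d_{\mathrm{rel}}-1)/2\rfloor+1\ge\delta_{\mathrm{rel}}m/2$. Since $\ell=o(m)$, the hypothesis $k\ge\ell/3$ holds for large $m$, and
\[
 \bigl(\sqrt{k+\ell/6}-\sqrt{\ell/2}\bigr)^2=k\bigl(1-o(1)\bigr).
\]
The window is therefore at least
\[
 \frac{2}{2Jm}\cdot\frac{\delta_{\mathrm{rel}}m}{2}\bigl(1-o(1)\bigr)=\frac{\delta_{\mathrm{rel}}}{2J}\bigl(1-o(1)\bigr).
\]
Any fixed $\beta<\delta_{\mathrm{rel}}/(2J)$ eventually lies inside it, and Step 1 then applies.

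The main obstacle is bookkeeping rather than analysis. We must check that the positive-part requirement for taking the square root is exactly $k\ge\ell/3$. We must also check that the error budget allocates $\eta$ so that $\delta_{\mathrm{src}}+2\eta=\varepsilon$, with $\eta<1$ as Proposition~\ref{prop:gibbs-certificate:details} requires; this holds because $\varepsilon<1$. Finally, Step 2 has to handle the floor and the $+1$ in $k$ uniformly, which the $o(1)$ slack absorbs.
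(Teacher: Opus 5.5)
Your proposal is correct and follows the paper's proof essentially step for step. You set $\eta=(\varepsilon-\delta_{\mathrm{src}})/2$ in Proposition~\ref{prop:gibbs-certificate:details}, invoke the Poisson-tail degree condition of Proposition~\ref{prop:gibbs-degree}, and solve the quadratic inequality in $\sqrt{\lambda}$ (your completion of the square is exactly that step); for the asymptotic window you then use $k=\lceil d_{\mathrm{rel}}/2\rceil\ge\delta_{\mathrm{rel}}m/2$, $W\le 2Jm$ and $\ell=o(m)$, with the only cosmetic difference being that you expand $(\sqrt{k+\ell/6}-\sqrt{\ell/2})^2=k(1-o(1))$ directly rather than first substituting the lower bound on $k$ and taking the limit.
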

\begin{proof}
By Proposition~\ref{prop:gibbs-certificate:details}, it suffices to choose the
polynomial approximation error as
\[
 \eta=\frac{\varepsilon-\delta_{\mathrm{src}}}{2}.
\]
Then Proposition~\ref{prop:gibbs-degree}, with
$\ell=\ln(1/\eta)$, shows that a polynomial of degree at most
$D_{\mathrm{rel}}$ is sufficient whenever
\[
 k
 \ge
 \frac{\beta W}{2}
 +\sqrt{\beta W\ell}
 +\frac{\ell}{3}.
\]
Set $\lambda:=\beta W/2$. This becomes
\[
 \lambda+\sqrt{2\lambda\ell}+\frac{\ell}{3}\le k.
\]
Writing $y=\sqrt{\lambda}$ and solving the resulting quadratic
inequality gives, for $k\ge\ell/3$,
\[
 y\le
 \sqrt{k+\frac{\ell}{6}}
 -
 \sqrt{\frac{\ell}{2}}.
\]
Since $\lambda=\beta W/2$, this gives
Eq.~\eqref{eq:gibbs-temperature-window:details}.

For the extensive case, full-radius relative decoding gives
\[
 k
 =
 \left\lceil\frac{d_{\mathrm{rel}}}{2}\right\rceil
 \ge
 \frac{\delta_{\mathrm{rel}}m}{2}.
\]
Together with $W\le2Jm$, Eq.~\eqref{eq:gibbs-temperature-window:details}
therefore gives the sufficient window
\[
 \beta
 \le
 \frac{1}{Jm}
 \left(
 \sqrt{\frac{\delta_{\mathrm{rel}}m}{2}
       +\frac{\ell}{6}}
 -
 \sqrt{\frac{\ell}{2}}
 \right)^2.
\]
If $\ell=o(m)$, the right-hand side converges to
$\delta_{\mathrm{rel}}/(2J)$. Hence every fixed
$\beta<\delta_{\mathrm{rel}}/(2J)$ is attainable for all sufficiently
large $m$.
\end{proof}

\section{Proofs of the homomorphism criteria}
\label{app:homomorphism-criteria}

In this appendix, we prove the homomorphism criteria used for the
Pauli, fermionic, and prime-dimensional Weyl realizations. The three
proofs have the same structure. The pairwise commutation data determine
how arbitrary words are reordered, while the relation code determines
which ordered words become scalar. If every source scalar relation,
including its phase, remains valid in the target, then every polynomial
identity satisfied by the source generators is also satisfied by the
target generators. This is the condition required for the
generator assignment to descend to a $*$-homomorphism.

The relation code $K_L$ below is the kernel of the label matrix.
The kernel of the algebra homomorphism instead consists of source
operators annihilated by the map; additional target relations in
$K_B\setminus K_A$ can make this kernel nontrivial.

We first record the elementary algebraic principle used in all three
cases~\cite{Murphy1990}.

\begin{lemma}[Evaluation-kernel criterion]
\label{lem:app-evaluation-kernel}
Let $\mathcal P_s$ be the free unital complex $*$-algebra on
$x_1,\ldots,x_s$, where ``free'' means that the generators satisfy
no relations beyond the axioms of a unital $*$-algebra. Let
\begin{equation}
 \vartheta_A:\mathcal P_s\longrightarrow\mathcal A,
 \qquad
 \vartheta_B:\mathcal P_s\longrightarrow\mathcal B
\end{equation}
be the evaluation maps
\begin{equation}
 \vartheta_A(x_i)=a_i,\qquad
 \vartheta_B(x_i)=b_i,
\end{equation}
where
\begin{equation}
 \mathcal A=C^*(a_1,\ldots,a_s),
 \qquad
 \mathcal B=C^*(b_1,\ldots,b_s).
\end{equation}
If
\begin{equation}
 \ker\vartheta_A\subseteq\ker\vartheta_B,
 \label{eq:app-evaluation-kernel-inclusion}
\end{equation}
then
\begin{equation}
 \pi\bigl(\vartheta_A(q)\bigr):=\vartheta_B(q),
 \qquad q\in\mathcal P_s,
 \label{eq:app-induced-homomorphism}
\end{equation}
defines a unique surjective unital $*$-homomorphism
$\pi:\mathcal A\to\mathcal B$ satisfying $\pi(a_i)=b_i$.
\end{lemma}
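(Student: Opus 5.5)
The approach is the standard first-isomorphism argument, adapted to the $*$-structure. I would define $\pi$ on $\mathcal A$ through the formula in Eq.~\eqref{eq:app-induced-homomorphism}, check that it is well defined, and then verify the homomorphism axioms by pulling each one back to $\mathcal P_s$.

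The first step is to show that $\vartheta_A$ is onto $\mathcal A$. The algebra $\mathcal A$ is finite-dimensional, so the image $\vartheta_A(\mathcal P_s)$, which is a unital $*$-subalgebra containing the $a_i$, is a finite-dimensional subspace. Finite-dimensional subspaces of a normed space are closed. This image is therefore a closed unital $*$-subalgebra containing the generators, and so it equals $C^*(a_1,\ldots,a_s)=\mathcal A$. The same argument shows that $\vartheta_B$ is onto $\mathcal B$. Consequently every element of $\mathcal A$ has the form $\vartheta_A(q)$, and the defining formula assigns a value to every element. I expect this closure point to be the only genuinely non-formal step, and finite dimensionality handles it directly.

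The second step is well-definedness. Suppose $\vartheta_A(q)=\vartheta_A(q')$. Then $q-q'\in\ker\vartheta_A$, and the hypothesis Eq.~\eqref{eq:app-evaluation-kernel-inclusion} gives $q-q'\in\ker\vartheta_B$. Hence $\vartheta_B(q)=\vartheta_B(q')$, so $\pi$ does not depend on the choice of representative $q$.

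The third step checks the algebraic properties. Each one holds for $\pi$ because $\vartheta_A$ and $\vartheta_B$ are unital $*$-homomorphisms out of $\mathcal P_s$, and each follows from a one-line computation on representatives:
\[
 \pi(\vartheta_A(q)\vartheta_A(r))=\pi(\vartheta_A(qr))=\vartheta_B(qr)=\vartheta_B(q)\vartheta_B(r),
\]
and similarly
\[
 \pi(\vartheta_A(q)^*)=\vartheta_B(q^*)=\vartheta_B(q)^*.
\]
Linearity and unitality are checked the same way.

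The remaining claims follow quickly. Surjectivity of $\pi$ comes from the surjectivity of $\vartheta_B$ established in the first step. The identity $\pi(a_i)=b_i$ holds by taking $q=x_i$. For uniqueness, any unital $*$-homomorphism $\pi'$ with $\pi'(a_i)=b_i$ satisfies $\pi'\circ\vartheta_A=\vartheta_B$ on the generators $x_i$, and therefore on all of $\mathcal P_s$ by the freeness of $\mathcal P_s$. Since $\vartheta_A$ is onto, $\pi'=\pi$.
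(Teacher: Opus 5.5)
Your proposal is correct and follows essentially the same route as the paper: well-definedness from the kernel inclusion, the homomorphism identities checked on representatives, and finite dimensionality to see that the algebraic $*$-algebra generated by the $a_i$ is already all of $\mathcal A$. Your surjectivity argument via $\vartheta_B$ being onto $\mathcal B$ is a slightly cleaner form of the paper's image argument. Your explicit uniqueness step (agreement on the generators $x_i$, then surjectivity of $\vartheta_A$) fills in a claim the paper states but does not spell out.
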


\begin{proof}
We first show that Eq.~\eqref{eq:app-induced-homomorphism} is
well defined. Suppose that two polynomials $q,q'\in\mathcal P_s$
represent the same source operator:
\[
 \vartheta_A(q)=\vartheta_A(q').
\]
Then
\[
 \vartheta_A(q-q')=0,
\]
so $q-q'\in\ker\vartheta_A$. By
Eq.~\eqref{eq:app-evaluation-kernel-inclusion},
$q-q'\in\ker\vartheta_B$, and therefore
\[
 \vartheta_B(q)=\vartheta_B(q').
\]
Hence the value assigned by
Eq.~\eqref{eq:app-induced-homomorphism} depends only on the source
operator $\vartheta_A(q)$ and not on the polynomial used to represent
it.

Next, the map preserves the algebraic operations. For
$X=\vartheta_A(q)$ and $Y=\vartheta_A(r)$,
\begin{align*}
 \pi(X+Y)
 &=\pi\!\left(\vartheta_A(q+r)\right)
 =\vartheta_B(q+r)
 =\pi(X)+\pi(Y),\\
 \pi(XY)
 &=\pi\!\left(\vartheta_A(qr)\right)
 =\vartheta_B(qr)
 =\pi(X)\pi(Y),\\
 \pi(X^*)
 &=\pi\!\left(\vartheta_A(q^*)\right)
 =\vartheta_B(q^*)
 =\pi(X)^*.
\end{align*}
Likewise, for $\lambda\in\mathbb C$,
\[
 \pi(\lambda X)=\lambda\pi(X),
 \qquad
 \pi(I_A)=I_B.
\]
Thus $\pi$ is a unital $*$-homomorphism on the algebraic
$*$-algebra generated by $a_1,\ldots,a_s$.

In the present finite-dimensional setting, this algebraic
$*$-algebra is already all of $\mathcal A$. Indeed, it is a
finite-dimensional linear subspace of $\mathcal A$ and hence closed;
by definition its norm closure is
$C^*(a_1,\ldots,a_s)=\mathcal A$. Therefore
Eq.~\eqref{eq:app-induced-homomorphism} defines $\pi$ on all of
$\mathcal A$.

By construction,
\[
 \pi(a_i)
 =\pi\!\left(\vartheta_A(x_i)\right)
 =\vartheta_B(x_i)
 =b_i.
\]
Consequently, the image of $\pi$ contains $I_B$ and every generator
$b_i$. Since the image of a $*$-homomorphism between finite-dimensional
$C^*$-algebras is a $C^*$-subalgebra,
\[
 \operatorname{im}\pi
 \supseteq C^*(b_1,\ldots,b_s)
 =\mathcal B.
\]
Hence $\pi$ is surjective.
\end{proof}

\subsection{Pauli generators}
\label{app:homomorphism-pauli}

We use the signed Pauli generators of the main text,
\begin{equation}
 a_i=\eta_i^A P_{n_A}(u_i),
 \qquad
 b_i=\eta_i^B P_{n_B}(v_i),
 \qquad
 \eta_i^A,\eta_i^B\in\{\pm1\},
\end{equation}
and write
\begin{equation}
 A=[u_1\ \cdots\ u_s],
 \qquad
 B=[v_1\ \cdots\ v_s].
\end{equation}
Let $J_n$ denote the binary symplectic form~\cite{DehaeneDeMoor2003}. Define
\begin{equation}
 \Sigma_A:=A^{\mathsf T}J_{n_A}A,
 \qquad
 \Sigma_B:=B^{\mathsf T}J_{n_B}B,
\end{equation}
and
\begin{equation}
 K_A:=\ker A,
 \qquad
 K_B:=\ker B.
\end{equation}
For $c\in\mathbb F_2^s$, define the ordered words
\begin{equation}
 U_A(c):=a_1^{c_1}\cdots a_s^{c_s},
 \qquad
 U_B(c):=b_1^{c_1}\cdots b_s^{c_s}.
 \label{eq:app-pauli-ordered-words}
\end{equation}
If $c\in K_L$, the Pauli label of $U_L(c)$ vanishes and hence
\begin{equation}
 U_L(c)=\zeta_L(c)I
\end{equation}
for a scalar phase $\zeta_L(c)\in\mathbb T$.

\begin{proposition}[Pauli homomorphism criterion]
\label{prop:app-pauli-homomorphism}
The assignment
\begin{equation}
 a_i\longmapsto b_i
\end{equation}
extends to a surjective unital $*$-homomorphism
\begin{equation}
 \pi:\mathcal A\longrightarrow\mathcal B
\end{equation}
if
\begin{equation}
 A^{\mathsf T}J_{n_A}A
 =
 B^{\mathsf T}J_{n_B}B,
 \qquad
 K_A\subseteq K_B,
 \qquad
 \zeta_A(k)=\zeta_B(k)
 \quad\text{for all }k\in K_A.
 \label{eq:app-pauli-homomorphism-conditions}
\end{equation}
Conversely, these conditions are necessary for a homomorphism
satisfying $\pi(a_i)=b_i$.
\end{proposition}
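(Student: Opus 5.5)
Sufficiency will go through Lemma~\ref{lem:app-evaluation-kernel}: it is enough to show $\ker\vartheta_A\subseteq\ker\vartheta_B$, where $\vartheta_L$ evaluates free $*$-polynomials at $x_i\mapsto a_i$ resp.\ $b_i$. The first step is a normal form. Each generator is a signed Hermitian Pauli, so $a_i^*=a_i$ and $a_i^2=I$, and any two generators commute or anticommute according to $(\Sigma_A)_{ij}$. Using these rules, every monomial in the $x_i,x_i^*$ reduces in $\mathcal A$ to $\pm U_A(c)$ for some $c\in\mathbb F_2^s$, and the same word reduces in $\mathcal B$ to $\pm U_B(c)$ with the same sign and the same $c$. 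The point is that the reduction uses only the relations $x_i^*=x_i$, $x_i^2=1$, $x_ix_j=(-1)^{\Sigma_{ij}}x_jx_i$, all of which hold on both sides because $\Sigma_A=\Sigma_B$. Thus there is a linear map $N:\mathcal P_s\to\mathbb C[\mathbb F_2^s]$ such that $\vartheta_L(q)=\sum_c N(q)_c\,U_L(c)$ for both $L=A,B$.

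The second step treats the linear dependencies among the ordered words. I would use the product rule $U_L(c)U_L(d)=\kappa(c,d)U_L(c+d)$ from Eq.~\eqref{eq:value-word-products}, whose $\kappa$ depends only on $\Sigma$ and is therefore common to both sides. The word $U_A(c)$ is proportional to $P(Ac)$. Since Pauli operators with distinct labels are linearly independent, $\vartheta_A(q)=0$ holds exactly when, for each coset $c+K_A$, the sum $\sum_{k\in K_A}N(q)_{c+k}\,U_A(c+k)U_A(c)^{-1}=0$. The ratio is $U_A(c+k)U_A(c)^{-1}=\kappa'(c,k)\,U_A(k)=\kappa'(c,k)\,\zeta_A(k)$, where $\kappa'$ is again a $\Sigma$-determined sign. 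By the assumptions $K_A\subseteq K_B$ and $\zeta_A=\zeta_B$ on $K_A$, the same identity holds with $B$ in place of $A$. Summing over cosets shows that $\vartheta_B(q)=0$: each $K_A$-coset is contained in a single $K_B$-coset, and its contribution in $\mathcal B$ is $\sum_k N(q)_{c+k}\,\kappa'(c,k)\,\zeta_B(k)\,U_B(c)$, which vanishes. The lemma then yields the surjective unital $*$-homomorphism.

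For necessity, suppose $\pi$ exists with $\pi(a_i)=b_i$. Applying $\pi$ to $a_ia_j=(-1)^{(\Sigma_A)_{ij}}a_ja_i$ gives $b_ib_j=(-1)^{(\Sigma_A)_{ij}}b_jb_i$. Since $b_ib_j\ne0$, the sign is forced, so $\Sigma_B=\Sigma_A$. For $k\in K_A$, apply $\pi$ to $U_A(k)=\zeta_A(k)I$. This gives $U_B(k)=\zeta_A(k)I$, so $U_B(k)$ is scalar, hence $Bk=0$, which means $k\in K_B$, and also $\zeta_B(k)=\zeta_A(k)$.

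I expect the main obstacle to be the bookkeeping in the second step, namely checking that the reordering signs $\kappa$ and $\kappa'$ agree on both sides. This requires expressing every reordering phase purely in terms of $\Sigma$ and the generators' squares, and not in terms of the individual Pauli labels.
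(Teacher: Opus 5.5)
Your proposal is correct and follows essentially the same route as the paper: a common ordered-word normal form, regrouping by $K_A$-cosets via the $\Sigma$-determined phase $\kappa$ and the matched scalars $\zeta_A=\zeta_B$, linear independence of distinct Pauli labels, and then Lemma~\ref{lem:app-evaluation-kernel}, with necessity argued identically. The bookkeeping you flag is already settled by the product rule, since $\kappa(c,d)=(-1)^{\sum_{i>j}c_id_j\Sigma_{ij}}$ depends only on the common $\Sigma$, and your $\kappa'$ is simply $\kappa^{-1}$.
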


\begin{proof}
We first prove sufficiency. Put
\[
 \Sigma:=
 A^{\mathsf T}J_{n_A}A
 =
 B^{\mathsf T}J_{n_B}B.
\]
Since every signed Pauli generator is a Hermitian involution,
\[
 a_i^*=a_i,\qquad a_i^2=I,
 \qquad
 b_i^*=b_i,\qquad b_i^2=I.
\]
Moreover,
\[
 a_i a_j=(-1)^{\Sigma_{ij}}a_j a_i,
 \qquad
 b_i b_j=(-1)^{\Sigma_{ij}}b_j b_i.
\]
Thus the source and target obey the same elementary
reordering rules.

For $c,d\in\mathbb F_2^s$, define
\[
 \kappa_\Sigma(c,d)
 :=
 (-1)^{\sum_{i>j}c_i d_j\Sigma_{ij}}.
\]
Moving the second ordered word through the first gives
\begin{equation}
 U_L(c)U_L(d)
 =
 \kappa_\Sigma(c,d)U_L(c+d),
 \qquad L\in\{A,B\}.
 \label{eq:app-pauli-word-product}
\end{equation}
The phase on the right depends only on the common
commutation matrix $\Sigma$.

Let $q\in\mathcal P_s$ be any noncommutative $*$-polynomial.
Using
\[
 x_i^*=x_i,\qquad x_i^2=1,
 \qquad
 x_ix_j=(-1)^{\Sigma_{ij}}x_jx_i
\]
to reduce each monomial to ordered form gives coefficients
$\alpha_c\in\mathbb C$, independent of $L$, such that
\begin{equation}
 \vartheta_L(q)
 =
 \sum_{c\in\mathbb F_2^s}\alpha_cU_L(c),
 \qquad L\in\{A,B\}.
 \label{eq:app-pauli-common-normal-form}
\end{equation}

Choose a set $\mathcal R_A\subseteq\mathbb F_2^s$ containing one
representative of each coset of $K_A$. Every $c$ has a unique
decomposition
\[
 c=r+k,
 \qquad
 r\in\mathcal R_A,\quad k\in K_A.
\]
By Eq.~\eqref{eq:app-pauli-word-product},
\[
 U_L(r)U_L(k)
 =
 \kappa_\Sigma(r,k)U_L(r+k).
\]
Since $K_A\subseteq K_B$, both $U_A(k)$ and $U_B(k)$ are scalar,
and the phase condition gives
\[
 U_A(k)=\zeta_A(k)I,
 \qquad
 U_B(k)=\zeta_B(k)I
       =\zeta_A(k)I.
\]
Hence
\[
 U_L(r+k)
 =
 \kappa_\Sigma(r,k)^{-1}\zeta_A(k)U_L(r),
 \qquad L\in\{A,B\}.
\]
Substituting this into
Eq.~\eqref{eq:app-pauli-common-normal-form} gives the same
coefficients $\beta_r$ in source and target:
\[
 \vartheta_A(q)=\sum_{r\in\mathcal R_A}\beta_rU_A(r),
 \qquad
 \vartheta_B(q)=\sum_{r\in\mathcal R_A}\beta_rU_B(r).
\]

If $r,r'\in\mathcal R_A$ are distinct, then
$Ar\ne Ar'$. Therefore $U_A(r)$ and $U_A(r')$ are nonzero scalar
multiples of distinct Pauli operators. They are orthogonal under
the normalized Hilbert--Schmidt inner product and are in particular
linearly independent. Consequently,
\[
 \vartheta_A(q)=0
 \quad\Longrightarrow\quad
 \beta_r=0\ \text{for every }r
 \quad\Longrightarrow\quad
 \vartheta_B(q)=0.
\]
Thus
\[
 \ker\vartheta_A\subseteq\ker\vartheta_B.
\]
Lemma~\ref{lem:app-evaluation-kernel} now gives the desired
surjective unital $*$-homomorphism.

For necessity, suppose such a homomorphism exists. Applying $\pi$
to
\[
 a_i a_j=(-1)^{(\Sigma_A)_{ij}}a_ja_i
\]
and comparing with the actual target commutation relation gives
\[
 (\Sigma_A)_{ij}=(\Sigma_B)_{ij}
\]
for every $i,j$. If $k\in K_A$, then
\[
 U_A(k)=\zeta_A(k)I.
\]
Applying $\pi$ gives
\[
 U_B(k)=\zeta_A(k)I.
\]
Therefore $Bk=0$, so $k\in K_B$, and
$\zeta_B(k)=\zeta_A(k)$. Hence
$K_A\subseteq K_B$ and the scalar phases agree on $K_A$.
\end{proof}

The proof also shows that there are no additional source relations
that need to be checked separately. Pairwise Pauli commutation,
together with the scalar relations indexed by $K_A$, generates all
polynomial relations among the marked source generators.

\subsection{Even Majorana generators}
\label{app:homomorphism-majorana}

Let
\begin{equation}
 \Gamma_L(u)
 :=
 i^{|u|(|u|-1)/2}
 \prod_{j:\,u_j=1}\gamma_{L,j},
 \qquad
 u\in\mathbb F_2^{2n_L},
\end{equation}
with the Majorana factors ordered by increasing index. We restrict
to even labels $u$, as appropriate for parity-preserving fermionic
Hamiltonians~\cite{BravyiKitaev2002,Bravyi2005}. The marked generators are
\begin{equation}
 a_i=\eta_i^A\Gamma_A(u_i),
 \qquad
 b_i=\eta_i^B\Gamma_B(v_i),
 \qquad
 \eta_i^A,\eta_i^B\in\{\pm1\}.
\end{equation}
Set
\begin{equation}
 A=[u_1\ \cdots\ u_s],
 \qquad
 B=[v_1\ \cdots\ v_s],
 \qquad
 K_A:=\ker A,\quad K_B:=\ker B.
\end{equation}

For even Majorana labels,
\begin{equation}
 \Gamma_L(u)\Gamma_L(v)
 =
 (-1)^{u^{\mathsf T}v}
 \Gamma_L(v)\Gamma_L(u).
 \label{eq:app-majorana-commutation}
\end{equation}
Indeed, the general commutation sign is
$(-1)^{|u||v|-u^{\mathsf T}v}$, and $|u||v|$ is even.
Therefore the marked commutation matrices are
\begin{equation}
 \Sigma_A=A^{\mathsf T}A,
 \qquad
 \Sigma_B=B^{\mathsf T}B.
\end{equation}

For $c\in\mathbb F_2^s$, define
\begin{equation}
 U_A(c):=a_1^{c_1}\cdots a_s^{c_s},
 \qquad
 U_B(c):=b_1^{c_1}\cdots b_s^{c_s}.
\end{equation}
If $c\in K_L$, the resulting Majorana label vanishes, so
\begin{equation}
 U_L(c)=\zeta_L(c)I.
\end{equation}

\begin{proposition}[Majorana homomorphism criterion]
\label{prop:app-majorana-homomorphism}
The assignment $a_i\mapsto b_i$ extends to a surjective unital
$*$-homomorphism if
\begin{equation}
 A^{\mathsf T}A=B^{\mathsf T}B,
 \qquad
 K_A\subseteq K_B,
 \qquad
 \zeta_A(k)=\zeta_B(k)
 \quad\text{for all }k\in K_A.
 \label{eq:app-majorana-homomorphism-conditions}
\end{equation}
These conditions are also necessary.
\end{proposition}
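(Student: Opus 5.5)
The plan is to copy the proof of Proposition~\ref{prop:app-pauli-homomorphism}, with the symplectic form $J_n$ replaced by the ordinary dot product. For sufficiency, set $\Sigma:=A^{\mathsf T}A=B^{\mathsf T}B$.

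First I will record the elementary relations. Each $\Gamma_L(u)$ is a Hermitian involution, so $a_i^*=a_i$, $a_i^2=I$, and the same holds for the $b_i$. By Eq.~\eqref{eq:app-majorana-commutation}, the generators satisfy $a_ia_j=(-1)^{\Sigma_{ij}}a_ja_i$ and $b_ib_j=(-1)^{\Sigma_{ij}}b_jb_i$ with the same matrix $\Sigma$.

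Next I will reorder words. Moving the factors of the second ordered word past those of the first gives
\[
U_L(c)U_L(d)=\kappa_\Sigma(c,d)\,U_L(c+d),
\qquad
\kappa_\Sigma(c,d)=(-1)^{\sum_{i>j}c_id_j\Sigma_{ij}},
\]
for both $L=A$ and $L=B$. It follows that every noncommutative $*$-polynomial $q$ has a common normal form $\vartheta_L(q)=\sum_c\alpha_cU_L(c)$, with coefficients $\alpha_c$ that do not depend on $L$.

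Then I will pick coset representatives $\mathcal R_A$ of $K_A$ and write each $c$ as $c=r+k$ with $r\in\mathcal R_A$ and $k\in K_A$. Because $K_A\subseteq K_B$ and $\zeta_A=\zeta_B$ on $K_A$, we get $U_L(r+k)=\kappa_\Sigma(r,k)^{-1}\zeta_A(k)U_L(r)$ for both $L$. So source and target share the same reduced coefficients $\beta_r$.

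The step that needs care is linear independence of the source words $\{U_A(r):r\in\mathcal R_A\}$. For distinct representatives we have $Ar\neq Ar'$, so $U_A(r)$ and $U_A(r')$ are unimodular multiples of distinct Majorana monomials $\Gamma_A(Ar)$ and $\Gamma_A(Ar')$. These monomials are orthonormal under $\tau_A=2^{-n_A}\Tr$, as stated in the main text, so the words are linearly independent. Hence $\vartheta_A(q)=0$ forces every $\beta_r=0$, which gives $\vartheta_B(q)=0$. This is the inclusion $\ker\vartheta_A\subseteq\ker\vartheta_B$, and Lemma~\ref{lem:app-evaluation-kernel} then supplies the surjective unital $*$-homomorphism.

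For necessity, I will apply $\pi$ to the relation $a_ia_j=(-1)^{(\Sigma_A)_{ij}}a_ja_i$ and compare with the actual target commutation sign. Since $b_i$ and $b_j$ are invertible, their products are nonzero, so the two signs must agree and $\Sigma_A=\Sigma_B$. Next, for $k\in K_A$, applying $\pi$ to $U_A(k)=\zeta_A(k)I$ gives $U_B(k)=\zeta_A(k)I$. A nonzero scalar multiple of the identity has zero Majorana label, because the monomials $\Gamma_B(v)$ with $v\neq0$ are traceless. Therefore $Bk=0$, so $K_A\subseteq K_B$, and $\zeta_B(k)=\zeta_A(k)$.

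The only genuine difference from the Pauli case is the Majorana commutation sign. For even labels $|u||v|$ is even, so the general sign $(-1)^{|u||v|-u^{\mathsf T}v}$ reduces to $(-1)^{u^{\mathsf T}v}$, as stated in Eq.~\eqref{eq:app-majorana-commutation}, and everything else carries over verbatim.
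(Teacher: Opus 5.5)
Your proposal is correct and follows the paper's proof essentially step for step. You use the common ordered-word normal form with the shared cocycle $\kappa_\Sigma$, reduce modulo $K_A$ using the matching scalar phases, get linear independence from orthogonality of distinct Majorana monomials, and conclude with Lemma~\ref{lem:app-evaluation-kernel}; your necessity argument is the paper's, with the invertibility and tracelessness steps spelled out explicitly.
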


\begin{proof}
Each Hermitian Majorana monomial in our convention satisfies
\[
 \Gamma_L(u)^*=\Gamma_L(u),
 \qquad
 \Gamma_L(u)^2=I.
\]
Hence the signed generators satisfy
\[
 a_i^*=a_i,\quad a_i^2=I,
 \qquad
 b_i^*=b_i,\quad b_i^2=I.
\]
Under
$A^{\mathsf T}A=B^{\mathsf T}B=:\Sigma$, their pairwise
commutation relations are also identical:
\[
 a_i a_j=(-1)^{\Sigma_{ij}}a_ja_i,
 \qquad
 b_i b_j=(-1)^{\Sigma_{ij}}b_jb_i.
\]

As for the Pauli words,
\[
 U_L(c)U_L(d)
 =
 \kappa_\Sigma(c,d)U_L(c+d),
 \qquad
 \kappa_\Sigma(c,d)
 =
 (-1)^{\sum_{i>j}c_i d_j\Sigma_{ij}}.
\]
Every evaluated $*$-polynomial therefore has a common ordered-word
normal form
\[
 \vartheta_L(q)
 =
 \sum_{c\in\mathbb F_2^s}\alpha_cU_L(c),
 \qquad L\in\{A,B\},
\]
with the same coefficients $\alpha_c$ in source and target.

Choose a transversal $\mathcal R_A$ of
$\mathbb F_2^s/K_A$. For $c=r+k$ with
$r\in\mathcal R_A$ and $k\in K_A$,
\[
 U_L(c)
 =
 \kappa_\Sigma(r,k)^{-1}\zeta_L(k)U_L(r).
\]
The assumptions
$K_A\subseteq K_B$ and
$\zeta_A(k)=\zeta_B(k)$ therefore give
\[
 \vartheta_A(q)=\sum_{r\in\mathcal R_A}\beta_rU_A(r),
 \qquad
 \vartheta_B(q)=\sum_{r\in\mathcal R_A}\beta_rU_B(r)
\]
with identical coefficients $\beta_r$.

Distinct representatives satisfy $Ar\ne Ar'$. Thus
$U_A(r)$ and $U_A(r')$ are scalar multiples of distinct Majorana
monomials $\Gamma_A(Ar)$ and $\Gamma_A(Ar')$. Distinct Majorana
monomials are orthogonal under the normalized full-Fock trace.
Hence
\[
 \vartheta_A(q)=0
 \quad\Longrightarrow\quad
 \beta_r=0\ \text{for every }r
 \quad\Longrightarrow\quad
 \vartheta_B(q)=0.
\]
We again obtain
\[
 \ker\vartheta_A\subseteq\ker\vartheta_B,
\]
so Lemma~\ref{lem:app-evaluation-kernel} produces the desired
surjective unital $*$-homomorphism.

Conversely, any homomorphism mapping $a_i$ to $b_i$ must preserve
the pairwise commutation relations, which gives
$A^{\mathsf T}A=B^{\mathsf T}B$. If $k\in K_A$, then
$U_A(k)=\zeta_A(k)I$, and applying the homomorphism yields
\[
 U_B(k)=\zeta_A(k)I.
\]
Therefore $k\in K_B$ and
$\zeta_B(k)=\zeta_A(k)$.
\end{proof}

\subsection{Prime-dimensional Weyl generators in field-symbol coordinates}
\label{app:homomorphism-qudit}

Let $p$ be an odd prime and $\omega=e^{2\pi i/p}$. For
$u=(x,z)\in\mathbb F_p^{2m}$, use
\begin{equation}
 W_m(x,z):=\omega^{x\cdot z/2}X^xZ^z.
\end{equation}
With
\begin{equation}
 \Omega_m:=
 \begin{pmatrix}
  0&-I_m\\
  I_m&0
 \end{pmatrix},
 \qquad
 [u,v]:=u^{\mathsf T}\Omega_m v,
\end{equation}
the Weyl relations are~\cite{HostensEtAl2005,Gross2006}
\begin{equation}
 W_m(u)W_m(v)
 =\omega^{[u,v]/2}W_m(u+v),
 \qquad
 W_m(u)W_m(v)
 =\omega^{[u,v]}W_m(v)W_m(u).
 \label{eq:app-weyl-relations}
\end{equation}

Choose Weyl directions
\begin{equation}
 u_i\in\mathbb F_p^{2m_A},
 \qquad
 v_i\in\mathbb F_p^{2m_B},
 \qquad i\in[s],
\end{equation}
and phases $r_i^A,r_i^B\in\mathbb F_p$. For every field symbol
$\xi\in\mathbb F_p$, define
\begin{equation}
 a_i(\xi):=\omega^{\xi r_i^A}W_{m_A}(\xi u_i),
 \qquad
 b_i(\xi):=\omega^{\xi r_i^B}W_{m_B}(\xi v_i).
 \label{eq:app-qudit-local-symbols}
\end{equation}
Then
\begin{equation}
 \begin{aligned}
 a_i(\xi)a_i(\eta)&=a_i(\xi+\eta),
 &a_i(\xi)^*&=a_i(-\xi),\\
 b_i(\xi)b_i(\eta)&=b_i(\xi+\eta),
 &b_i(\xi)^*&=b_i(-\xi).
 \end{aligned}
 \label{eq:app-qudit-local-symbol-laws}
\end{equation}
Thus the powers along one Weyl direction form a single
$\mathbb F_p$-valued coordinate. These local symbol laws belong to
the reference presentation and are not regarded as additional
source relations.

Define
\begin{equation}
 \mathcal A
 :=C^*\!\left(I_A,a_i(\xi):
                  i\in[s],\ \xi\in\mathbb F_p\right),
 \qquad
 \mathcal B
 :=C^*\!\left(I_B,b_i(\xi):
                  i\in[s],\ \xi\in\mathbb F_p\right).
\end{equation}
Set
\begin{equation}
 A:=[u_1\ \cdots\ u_s],
 \qquad
 B:=[v_1\ \cdots\ v_s],
 \qquad
 K_A:=\ker A,
 \qquad
 K_B:=\ker B,
 \label{eq:app-qudit-label-matrices}
\end{equation}
where all kernels are over $\mathbb F_p$. For
$c=(c_1,\ldots,c_s)\in\mathbb F_p^s$, define the reduced ordered
field-symbol words
\begin{equation}
 U_A(c):=a_1(c_1)\cdots a_s(c_s),
 \qquad
 U_B(c):=b_1(c_1)\cdots b_s(c_s).
 \label{eq:app-qudit-ordered-word}
\end{equation}
Each $U_L(c)$ is a phase times the Weyl operator with label $Ac$
or $Bc$, respectively. Hence, for $c\in K_L$,
\begin{equation}
 U_L(c)=\zeta_L(c)I_L,
 \qquad
 \zeta_L(c)\in\{1,\omega,\ldots,\omega^{p-1}\}.
\end{equation}

The matrices
\begin{equation}
 \Gamma_A:=A^{\mathsf T}\Omega_{m_A}A,
 \qquad
 \Gamma_B:=B^{\mathsf T}\Omega_{m_B}B
\end{equation}
record the complete commutation phases between distinct
field-symbol coordinates and may be nonzero.

\begin{proposition}[Weyl homomorphism criterion]
\label{prop:app-qudit-homomorphism}
The assignment
\begin{equation}
 a_i(\xi)\longmapsto b_i(\xi),
 \qquad
 i\in[s],\quad \xi\in\mathbb F_p,
\end{equation}
extends to a surjective unital $*$-homomorphism
\begin{equation}
 \pi_p:\mathcal A\longrightarrow\mathcal B
\end{equation}
if and only if
\begin{equation}
 A^{\mathsf T}\Omega_{m_A}A
 =
 B^{\mathsf T}\Omega_{m_B}B,
 \qquad
 K_A\subseteq K_B,
 \qquad
 \zeta_A(k)=\zeta_B(k)
 \quad\text{for all }k\in K_A.
 \label{eq:app-qudit-homomorphism-conditions}
\end{equation}
Equivalently, it is enough to prescribe
$a_i(1)\mapsto b_i(1)$, since the local symbol laws determine the
images of all $a_i(\xi)$.
\end{proposition}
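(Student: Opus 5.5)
The argument follows the Pauli and Majorana proofs (Propositions~\ref{prop:app-pauli-homomorphism} and~\ref{prop:app-majorana-homomorphism}), with $\mathbb F_2$ replaced by $\mathbb F_p$ and signs replaced by $p$th roots of unity.

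\emph{Sufficiency.} We reduce to Lemma~\ref{lem:app-evaluation-kernel} by showing $\ker\vartheta_A\subseteq\ker\vartheta_B$. Here the free generators are the symbols $x_{i,\xi}$ for $\xi\in\mathbb F_p$, and we impose no local laws on them at the free level.

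First, the local symbol laws Eq.~\eqref{eq:app-qudit-local-symbol-laws} hold identically on both sides, so they can be used freely in rewriting. Second, Eq.~\eqref{eq:app-weyl-relations} gives
\[
 a_i(\xi)a_j(\eta)=\omega^{\xi\eta\Gamma^A_{ij}}a_j(\eta)a_i(\xi),
\]
and the analogous relation for the target. With $\Gamma:=\Gamma_A=\Gamma_B$ these reordering rules coincide.

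Using them, every word in the $a_i(\xi)$ and their adjoints (note $a_i(\xi)^*=a_i(-\xi)$) reduces to an ordered word $U_L(c)$. The resulting phase depends only on $\Gamma$ and the letters, so it is the same for $L=A,B$. Concretely, one verifies
\[
 U_L(c)U_L(d)=\kappa_\Gamma(c,d)\,U_L(c+d),\qquad \kappa_\Gamma(c,d)=\omega^{\sum_{i>j}c_id_j\Gamma_{ij}}.
\]
This holds because moving $a_j(d_j)$ leftward past $a_i(c_i)$ with $i>j$ produces exactly these factors, after which adjacent same-index symbols merge. Hence every $q$ has a common normal form $\vartheta_L(q)=\sum_c\alpha_cU_L(c)$.

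Next, fix a transversal $\mathcal R_A$ of $\mathbb F_p^s/K_A$ and write $c=r+k$ with $r\in\mathcal R_A$, $k\in K_A$. Then
\[
 U_L(r+k)=\kappa_\Gamma(r,k)^{-1}\zeta_L(k)\,U_L(r),
\]
and since $K_A\subseteq K_B$ and $\zeta_A=\zeta_B$ on $K_A$, this yields identical coefficients $\beta_r$ on both sides. Distinct $r,r'\in\mathcal R_A$ have $Ar\ne Ar'$, so $U_A(r)$ and $U_A(r')$ are phases times distinct Weyl operators, which are orthonormal under $p^{-m_A}\Tr$. Therefore $\vartheta_A(q)=0$ forces every $\beta_r=0$, hence $\vartheta_B(q)=0$. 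Lemma~\ref{lem:app-evaluation-kernel} then supplies the surjective unital $*$-homomorphism.

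\emph{Necessity.} Apply $\pi_p$ to the source relation $a_i(1)a_j(1)=\omega^{\Gamma^A_{ij}}a_j(1)a_i(1)$. Since $a_j(1)a_i(1)$ maps to the invertible element $b_j(1)b_i(1)$, comparing with the target relation gives $\omega^{\Gamma^A_{ij}}=\omega^{\Gamma^B_{ij}}$, so $\Gamma^A_{ij}=\Gamma^B_{ij}$ in $\mathbb F_p$. For $k\in K_A$, applying $\pi_p$ to $U_A(k)=\zeta_A(k)I$ gives $U_B(k)=\zeta_A(k)I$. A phase times $W(Bk)$ is scalar only if $Bk=0$, so $k\in K_B$ and $\zeta_B(k)=\zeta_A(k)$.

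\emph{Equivalence with prescribing only $a_i(1)$.} Since $a_i(\xi)=a_i(1)^{\xi}$ for representatives $\xi\in\{0,\ldots,p-1\}$, any homomorphism is determined by the images of the $a_i(1)$. Moreover, $\pi(a_i(1))=b_i(1)$ forces $\pi(a_i(\xi))=b_i(1)^\xi=b_i(\xi)$.

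The main obstacle is the bookkeeping of the half-integer phases $\omega^{[u,v]/2}$, with $1/2$ interpreted in $\mathbb F_p$, together with the offsets $\omega^{\xi r_i}$. The point to check carefully is that the product rule for ordered words depends on the data only through $\Gamma$ and not on the individual directions $u_i$. I would handle this by never expanding $W(\xi u_i)$ explicitly: use only the commutation relation between distinct symbols and the local laws, so that the half-phases never enter the reordering.
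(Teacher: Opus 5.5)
Your proposal is correct and follows essentially the same route as the paper: the common reordering cocycle $\kappa_\Gamma$, a shared ordered-word normal form, grouping by $K_A$-cosets with matched scalar phases, and orthogonality of distinct Weyl operators give $\ker\vartheta_A\subseteq\ker\vartheta_B$ for Lemma~\ref{lem:app-evaluation-kernel}. Necessity likewise uses invertibility of the $b_j(1)b_i(1)$ and the fact that a Weyl operator is scalar only at label zero. The only differences are minor. You build the free algebra on all symbols $x_{i,\xi}$, whereas the paper evaluates $x_i$ at $a_i(1)$. You also justify explicitly the clause that it suffices to prescribe $a_i(1)\mapsto b_i(1)$, via $a_i(\xi)=a_i(1)^{\xi}$, which the paper leaves implicit.
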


\begin{proof}
Put
\[
 \Gamma
 :=
 A^{\mathsf T}\Omega_{m_A}A
 =
 B^{\mathsf T}\Omega_{m_B}B.
\]
Eq.~\eqref{eq:app-weyl-relations} gives, for both
$L=A,B$,
\begin{equation}
 a_{L,i}(\xi)a_{L,j}(\eta)
 =
 \omega^{\xi\eta\Gamma_{ij}}
 a_{L,j}(\eta)a_{L,i}(\xi),
 \label{eq:app-qudit-common-commutation}
\end{equation}
where $a_{A,i}=a_i$ and $a_{B,i}=b_i$. Thus the source and target
obey the same local symbol, adjoint, and reordering rules.

For $c,d\in\mathbb F_p^s$, define
\[
 \kappa_\Gamma(c,d)
 :=
 \omega^{\sum_{i>j}c_i d_j\Gamma_{ij}}.
\]
Reordering the field-symbol words gives
\begin{equation}
 U_L(c)U_L(d)
 =
 \kappa_\Gamma(c,d)U_L(c+d),
 \qquad L\in\{A,B\}.
 \label{eq:app-qudit-word-product}
\end{equation}

Let $q\in\mathcal P_s$ be a noncommutative $*$-polynomial, with
$x_i$ evaluated as $a_i(1)$ or $b_i(1)$. Using the common local
symbol and reordering rules, every monomial reduces to an ordered
word, so
\begin{equation}
 \vartheta_L(q)
 =
 \sum_{c\in\mathbb F_p^s}\alpha_cU_L(c),
 \qquad L\in\{A,B\},
 \label{eq:app-qudit-common-normal-form}
\end{equation}
with the same coefficients $\alpha_c$ in source and target.

Choose a transversal $\mathcal R_A$ of
$\mathbb F_p^s/K_A$. Every $c$ has a unique decomposition
$c=r+k$, with $r\in\mathcal R_A$ and $k\in K_A$. From
Eq.~\eqref{eq:app-qudit-word-product},
\[
 U_L(r+k)
 =
 \kappa_\Gamma(r,k)^{-1}U_L(r)U_L(k).
\]
Since $K_A\subseteq K_B$ and the scalar phases agree on $K_A$,
\[
 U_A(k)=\zeta_A(k)I_A,
 \qquad
 U_B(k)=\zeta_B(k)I_B=\zeta_A(k)I_B.
\]
Therefore
\[
 U_L(r+k)
 =
 \kappa_\Gamma(r,k)^{-1}
 \zeta_A(k)U_L(r),
 \qquad L\in\{A,B\}.
\]
Grouping Eq.~\eqref{eq:app-qudit-common-normal-form} by
$K_A$-cosets gives
\[
 \vartheta_A(q)
 =
 \sum_{r\in\mathcal R_A}\beta_rU_A(r),
 \qquad
 \vartheta_B(q)
 =
 \sum_{r\in\mathcal R_A}\beta_rU_B(r),
\]
with identical coefficients $\beta_r$.

Distinct representatives satisfy $Ar\ne Ar'$, so
$U_A(r)$ and $U_A(r')$ are scalar multiples of distinct Weyl
operators. They are Hilbert--Schmidt orthogonal and therefore
linearly independent. Hence
\[
 \vartheta_A(q)=0
 \quad\Longrightarrow\quad
 \beta_r=0\ \text{for all }r
 \quad\Longrightarrow\quad
 \vartheta_B(q)=0.
\]
Thus
\[
 \ker\vartheta_A\subseteq\ker\vartheta_B,
\]
and Lemma~\ref{lem:app-evaluation-kernel} gives the required
surjective unital $*$-homomorphism.

Conversely, any such homomorphism must preserve
Eq.~\eqref{eq:app-qudit-common-commutation}. Since Weyl operators
are invertible, this forces
\[
 \Gamma_A=\Gamma_B
 \qquad\text{over }\mathbb F_p.
\]
If $k\in K_A$, then
\[
 U_A(k)=\zeta_A(k)I_A.
\]
Applying the homomorphism gives
\[
 U_B(k)=\zeta_A(k)I_B.
\]
A Weyl operator is scalar only when its phase-space label vanishes,
so $Bk=0$ and hence $k\in K_B$. Comparing scalar factors gives
$\zeta_B(k)=\zeta_A(k)$. This proves necessity.
\end{proof}

The local costs $\rho_i$ used to define the marked filtration do
not enter the homomorphism criterion. They determine only the word
metric and the resulting ordinary or relative distance.

\section{Numerical instances, methods, and supplementary results}
\label{app:chain-benchmark}
\label{app:additional-numerics}

This appendix specifies the nonlinear residual-chain ensemble and the
comparison protocol for Section~\ref{subsec:additional-numerics}.

\paragraph{Reproducibility package.}
Release v1.0.0 of \texttt{RDQINumerics}~\cite{RDQINumerics2026}
contains all 20 test targets, their generated arrays and seeds, all
880 final polynomial witnesses, the result tables, and the frozen
120-batch execution plan. After installing the pinned dependencies,
\texttt{python reproduce.py paper --all-runs} independently rescores
the witnesses, checks the quantum prediction using 60-digit arithmetic,
and reconstructs the reported table. The README gives commands for
the tests, a small smoke benchmark, and a resumable full rerun; new
runs are stored separately from the reference results.

\subsection{Nonlinear residual-chain instances and quantum prediction}
\label{app:additional-instances}
Let us recall and discuss in more detail the specifications of our instances.
Write $x_v=P_a(\alpha_v)$ at $L$ distinct nonzero field points,
where $P_a(X)=\sum_{j=0}^{n-1}a_jX^j$. Let $T(x)$ be the number of
satisfied predicates, so that $\operatorname{Sat}(P_a)=T(Va)/M$ in
Eq.~\eqref{eq:numerical-satisfaction}. Each reward depends on two
polynomial evaluations, in contrast to the symbolwise constraints of
list recovery and weighted reconstruction~\cite{GuruswamiSudan1999}.

Over $\F_p$ with $p$ an odd prime, take $L=p-1$ and $M=L-1$.
Independently choose $a_i^{\rm res}$ uniformly from $\F_p\setminus\{0\}$
and $F_i$ uniformly among subsets of size $(p-1)/2$. Define
\begin{equation}
 T(x)=\sum_{i=1}^{M}
 \one\{x_{i+1}+a_i^{\rm res}x_i^2\in F_i\}.
 \label{eq:additional-residual-score}
\end{equation}
The vector $(a_i^{\rm res})_i$ specifies a structure; the sets $F_i$
specify its predicate target. The map
$(x_1,\ldots,x_L)\mapsto(x_1,r_1,\ldots,r_M)$ with
$r_i=x_{i+1}+a_i^{\rm res}x_i^2$ is bijective, with inverse
$x_{i+1}=r_i-a_i^{\rm res}x_i^2$. Uniform source symbols therefore
give $T\sim\operatorname{Bin}(M,\rho)$, where $\rho=(p-1)/(2p)$.

\paragraph{Guides and degree windows.}
For $T\sim\operatorname{Bin}(M,\rho)$, the orthonormal-polynomial
Jacobi matrix for multiplication by $T$ has entries
\begin{equation}
 (J_\rho)_{jj}=M\rho+(1-2\rho)j,\qquad
 (J_\rho)_{j,j+1}=\sqrt{\rho(1-\rho)(j+1)(M-j)}.
 \label{eq:additional-jacobi}
\end{equation}
The largest eigenvalue of the compression to degrees $0,\ldots,D$
is $M\mu_Q$; a unit top eigenvector defines the source-normalized
filter $p_D(T)$. Its guiding state is prepared from signed binomial
shells: prepare the satisfaction pattern coherently, sample each
residual from $F_i$ or its complement according to that pattern,
prepare $x_1$ uniformly, and invert the residual map. This is the
construction in Lemma~\ref{app:residual-source-lemma}.

Uniform degree-below-$n$ polynomials have independent uniform
evaluations at any $n$ distinct points. Since each predicate touches
two vertices, normalization, mean, and second-moment transfer hold,
respectively, when
\begin{equation}
 4D\le n,\qquad 4D+2\le n,\qquad 4D+4\le n.
 \label{eq:additional-transfer-windows}
\end{equation}
The benchmark parameters satisfy all three conditions:
\begin{equation}
 p=2053,\qquad L=2052,\qquad M=2051,\qquad n=205,\qquad D=50.
 \label{eq:additional-cells}
\end{equation}
The measured polynomial coefficients have probability proportional to
$p_D(T(Va))^2$. Source preparation and coherent decoding produce the
score bias while enforcing global code consistency. For comparison, a
uniform feasible polynomial has mean satisfaction
$\rho=1026/2053\simeq0.49976$. The numerical comparison asks how
closely classical search can approach the filtered mean
$\mu_Q\simeq0.643134$ within the reported budgets.
Independent Jacobi reconstruction and recurrence evaluation, including
60-digit precision controls, check the mean in
Table~\ref{tab:additional-numerics}. These moment calculations do not
require a quantum-circuit simulation or identification of the full
target score distribution with the filtered source distribution.

\subsection{Classical comparison protocol}
\label{app:additional-protocol}

The test set consists of ten independently sampled residual-coefficient
vectors, two independently seeded predicate targets per vector, and two
solver streams per target. Evaluation-point lists are fixed. Development
instances used to choose the methods and budgets are separate from
these test instances.

Uniform search samples the polynomial coefficients independently and
uniformly. Interpolation through fixed and moving information sets
adapts information-set search~\cite{Prange1962} and its use in
OPI optimization~\cite{JordanEtAlDQI,opi-optimized}.
Greedy coefficient and information-coordinate updates and bounded block
search follow coordinate and block local-improvement
templates~\cite[Secs.~2.5 and~5.4]{Besag1986}. The methods also include
simulated annealing~\cite{KirkpatrickGelattVecchi1983} and replica
exchange~\cite{HukushimaNemoto1996}. The benchmark-specific
implementations are documented in Ref.~\cite{RDQINumerics2026}.

Structure-aware methods add residual-coordinate moves that preserve the
satisfied predicates on a selected interval~\cite{RDQINumerics2026},
and exact forward--backward (sum--product) inference on the unconstrained
chain~\cite{KschischangFreyLoeliger2001}. Inference is followed by
heuristic projection to a Reed--Solomon codeword; only the resulting
feasible score is counted.
This inference-to-codeword architecture is related to soft-decision
Reed--Solomon decoding on channels with memory, where symbol reliabilities
from sum--product or forward--backward inference feed an algebraic
decoder~\cite[Sec.~VII]{KoetterVardy2003}.
Runs from fresh initial states are supplemented by refinements of the
best assignments found by information-set methods.

Guide-weight Gibbs and replica-exchange methods use the actual weight
$w(a)=|p_D(T(Va))|^2$, including its zeros, with an interpolating
bridge $w_\eta=\eta+(1-\eta)w$, as implemented in
Ref.~\cite{RDQINumerics2026}. Gil-Fuster et al.~\cite{GilFusterDQIMCMC}
showed that block-Gibbs methods can reach DQI's predicted scores on
their tested instances, motivating their inclusion here. Every scored
proposal counts toward the candidate budget and can improve the best
score found, including rejected MCMC proposals. These runs test
optimization performance; distributional accuracy is a separate sampling
question.

Deep runs used approximately $5\times10^6$ candidates each, with smaller
budgets for controls and expensive projection methods. Candidate counts,
final scores, timing, and preprocessing costs were recorded. All 120
batches of nonlinear-chain runs completed. The reported best assignments
were independently rescored from their polynomial coefficients.

For residual-coefficient vector $s$ and predicate target $r$, let $C_{s,r}$
be its best score across the methods and independent solver runs fixed
before testing. The reported classical statistic and paired gap are
\begin{equation}
 \operatorname{med}_s\overline C_s,\qquad
 \operatorname{med}_s\Delta_s,\qquad
 \overline C_s=\tfrac12(C_{s,1}+C_{s,2}),\quad
 \Delta_s=\mu_Q-\overline C_s.
\end{equation}
Each independently sampled coefficient vector contributes one statistical
unit, after averaging over its two predicate targets and combining solver
restarts as above.

\subsection{Results and supplementary checks}
\label{app:additional-endpoints}
\label{app:additional-audit}

The median classical score is $0.60666$, compared with the quantum mean
$\mu_Q\simeq0.643134$. The paired gaps are positive for all ten independent
coefficient vectors, with median $0.036479$ (about $3.65$ percentage
points). These are comparisons with the tested methods at their recorded
budgets.

The quantum mean also gives a sampling guarantee without assuming that
the target has the entire filtered binomial score distribution. Since
$T$ is integer-valued and lies between $0$ and $M$,
\begin{equation}
 \Pr\{T\ge1299\}\ge
 \frac{2051\mu_Q-1298}{2051-1298}=0.027979\ldots.
 \label{eq:benchmark-repetition}
\end{equation}
Thus 106 independent ideal executions suffice to find a polynomial
satisfying at least $1299$ of the $2051$ tests with probability at least
$0.95$.

\end{document}